\NewDocumentCommand\PhraseAppendix{}{%
  \ifextendedversion%
  technical appendix%
  \else%
  extended version%
  \fi%
}
\let\citep\cite
\NewDocumentCommand\citet{om}{\IfValueTF{#1}{#1~\citep{#2}}{\textcolor{red}{[CITET!]}}}
\title{Sheaf semantics of termination-insensitive noninterference}
\titlerunning{Sheaf semantics of termination-insensitive noninterference}
\author{Jonathan Sterling}{%
  Department of Computer Science, Aarhus University, Denmark%
  \and \url{https://www.jonmsterling.com}%
}{jsterling@cs.au.dk}{https://orcid.org/0000-0002-0585-5564}{}
\author{Robert Harper}{%
  Computer Science Department, Carnegie Mellon University, United States of America%
  \and \url{https://cs.cmu.edu/~rwh}%
}{rwh@cs.cmu.edu}{https://orcid.org/0000-0002-9400-2941}{}
\authorrunning{J.\ Sterling and R.\ Harper}
\keywords{information flow, noninterference, denotational semantics, phase distinction, Artin gluing, modal type theory, topos theory, synthetic domain theory, synthetic Tait computability}
\begin{document}

\maketitle

\begin{abstract}
We propose a new \emph{sheaf semantics} for secure information flow over a
space of abstract behaviors, based on synthetic domain theory: security
classes are open/closed partitions, types are sheaves, and redaction of
sensitive information corresponds to restricting a sheaf to a closed subspace.
Our security-aware computational model satisfies termination-insensitive
noninterference automatically, and therefore constitutes an intrinsic
alternative to state of the art extrinsic/relational models of noninterference.
Our semantics is the latest application of Sterling and Harper's recent
re-interpretation of \emph{phase distinctions} and noninterference in
programming languages in terms of Artin gluing and topos-theoretic open/closed
modalities. Prior applications include parametricity for ML modules, the proof
of normalization for cubical type theory by Sterling and Angiuli, and the
cost-aware logical framework of Niu \etal. In this paper we employ the phase
distinction perspective \emph{twice}: first to reconstruct the syntax and
semantics of secure information flow as a lattice of phase distinctions between
``higher'' and ``lower'' security, and second to verify the computational
adequacy of our sheaf semantics with respect to a version of Abadi \etal's
\emph{dependency core calculus} to which we have added a construct for declassifying termination
channels.
 \end{abstract}

\section{Introduction}

Security-typed languages restrict the ways that classified information can flow
from high-security to low-security clients. \citet[Abadi \etal]{abadi:1999}
pioneered the use of \emph{idempotent monads} to deliver this restriction in
their \DefEmph{dependency core calculus} (DCC), parameterized in a poset of
security levels $\LVL$. Covariantly in security levels $l\in\LVL$, a family of
type operations $\alert{\TpSeal{l}{A}}$ satisfying the rules of an idempotent
monad are added to the language; the idea is then that sensitive data can be
hidden underneath $\TpSeal{l}$ and unlocked only by a client with a type that
can be equipped with a $\TpSeal{l}$-algebra structure, \ie a
\DefEmph{$\LvlPol{l}$-sealed type} in our terminology.\footnote{We use the
term ``sealing'' for what \citet[Abadi \etal]{abadi:1999} call ``protection'';
to avoid confusion, we impose a uniform terminology to encompass both our work
and that of \opcit. A final notational deviation on our part is that we will
distinguish a security level $l\in\LVL$ from the corresponding syntactical entity
$\LvlPol{l}$.} For instance, a high-security
client can read a medium-security bit:
\[
  \begin{mathblock}
    \Con{f} : \TpSeal{\Con{M}}{\TpBool}\to \TpSeal{\Con{H}}{\TpBool}\\
    \Con{f}\, u =
    x \leftarrow u; \TmSeal{\Con{H}}{\prn{\Con{not}\,x}}
  \end{mathblock}
\]

There is however no corresponding program of type $\TpSeal{\Con{H}}{\TpBool}\to
\TpSeal{\Con{M}}{\TpBool}$, because the type
$\TpSeal{\Con{M}}{\TpBool}$ of medium-security booleans is not
$\LvlPol{\Con{H}}$-sealed, \ie it cannot be equipped with the structure of a
$\TpSeal{\Con{H}}$-algebra. In fact, up to observational equivalence it is
possible to state a \DefEmph{noninterference result} that fully characterizes such programs:

\begin{proposition*}[Noninterference]
  For any closed function
  $\cdot\vdash\Con{f}:\TpSeal{\Con{H}}{\TpBool}\to\TpSeal{\Con{M}}{\TpBool}$,
  there exists a closed $\cdot\vdash b:\TpSeal{\Con{M}}\TpBool$ such that $\Con{f}\simeq \lambda\_.{b}$.
\end{proposition*}

Intuitively the noninterference result above follows because you cannot
``escape'' the monad, but to prove such a result rigorously a model
construction is needed. Today the state of the art is to employ a
\DefEmph{relational model} in the sense of Reynolds in which a type is
interpreted as a binary relation on some domain, and a term is interpreted by
a relation-preserving function. Our contribution is to
introduce an \DefEmph{intrinsic} and \DefEmph{non-relational semantics} of noninterference
presenting several advantages that we will argue for, inspired by the recent
modal reconstruction of \emph{phase distinctions} by \citet[Sterling and
Harper]{sterling-harper:2021}.

\subsection{Termination-insensitivity and the meaning of ``observation''}\label{sec:intro:tini}

\begin{notation}
  We will write $\Lift{A}$ for the lifting monad that Abadi \etal notate $A_\bot$.
\end{notation}

When we speak of noninterference up to observational equivalence, much weight
is carried by the choice of what, in fact, counts as an observation. In a
functional language with general recursion, it is conventional to say that an observation
is given by a computation of $\TpUnit$ type --- which necessarily either
diverges or converges with the unique return value $\prn{}$. Under this notion
of observation, noninterference up to observations takes a very strong
character:

\begin{quote}
  \emph{Termination-sensitive noninterference.} For a closed partial function $\cdot\vdash\Con{f}:\TpSeal{\Con{H}}\TpBool \to \Lift\,\prn{\TpSeal{\Con{M}}\TpBool}$, either $\Con{f} \simeq \lambda\_.\bot$ or there exists $\cdot\vdash b: \TpSeal{\Con{M}}\TpBool$ such that $\Con{f}\simeq\lambda\_.b$.
\end{quote}

If on the other hand we restrict observations to only terminating computations
of type $\TpBool$, we evince a more relaxed \DefEmph{termination-insensitive}
version of noninterference that allows leakage through the termination channel
but \emph{not} through the ``return channel'':
\begin{quote}
  \emph{Termination-insensitive noninterference.} For a closed partial function $\cdot\vdash\Con{f}:\TpSeal{\Con{H}}\TpBool\to\Lift\,\prn{\TpSeal{\Con{M}}\TpBool}$, given any closed $u,v$  on which $f$ terminates, we have $fu \simeq fv$.
\end{quote}

\subsection{Relational \emph{vs.}\ intrinsic semantics}\label{sec:relational-vs-intrinsic}

To verify the noninterference property for the dependency core calculus,
\citet[Abadi \etal]{abadi:1999} define a \DefEmph{relational semantics} that
starts from an insecure model of computation (domain theory \emph{qua} dcpos)
and restricts it by means of binary relations indexed in security levels that
express the indistinguishability of sensitive bits to low-security clients. The
indistinguishability relations are required to be preserved by all functions,
ensuring the security properties of the model. The relational approach has an
extrinsic flavor, being characterized by the \emph{post hoc} imposition of
order (noninterference) on an inherently disordered computational model.  We
contrast the extrinsic relational semantics of \opcit with an \emph{intrinsic}
denotational semantics in which the underlying computational model has security
concerns ``built-in'' from the start.

\subsection{Our contribution: intrinsic semantics of noninterference}

The main contribution of our paper is to develop an \emph{intrinsic semantics}
in the sense of \cref{sec:relational-vs-intrinsic}, in which
termination-insensitive noninterference (\cref{sec:intro:tini}) is not bolted
on but rather arises directly from the underlying computational model.  To
summarize our approach, instead of controlling the security properties of
ordinary dcpos using a $\LVL$-indexed logical relation, we take semantics in a
category of $\LVL$-indexed dcpos, \ie sheaves of dcpos on a space $\LvlTop$ in
which each security level $l\in \LVL$ corresponds to an open/closed partition.
Employing the viewpoint of \citet[Sterling and Harper]{sterling-harper:2021},
each of these partitions induces a \DefEmph{phase distinction} between data
visible below security level $l$ (open) and data that is hidden (closed), leading to a
novel account of the sealing monad $\TpSeal{l}$ as restriction to a closed
subspace.

Our intrinsic semantics has several advantages over the relational approach.
Firstly, termination-insensitive noninterference arises directly from our
computational model. Secondly, our model of secure information flow contributes
to the consolidation and unification of ideas in programming languages by
treating general recursion and security typing as instances of two orthogonal
and well-established notions, namely \DefEmph{axiomatic \& synthetic domain
theory} and \DefEmph{phase distinctions}/\DefEmph{Artin gluing} respectively.
Termination-insensitivity then arises from the non-trivial interaction between
these orthogonal layers.

In particular, our computational model is an instance of axiomatic domain theory
in the sense of \citet[Fiore]{fiore:1994}, and embeds into a sheaf model of
{synthetic domain
theory}~\citep{fiore-rosolini:1997,fiore-plotkin-power:1997,
fiore-plotkin:1996,fiore-rosolini:1997:cpos,fiore:1997,fiore-rosolini:2001,matache-moss-staton:2021}.
Hence the interpretation of the PCF fragment of DCC is interpreted exactly as
in the standard Plotkin semantics of general recursion in categories of partial
maps, in contrast to the relational model of Abadi~\etal.
Lastly, the view of security levels as phase distinctions per
\citet[Sterling and Harper]{sterling-harper:2021} advances a uniform
perspective on noninterference scenarios that has already proved fruitful
for resolving several problems in programming languages:
\begin{enumerate}

  \item A generalized abstraction theorem for ML modules with strong
    sums~\citep{sterling-harper:2021}.

  \item Normalization and decidability of type
    checking for cubical type theory~\citep{sterling-angiuli:2021,sterling:2021:thesis} and multi-modal type theory~\citep{gratzer:normalization:2022}; guarded canonicity for guarded dependent type theory~\citep{gratzer-birkedal:2022}.

  \item The design and metatheory of the \textbf{calf} logical
    framework~\citep{niu-sterling-grodin-harper:2022} for simultaneously
    verifying the correctness and complexity of functional programs.

\end{enumerate}

The final benefit of the phase distinction perspective is that logical
relations arguments can be re-cast as imposing an \emph{additional} orthogonal
phase distinction between \emph{syntax} and \emph{logic/specification}, an
insight originally due to Peter Freyd in his analysis of the existence and
disjunction properties in terms of Artin gluing~\citep{freyd:1978}. We employ
this insight in the present paper to develop a uniform treatment of our
denotational semantics and its computational adequacy in terms of phase
distinctions.

\section{Background: relational semantics of noninterference}\label{sec:relational-semantics}

To establish noninterference for the dependency core calculus, \citet[Abadi \etal]{abadi:1999} define a
relational model of their monadic language in which each type $A$ is interpreted as
a dcpo $\vrt{A}$ equipped with a family of admissible binary relations $R^A_l$
indexed in security levels $l\in \LVL$.  In the relational semantics, a term
$\Gamma\vdash M : A$ is interpreted as a continuous function $\Mor[\vrt{M}]{\vrt{\Gamma}}{\vrt{A}}$ such that for all $l\in \LVL$, if $\gamma \mathrel{R\Sup{\Gamma}\Sub{l}} \gamma'$ then $\vrt{M}\gamma \mathrel{R\Sup{A}\Sub{l}} \vrt{M}\gamma'$.

\begin{remark}
  Two elements $u,v\in A$ such that $u\mathrel{R}_l^Av$ have been called
  \emph{equivalent} in subsequent literature, but this terminology may lead to
  confusion as there is nothing forcing the relation to be transitive, nor even
  symmetric nor reflexive.
\end{remark}

The essence of the relational model is to impose \emph{relations} between
elements that should not be distinguishable by a certain security class; a type
like $\TpBool$ or $\Con{string}$ whose relation is totally discrete, then,
allows any security class to distinguish all distinct elements. Non-discrete
types enter the picture through the sealing modality $\TpSeal{l}$:
\[
  \vrt{\TpSeal{l}{A}} = \vrt{A}
  \qquad
  u\mathrel{R\Sup{\TpSeal{l}{A}}\Sub{k}}v \Longleftrightarrow
  \begin{cases}
    u\mathrel{R\Sup{A}\Sub{k}}v & \text{if } l \sqsubseteq k\\
    \top & \text{otherwise}
  \end{cases}
\]

Under this interpretation, the denotation of a function
$\TpSeal{\Con{H}}\TpBool\to\TpSeal{\Con{M}}{\TpBool}$ must be a
constant function, as $u\mathrel{R\Sup{\TpBool}\Sub{\Con{H}}}v$ if and only iff
$u=v$. By proving computational adequacy for this denotational semantics,
one obtains the analogous \emph{syntactic} noninterference result up to
observational equivalence.

\emph{Generalization and representation of relational semantics.} The relations
imposed on each type give rise to a form of cohesion in the sense of
\citet[Lawvere]{lawvere:2007}, where elements that are related are thought of
as ``stuck together''. Then noninterference arises from the behavior of maps
from a relatively codiscrete space into a relatively discrete space, as pointed
out by \citet[Kavvos]{kavvos:2019} in his \emph{tour de force} generalization
of the relational account of noninterference in terms of axiomatic cohesion.
Another way to understand the relational account is by \emph{representation},
as attempted by \citet[Tse and Zdancewic]{tse-zdancewic:2004} and executed by
\citet[Bowman and Ahmed]{bowman-ahmed:2015}: one may embed DCC into a
polymorphic lambda calculus in which the security abstraction is implemented by
\emph{actual} type abstraction.

\paragraph*{Adapting the relational semantics for termination-insensitivity}

In the relational semantics of the dependency core calculus, the
termination-sensitive version of noninterference is achieved by interpreting
the \emph{lift} of a type in the following way:
\[
  \vrt{A_\bot} = \vrt{A}_\bot
  \qquad
  u \mathrel{R\Sup{A_\bot}\Sub{l}} v \Longleftrightarrow
  \prn{\IsDefd{u,v} \land u \mathrel{R\Sup{A}\Sub{l}} v}
  \lor
  \prn{u=v=\bot}
\]

To adapt the relational semantics for termination-insensitivity, Abadi \etal
change the interpretation of lifts to identify \emph{all} elements with the
bottom element:
\[
  \vrt{A_\bot} = \vrt{A}_\bot
  \qquad
  u \mathrel{R\Sup{A_\bot}\Sub{l}} v \Longleftrightarrow
  \prn{\IsDefd{u,v} \land u \mathrel{R\Sup{A}\Sub{l}} v}
  \lor
  \prn{u = \bot}
  \lor
  \prn{v = \bot}
\]

That all data is ``indistinguishable'' from the non-terminating computation
means that the indistinguishability relation cannot be both transitive and
non-trivial, a somewhat surprising state of affairs that leads to our critique
of relational semantics for information flow below and motivates our new
perspective based on the analogy between \emph{phase distinctions} in
programming languages and \emph{open/closed partitions} in topological spaces~\citep{sterling-harper:2021}.

\paragraph*{Critique of relational semantics for information flow}

From our perspective there are several problems with the relational semantics
of \citet[Abadi \etal]{abadi:1999} that, while not fatal on their own, inspire
us to search for an alternative perspective.

\emph{Failure of monotonicity.}
First of all, within the context of the relational semantics it would be
appropriate to say that an object $\prn{\vrt{A},R^A\Sub{\bullet}}$ is
$\LvlPol{l}$-sealed when $R^A_l$ is the total relation. But in the semantics
of Abadi \etal, it is not necessarily the case that a $\LvlPol{l}$-sealed
object is $\LvlPol{k}$-sealed when $k\sqsubseteq l$.  It is true that objects
that are \emph{definable} in the dependency core calculus are better behaved,
but in proper denotational semantics one is not concerned with the image of an
interpretation function but rather with the entire category.

\emph{Failure of transitivity.} A more significant and harder to resolve
problem is the fact that the indistinguishability relation $R^A_l$ assigned to
each type cannot be construed as an equivalence relation --- despite the fact
that in real life, indistinguishability is indeed reflexive, symmetric, and
transitive. As we have pointed out, the adaptation of DCC's relational
semantics for termination-insensitivity is evidently incompatible with using
(total or partial) equivalence relations to model indistinguishability, as
transitivity would ensure that no two elements of $A_\bot$ can be distinguished
from another.

\emph{Where is the dominance?} Conventionally the denotational semantics for a
language with general recursion begins by choosing a category of ``predomains''
and then identifying a notion of \emph{partial map} between them that evinces a
\emph{dominance}~\citep{fiore:1994,rosolini:1986}. It
is unclear in what sense the DCC's relational semantics reflects this hard-won
arrangement; as we have seen, the adaptation of the
relational semantics for termination-insensitivity further increases the
distance from ordinary domain-theoretic semantics.

\textbf{Perspective.} Abadi \etal's relational semantics is based on imposing
secure information flow properties on an existing insecure model of partial
computation, but this is quite distinct from an \emph{intrinsic denotational
semantics} for secure information flow --- which would necessarily entail new
notions of predomain and partial map that are sensitive to security from the
start. In this paper we report on such an intrinsic semantics for secure
information flow in which termination-insensitive noninterference arises
inexorably from the chosen dominance.

\section{Central ideas of this paper}

In this section, we dive a little deeper into several of the main concepts that
substantiate the contributions of this paper.  We begin by fixing a poset
$\LVL$ of security levels closed under finite meets, for example $\LVL =
\brc{\LvlLow\sqsubset\LvlMed\sqsubset\LvlHigh\sqsubset\LvlGod}$. The
purpose of including a security level even higher than $\LvlHigh$ will become
apparent when we explain the meaning of the sealing monad $\TpSeal{l}$.

\begin{notation}
  Given a space $\XTop$ and an open set $U\in\Opns{\XTop}$, we will write
  $\Sl{\XTop}{U}$ for the open subspace spanned by $U$ and
  $\ClSubcat{\XTop}{U}$ for the corresponding complementary closed subspace.
  We also will write $\Sh{\XTop}$ for the category of sheaves on the space $\XTop$.
\end{notation}

\subsection{A space of abstract behaviors and security policies}
We begin by transforming the security poset $\LVL$ into a topological space
$\LvlTop$ of ``abstract behaviors'' whose algebra of open sets
$\Opns{\LvlTop}$ can be thought of as a lattice of \emph{security policies}
that govern whether a given behavior is permitted.

\begin{definition}
  An \DefEmph{abstract behavior} is a \emph{filter} on the poset $\LVL$, \ie
  a monotone subset $x\subseteq\LVL$ such that $\Conj{i<n}{l_i}\in x$ if and only if each
  $l_i\in x$.
\end{definition}

\begin{definition}
  A \DefEmph{security policy} is a lower set in $\LVL$, \ie an antitone subset
  $U\subseteq\LVL$. We will write $\alert{U\Vdash x}$ to mean $U$ permits the
  behavior $x$, \ie the subset $x\cap U$ is inhabited.
\end{definition}

An abstract behavior $x$ denotes the set of security levels $l\in\LVL$ at which
it is permitted; a security policy $U$ denotes the set of
security levels \emph{above which} some behavior is permitted.

\begin{construction}
  We define $\LvlTop$ to be the topological space whose points are abstract
  behaviors, and whose open sets are of the form $\Compr{x}{U\Vdash x}$ for
  some security policy $U$.\footnote{Those familiar with the point-free
  topology of topoi~\citep{johnstone:2002,vickers:2007,anel-joyal:2021} will
  recognize that $\LvlTop$ is more simply described as the presheaf topos
  $\PrTop{\LVL}$: viewed as a space, it is the dcpo completion of
  $\OpCat{\LVL}$, and as a frame it is the free cocompletion of $\LVL$.  The
  definition of $U\Vdash x$ then presents a computation of the \emph{stalk}
  $U_x$ of the subterminal sheaf $U\in\Sh{\LvlTop}$ at the behavior
  $x\in\LvlTop$.}
\end{construction}

We have a meet-preserving embedding of posets
$\EmbMor[\LvlPol{-}]{\LVL}{\Opns{\LvlTop}}$ that exhibits $\Opns{\LvlTop}$
as the free completion of $\LVL$ under joins.

\begin{intuition}[Open and closed subspaces]
  Each security level $l\in\LVL$ represents a security policy
  $\LvlPol{l}\in\Opns{\LvlTop}$ whose corresponding open subspace
  $\Sl{\LvlTop}{\LvlPol{l}}$ is spanned by the behaviors \emph{permitted} at
  security levels $l$ and above.  Conversely the complementary closed subspace
  $\ClSubcat{\LvlTop}{\LvlPol{l}} = \LvlTop\setminus\Sl{\LvlTop}{\LvlPol{l}}$
  is spanned by behaviors that are \emph{forbidden} at security level $l$ and
  below.
\end{intuition}

\subsection{Sheaves on the space of abstract behaviors}

Our intention is to interpret each type of a dependency core calculus as a
\emph{sheaf} on the space $\LvlTop$ of abstract behaviors. To see why this
interpretation is plausible as a basis for secure information flow, we note
that a sheaf on $\LvlTop$ is the same thing as a presheaf on the poset $\LVL$,
\ie a family of sets $\prn{A\Sub{l}}\Sub{l\in\LVL}$ indexed contravariantly in
$\LVL$ in the sense that for $k\sqsubseteq l$ there is a chosen restriction
function $A_l\to A_k$ satisfying two laws. Hence a sheaf on $\LvlTop$ determines
(1) for each security level $l\in\LVL$ a choice of what data is visible under the
security policy $\LvlPol{l}$, and (2) a way to \emph{redact} data as it passes under
a more restrictive security policy $\LvlPol{k}\subseteq\LvlPol{l}$.

\subsection{Transparency and sealing from open and closed subspaces}\label{sec:key-ideas:modalities}

For any subspace $\TopIdent{Q}\subseteq\LvlTop$, a sheaf $A\in\Sh{\LvlTop}$ can
be restricted to $\TopIdent{Q}$, and then extended again to $\LvlTop$. This
composite operation gives rise to an \emph{idempotent monad} on $\Sh{\LvlTop}$
that has the effect of purging any data from $A\in\Sh{\LvlTop}$ that cannot be
seen from the perspective of $\TopIdent{Q}$.  The idempotent monads
corresponding to the open and closed subspaces induced by a security level
$l\in\LVL$ are named and notated as follows:
\begin{enumerate}
  \item The \DefEmph{transparency monad} $A\mapsto
    \alert{\prn{\OpMod{\LvlPol{l}}{A}}}$ replaces $A$ with whatever part of
    it can be viewed under the policy $\LvlPol{l}$. The transparency monad is the function space $A\Sup{\LvlPol{l}}$, recalling that an open set of $\LvlTop$ is the same as a subterminal sheaf. When the unit
    is an isomorphism at $A$, we say that $A$ is
    \DefEmph{$\LvlPol{l}$-transparent}.
  \item The \DefEmph{sealing monad} $A\mapsto
    \alert{\prn{\ClMod{\LvlPol{l}}{A}}}$ removes from $A$ whatever part of it
    can be viewed under the policy $\LvlPol{l}$. The sealing monad can be constructed as the pushout $\LvlPol{l}\sqcup\Sub{\LvlPol{l}\times A}A$. When the unit
    is an isomorphism at $A$, we say that $A$ is
    \DefEmph{$\LvlPol{l}$-sealed}.
\end{enumerate}

\begin{figure}
  \begingroup
  \tikzset{every node/.append style={opacity = 1}}
  \[
    \begin{tikzpicture}[scale=.5,block/.append style = {fill=RegalBlue!60},baseline=(current bounding box.center)]
      \draw[block,fill opacity=.2] (0,0) rectangle node {$A\Sub{\LvlLow}$} (2,1);
      \draw[block,fill opacity=.4] (0,1) rectangle node {$A\Sub{\LvlMed}$} (2,2);
      \draw[block,fill opacity=.6] (0,2) rectangle node {$A\Sub{\LvlHigh}$} (2,3);
      \draw[block,fill opacity=.8] (0,3) rectangle node {$A\Sub{\LvlGod}$} (2,4);
      \draw[draw=none] (0,-1) rectangle node {$\Sh{\LvlTop}\ni A$} (2,0);
    \end{tikzpicture}
    \mapsto
    \begin{tikzpicture}[scale=.5,block/.append style = {fill=RegalBlue!60,fill opacity = .2},baseline=(current bounding box.center)]
      \draw[draw=none] (0,0) rectangle (2,4);
      \draw[draw=none] (0,0) rectangle (2,3);

      \draw[block,fill opacity = 0.2] (0,0) rectangle node {$A\Sub{\LvlLow}$} (2,1);
      \draw[block,fill opacity = 0.4] (0,1) rectangle node {$A\Sub{\LvlMed}$} (2,2);
      \draw[draw=none] (0,-1) rectangle node {$\vphantom{\Sh{\LvlTop}}\smash{\Sh{\Sl{\LvlTop}{\LvlPol{\LvlMed}}}}$} (2,0);
    \end{tikzpicture}
    \mapsto
    \begin{tikzpicture}[scale=.5,block/.append style = {fill=RegalBlue!60,fill opacity = .2},baseline=(current bounding box.center)]
      \draw[block,fill opacity = 0.2] (0,0) rectangle node {$A\Sub{\LvlLow}$} (2,1);
      \draw[block,fill opacity = 0.4] (0,1) rectangle node {$A\Sub{\LvlMed}$} (2,2);
      \draw[block,fill opacity = 0.4] (0,2) rectangle node {$A\Sub{\LvlMed}$} (2,3);
      \draw[block,fill opacity = 0.4] (0,3) rectangle node {$A\Sub{\LvlMed}$} (2,4);
      \draw[draw=none] (0,-1) rectangle node {$\Sh{\LvlTop}\ni \alert{\OpMod{\LvlPol{\LvlMed}}{A}}$} (2,0);
    \end{tikzpicture}
    \qquad
    \begin{tikzpicture}[scale=.5,block/.append style = {fill=RegalBlue!60,fill opacity = .2},baseline=(current bounding box.center)]
      \draw[block,fill opacity=0.2] (0,0) rectangle node {$A\Sub{\LvlLow}$} (2,1);
      \draw[block,fill opacity=0.4] (0,1) rectangle node {$A\Sub{\LvlMed}$} (2,2);
      \draw[block,fill opacity=0.6] (0,2) rectangle node {$A\Sub{\LvlHigh}$} (2,3);
      \draw[block,fill opacity=0.8] (0,3) rectangle node {$A\Sub{\LvlGod}$} (2,4);
      \draw[draw=none] (0,-1) rectangle node {$\Sh{\LvlTop}\ni A$} (2,0);
    \end{tikzpicture}
    \mapsto
    \begin{tikzpicture}[scale=.5,block/.append style = {fill=RegalBlue!60,fill opacity = .2},baseline=(current bounding box.center)]
      \draw[draw=none] (0,0) rectangle (2,4);
      \draw[draw=none] (0,0) rectangle (2,3);
      \draw[block,fill opacity=.8] (0,3) rectangle node {$A\Sub{\LvlGod}$} (2,4);
      \draw[block,fill opacity=.6] (0,2) rectangle node {$A\Sub{\LvlHigh}$} (2,3);
      \draw[draw=none] (0,-1) rectangle node {$\vphantom{\Sh{\LvlTop}}\smash{\Sh{\ClSubcat{\LvlTop}{\LvlPol{\LvlMed}}}}$} (2,0);
    \end{tikzpicture}
    \mapsto
    \begin{tikzpicture}[scale=.5,block/.append style = {fill=RegalBlue!60,fill opacity = .2},baseline=(current bounding box.center)]
      \draw[block,fill=gray,fill opacity=.05] (0,0) rectangle node {$\brc{\SlPt}$} (2,1);
      \draw[block,fill=gray,fill opacity=.05] (0,1) rectangle node {$\brc{\SlPt}$} (2,2);
      \draw[block,fill opacity=.6] (0,2) rectangle node {$A\Sub{\LvlHigh}$} (2,3);
      \draw[block,fill opacity=.8] (0,3) rectangle node {$A\Sub{\LvlGod}$} (2,4);
      \draw[draw=none] (0,-1) rectangle node {$\Sh{\LvlTop}\ni\alert{\ClMod{\LvlPol{\LvlMed}}{A}}$} (2,0);
    \end{tikzpicture}
  \]
  \endgroup
  \caption{The transparency and sealing monads for $\LvlMed\in\LVL$  on a sheaf $A\in\Sh{\LvlTop}$ visualized.}
  \label{fig:transp-and-sealing-viz}
\end{figure}

The transparency and sealing monads interact in two special ways,
which can be made apparent by appealing to the visualization of their behavior
that we present in \cref{fig:transp-and-sealing-viz}.
\begin{enumerate}
  \item The $\LvlPol{l}$-transparent part of a $\LvlPol{l}$-sealed sheaf is trivial, \ie we have $\prn{\OpMod{\LvlPol{l}}{\prn{\ClMod{\LvlPol{l}}{A}}}}\cong \brc{\SlPt}$.

  \item Any sheaf $A\in\Sh{\LvlTop}$ can be reconstructed as the fiber product $\prn{\OpMod{\LvlPol{l}}{A}}\times\Sub{\ClMod{\LvlPol{l}}{\prn{\OpMod{\LvlPol{l}}{A}}}}\ClMod{\LvlPol{l}}{A}$.
\end{enumerate}

The first property above immediately gives rise to a form of noninterference, which justifies our intent to interpret DCC's sealing monad as $\alert{\TpSeal{l}{A} = \ClMod{\LvlPol{l}}{A}}$.
\begin{observation}[Noninterference]\label{obs:naive-noninterference}
  Any map $\Mor{\ClMod{\LvlPol{l}}{A}}{\Con{bool}}$ is constant.
\end{observation}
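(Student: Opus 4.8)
The plan is to deduce the claim formally from the first interaction property recorded above --- that $\OpMod{\LvlPol{l}}{\prn{\ClMod{\LvlPol{l}}{A}}}\cong\brc{\SlPt}$ --- together with one further ingredient: that $\Con{bool}$ is $\LvlPol{l}$-transparent. Granting both, the argument is pure abstract nonsense. The transparency operation $\OpMod{\LvlPol{l}}{-}$ is an idempotent monad, so the $\LvlPol{l}$-transparent sheaves form a reflective subcategory of $\Sh{\LvlTop}$ whose reflector is $\OpMod{\LvlPol{l}}{-}$ itself; hence for any sheaf $X$ and any $\LvlPol{l}$-transparent sheaf $Y$, precomposition with the unit $\Mor{X}{\OpMod{\LvlPol{l}}{X}}$ induces a bijection between the maps $\Mor{\OpMod{\LvlPol{l}}{X}}{Y}$ and the maps $\Mor{X}{Y}$. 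Specializing $X$ to $\ClMod{\LvlPol{l}}{A}$ and $Y$ to $\Con{bool}$, and using the first interaction property to rewrite $\OpMod{\LvlPol{l}}{X}$ as the terminal sheaf $\brc{\SlPt}$, we conclude that every map $\Mor{\ClMod{\LvlPol{l}}{A}}{\Con{bool}}$ factors through $\brc{\SlPt}$, which is precisely to say that it is constant.

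It remains to verify that $\Con{bool}$ is $\LvlPol{l}$-transparent, i.e.\ that the unit $\Mor{\Con{bool}}{\OpMod{\LvlPol{l}}{\Con{bool}}}$ is invertible. I would check this in the presheaf presentation $\Sh{\LvlTop}\simeq\PrTop{\LVL}$ recalled in the footnote to the construction of $\LvlTop$: there the transparency monad acts by reindexing along the meet map $\prn{-}\wedge l\colon\LVL\to\LVL$, so that $\prn{\OpMod{\LvlPol{l}}{A}}_{k}\cong A_{k\wedge l}$ with the transition maps and the unit both inherited from $A$. This is manifestly an isomorphism on any constant presheaf, and $\Con{bool}$ is the constant sheaf on the two-element set --- equivalently the coproduct $\brc{\SlPt}+\brc{\SlPt}$ of two copies of the terminal sheaf --- so it is $\LvlPol{l}$-transparent. (Alternatively, one can compute directly with the exponential description $\OpMod{\LvlPol{l}}{\Con{bool}}\cong\Con{bool}^{\LvlPol{l}}$, using that $\LvlPol{l}$ is subterminal and that coproducts in a topos are disjoint and universal.)

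I anticipate no real obstacle: the only step carrying any content is the $\LvlPol{l}$-transparency of $\Con{bool}$, which reduces to a one-line computation once the presheaf description of $\OpMod{\LvlPol{l}}{-}$ is in hand, and everything else is a formal consequence of the two interaction properties already established in \cref{sec:key-ideas:modalities}. Should one prefer to avoid the localization argument entirely, the same conclusion can be extracted by unfolding the pushout presentation $\ClMod{\LvlPol{l}}{A}\cong\LvlPol{l}\sqcup_{\LvlPol{l}\times A}A$ of the sealing monad and observing that restriction to the open subspace $\Sl{\LvlTop}{\LvlPol{l}}$ collapses the gluing datum to the terminal object --- but the reflective-localization argument above is cleaner and keeps the role of \cref{sec:key-ideas:modalities} transparent.
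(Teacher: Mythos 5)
Your proposal is correct and follows essentially the same route as the paper: the paper's proof consists precisely of noting that the boolean sheaf is $\LvlPol{l}$-transparent, with the remaining step (that any map from an $\LvlPol{l}$-sealed sheaf to an $\LvlPol{l}$-transparent one factors through $\OpMod{\LvlPol{l}}{\prn{\ClMod{\LvlPol{l}}{A}}}\cong\brc{\SlPt}$) left implicit, exactly as you spell it out via the reflective-localization argument. Your verification of transparency of $\TpBool$ via the presheaf computation $\prn{\OpMod{\LvlPol{l}}{A}}_k\cong A_{k\wedge l}$ is a sound way to discharge the step the paper states without proof.
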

\begin{proof}
  We may verify that the boolean sheaf $\TpBool$ is $\LvlPol{l}$-transparent
  for all $l\in\LVL$.
\end{proof}

Our sealing monad above is well-known to the
type-and-topos--theoretic community as the \DefEmph{closed
modality}~\citep{rijke-shulman-spitters:2020,schultz-spivak:2019,sga:4}
corresponding to the open set $\LvlPol{l}\in\Opns{\LvlTop}$. In the context
of (total) dependent type theory, our sealing monad has excellent properties
not shared by those of \citet[Abadi \etal]{abadi:1999}, such as justifying
\emph{dependent} elimination rules and commuting with identity types.  In
contrast to the \emph{classified sets} of \citet[Kavvos]{kavvos:2019} which
cannot form a topos, our account of information flow is compatible with the
full internal language of a topos.

\subsection{Recursion and termination-insensitivity via sheaves of domains}\label{sec:key-ideas:recursion}

To incorporate recursion into our sheaf semantics of information flow, in this
section we consider \emph{internal dcpos} in $\Sh{\LvlTop}$, \ie sheaves of
dcpos. Later in the technical development of our paper, we work in the
axiomatic setting of synthetic domain theory, but all the necessary intuitions
can also be understood concretely in terms of dcpos.
Domain theory internal to $\Sh{\LvlTop}$ works very similarly to classical
domain theory, but it must be developed without appealing to the law of the
excluded middle or the axiom of choice as these do not hold in $\Sh{\LvlTop}$
except for a particularly degenerate security poset. \citet[De Jong and
Escard\'o]{dejong-escardo:2021} explain how to set up the basics of domain
theory in a suitably constructive manner, which we will not review.

The sheaf-theoretic domain semantics sketched above leads immediately to a new and
simplified account of termination-insensitivity. It is instructive to consider
whether there is an analogue to \cref{obs:naive-noninterference} for partial
continuous functions $\alert{\Mor{\ClMod{\LvlPol{l}}{A}}{\Lift\,{\TpBool}}}$.
It is not the case that $\Lift\,\TpBool$ is $\LvlPol{l}$-sealed for all
$l\in\LVL$, so it would not follow that any continuous map
$\Mor{\ClMod{\LvlPol{l}}{A}}{\Lift\,{\TpBool}}$ is constant. A partial function always extends to a total function on a restricted domain, however, so we may immediately conclude the following:
\begin{observation}[Termination-insensitive noninterference]
  For any continuous map $\Mor[f]{\ClMod{\LvlPol{l}}{A}}{\Lift\,{\TpBool}}$
  and elements $u,v:\ClMod{\LvlPol{l}}{A}$ with $fu$ and $fv$
  defined, we have $fu = fv$.
\end{observation}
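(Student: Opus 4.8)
The plan is to take the slogan ``a partial map is a total map on a restricted domain'' at face value. Given $\Mor[f]{\ClMod{\LvlPol{l}}{A}}{\Lift\,\TpBool}$, the defining property of the dominance presents $f$ as a subobject $D\hookrightarrow\ClMod{\LvlPol{l}}{A}$ --- its domain of definition, the pullback of $\TpBool\hookrightarrow\Lift\,\TpBool$ along $f$ --- together with a \emph{total} continuous map $\Mor[g]{D}{\TpBool}$ restricting $f$. To say that $fu$ and $fv$ are defined is precisely to say that the points $u,v$ factor through $D$, say as $\tilde{u},\tilde{v}:D$; then $fu = g\,\tilde{u}$ and $fv = g\,\tilde{v}$, so it suffices to show $g\,\tilde{u} = g\,\tilde{v}$.

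The crux is that the restricted domain $D$ is once again $\LvlPol{l}$-sealed. Since the transparency monad $\OpMod{\LvlPol{l}}{-}$ is the function space $\prn{-}\Sup{\LvlPol{l}}$, it preserves limits and in particular the monomorphism $D\hookrightarrow\ClMod{\LvlPol{l}}{A}$; combining this with the fact that $\OpMod{\LvlPol{l}}{\prn{\ClMod{\LvlPol{l}}{A}}}\cong\brc{\SlPt}$ shows that $\OpMod{\LvlPol{l}}{D}$ is subterminal. But $\tilde{u}$ exhibits a point of $\OpMod{\LvlPol{l}}{D}$, so in fact $\OpMod{\LvlPol{l}}{D}\cong\brc{\SlPt}$, and the fracture decomposition $D\cong\prn{\OpMod{\LvlPol{l}}{D}}\times\Sub{\ClMod{\LvlPol{l}}{\prn{\OpMod{\LvlPol{l}}{D}}}}\ClMod{\LvlPol{l}}{D}$ then degenerates to an isomorphism $D\cong\ClMod{\LvlPol{l}}{D}$, \ie $D$ is $\LvlPol{l}$-sealed.

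With this in hand, the proof of \cref{obs:naive-noninterference} applies verbatim with $D$ in place of $\ClMod{\LvlPol{l}}{A}$: as $\TpBool$ is $\LvlPol{l}$-transparent, every map from a $\LvlPol{l}$-sealed object into $\TpBool$ factors through the transparency reflection $\OpMod{\LvlPol{l}}{D}$, which we saw above is $\brc{\SlPt}$; hence $g$ is constant and $g\,\tilde{u} = g\,\tilde{v}$, yielding $fu = fv$.

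The only genuine content is the middle step --- that the domain of an inhabited partial map out of a sealed object stays sealed --- and it rests on exactly two facts: left-exactness of the transparency modality (so that it preserves the subobject $D\hookrightarrow\ClMod{\LvlPol{l}}{A}$) and the fracture theorem (which upgrades ``trivial transparent part'' to ``sealed''); everything else is bookkeeping about $\Lift$ and the dominance. Alternatively one may reason internally in $\Sh{\LvlTop}$: the hypothesis yields $b_{u},b_{v}:\TpBool$ with $fu,fv$ equal to their images in $\Lift\,\TpBool$; the open proposition $\LvlPol{l}$ makes $\ClMod{\LvlPol{l}}{A}$ contractible and hence forces $b_{u}=b_{v}$; and transparency of $\TpBool$ carries over to its identity types, so $b_{u} =_{\TpBool} b_{v}$ in fact holds unconditionally.
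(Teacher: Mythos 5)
Your argument is correct, and your main route is genuinely different from the way the paper actually discharges this observation. The paper's rigorous version is \cref{lem:constancy}(2), whose proof is a two-line internal argument: since $\TpBool$ is $\LvlPol{l}$-transparent, equality of the two return values may be checked under the assumption $\LvlPol{l}=\top$, and under that assumption the $\LvlPol{l}$-sealed domain is a singleton, so $u=v$ and hence $fu=fv$ --- exactly the ``alternative'' you sketch in your closing paragraph. Your primary argument instead takes the paper's informal slogan (``a partial function extends to a total function on a restricted domain'') and makes it precise: you pull back along the dominance to get the domain of definition $D$, use left-exactness of $\OpMod{\LvlPol{l}}{-}$ plus triviality of $\OpMod{\LvlPol{l}}{\prn{\ClMod{\LvlPol{l}}{A}}}$ to see that $\OpMod{\LvlPol{l}}{D}$ is subterminal, use the point $\tilde{u}$ to conclude $\OpMod{\LvlPol{l}}{D}\cong\brc{\SlPt}$, and then reduce to \cref{obs:naive-noninterference}. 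This is sound, and you are right that inhabitation is essential (an everywhere-undefined $f$ has $D\cong\ObjInit$, which is not sealed). What each approach buys: yours isolates a reusable structural fact --- the domain of definition of a partial map out of a sealed object, once inhabited, is again sealed --- at the cost of point-based reasoning (so as stated it addresses global elements, whereas the internal argument handles generalized elements uniformly) and a slightly redundant detour, since once $\OpMod{\LvlPol{l}}{D}\cong\brc{\SlPt}$ the factorization through the transparency reflection already gives constancy without invoking the fracture square to re-establish sealedness. The paper's internal proof is shorter, needs only that equality in a transparent type is itself transparent, and yields the ``partially constant'' formulation of \cref{lem:constancy} directly.
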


This is the sense in which termination-insensitive noninterference arises
automatically from the combination of domain theory with sheaf semantics for
information flow.

\section{Refined dependency core calculus}\label{sec:calculus}

We now embark on the technical development of this paper, beginning with
a call-by-push-value (cbpv) style~\citep{levy:2004} refinement of the dependency core calculus
over a poset $\LVL$ of security levels. We will work informally in
the logical framework of locally Cartesian closed categories \emph{\`a la}
\citet[Gratzer and Sterling]{gratzer-sterling:2020}; we will write $\TCat$ for the free locally
Cartesian closed category generated by all the constants and equations
specified herein.%

\NewDocumentCommand\RuleBlockJudgmental{}{
  \Tp\Pos,\Tp\Neg : \LfSort\mathbreak
  \Tm : \Tp\Pos\to\LfSort\mathbreak
  \TpU : \Tp\Neg\to\Tp\Pos\mathbreak
  \TpF : \Tp\Pos\to\Tp\Neg
}

\NewDocumentCommand\RuleBlockRetBind{}{
  \TmRet : A \to \TpU\TpF{A}\mathbreak
  \TmBind : \TpU\TpF{A} \to \prn{A\to \TpU{X}} \to \TpU{X}
}

\NewDocumentCommand\RuleBlockBindEqns{}{
  \TmBind\,\prn{\TmRet\,{u}}\,f \equiv\Sub{\TpU{X}} f\,u\mathbreak
  \TmBind\, u\, \TmRet \equiv\Sub{\TpU\TpF{A}} u\mathbreak
  \TmBind\, \prn{\TmBind\, u\, f}\, g \equiv\Sub{\TpU{X}} \TmBind\, u\, \prn{\lambda x. \TmBind\, \prn{f\,x}\, g}
}

\NewDocumentCommand\RuleBlockFix{}{
  \TmFix : \prn{\TpU{X}\to \TpU{X}}\to \TpU{X}\mathbreak
  \TmFix\, {f} \equiv f\,\prn{\TmFix\, f}
}

\NewDocumentCommand\RuleBlockFn{}{
  \TpFn : \Tp\Pos\to\Tp\Neg\to\Tp\Neg\mathbreak
  \TpFn.\Tm : \prn{A \to \TpU{X}}\cong \TpU\,\prn{\TpFn\,A\,X}
}

\NewDocumentCommand\RuleBlockProd{}{
  \TpProd : \Tp\Pos\to\Tp\Pos\to\Tp\Pos\mathbreak
  \TpProd.\Tm : {A \times B}\cong {\TpProd\,A\,B}
}

\NewDocumentCommand\RuleBlockUnit{}{
  \TpUnit : \Tp\Pos\mathbreak
  \TpUnit.\Tm : \ObjTerm\cong \TpUnit
}

\NewDocumentCommand\RuleBlockIsSealed{}{
  \SealedBelow{l} : \Tp\Pos\to\LfProp\mathbreak
  \SealedBelow{l}\,A \coloneqq \LvlPol{l} \to \Compr{x:A}{\forall y:A. x\equiv_A y}
}

\NewDocumentCommand\RuleBlockSeal{}{
  \TpSeal{l} : \Tp\Pos\to\SealedTp{l}\Pos\mathbreak
  \TmSeal{l} : A\to \TpSeal{l}\,A\mathbreak
}

\NewDocumentCommand\RuleBlockUnseal{}{
  \TmUnseal{l} : \brc{B : \SealedTp{l}\Pos} \to \TpSeal{l}\,{A} \to \prn{A \to B} \to B\mathbreak
  \TmUnseal{l}\, \prn{\TmSeal{l}\, u}\, f \equiv\Sub{B} f\,{u}\mathbreak
  \TmUnseal{l}\,u\,\prn{\lambda x. f\, \prn{\TmSeal{l}\,x}} \equiv\Sub{B} f\,{u}
}

\NewDocumentCommand\RuleBlockSum{}{
  \TpSum : \Tp\Pos\to \Tp\Pos\to\Tp\Pos\mathbreak
  \TmInl : A \to \TpSum\,A\,B\mathbreak
  \TmInr : B \to \TpSum\,A\,B
}

\NewDocumentCommand\RuleBlockCase{}{
  \TmCase : \TpSum\,A\,B\to \prn{A\to C}\to\prn{B\to C} \to C\mathbreak
  \TmCase\,\prn{\TmInl\, u}\,f\,g \equiv\Sub{C} f\,u\mathbreak
  \TmCase\,\prn{\TmInr\, v}\,f\,g \equiv\Sub{C} g\,v\mathbreak
  \TmCase\,u\,\prn{\lambda x. f\,\prn{\TmInl\,x}}\,\prn{\lambda x.f\,\prn{\TmInr\,x}} \equiv\Sub{C} f\, u
}

\NewDocumentCommand\RuleBlockTdcl{}{
  \TmTdcl{l} : \brc{A : \SealedTp{l}\Pos} \to \TpSeal{l}\TpU\TpF{A} \to \TpU\TpF{A}\mathbreak
  \TmTdcl{l}\, \prn{\TmSeal{l}\, \prn{\TmRet\, u}} \equiv\Sub{\TpU\TpF{A}} \TmRet\, u
}

\NewDocumentCommand\RuleBlockLvl{}{
  \LvlPol{l} : \LfProp\mathbreak
  \LvlPol{k}\to \LvlPol{l}&\prn{k\leq l\in\LVL}\mathbreak
  \LvlPol{k}\to \LvlPol{l}\to \LvlPol{k\land l}
}

\subsection{The basic language}

We have value types $A:\Tp\Pos$ and computation types $X:\Tp\Neg$; because our
presentation of cbpv does not include stacks, we will not include a separate
syntactic category for computations but instead access them through thunking.
The sorts of value and computation types and their adjoint connectives are
specified below:
\[
  \begin{mathblock}
    \def\mathbreak{\qquad}
    \RuleBlockJudgmental
  \end{mathblock}
\]

We let $A,B,C$ range over $\Tp\Pos$ and $X,Y,Z$ over $\Tp\Neg$. We will often
write $A$ instead of $\Tm\,{A}$ when it causes no ambiguity. Free computation types are
specified as follows:
\[
  \begin{mathblock}
    \RuleBlockRetBind
  \end{mathblock}
  \qquad
  \begin{mathblock}
    \RuleBlockBindEqns
  \end{mathblock}
\]

We support general recursion in computation types:
\[
  \begin{mathblock}
    \def\mathbreak{\qquad}
    \RuleBlockFix
  \end{mathblock}
\]

We close the universe $X:\Tp\Neg \vdash \Tm\,{\TpU{X}}$ of computation types
and thunked computations under all function types $\Tm\,{A}\to \Tm\,{\TpU{X}}$
by adding a new computation type constant $\TpFn$ equipped with a universal
property like so:
\[
  \begin{mathblock}
    \def\mathbreak{\qquad}
    \RuleBlockFn
  \end{mathblock}
\]

We will treat this isomorphism implicitly in our informal notation, writing
$\lambda x.u\prn{x}$ for both meta-level and object-level function terms.
Finite product types are specified likewise:
\[
  \begin{mathblock}
    \RuleBlockProd
  \end{mathblock}
  \qquad
  \begin{mathblock}
    \RuleBlockUnit
  \end{mathblock}
\]

Sum types must be treated specially because we do not intend them to be
coproducts in the logical framework: they should have a universal property for
types, not for sorts.
\[
  \begin{mathblock}
    \RuleBlockSum
  \end{mathblock}
  \qquad
  \begin{mathblock}
    \RuleBlockCase
  \end{mathblock}
\]

\subsection{The sealing modality and declassification}\label{sec:calculus:sealing}

For each $l\in\LVL$, we add an \emph{abstract} proof irrelevant proposition $\LvlPol{l} :
\LfProp$ to the language; this proposition represents the condition that the
``client'' has a lower security clearance than $l$. This ``redaction'' is
implemented by isolating the types that are \emph{sealed} at $\LvlPol{l}$,
\ie those that become singletons in the presence of $\LvlPol{l}$:
\[
  \begin{mathblock}
    \begin{mathblock}
      \RuleBlockLvl
    \end{mathblock}
    \qquad
    \begin{mathblock}
      \RuleBlockIsSealed
    \end{mathblock}
  \end{mathblock}
\]

We will write $\SealedTp{l}\Pos\subseteq\Tp\Pos$ for the subtype spanned by
value types $A$ for which $\SealedBelow{l}\,A$ holds.  As in
\cref{sec:key-ideas:modalities}, we will write $\SlPt$ for the unique element
of an $\LvlPol{l}$-sealed type in the presence of $u:\LvlPol{l}$.
Next we add the sealing modality itself:
\[
  \begin{mathblock}
    \RuleBlockSeal
  \end{mathblock}
  \qquad
  \begin{mathblock}
    \RuleBlockUnseal
  \end{mathblock}
\]

Finally a construct for declassifying the termination channel of a sealed computation:
\[
  \begin{mathblock}
    \def\mathbreak{\qquad}
    \RuleBlockTdcl
  \end{mathblock}
\]

\begin{remark}
  The $\LvlPol{l}$ propositions play a purely book-keeping role, facilitating verification of program equivalences in the same sense as
  the ghost variables of \citet[Owicki and Gries]{owicki-gries:1976}.
\end{remark}
\section{Denotational semantics in synthetic domain theory}\label{sec:denotational-semantics}

We will define our denotational semantics for information flow and
termination-insensitive noninterference in a category of domains indexed in
$\LVL$. To give a model of the theory presented in \cref{sec:calculus} means to
define a locally Cartesian closed functor $\Mor{\TCat}{\ECat}$ where $\ECat$ is
locally Cartesian closed. Unfortunately no category of domains can be locally
Cartesian closed, but we can \emph{embed} categories of domains in a locally
Cartesian closed category by following the methodology of \DefEmph{synthetic
domain theory}~\citep{fiore-rosolini:1997,fiore-plotkin-power:1997,
fiore-plotkin:1996,fiore-rosolini:1997:cpos,fiore:1997,fiore-rosolini:2001,matache-moss-staton:2021}.\footnote{In
particular we focus on the style of synthetic domain theory based on
Grothendieck topoi and well-complete objects. There is another very productive
strain of synthetic domain theory based on realizability and replete objects
that has different
properties~\citep{hyland:1991,phoa:1991,taylor:1991,reus:1995,reus:1996,reus:1999,reus-streicher:1993,reus-streicher:1999}.}

\subsection{A topos for information flow logic}

Recall that $\LVL$ is a poset of security levels closed under finite meets. The
presheaf topos $\LvlTop$ defined by the identification $\Sh{\LvlTop} =
\brk{\OpCat{\LVL},\SET}$ contains propositions $\Yo[\LVL]{l}$ corresponding to
every security level $l\in\LVL$, and is closed under both sealing and
transparency modalities $\OpMod{\Yo[\LVL]{l}}E,\ClMod{\Yo[\LVL]{l}}{E}$ in the
sense of \cref{sec:key-ideas:modalities}; in more traditional parlance, these are the
\emph{open} and \emph{closed} modalities corresponding to the proposition
$\Yo[\LVL]{l}$~\citep{rijke-shulman-spitters:2020}. It is possible to give a
denotational semantics for a \emph{total} fragment of our language in
$\Sh{\LvlTop}$, but to interpret recursion we need some kind of domain theory. We
therefore define a topos model of synthetic domain theory that lies over
$\LvlTop$ and hence incorporates the information flow modalities seamlessly.

\subsection{Synthetic domain theory over the information flow topos}

We will now work abstractly with a Grothendieck topos $\CmpTop$ equipped with a
dominance $\Sigma\in\Sh{\CmpTop}$, called the \DefEmph{Sierpi\'nski space},
satisfying several axioms that give rise to a reflective subcategory of objects
that behave like predomains.  We leave the construction of $\CmpTop$ to our
\PhraseAppendix, where it is built by adapting the recipe of \citet[Fiore
and Plotkin]{fiore-plotkin:1996}.

\begin{definition}[{\citet[Rosolini]{rosolini:1986}}]
  A \DefEmph{dominion} on a category $\ECat$ is a stable class of monos
  closed under identity and composition. Given a dominion $\mathcal{M}$ such
  that $\ECat$ has finite limits, a \DefEmph{dominance} for $\mathcal{M}$ is a
  classifier $\Mor|>->|[\top]{\ObjTerm{\ECat}}{\Sigma}$ for the elements of
  $\mathcal{M}$ in the sense that every
  $\Mor|>->|{U}{A}\in\mathcal{M}$ gives rise to a \emph{unique} map
  $\Mor[\chi_U]{A}{\Sigma}$ such that $U \cong \chi_U^*\top$.
\end{definition}

\NewDocumentCommand\ShLub{}{\bar{\boldsymbol{\omega}}}
\NewDocumentCommand\ShChain{}{\boldsymbol{\omega}}
\NewDocumentCommand\InclChain{}{\boldsymbol{\iota}}

If $\ECat$ is locally cartesian closed, we may form the \DefEmph{partial element
classifier} monad $\Mor[\Lift]{\ECat}{\ECat}$ for a dominance $\Sigma$, setting
$\Lift{E} = \Sum{\phi:\Sigma} \OpMod{\phi}{E}$; given $e\in\Lift{E}$, we will
write $\IsDefd{e}\in\Sigma$ for the termination support $\pi_1\,e$ of $e$. We
are particularly interested in the case where $\Lift$ has a final coalgebra
$\ShLub\cong\Lift{\ShLub}$ and an initial algebra
$\Lift{\ShChain}\cong\ShChain$.
When $\ECat$ is the category of sets, $\ShChain$ is just the natural numbers
object $\mathbb{N}$ and $\ShLub$ is $\mathbb{N}_\infty$, the natural numbers
with an infinite point adjoined. In general, one should think of $\ShChain$ as
the ``figure shape'' of a formal $\omega$-chain $\Mor{\ShChain}{E}$ that takes
into account the data of the dominance; then $\ShLub$ is the figure shape of a
formal $\omega$-chain equipped with its supremum, given by evaluation at the
infinite point $\infty\in\ShLub$. There is a canonical inclusion
$\Mor|>->|[\InclChain]{\ShChain}{\ShLub}$ witnessing the \emph{incidence
relation} between a chain equipped with its supremum and the underlying chain.

\begin{restatable}{sdtaxiom}{AxSigmaFiniteJoins}\label{axiom:sigma-finite-joins}
  $\Sigma$ has \DefEmph{finite joins} $\Disj{i<n}\phi_i$ that are preserved by
  the inclusion $\Sigma\subseteq\Omega$. We will write $\bot$ for the empty
  join and $\phi\lor\psi$ for binary joins.
\end{restatable}

\begin{definition}[Complete types]
  In the internal language of $\ECat$, a type $E$ is called \DefEmph{complete}
  when it is internally orthogonal to the comparison map $\Mor|>->|{\ShChain}{\ShLub}$. In
  the internal language, this says that for any formal chain
  $\Mor[e]{\ShChain}{E}$ there exists a \emph{unique} figure
  $\Mor[\hat{e}]{\ShLub}{E}$ such that $\hat{e}\circ\InclChain = e$. In this
  scenario, we write $\DLub{i\in\ShChain}{e\,i}$ for the evaluation
  $\hat{e}\,\infty$.
\end{definition}

\begin{restatable}{sdtaxiom}{AxOmegaInductive}\label{axiom:omega-inductive}
  The initial lift algebra $\ShChain$ is the colimit of the following
  $\omega$-chain of maps:
  \[
    \begin{tikzpicture}[diagram]
      \node (0) {$\ObjInit$};
      \node (1) [right = of 0] {$\Lift{\ObjInit}$};
      \node (2) [right = of 1] {$\Lift^2{\ObjInit}$};
      \node (3) [right = of 2] {$\ldots$};
      \draw[->] (0) to node [above] {$!$} (1);
      \draw[->] (1) to node [above] {$\Lift{!}$} (2);
      \draw[->] (2) to node [above] {$\Lift^2{!}$} (3);
    \end{tikzpicture}
  \]
\end{restatable}

\begin{definition}
  A type $E$ is called a \DefEmph{predomain} when $\Lift{E}$ is complete.
\end{definition}

\begin{restatable}{sdtaxiom}{AxSigmaPredomain}\label{axiom:sigma-predomain}
  The dominance $\Sigma$ is a predomain.
\end{restatable}

The category of predomains is complete, cocomplete,
closed under lifting, exponentials, and powerdomains, and is a reflective
exponential ideal in $\Sh{\CmpTop}$ --- thus better behaved than any classical
category of predomains. The predomains with $\Lift$-algebra structure serve as
an appropriate notion of \emph{domain} in which arbitrary fixed points can be
interpreted by taking the supremum of formal $\ShChain$-chains of
approximations $f^n\bot$; in addition to ``term-level'' recursion, we may also
interpret recursive types.
We impose two additional axioms for information flow:

\begin{restatable}{sdtaxiom}{AxLvls}\label{axiom:lvls}
  The topos $\CmpTop$ is equipped with a geometric morphism
  $\Mor[\LvlMapCmp]{\CmpTop}{\LvlTop}$ such that the induced functor
  $\Mor[\LvlMapCmp^*\Yo[\LVL]]{\LVL}{\Opns{\CmpTop}}$ is fully faithful and is
  valued in $\Sigma$-propositions. We will write $\gl{l}$ for each
  $\LvlMapCmp^*\Yo[\LVL]{l}$.
\end{restatable}

\cref{axiom:lvls} ensures that our domain theory include computations whose
termination behavior depends on the observer's security level.
The following \cref{axiom:finset-transparent-predomain} is applied to the semantic
noninterference property.

\begin{restatable}{sdtaxiom}{AxFinTransparentPredomain}\label{axiom:finset-transparent-predomain}
  Any constant object $\CmpTop^*\brk{n}\in\Sh{\CmpTop}$ for $\brk{n}$ a finite
  set is an $\LvlPol{l}$-transparent predomain for any $l\in\LVL$.
\end{restatable}

The category $\Sh{\CmpTop}$ is closed under as many topos-theoretic
universes~\citep{streicher:2005} as there are Grothendieck universes in the
ambient set theory. For any such universe $\UU_i$, there is a subuniverse
$\DDPos_i\subseteq\UU_i$ spanned by \emph{predomains}; we note that being a
predomain is a property and not a structure. The object $\DDPos_i$ can exist
because being a predomain is a local property that can be expressed in the
internal logic. In fact, the predomains can be seen to be not only a reflective
subcategory but also a reflective \emph{subfibration} as they are obtained by
the internal localization at a class of
maps~\citep{shulman:blog:reflective-subfibrations}; therefore the reflection
can be internalized as a connective $\Mor{\UU_i}{\DDPos_i}$ implemented as a
quotient-inductive type~\citep{shulman:blog:localization-hit}.
We may define the corresponding universe of domains $\DDNeg_i$ to be the
collection of predomains in $\DDPos_i$ equipped with $\Lift$-algebra
structures. We hereafter suppress universe levels.

\subsection{The stabilizer of a predomain and its action}\label{sec:stabilizer}

In this section, we work internally to the synthetic domain theory of
$\Sh{\CmpTop}$; first we recall the definition of an \emph{action} for a
commutative monoid.

\begin{definition}
  Let $\prn{M,0,+}$ be a monoid object in the category of predomains; an
  \DefEmph{$M$-action structure} on a predomain $A$ is given by a function
  $\Mor[\mathord{\parallel_A}]{M\times A}{A}$ satisfying the identities $0\parallel_A a = a$ and $m \parallel_A
  n\parallel_A a = \prn{m + n}\parallel_A a$.
\end{definition}

Write $\Sigma^\lor$ for the additive monoid structure of the Sierpi\'nski
domain, with addition given by $\Sigma$-join $\phi\lor\psi$ and the
unit given by the non-terminating computation $\bot$. Our terminology below is
inspired by stabilizer subgroups in algebra.

\begin{definition}[The stabilizer of a predomain]\label{def:stabilizer}
  Given a predomain $A$, we define the \DefEmph{stabilizer} of $A$ to be the
  submonoid $\Stab{A}\subseteq\Sigma^\lor$ spanned by $\phi : \Sigma^\lor$ such that $A$ is
  $\phi$-sealed, \ie the projection map $\Mor{A\times \phi}{\phi}$ is an
  isomorphism.
\end{definition}

\begin{remark}
  We can substantiate the analogy between \cref{def:stabilizer} and stabilizer
  subgroups in algebra. Up to coherence issues that could be solved using
  higher categories, any category $\CatIdent{P}$ of predomains closed under
  subterminals and pushouts can be structured with a monoid action over
  $\Sigma^\lor$; the action
  $\Mor[\mathord{\parallel\Sub{\CatIdent{P}}}]{\Sigma^\lor\times\CatIdent{P}}{\CatIdent{P}}$
  takes $A$ to the $\phi$-sealed object $\phi\parallel\Sub{\CatIdent{P}} A
  \coloneqq \ClMod{\phi}{A}$. Up to isomorphism, the identities for a
  $\Sigma^\lor$-action can be seen to be satisfied. Then we say that the
  stabilizer of a predomain $A\in\CatIdent{P}$ is the submonoid
  $\Stab{A}\subseteq\Sigma^\lor$ consisting of propositions $\phi$ such that
  $\phi\parallel\Sub{\CatIdent{P}} {A} \cong {A}$.
\end{remark}

\begin{lemma}\label{lem:stabilizer-action}
  For any predomain $A$, we may define a canonical $\Stab{A}$-action on
  $\Lift{A}$:
  \begin{align*}
    \parallel\Sub{\Lift{A}} &: \Stab{A}\times \Lift{A}\to\Lift{A}\\
    \phi \parallel\Sub{\Lift{A}} a &=
    \prn{
      \phi \lor \IsDefd{a},
      \brk{
        \phi\hookrightarrow \SlPt,
        \IsDefd{a}\hookrightarrow a
      }
    }
  \end{align*}
\end{lemma}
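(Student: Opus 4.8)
The plan is to verify the claimed operation is well-defined (the formula makes sense) and then check the two monoid-action identities. First I would unpack the formula: given $\phi\in\Stab{A}$ and $a\in\Lift{A}=\Sum{\psi:\Sigma}\OpMod{\psi}{A}$, write $a=(\IsDefd{a},a')$ with $a':\OpMod{\IsDefd{a}}{A}$. The proposed result has termination support $\phi\lor\IsDefd{a}$, which is a legitimate element of $\Sigma$ by \cref{axiom:sigma-finite-joins}; so I must produce an element of $\OpMod{\phi\lor\IsDefd{a}}{A}$. By the universal property of $\lor$ as a join in $\Sigma$ (hence as a pushout of subterminals), a map out of $\phi\lor\IsDefd{a}$ into $A$ is given by a map $\phi\to A$ and a map $\IsDefd{a}\to A$ that agree on $\phi\land\IsDefd{a}$. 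For the first I use $\phi\hookrightarrow\SlPt$: since $\phi\in\Stab{A}$, the type $A$ is $\phi$-sealed, so under the assumption $\phi$ there is a canonical element $\SlPt:A$. For the second I use $a'$ itself. On the overlap $\phi\land\IsDefd{a}$ the two agree because, again, $A$ is $\phi$-sealed and hence a singleton there, so \emph{any} two elements coincide. This is exactly what the bracket notation $\brk{\phi\hookrightarrow\SlPt, \IsDefd{a}\hookrightarrow a}$ abbreviates, so the operation is well-defined.

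Next I would check the unit law $\bot\parallel_{\Lift{A}}a = a$. Here $\bot$ is the unit of $\Sigma^\lor$, and $\bot\in\Stab{A}$ trivially since $A\times\bot\to\bot$ is an iso (both are initial/empty). Then $\bot\lor\IsDefd{a}=\IsDefd{a}$, and the copairing $\brk{\bot\hookrightarrow\SlPt,\IsDefd{a}\hookrightarrow a}$ restricted along the iso $\IsDefd{a}\cong\bot\lor\IsDefd{a}$ is just $a'$, since the $\bot$-component is vacuous. Hence $\bot\parallel_{\Lift A}a=(\IsDefd a,a')=a$.

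For the associativity law $\phi\parallel_{\Lift A}(\psi\parallel_{\Lift A}a) = (\phi\lor\psi)\parallel_{\Lift A}a$ (for $\phi,\psi\in\Stab{A}$, noting $\Stab{A}$ is a submonoid so $\phi\lor\psi\in\Stab{A}$), I would compare termination supports and then the underlying partial elements. On supports: the left side has support $\phi\lor(\psi\lor\IsDefd a)$ and the right has $(\phi\lor\psi)\lor\IsDefd a$; these agree by associativity of $\lor$ in $\Sigma$. For the partial elements, both sides are maps out of this common support into $A$; by the universal property of the (iterated) join it suffices to check they agree after restriction to $\phi$, to $\psi$, and to $\IsDefd a$. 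On $\IsDefd a$ both give $a'$; on $\phi$ and on $\psi$ both give $\SlPt$. But here one must be slightly careful: the two "$\SlPt$"s arising from $\phi$-sealedness and from $\psi$-sealedness must be shown equal where the supports overlap — which again follows because under any of $\phi$, $\psi$, or their join, $A$ is a singleton, so all elements are equal. So the two maps out of the join agree on each piece and hence are equal.

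The main obstacle, such as it is, is purely bookkeeping around the join/pushout universal property: one must be disciplined about the fact that $\phi\lor\psi$ is a pushout $\phi\sqcup_{\phi\land\psi}\psi$ of subterminals, so that defining and comparing maps out of it always reduces to defining/comparing maps out of $\phi$ and out of $\psi$ together with a compatibility check on $\phi\land\psi$ — and that every such compatibility check is discharged for free by $\phi$- (or $\psi$-)sealedness of $A$, i.e.\ by proof-irrelevance of $A$ under the relevant proposition. Once this pattern is isolated, both action identities are immediate.
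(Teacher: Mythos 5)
Your proof is correct: the well-definedness of the copairing out of $\phi\lor\IsDefd{a}$ (discharged on the overlap by $\phi$-sealedness of $A$), and the verification of the unit and associativity laws by comparing restrictions to the covering pieces $\phi$, $\psi$, $\IsDefd{a}$, are exactly the routine checks intended. The paper states this lemma without an explicit proof, treating it as immediate from the universal property of joins in $\Sigma$ together with sealedness, so your write-up simply makes explicit the verification the paper leaves implicit.
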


The stabilizer action described in \cref{lem:stabilizer-action} will be used to
implement declassification of termination channels in our denotational
semantics.

\begin{lemma}\label{lem:stab-action-preserves-ret}
  The stabilizer action preserves terminating computations in the sense that $\phi\parallel\Sub{\Lift{A}} u = u$ for $\phi:\Stab{A}$ and terminating $u:\Lift{A}$.
\end{lemma}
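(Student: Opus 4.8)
The plan is a direct unfolding of the definition of $\parallel\Sub{\Lift{A}}$ from \cref{lem:stabilizer-action}, using that a terminating $u:\Lift{A}$ has $\IsDefd{u}=\top$ and is therefore of the form $\prn{\top,a}$ for $a:A$ its unique value (recall $\OpMod{\top}{A}\cong A$). Substituting $\IsDefd{u}=\top$ into $\phi\parallel\Sub{\Lift{A}}u = \prn{\phi\lor\IsDefd{u},\,\brk{\phi\hookrightarrow\SlPt,\,\IsDefd{u}\hookrightarrow u}}$, the new termination support is $\phi\lor\top$, which is again $\top$ (joins in $\Sigma$ being computed as in $\Omega$ by \cref{axiom:sigma-finite-joins}). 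So $\phi\parallel\Sub{\Lift{A}}u$ is again a terminating computation, and it remains only to check that its value is $a$.

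First I would unwind that value, which is the partial element $\brk{\phi\hookrightarrow\SlPt,\,\IsDefd{u}\hookrightarrow u}$ defined on the cover $\phi\lor\top$ of $\top$. In a topos such a definition-by-cases amounts to a pair of maps out of the subterminals $\phi$ and $\IsDefd{u}=\top$ that agree on the overlap $\phi\land\top = \phi$; since the branch over $\top$ assigns $a$ and already covers everything, the resulting element equals $a$ \emph{provided} the compatibility obligation $\SlPt = a$ holds in the presence of $\phi$.

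This compatibility check is the only place the hypothesis $\phi:\Stab{A}$ enters, and it is the crux --- albeit a mild one. By \cref{def:stabilizer}, $\phi\in\Stab{A}$ means $A$ is $\phi$-sealed, i.e.\ the projection $\Mor{A\times\phi}{\phi}$ is an isomorphism, so that $A$ becomes a singleton once $\phi$ holds; hence its unique inhabitant $\SlPt$ coincides with $a$ under $\phi$, discharging the obligation. With that, $\brk{\phi\hookrightarrow\SlPt,\,\IsDefd{u}\hookrightarrow u}$ is literally $a$, so $\phi\parallel\Sub{\Lift{A}}u = \prn{\top,a} = u$. I do not anticipate any genuine difficulty beyond being careful that the case-split is presented as a bona fide map out of a union of subterminals whose legs are compatible --- which is exactly what $\phi$-sealedness supplies.
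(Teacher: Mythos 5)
Your proposal is correct and follows essentially the same route as the paper, whose entire proof is the observation that $\phi\lor\top=\top$, so that for terminating $u$ the action returns $u$. You merely spell out the value-component check that the paper elides --- noting that the compatibility $\SlPt = a$ under $\phi$ is supplied by $\phi$-sealedness, which is really the well-definedness of the case-split already implicit in \cref{lem:stabilizer-action} --- so there is no substantive difference.
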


\begin{proof}
  We observe that $\phi\lor \top = \top$, hence for terminating $a$ we have
  $\phi\parallel\Sub{\Lift{A}} a = a$.
\end{proof}

\subsection{The denotational semantics}\label{sec:interpretation}

We now define an algebra for the theory $\TCat$ in $\Sh{\CmpTop}$;
the initial prefix of this algebra is standard:
\begin{gather*}
  \begin{mathblock}
    \bbrk{\Tp\Pos} = \DDPos\\
    \bbrk{\Tp\Neg} = \DDNeg\\
    \bbrk{\TpU}\,X = X\\
    \bbrk{\TpF}\,A= \Lift{A}\\
    \bbrk{\TmRet}\,a = a\\
    \bbrk{\TmBind}\,m\,f = f^\sharp\,m\\
    \bbrk{\TmFix}\,f = \Con{fix}\,{f}\\
    \bbrk{\TpFn}\,A\,X = A\Rightarrow X\\
    \bbrk{\TpFn.\Tm} = \MathComment{canonical}\\
  \end{mathblock}
  \begin{mathblock}
    \bbrk{\TpProd}\,A\,B = A\times B\\
    \bbrk{\TpProd.\Tm} = \MathComment{canonical}\\
    \bbrk{\TpUnit} = \ObjTerm{\DDPos}\\
    \bbrk{\TpUnit.\Tm} = \MathComment{canonical}\\
    \bbrk{\TpSum}\,A\,B = A+B\\
    \bbrk{\TmInl}\, a = \TmInl\,a\\
    \bbrk{\TmInr}\, a = \TmInr\,a\\
    \bbrk{\TmCase}\, u\,f\,g =
      \begin{cases}
        f\prn{x} & \text{if}\ u = \TmInl\,x\\
        g\prn{x} & \text{if}\ u = \TmInr\,x
      \end{cases}
  \end{mathblock}
\end{gather*}

Note that the coproduct $A+B$ above is computed in the category of predomains\footnote{Any reflective subcategory of a cocomplete category is cocomplete: first compute the colimit in the outer category, and then apply the reflection.} and need not be preserved by the embedding into $\Sh{\CmpTop}$. We next add the security levels and the sealing modality, interpreted as the
pushout of predomains $\ClMod{\LvlPol{l}}{A}$, again computed in the category of predomains.
We define the unsealing operator for $B :
\bbrk{\SealedTp{l}\Pos}$ using the universal property of the pushout.
\[
  \begin{mathblock}
    \bbrk{\LvlPol{l}} = \LvlPol{l} = \LvlMapCmp^*\Yo[\LVL]{l}\\
    \bbrk{\TpSeal{l}}\, A = \ClMod{\LvlPol{l}}{A}\\
    \bbrk{\TmSeal{l}}\, a = \ClIntro{\LvlPol{l}}\,{a}\\
  \end{mathblock}
  \qquad
  \begin{mathblock}
    \bbrk{\TmUnseal{l}}\,u\,f =
    \\
    \quad
    \begin{cases}
      f{x} & \text{if}\ u = \ClIntro{\LvlPol{l}}\,x\\
      \SlPt & \text{if}\ u = \SlPt
    \end{cases}
  \end{mathblock}
\]

\begin{observation}\label{obs:seal-universal-property}
  Morphisms $\Mor{\ClMod{\LvlPol{l}}{A}}{B}$ are in bijective correspondence
  with morphisms $\Mor{A}{B}$ that restricts to a \emph{weakly constant} function
  under $\LvlPol{l}$.
\end{observation}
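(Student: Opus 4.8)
The plan is to read the statement directly off the universal property of the pushout defining the sealing modality. Recall from \cref{sec:key-ideas:modalities} and the interpretation of $\TpSeal{l}$ in \cref{sec:interpretation} that $\ClMod{\LvlPol{l}}{A}$ is the pushout $\LvlPol{l}\sqcup\Sub{\LvlPol{l}\times A}A$ (computed in predomains), so it carries a cocone over the span $A\xleftarrow{\pi_A}\LvlPol{l}\times A\xrightarrow{\pi_{\LvlPol{l}}}\LvlPol{l}$ of projections, whose $A$-leg is $\ClIntro{\LvlPol{l}}\colon A\to\ClMod{\LvlPol{l}}{A}$ and whose $\LvlPol{l}$-leg picks out $\SlPt$. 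By the universal property of this pushout, a morphism $h\colon\ClMod{\LvlPol{l}}{A}\to B$ is the same datum as a cocone with apex $B$: a morphism $f\colon A\to B$, a morphism $g\colon\LvlPol{l}\to B$, and a proof that $f\circ\pi_A = g\circ\pi_{\LvlPol{l}}$ as morphisms $\LvlPol{l}\times A\to B$. Since predomains form a reflective subcategory of $\Sh{\CmpTop}$, this universal property is available for every codomain $B$ in the category of predomains, which is all we require.

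The only real work is to recognize the cocone compatibility condition as the assertion that $f$ restricts to a weakly constant function under $\LvlPol{l}$. Unfolding $f\circ\pi_A = g\circ\pi_{\LvlPol{l}}$ in the internal language of $\Sh{\CmpTop}$, it says precisely that, in the presence of $u\colon\LvlPol{l}$, we have $f\,a\equiv_B g\,u$ for every $a\colon A$; reading $g$ as a \emph{chosen common value}, this is exactly the data of $f$ together with a witness that $\LvlPol{l}$ forces $f$ to factor through a singleton --- weak constancy under $\LvlPol{l}$, in the same bundled sense as the predicate $\SealedBelow{l}$ of \cref{sec:calculus:sealing}. Hence cocones with apex $B$ out of the defining span are in bijection with weakly-constant-under-$\LvlPol{l}$ morphisms $A\to B$, and composing this bijection with the pushout universal property yields the claim, the forward map being precomposition with $\ClIntro{\LvlPol{l}}$. (Equivalently, this is just the universal property of the closed modality $\ClMod{\LvlPol{l}}{}$ as a reflective localization, with the ``weak constancy'' clause spelling out the cocone condition for codomains that need not themselves be $\LvlPol{l}$-sealed.)

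To finish I would check the two round trips: precomposing the pushout extension of a weakly constant $f$ with $\ClIntro{\LvlPol{l}}$ returns the $A$-leg, namely $f$ itself --- together with its recorded common value over $\LvlPol{l}$, since under $\LvlPol{l}$ the extension necessarily lands in the $\SlPt$ component; and conversely, extending the restriction of an arbitrary $h\colon\ClMod{\LvlPol{l}}{A}\to B$ recovers $h$ by the uniqueness half of the pushout universal property. The one point that must be handled carefully --- and the place where a hasty argument goes wrong --- is the bookkeeping of the $\LvlPol{l}$-component $g$ of the cocone: ``weakly constant under $\LvlPol{l}$'' has to be read with a chosen common value (exactly as $\SealedBelow{l}$ bundles a chosen element), since otherwise the correspondence breaks for codomains $B$ that are not themselves $\LvlPol{l}$-sealed. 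Everything else is a mechanical unfolding of the universal property of the pushout.
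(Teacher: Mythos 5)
Your argument is correct and is precisely the justification the paper has in mind: the observation is meant to be read off from the universal property of the pushout $\LvlPol{l}\sqcup_{\LvlPol{l}\times A}A$ defining $\ClMod{\LvlPol{l}}{A}$, which is why the paper states it without proof. Your caveat about bundling a chosen common value over $\LvlPol{l}$ is well taken --- read as a mere property, the bijection can fail when $A$ has no element under $\LvlPol{l}$ while $B$ admits several partial elements there --- and it is consistent with how the observation is actually deployed: for $\bbrk{\TmUnseal{l}}$ the codomain is $\LvlPol{l}$-sealed, so the $\LvlPol{l}$-leg of the cocone is unique, and for $\bbrk{\TmTdcl{l}}$ that leg is supplied explicitly as $\LvlPol{l}\parallel_{\Lift{A}}\bot$.
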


We may now interpret the termination declassification operation.
Fixing a sealed type $A : \bbrk{\SealedTp{l}\Pos}$, we must define the dotted lift
below using the universal property of the pushout and the action of the
stabilizer of $A$ on $\Lift{A}$, noting that $\LvlPol{l}\in\Stab{A}$ by
assumption:
\begin{gather*}
  \begin{tikzpicture}[diagram,baseline=(nw.base)]
    \node (nw) {$A$};
    \node (ne) [right = 2.5cm of nw] {$\Lift{A}$};
    \node (sw) [below = of nw] {$\ClMod{\LvlPol{l}}\Lift{A}$};
    \draw[>->] (nw) to node [above] {$\eta_A$} (ne);
    \draw[->] (nw) to node [left] {$\ClIntro{\LvlPol{l}}\circ\eta_A$} (sw);
    \draw[->,exists] (sw) to node [sloped,below] {$\bbrk{\TmTdcl{l}}$} (ne);
  \end{tikzpicture}
  \qquad
  \begin{mathblock}
    \bbrk{\TmTdcl{l}}\, u =\\
    \quad
    \begin{cases}
      \LvlPol{l}\parallel\Sub{\Lift{A}} x & \text{if } u = \ClIntro{\LvlPol{l}}\,x\\
      \LvlPol{l}\parallel\Sub{\Lift{A}}\bot & \text{if } u = \SlPt
    \end{cases}
  \end{mathblock}
\end{gather*}

To see that the above is well-defined, we observe that under $\LvlPol{l}$
both branches return the (unique) computation whose termination support is
$\LvlPol{l}$. With this definition, the required computation rule holds by
virtue of \cref{lem:stab-action-preserves-ret}.

\subsection{Noninterference in the denotational semantics}

\begin{definition}
  A function $\Mor[u]{A}{B}$ is called \DefEmph{weakly
  constant}~\citep{kraus-escardo-coquand-altenkirch:2017} if for all $x,y:A$ we
  have $u\,x = u\,y$. A partial function $\Mor[u]{A}{\Lift{B}}$ is called
  \DefEmph{partially constant} if for all $x,y: A$ such that
  $\IsDefd{u\,x}\land\IsDefd{u\,y}$, we have $u\,x = u\,y$.
\end{definition}

For the following, let $l\in\LVL$ be a security level.

\begin{restatable}{lemma}{LemConstancy}\label{lem:constancy}
  Let $A$ be a $\LvlPol{l}$-sealed predomain and let $B$ be a
  $\LvlPol{l}$-transparent predomain; then (1) any function $\Mor{A}{B}$ is
  weakly constant, and (2) any partial function $\Mor{A}{\Lift{B}}$ is
  partially constant.
\end{restatable}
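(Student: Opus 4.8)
The plan is to reduce both parts to the universal property of the sealing/closed modality, using the characterization of $\LvlPol{l}$-transparency as $B \cong B^{\LvlPol{l}}$ and $\LvlPol{l}$-sealedness of $A$ as $A \times \LvlPol{l} \xrightarrow{\sim} \LvlPol{l}$ (equivalently, $A$ becomes a singleton under $\LvlPol{l}$). First I would handle part (1). Given $f : A \to B$, I want to show $f\,x = f\,y$ for all $x,y : A$. Working in the internal logic of $\Sh{\CmpTop}$, it suffices to prove the proposition $f\,x \equiv_B f\,y$, and since $B$ is $\LvlPol{l}$-transparent, $B \cong B^{\LvlPol{l}}$, so the identity type of $B$ is likewise $\LvlPol{l}$-transparent; hence to prove $f\,x \equiv_B f\,y$ it suffices to prove it \emph{in the presence of} $u : \LvlPol{l}$. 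But under $\LvlPol{l}$, the hypothesis $\SealedBelow{l}\,A$ gives that $A$ is a singleton, so $x \equiv_A y$, and therefore $f\,x \equiv_B f\,y$ by congruence. This is the whole argument for (1): transparency of $B$ lets us assume $\LvlPol{l}$, and sealedness of $A$ then collapses $x$ and $y$.

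For part (2), let $u : A \to \Lift{B}$ be a partial function and suppose $x,y : A$ with $\IsDefd{u\,x} \land \IsDefd{u\,y}$; I must show $u\,x = u\,y$ in $\Lift{B}$. Recall $\Lift{B} = \Sum{\phi : \Sigma}\OpMod{\phi}{B}$, so an element is a pair of a termination support in $\Sigma$ and, under that support, an element of $B$. The termination supports $\IsDefd{u\,x}$ and $\IsDefd{u\,y}$ are both $\top$ by hypothesis, so the two elements agree in their first components and it remains to compare the second components, which live in $\OpMod{\top}{B} \cong B$. As in part (1), this reduces to a claim about an identity type of $B$, which is $\LvlPol{l}$-transparent, so we may assume $u : \LvlPol{l}$; and under $\LvlPol{l}$, the sealedness of $A$ forces $x \equiv_A y$, whence $u\,x \equiv u\,y$. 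So (2) follows from (1) applied ``pointwise under the termination support,'' once we unfold the definition of $\Lift$ and use that $\Sigma$ (and hence the first projection) is not the obstacle because both supports are already $\top$.

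The main subtlety — and the step I would be most careful about — is the claim that transparency of $B$ transfers to its identity types, so that $\LvlPol{l}$-transparent goals about equality in $B$ can be discharged under the assumption $\LvlPol{l}$. This is where the good behavior of the open modality in (total) dependent type theory is essential: the open modality $\OpMod{\LvlPol{l}}{-}$ is a left-exact idempotent monad, so it commutes with identity types, and a transparent type has transparent path spaces. Concretely, to prove a proposition $P$ that is $\LvlPol{l}$-transparent, it suffices to prove $\LvlPol{l} \to P$, since $P \cong P^{\LvlPol{l}}$ and $P^{\LvlPol{l}}$ is inhabited exactly when $\LvlPol{l} \to P$ is. For part (2) there is the additional minor bookkeeping that one must first project to the termination supports in $\Sigma$, observe they are both $\top$ by hypothesis, and only then invoke the transparency argument on the $B$-valued second components; here one uses that $\Lift{B}$'s equality decomposes as equality of supports together with equality of values over the common support. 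None of these steps requires excluded middle or choice, which is important since we are reasoning inside $\Sh{\CmpTop}$.
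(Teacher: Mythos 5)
Your proposal is correct and follows essentially the same route as the paper's proof: for both parts, transparency of $B$ lets you assume $\LvlPol{l}$ when proving an equation in $B$, whereupon sealedness of $A$ collapses $x$ and $y$; your part (2) just spells out the unfolding of $\Lift{B}$ and the justification (lex-ness of the open modality) that the paper leaves implicit.
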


The following lemma follows from \cref{axiom:finset-transparent-predomain}.

\begin{restatable}{lemma}{LemBoolModal}\label{lem:bool-modal}
  The predomain $\bbrk{\TpBool}$ is $\LvlPol{l}$-transparent.
\end{restatable}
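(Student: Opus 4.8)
The plan is to unfold the definitions of $\bbrk{\TpBool}$ and of $\LvlPol{l}$-transparency and reduce the claim to \cref{axiom:finset-transparent-predomain}. First I would recall that the boolean type $\TpBool$ is interpreted in $\Sh{\CmpTop}$ as the constant object $\CmpTop^*\brk{2}$ on the two-element set, equipped with its canonical predomain structure (it is a finite discrete predomain, and finite coproducts of the terminal predomain are again predomains, as recorded earlier when the interpretation of sum types was discussed). In other words, $\bbrk{\TpBool} = \bbrk{\TpUnit} + \bbrk{\TpUnit} \cong \CmpTop^*\brk{2}$ as predomains.

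Next I would invoke \cref{axiom:finset-transparent-predomain} directly: it asserts that every constant object $\CmpTop^*\brk{n}$ on a finite set $\brk{n}$ is an $\LvlPol{l}$-transparent predomain for every $l\in\LVL$. Taking $n = 2$ yields that $\CmpTop^*\brk{2}$ is $\LvlPol{l}$-transparent, and transporting along the isomorphism $\bbrk{\TpBool}\cong\CmpTop^*\brk{2}$ — noting that $\LvlPol{l}$-transparency is defined by the unit of the open modality $\OpMod{\LvlPol{l}}{}$ being an isomorphism, hence is invariant under isomorphism of objects — gives that $\bbrk{\TpBool}$ is $\LvlPol{l}$-transparent, which is exactly the statement.

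The only genuine content is the identification $\bbrk{\TpBool}\cong\CmpTop^*\brk{2}$, i.e.\ checking that the coproduct $\ObjTerm{\DDPos}+\ObjTerm{\DDPos}$ computed in the reflective subcategory of predomains agrees with the constant sheaf on $\brk{2}$; since $\CmpTop^*$ is a left adjoint it preserves coproducts, so $\CmpTop^*\brk{2} = \CmpTop^*\ObjTerm{}+\CmpTop^*\ObjTerm{} = \ObjTerm{\Sh{\CmpTop}}+\ObjTerm{\Sh{\CmpTop}}$, and one checks this already lies in the predomains (being a finite set), so it also computes the coproduct there. I expect this bookkeeping to be the main — though still routine — obstacle; everything else is a direct appeal to \cref{axiom:finset-transparent-predomain}. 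In fact the paper's own remark that ``the following lemma follows from \cref{axiom:finset-transparent-predomain}'' signals that the intended proof is essentially the one-line invocation above.
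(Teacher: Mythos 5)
Your proposal is correct and follows essentially the same route as the paper: identify $\bbrk{\TpBool}$ with the constant sheaf $\CmpTop^*\brc{0,1}$ (using \cref{axiom:finset-transparent-predomain} to see that this constant object is already a predomain, so the coproduct of singleton predomains is not altered by the reflection) and then read off $\LvlPol{l}$-transparency from the same axiom. The extra bookkeeping you spell out about coproducts in the reflective subcategory is exactly what the paper's one-line proof leaves implicit.
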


In order for \cref{lem:constancy} to have any import as far as the equational
theory is concerned, we must establish computational adequacy. This is the
topic of \cref{sec:adequacy}.

\section{Adequacy of the denotational semantics}\label{sec:adequacy}

We must argue that the denotational semantics agrees with the theory as far as
convergence and return values is concerned. We do so using a Plotkin-style
logical relations argument, phrased in the language of Synthetic Tait
Computability~\citep{sterling:2021:thesis,sterling-harper:2021,sterling-angiuli:2021}.

\subsection{Synthetic Tait computability of formal approximation}\label{sec:logical-relation}

In this section we will work abstractly with a Grothendieck topos $\GlTop$
satisfying several axioms that will make it support a Kripke logical relation
for adequacy.

\begin{notation}
  For each universe $\UU\in\Sh{\GlTop}$ there is a type $\ALG{\UU}$ of
  internal $\TCat$-algebras whose type components are valued in $\UU$.
  $\ALG{\UU}$ is a dependent record containing a field for every constant in
  the signature by which we generated $\TCat$.
  Assuming enough universes, functors $\Mor{\TCat}{\Sl*{\Sh{\GlTop}}{E}}$
  correspond up to isomorphism to morphisms $\Mor{E}{\ALG{\UU}}$. This is the
  relationship between the internal language and the \emph{functorial
  semantics} \`a la Lawvere~\citep{lawvere:thesis:reprint}.
\end{notation}

\begin{restatable}{stcaxiom}{AxPhases}\label{axiom:syn-cmp-phases}
  There are two disjoint propositions $\SynOpn,\CmpOpn\in\Opns{\GlTop}$
  such that $\SynOpn\land\CmpOpn=\bot$. We will refer to these as the
  \DefEmph{syntactic} and \DefEmph{computational phases} respectively. We will write
  $\BaseOpn=\SynOpn\lor\CmpOpn$ for the disjoint union of the two phases.
\end{restatable}

\begin{restatable}{stcaxiom}{AxGenericModel}\label{axiom:syn-alg}
  Within the syntactic phase, there exists a $\TCat$-algebra $\SynAlg :
  \ALG{\Sl{\UU}{\SynOpn}}$ such that the corresponding functor
  $\Mor{\TCat}{\Sl*{\Sh{\GlTop}}{\SynOpn}}$ is fully faithful.
\end{restatable}

\begin{restatable}{stcaxiom}{AxSdtInCmpPhase}\label{axiom:sdt-in-cmp-phase}
  Within the computational phase, the axioms of $\LVL$-indexed synthetic domain
  theory
  (\cref{axiom:sigma-finite-joins,axiom:sigma-predomain,axiom:omega-inductive,axiom:lvls,axiom:finset-transparent-predomain})
  are satisfied.
\end{restatable}

As a consequence of \cref{axiom:sdt-in-cmp-phase}, we have a
\emph{computational} $\TCat$-algebra $\CmpAlg:\ALG{\Sl{\UU}{\CmpOpn}}$ given by
the constructions of \cref{sec:interpretation}.  Gluing together the two models
$\SynAlg,\CmpAlg$ we see that $\Sl{\GlTop}{\BaseOpn}$ supports a model
$\BaseAlg = \brk{\SynOpn\hookrightarrow \SynAlg,\CmpOpn\hookrightarrow
\CmpAlg}$ of $\TCat$.
The final \cref{axiom:lvls-under-closed-immersion} above is needed in the
approximation structure of $\TmTdcl{l}$.

\begin{restatable}{stcaxiom}{AxLvlsUnderClosedImmersion}\label{axiom:lvls-under-closed-immersion}
  For each $l\in\LVL$ we have ${\CmpAlg.\LvlPol{l}} \leq \ClMod{\BaseOpn}{\SynAlg.\LvlPol{l}}$.
\end{restatable}

\begin{theorem}
  There exists a topos $\GlTop$ satisfying
  \cref{axiom:syn-cmp-phases,axiom:syn-alg,axiom:sdt-in-cmp-phase,axiom:lvls-under-closed-immersion}
  containing open subtopoi $\Sl{\GlTop}{\SynOpn} = \PrTop{\TCat}$ and $\Sl{\GlTop}{\CmpOpn} =
  \CmpTop$ such that the complementary closed subtopos is $\ClSubcat{\GlTop}{\BaseOpn} = \LvlTop$.
\end{theorem}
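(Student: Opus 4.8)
The plan is to construct $\GlTop$ by \emph{Artin gluing} the syntactic and computational topoi over a common closed base $\LvlTop$. Since $\LvlTop = \PrTop{\LVL}$ is the classifying topos for flat functors out of the meet-semilattice $\LVL$, and since the rules of \cref{sec:calculus:sealing} exhibit $\LvlPol{k\land l}$ as the meet $\LvlPol{k}\land\LvlPol{l}$ in $\TCat$, the finite-meet-preserving assignment $l\mapsto\LvlPol{l}$ into the subterminals of $\PrTop{\TCat}$ classifies a geometric morphism $\Mor[q]{\PrTop{\TCat}}{\LvlTop}$ with $q^*\Yo[\LVL]{l} = \LvlPol{l}=:\SynAlg.\LvlPol{l}$; by \cref{axiom:lvls} we also have $\Mor[\LvlMapCmp]{\CmpTop}{\LvlTop}$ with $\LvlMapCmp^*\Yo[\LVL]{l} = \gl{l}=:\CmpAlg.\LvlPol{l}$. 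Pairing $q$ and $\LvlMapCmp$ gives a geometric morphism $\Mor[p]{\PrTop{\TCat}+\CmpTop}{\LvlTop}$ out of the disjoint coproduct topos, and we set $\GlTop := \mathbf{Gl}\prn{p_*}$: its objects are tuples $\prn{A\in\PrTop{\TCat},\, B\in\CmpTop,\, C\in\LvlTop,\, q^*C\to A,\, \LvlMapCmp^*C\to B}$, equivalently the comma category of $\Mor[p_*]{\Sh{\PrTop{\TCat}}\times\Sh{\CmpTop}}{\Sh{\LvlTop}}$, $\prn{A,B}\mapsto q_*A\times\LvlMapCmp_*B$, which is left exact (being a right adjoint) and accessible. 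Artin gluing along such a functor between Grothendieck topoi is again a Grothendieck topos, and the standard recollement \citep{sga:4,johnstone:2002} exhibits $\PrTop{\TCat}+\CmpTop$ as an open subtopos of $\GlTop$ with complementary closed subtopos $\LvlTop$, the gluing functor $p_*$ being recovered as $i^*j_*$ for the closed and open immersions $i,j$.

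It then remains to check the four axioms against this recollement. For \cref{axiom:syn-cmp-phases} and the subtopos identifications asserted in the theorem: the two clopen summands of the open part determine complementary clopen subobjects of the subterminal $\BaseOpn\in\Opns{\GlTop}$ classifying $\PrTop{\TCat}+\CmpTop$; calling the corresponding open subtopoi $\SynOpn$ and $\CmpOpn$, we get $\SynOpn\land\CmpOpn=\bot$ since the topos coproduct is disjoint, together with $\BaseOpn = \SynOpn\lor\CmpOpn$, $\Sl{\GlTop}{\SynOpn}=\PrTop{\TCat}$, $\Sl{\GlTop}{\CmpOpn}=\CmpTop$, and $\ClSubcat{\GlTop}{\BaseOpn}=\LvlTop$ by construction. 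For \cref{axiom:syn-alg} we take $\SynAlg$ to be the generic model, namely the Yoneda embedding $\TCat\hookrightarrow\PrTop{\TCat}$, which is locally cartesian closed because $\TCat$ is (Yoneda preserves finite limits in general and preserves exponentials and dependent products by the usual Yoneda computation, since the corresponding structure lives in $\TCat$); it therefore determines $\SynAlg\in\ALG{\Sl{\UU}{\SynOpn}}$ and is fully faithful by the Yoneda lemma. For \cref{axiom:sdt-in-cmp-phase}: statements made ``within the computational phase'' are statements in the internal language of the open subtopos $\Sl{\GlTop}{\CmpOpn}=\CmpTop$, which validates \cref{axiom:sigma-finite-joins,axiom:sigma-predomain,axiom:omega-inductive,axiom:lvls,axiom:finset-transparent-predomain} by its construction in our \PhraseAppendix, adapting \citet[Fiore and Plotkin]{fiore-plotkin:1996}.

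The genuine obligation --- and the step I expect to require the most careful setup --- is \cref{axiom:lvls-under-closed-immersion}, which is precisely what dictates the choice of gluing functor above. Its content is that the syntactic and computational copies of $\LvlPol{l}$ must cohere along the shared closed base $\LvlTop$: by construction $\SynAlg.\LvlPol{l}=q^*\Yo[\LVL]{l}$ and $\CmpAlg.\LvlPol{l}=\LvlMapCmp^*\Yo[\LVL]{l}$ are pullbacks of one and the same subterminal $\Yo[\LVL]{l}\in\Opns{\LvlTop}$ along the two legs of $p$, and this is what forces the incidence $\CmpAlg.\LvlPol{l}\le\ClMod{\BaseOpn}{\SynAlg.\LvlPol{l}}$ in $\GlTop$ (unwinding the pushout description $\ClMod{\BaseOpn}{\psi}=\BaseOpn\lor\psi$ of the closed modality from \cref{sec:key-ideas:modalities}); downstream, the same coherence is what makes the formal-approximation structure for $\TmTdcl{l}$ in \cref{sec:adequacy} well-typed. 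The delicate point is therefore not any isolated inequality but the \emph{simultaneous} design of $q$, of $p_*=i^*j_*$, and of the identifications $\SynAlg.\LvlPol{l}=q^*\Yo[\LVL]{l}$, $\CmpAlg.\LvlPol{l}=\LvlMapCmp^*\Yo[\LVL]{l}$ so that a single topos $\GlTop$ satisfies all four axioms at once; once $q$ is fixed in this way, the facts that $\GlTop$ is Grothendieck and that its recollement produces the stated open and closed subtopoi are routine applications of the gluing formalism.
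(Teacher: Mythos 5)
Your construction of $\GlTop$ coincides with the paper's: you glue the coproduct topos $\PrTop{\TCat}+\CmpTop$ over $\LvlTop$ along the pairing of the classifying morphism $\LvlMapSyn$ (obtained by Diaconescu's theorem from the meet-preserving assignment $l\mapsto\LvlPol{l}$) with the morphism $\LvlMapCmp$ of \cref{axiom:lvls}, and your treatment of \cref{axiom:syn-cmp-phases}, \cref{axiom:syn-alg}, \cref{axiom:sdt-in-cmp-phase} and of the stated open/closed subtopos identifications matches the paper's recollement argument. The gap is exactly the step you single out as the ``genuine obligation'': \cref{axiom:lvls-under-closed-immersion} is \emph{not} forced by the mere fact that $\SynAlg.\LvlPol{l}$ and $\CmpAlg.\LvlPol{l}$ are inverse images of one and the same subterminal $\Yo[\LVL]{l}$ along the two legs of the gluing map. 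The inequality $\CmpAlg.\LvlPol{l}\leq\ClMod{\BaseOpn}{\SynAlg.\LvlPol{l}}$ is trivial on the open part, and on the closed part it unwinds, via the recollement identity $\ClImm^*\OpImm_*\cong\prn{\LvlMapBase}_*$, to the inequality $\prn{\LvlMapCmp}_*\LvlMapCmp^*\Yo[\LVL]{l}\leq\prn{\LvlMapSyn}_*\LvlMapSyn^*\Yo[\LVL]{l}$ of subterminals of $\Sh{\LvlTop}$. But ``both pull back $\Yo[\LVL]{l}$'' only yields the two unit inequalities $\Yo[\LVL]{l}\leq\prn{\LvlMapCmp}_*\LvlMapCmp^*\Yo[\LVL]{l}$ and $\Yo[\LVL]{l}\leq\prn{\LvlMapSyn}_*\LvlMapSyn^*\Yo[\LVL]{l}$, \ie lower bounds on \emph{both} sides, which is the wrong direction for the left-hand one. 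In general $g_*g^*U\leq f_*f^*U$ fails for two geometric morphisms pulling back the same open $U$: take $f$ the identity of a space and $g$ the inclusion of a point lying inside a proper open $U$, so that $g_*g^*U=\top$ while $f_*f^*U=U$.

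What closes this in the paper is a computation specific to $\LvlMapCmp$, not a formal consequence of the gluing design: by \cref{cmp:lvl-map-cmp-dir-img,lem:lvl-map-cmp-dir-img} the direct image $\prn{\LvlMapCmp}_*$ is computed by precomposition with $\EmbMor{\LVL}{\CmpCat}$ and extends the underlying-points functor $\vrt{-}$, whence $\prn{\LvlMapCmp}_*\LvlMapCmp^*\Yo[\LVL]{l}\cong\vrt{\Yo[\LVL]{l}}\cong\Yo[\LVL]{l}$ exactly; equivalently, $\prn{\prn{\LvlMapCmp}_*\gl{l}}\prn{k}$ is inhabited only when $\gl{k}\leq\gl{l}$, and the full-faithfulness clause of \cref{axiom:lvls} then forces $k\leq l$. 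Since the syntactic side contains $\Yo[\LVL]{l}$ by its unit, the incidence follows. Your argument never invokes the full faithfulness of $\LvlMapCmp^*\Yo[\LVL]$ nor any computation of $\prn{\LvlMapCmp}_*$ (which rests on the concrete sheaf-theoretic construction of $\CmpTop$), so this axiom remains unverified in your proposal; the rest is indeed routine.
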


\begin{proof}
  We may construct a topos using a variant of the Artin gluing construction of
  \citet[Sterling and Harper]{sterling-harper:2021}, which we detail in our
  \PhraseAppendix.
\end{proof}

By \cref{axiom:syn-cmp-phases,axiom:syn-alg}, any such topos $\GlTop$ supports a
model of the \DefEmph{synthetic Tait computability} of
\citet[Sterling and Harper]{sterling-harper:2021,sterling:2021:thesis}. In the internal language of
$\Sh{\GlTop}$, the phase $\BaseOpn$ induces a pair of complementary
transparency/open and sealing/closed modalities that can be used to
synthetically construct formal approximation relations in the sense of Plotkin
between computational objects and syntactical objects.
Viewing an object $E\in\Sh{\GlTop}$ as a
family $x:\OpMod{\CmpOpn}{E},x':\OpMod{\SynOpn}{E}\vdash
\Compr{E}{\CmpOpn\hookrightarrow x, \SynOpn\hookrightarrow x'}$ of
$\BaseOpn$-sealed types over the $\BaseOpn$-transparent type
$\prn{\OpMod{\BaseOpn}{E}} \cong \prn{\prn{\OpMod{\CmpOpn}{E}}\times \prn{\OpMod{\SynOpn}{E}}}$, we may
think of $E$ as a \emph{proof-relevant} formal approximation relation between
its computational and syntactic parts, which we might term a ``formal
approximation structure''.

\begin{notation}[Extension types]
  We recall \DefEmph{extension types} from \citet[Riehl and Shulman]{riehl-shulman:2017}. Given a
  proposition $\phi:\Omega$ and a partial element $e : \OpMod{\phi}{E}$, we will write
  $\Ext{E}{\phi}{e}$ for the collection of elements of $E$ that restrict to $e$
  under $\phi$, \ie the subobject $\Mor|>->|{\Compr{x : E}{\OpMod{\phi} \prn{x
  = e}}}{E}$.  Note that $\Ext{E}{\phi}{e}$ is always
  $\phi$-sealed, since it becomes the singleton type $\brc{e}$ under $\phi$.
\end{notation}

Each universe $\UU$ of $\Sh{\GlTop}$ satisfies a remarkable
\emph{strictification} property with respect to any proposition $\phi:\Omega$
that allows one to construct codes for dependent sums of families of
$\phi$-sealed types over a $\phi$-transparent type in such a way that they
restrict \emph{exactly} to the $\phi$-transparent part under $\phi$. This refinement of dependent sums is called a \DefEmph{strict glue type}:\footnote{In presheaves, the universes of
\citet[Hofmann and Streicher]{hofmann-streicher:1997,streicher:2005} satisfy this property directly;
for sheaves, there is an alternative transfinite construction of
universes enjoying this property~\citep{gratzer-shulman-sterling:2022:universes}. Our presentation in terms of transparency and sealing is an equivalent reformulation of the strictness property identified by several authors in the context of the semantics of homotopy type theory~\citep{kapulkin-lumsdaine:2021,streicher:2014:simplicial,shulman:2015:elegant,cchm:2017,orton-pitts:2016,bbcgsv:2016,shulman:2019,awodey:2021:qms}.}
\begin{mathpar}
  \inferrule[strict glue types]{
    A:\OpMod{\phi}{\UU}\\
    B : \prn{\OpMod{\prn{z:\phi}}A\,z} \to \UU\\
    \forall
    x.\
    \Con{isSealed}\Sub{\phi}\,\prn{B,x}
  }{
    \GlueFam{\phi}{x:A}{B\,x} : \Ext{\UU}{z:\phi}{A\,z}\\\\
    \Con{glue}_\phi : \Ext{\prn{\prn{x:\OpMod{\prn{z:\phi}}{A\,z}}\times B\,x} \cong
    \GlueFam{\phi}{x:A}{B\,x}}{\phi}{\pi_1}
  }
\end{mathpar}

\begin{notation}[Strict glue types]
  We impose two notations assuming $A,B$ as above.
  Given $a:\OpMod{\prn{z:\phi}}{A\,z}$ and $b : B\,a$, we write $\GlueEl{b}{\phi}{a}$ for $\Con{glue}_\phi\prn{a,b}$.
  Given $g : \GlueFam{\phi}{x:A}{B\,x}$, we write $\Unglue{\phi}g : B\,g$ for the element $\pi_2\,\prn{\Inv{\Con{glue}_\phi}\,g}$.
\end{notation}

\ifextendedversion

Using the strict glue types it is possible to define very strict universes
$\Sl{\UU}{\phi},\ClSubcat{\UU}{\phi}$ of $\phi$-transparent and $\phi$-sealed
types respectively which are themselves $\phi$-transparent and $\phi$-sealed in
the next universe as in \citet[\S3.6 of Sterling's dissertation]{sterling:2021:thesis}:
\[
  \begin{mathblock}
    \Sl{\UU}{\phi} : \Ext{\VV}{\phi}{\UU}\\
    \ClSubcat{\UU}{\phi} : \Ext{\VV}{\phi}{\ObjTerm}
  \end{mathblock}
  \quad
  \begin{mathblock}
    \prn{\OpMod{\phi}{-}} : \Ext{\UU\to\Sl{\UU}{\phi}}{\phi}{\lambda A.A}\\
    \prn{\ClMod{\phi}{-}} : \UU\to\ClSubcat{\UU}{\phi}
  \end{mathblock}
\]

The transparent subuniverse $\Sl{\UU}{\phi}$ is canonically isomorphic to
$\OpMod{\phi}{\UU}$; given an element $A:\Sl{\UU}{\phi}$, elements of $A$ are the same
as partial elements $\prn{z:\phi}\to A\,z$ under the former identification. In
our notation, we suppress these identifications \emph{as well as} the
introduction and elimination form for these $\phi$-partial elements. The sealed
subuniverse $\ClSubcat{\UU}{\phi}$ is canonically isomorphic to
$\Ext{\UU}{\phi}{\ObjTerm}$. The modal universes
$\Sl{\UU}{\phi},\ClSubcat{\UU}{\phi}$ serve as \emph{(weak) generic objects}
in the sense of \citet[Jacobs]{jacobs:1999}
for the full subfibrations of $\FibMor[\Cod\Sub{\Sh{\GlTop}}]{\Sh{\GlTop}^\to}{\Sh{\GlTop}}$ spanned by fiberwise
$\UU$-small $\phi$-transparent and $\phi$-sealed families of types
respectively.

\fi

\begin{notation}
  Let $E$ be a type in $\Sh{\GlTop}$ and fix elements $e : \OpMod{\CmpOpn}{E}$ and
  $e' : \OpMod{\SynOpn}{E}$ of the computational and syntactical parts of $E$
  respectively; we will write $e \lhd_E e'$, pronounced ``$e$ formally approximates
  $e'$'', for the extension type
  $\Compr{E}{\CmpOpn\hookrightarrow e, \SynOpn\hookrightarrow e'}$.
\end{notation}

This is the connection between synthetic Tait
computability and analytic logical relations; the open parts of an object
correspond to the \emph{subjects} of a logical relation and the closed parts of
an object correspond to the evidence of that relation.

\begin{definition}[Formal approximation relations]
  A type $E$ is called a \DefEmph{formal approximation relation} when for any
  $\BaseOpn$-point $e:\OpMod{\BaseOpn}{E}$, the extension type
  $\Ext{E}{\BaseOpn}{e}$ is a proposition, \ie any two elements of $e\lhd_E e$
  are equal.
\end{definition}

We will write $\Rel{\UU}\subseteq\UU$ for the subuniverse of formal approximation
relations.

\begin{definition}[Admissible formal approximation relations]
  Let $E$ be a formal approximation relation such that $\OpMod{\CmpOpn}{E}$ is
  a predomain equipped with an $\Lift$-algebra structure. We say that $E$
  is \DefEmph{admissible} at $x:\OpMod{\SynOpn}{E}$ when the subobject
  $\Ext{E}{\SynOpn}{x}\subseteq\OpMod{\CmpOpn}{E}$ is admissible in the sense
  of synthetic domain theory, \ie contains $\bot$ and is closed
  under formal suprema of formal $\ShChain$-chains. We say that $E$ is
  admissible when it is admissible at every such $x$.
\end{definition}

\begin{restatable}[Scott induction]{lemma}{LemScottInduction}\label{lem:scott-induction}
  Let $X$ be a formal approximation relation such that
  $\OpMod{\CmpOpn}{X}$ is a domain. Let $f : X\to X$ be an endofunction on $X$ and let
  $x:\OpMod{\SynOpn}{X}$ be a syntactical fixed point of $f$ in the sense that
  $\OpMod{\SynOpn}{\prn{x=f\,{x}}}$; if $X$ is admissible at $x$, then we have $\Con{fix}\,f\lhd_X x$.
\end{restatable}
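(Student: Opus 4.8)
The plan is to reduce Scott induction to the admissibility hypothesis on $X$ together with the construction of $\Con{fix}$ in synthetic domain theory. I would first record that, since $X$ is a formal approximation relation, the canonical map $\Ext{X}{\SynOpn}{x}\to\OpMod{\CmpOpn}{X}$ is monic: two elements sharing a common $\CmpOpn$-part $s$ both inhabit the proposition $\Ext{X}{\BaseOpn}{\prn{s,x}}$ and hence coincide. We may therefore regard $\Ext{X}{\SynOpn}{x}$ as a subobject $S\subseteq\OpMod{\CmpOpn}{X}$ --- exactly as the definition of admissibility does --- and observe that producing an element of $\Con{fix}\,f\lhd_X x$ amounts to checking that the computational fixed point $\Con{fix}\,f:\OpMod{\CmpOpn}{X}$ (here $\Con{fix}\,f$ abbreviates $\Con{fix}\,\prn{\OpMod{\CmpOpn}{f}}$, formed in the domain $\OpMod{\CmpOpn}{X}$) belongs to $S$. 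By hypothesis $S$ is \emph{admissible}: it contains $\bot$ and is closed under formal suprema of formal $\ShChain$-chains.

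The second step uses the syntactic fixed-point hypothesis to show that $S$ is stable under $\OpMod{\CmpOpn}{f}$. For $y:X$ with $\OpMod{\SynOpn}{y}=x$, applying $\OpMod{\SynOpn}{-}$ to $f\,y$ and invoking $\OpMod{\SynOpn}{\prn{x=f\,x}}$ yields $\OpMod{\SynOpn}{\prn{f\,y}}=\prn{\OpMod{\SynOpn}{f}}\,x=x$, so $f$ carries $\Ext{X}{\SynOpn}{x}$ into itself; on $\CmpOpn$-parts this action is precisely the restriction of $\OpMod{\CmpOpn}{f}$ to $S$.

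Third I would assemble the pieces. By construction, $\Con{fix}\,f$ is the formal supremum $\DLub{n\in\ShChain}{\prn{\OpMod{\CmpOpn}{f}}^n\bot}$ of the formal $\ShChain$-chain of finite approximations obtained by iterating, from $\bot$, the $\Lift$-algebra structure of $\OpMod{\CmpOpn}{X}$ twisted by $\OpMod{\CmpOpn}{f}$. Since $\bot\in S$ and $S$ is $\OpMod{\CmpOpn}{f}$-stable, this formal chain factors through $S$: unfolding $\ShChain$ into its finite stages $\Lift^n\ObjInit$ via \cref{axiom:omega-inductive}, each approximation $\prn{\OpMod{\CmpOpn}{f}}^n\bot$ lies in $S$ by an ordinary induction on $n$; equivalently, $S$ inherits the twisted $\Lift$-algebra structure, so the unique algebra map out of the initial algebra $\ShChain$ factors through $S$. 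Closure of $S$ under formal $\ShChain$-suprema then places $\Con{fix}\,f$ in $S$, which is the element of $\Con{fix}\,f\lhd_X x$ we wanted.

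The hard part will be the third step, where the synthetic construction of $\Con{fix}$ must be lined up with the notion of admissible subobject: it is crucial that an admissible subobject of a domain be closed not merely under the supremum operation, but under the iterated (twisted) $\Lift$-algebra structure that generates the chain of approximations, so that the entire formal $\ShChain$-chain --- and not just its individual finite instances --- factors through $S$. This is exactly where \cref{axiom:omega-inductive} and the completeness of $\Lift$ on predomains do the work; the remaining manipulations of the open/closed modalities are routine.
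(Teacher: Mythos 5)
Your proposal is correct and takes essentially the same route as the paper, whose entire proof of this lemma is the one-line remark that admissible approximation relations contain $\bot$ and are closed under formal suprema of formal $\ShChain$-chains; your three steps (viewing $\Ext{X}{\SynOpn}{x}$ as a subobject of the domain $\OpMod{\CmpOpn}{X}$, showing it is stabilized by $f$ via the syntactic fixed-point equation, and placing $\Con{fix}\,f$ in it as the supremum of the formal chain of approximations) are exactly the intended unfolding of that remark. The subtlety you flag in the third step---that the entire formal $\ShChain$-chain, including its partial stages, must factor through the subobject, which relies on closure under the twisted $\Lift$-algebra action rather than merely on $\bot$ and suprema of chains---is left implicit in the paper's proof, so your account is if anything more explicit than the original.
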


Our goal can be rephrased now in the internal language; choosing a universe
$\VV\supset \UU$, we wish to define a suitable $\VV$-valued algebra
$\Alg\in\ALG{\VV}$ that restricts under $\BaseOpn$ to $\BaseAlg$, \ie an element
$\Alg\in\Ext{\ALG{\VV}}{\BaseOpn}{\BaseAlg}$. This can be done quite elegantly in
the internal language of $\Sh{\GlTop}$, \ie the \emph{synthetic Tait
computability of formal approximation structures}.
The high-level structure of our model construction is summarized as follows:
\begin{quote}
  We interpret value types as \DefEmph{formal approximation structures} over a
  syntactic value type and a predomain; we interpret computation types as
  \DefEmph{admissible formal approximation relations} between a syntactic
  computation type and a domain.
\end{quote}

To make this precise, we will define $\Alg.\Tp\Pos \in
\Ext{\VV}{\BaseOpn}{\BaseAlg.\Tp\Pos}$ as the collection of types that restrict
to an element of $\SynAlg.\Tp\Pos$ in the syntactic phase and to an element of
$\CmpAlg.\Tp\Pos = \DDPos$ in the computational phase. This is achieved using
strict gluing:
\[
  \begin{mathblock}
    \Alg.\Tp\Pos = \GlueFam{\BaseOpn}{A:\BaseAlg.\Tp\Pos}{\Ext{\UU}{\BaseOpn}{\BaseAlg.\Tm\,A}}
  \end{mathblock}
  \quad
  \begin{mathblock}
    \Alg.\Tm = \Unglue{\BaseOpn}
  \end{mathblock}
\]

The above is well-defined because $\BaseAlg.\Tp\Pos$ is $\BaseOpn$-transparent
and $\Ext{\UU}{\BaseOpn}{\BaseAlg.\Tm\,A}$ is $\BaseOpn$-sealed. We also have
$\OpMod{\SynOpn}{\Alg.\Tp\Pos} = \SynAlg.\Tp\Pos$ and
$\OpMod{\CmpOpn}{\Alg.\Tp\Pos} = \DDPos$. Next we define the formal approximation structure of
computation types:
\[
  \begin{mathblock}
    \Alg.\Tp\Neg =
    \GlueFam{\BaseOpn}{
      X:\BaseAlg.\Tp\Neg
    }{
      \Compr{X' : \Ext{\Rel{\UU}}{\BaseOpn}{\BaseAlg.\Tm\,\prn{\BaseAlg.\TpU\, X}}}{
        X'\ \text{is admissible}
      }
    }
  \end{mathblock}
\]

To see that the above is well-defined, we must check that the family component
of the gluing is pointwise $\BaseOpn$-sealed, which follows because the
property of being admissible is $\BaseOpn$-sealed. To see that this is the
case, we observe that it is obviously $\SynOpn$-sealed and also (less
obviously) $\CmpOpn$-sealed: under $\CmpOpn$, $X'$ restricts to the ``total''
predicate on $X$ which is always admissible.
To define the thunking connective, we simply forget that a given admissible
approximation relation was admissible:
$\Alg.\TpU\, X = \GlueEl{ \Unglue{\BaseOpn}{X} }{\BaseOpn}{\BaseAlg.\TpU\,X}$.
To interpret free computation types, we proceed in two steps; first we define
the formal approximation relation as an element of $\Rel{\UU}$ and then we glue
it onto syntax and semantics.
\[
  \begin{mathblock}
    \brk{\TpF}\,A =
      \GlueFam{\BaseOpn}{
        u : \BaseAlg.\prn{\TpU\TpF}\,A
      }{
        \prn{\OpMod{\CmpOpn}\IsDefd{u}}
        \Rightarrow
        \ClMod{\BaseOpn}{
          \exists a:A.
          \OpMod{\BaseOpn}{
            u = \BaseAlg.\TmRet\,a
          }
        }
      }
    \\%
    \Alg.\TpF\,A =
    \GlueEl{
      \brk{\TpF}\,A
    }{\BaseOpn}{\BaseAlg.\TpF}
  \end{mathblock}
\]

In simpler language, we have $u \lhd\Sub{\brk{\TpF}\,A} v$ if and only if $v$
terminates syntactically whenever $u$ terminates such that the value of $u$
formally approximates the value of $v$. This is the standard clause for lifting
in an adequacy proof, phrased in synthetic Tait computability.
; the use of the sealing modality is an artifact of synthetic Tait
computability, ensuring that the relation is pointwise $\BaseOpn$-sealed.
The $\TmRet,\TmBind$ operations are easily shown to preserve the formal
approximation relations.  The construction of formal approximation structures
for product and function spaces is likewise trivial.  Using Scott induction
(\cref{lem:scott-induction}) we can show that fixed points also lie in the
formal approximation relations; we elide the details.
Next we deal with the information flow constructs, starting by interpreting
each security policy $\Alg.\LvlPol{l}$ as $\BaseAlg.\LvlPol{l}$.
The sealing modality is interpreted below:
\[
  \begin{mathblock}
    \brk{\TpSeal{l}}\,A =
    \GlueFam{\BaseOpn}{
      u : \BaseAlg.\TpSeal{l}A
    }{
      \ClMod{\BaseOpn}{
        \ClMod{\Alg.\LvlPol{l}}{
          \Compr{
            a:A
          }{
            \OpMod{\BaseOpn}{
              u = \BaseAlg.\TmSeal{l}a
            }
          }
        }
      }
    }
    \\%
    \Alg.\TpSeal{l}A =
    \GlueEl{
      \brk{\TpSeal{l}}\,A
    }{\BaseOpn}{\BaseAlg.\TpSeal{l}}
  \end{mathblock}
\]

\begin{restatable}[Fundamental theorem of logical relations]{theorem}{ThmFTLR}
  The preceding constructions arrange into an algebra $\Alg \in \Ext{\ALG{\VV}}{\BaseOpn}{\BaseAlg}$.
\end{restatable}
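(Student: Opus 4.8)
The plan is to extend the algebra $\BaseAlg$ on $\Sl{\GlTop}{\BaseOpn}$ to a full $\VV$-valued $\TCat$-algebra $\Alg$ by equipping each of the partially-specified type constructors and term constructors above with the missing formal-approximation data and then verifying the equations of $\TCat$. Since the functorial semantics identifies $\TCat$-algebras valued in $\Sh{\GlTop}$ with maps $\ObjTerm\to\ALG{\VV}$, and $\Alg$ is already pinned down under $\BaseOpn$ by construction (every clause above is built with $\Con{glue}_{\BaseOpn}$ onto $\BaseAlg$), what remains is to check that each clause is (i) well-typed, in particular that the family components fed to $\GlueFam{\BaseOpn}{-}{-}$ are pointwise $\BaseOpn$-sealed, and (ii) closed under the defining equations of the signature. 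I would organize the proof constructor-by-constructor, reusing the calculations already sketched in \cref{sec:interpretation} for the $\BaseOpn$-parts and supplying the formal-approximation witnesses for the sealed fibers.

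First I would dispatch the "structural" fragment: the adjoint pair $\TpU,\TpF$, the monad operations $\TmRet,\TmBind$ with their three equations, function and product types with their isomorphisms, unit, sums with the four case equations, and $\TmFix$ with its unfolding equation. For $\TpF$, the approximation relation is already given; $\TmRet$ lands in it by taking the existential witness to be $a$ itself under $\OpMod{\BaseOpn}{-}$, and $\TmBind$ is handled by the standard case analysis on whether the argument terminates in the computational phase, using that $\IsDefd{-}$ is a $\Sigma$-proposition (\cref{axiom:sigma-finite-joins}) and propagating the sealed existential along $f^\sharp$. The monad equations hold because they hold for $\BaseAlg$ and the fibers are $\BaseOpn$-sealed hence propositional up to the extension-type identification. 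Function and product spaces are pointwise/fiberwise and pose no difficulty; for sums, I would note the coproduct of predomains is computed by reflection and the case clauses of $\Alg.\TmCase$ are defined by the same pattern-match as $\CmpAlg.\TmCase$, with the fourth ($\eta$) equation again following from $\BaseOpn$-sealedness. For $\TmFix$ I would invoke \cref{lem:scott-induction} exactly as flagged: computation types are interpreted as admissible formal approximation relations, so $\Con{fix}\,f$ formally approximates the syntactic fixed point, and the unfolding equation is inherited from $\BaseAlg$ and from $\Con{fix}\,f \equiv f\,(\Con{fix}\,f)$ in synthetic domain theory.

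Next come the information-flow constructs. For $\Alg.\LvlPol{l}$ I take $\BaseAlg.\LvlPol{l}$ and verify the two implication/meet laws, which transfer from $\BaseAlg$. For the sealing modality, the family component $\ClMod{\BaseOpn}{\ClMod{\Alg.\LvlPol{l}}{\Compr{a:A}{\OpMod{\BaseOpn}{u = \BaseAlg.\TmSeal{l}a}}}}$ is $\BaseOpn$-sealed by construction (it is a $\ClMod{\BaseOpn}{-}$), so $\GlueFam{\BaseOpn}{-}{-}$ applies; I then need to check $\SealedBelow{l}\,(\Alg.\TpSeal{l}A)$, i.e.\ that under $\Alg.\LvlPol{l}$ the glued type becomes a singleton — here I would use \cref{axiom:lvls-under-closed-immersion}, ${\CmpAlg.\LvlPol{l}} \leq \ClMod{\BaseOpn}{\SynAlg.\LvlPol{l}}$, to see that in the presence of $\Alg.\LvlPol{l}$ the inner $\ClMod{\Alg.\LvlPol{l}}{-}$ collapses and the syntactic base is already $\LvlPol{l}$-sealed, so the whole glued object is sealed. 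Then $\Alg.\TmSeal{l}$ and $\Alg.\TmUnseal{l}$ are defined via \cref{obs:seal-universal-property} (the universal property of the closed-modal pushout), lifting both the syntactic $\TmSeal{l}/\TmUnseal{l}$ and the computational ones simultaneously; the two $\beta/\eta$ unsealing equations follow from that universal property and $\BaseOpn$-sealedness of the fiber. Finally $\Alg.\TmTdcl{l}$ is obtained by gluing the syntactic $\TmTdcl{l}$ against the computational one from \cref{sec:interpretation}, whose definition uses the stabilizer action $\parallel_{\Lift{A}}$; the computation rule $\TmTdcl{l}\,(\TmSeal{l}\,(\TmRet\,u)) \equiv \TmRet\,u$ holds by \cref{lem:stab-action-preserves-ret} on the computational side and by inheritance on the syntactic side, and the formal-approximation fiber of $\TpF\,A$ must be checked to be preserved — this is where \cref{axiom:lvls-under-closed-immersion} is essential, since it is what lets the approximation witness for $\TmTdcl{l}$ survive the restriction along $\Alg.\LvlPol{l}$.

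I expect the main obstacle to be the $\TmTdcl{l}$ clause: everything else is either inherited from $\BaseAlg$ modulo $\BaseOpn$-sealedness bookkeeping, or is the standard adequacy calculation transcribed into STC, but termination declassification is the one genuinely new construct and the one place where the two phases interact nontrivially. Concretely, one must exhibit, for $u \lhd_{\Alg.\TpSeal{l}\TpU\TpF A} v$, that $\Alg.\TmTdcl{l}\,u \lhd_{\Alg.\TpU\TpF A} \SynAlg.\TmTdcl{l}\,v$; since the computational side applies $\Alg.\LvlPol{l}\parallel_{\Lift{A}}(-)$, which alters the termination support, one cannot directly reuse the $\TmRet/\TmBind$ approximation lemmas, and the existential-termination fiber of $\brk{\TpF}\,A$ has to be re-derived by hand using that under $\Alg.\LvlPol{l}$ the support becomes $\top$ (so $\Lift A$ collapses to $A$, matching the sealedness of $A$) and that away from $\Alg.\LvlPol{l}$ the action is the identity by \cref{lem:stab-action-preserves-ret}. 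The case split built into the definition of $\bbrk{\TmTdcl{l}}$ — on whether $u = \ClIntro{\LvlPol{l}}\,x$ or $u = \SlPt$ — has to be reconciled with the glued structure, and checking that the approximation witness is well-defined on the overlap (where both branches agree because both return the computation with support $\Alg.\LvlPol{l}$) is the delicate point. I would handle this by working entirely inside the closed modality $\ClMod{\BaseOpn}{-}$ where the existential lives, so that the branch-agreement obligation is discharged by $\BaseOpn$-sealedness, leaving only the genuinely computational content to verify via the stabilizer lemmas.
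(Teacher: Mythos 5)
Your proposal is correct and follows essentially the same route as the paper's own argument: a constructor-by-constructor verification that the glued clauses are well-formed (pointwise $\BaseOpn$-sealed fibers, admissibility of the relations assigned to computation types), with equations inherited from $\BaseAlg$ via propositionality of the approximation fibers, Scott induction (\cref{lem:scott-induction}) for $\TmFix$, and the delicate case analysis for $\TmTdcl{l}$ resting on \cref{axiom:lvls-under-closed-immersion} together with \cref{lem:stab-action-preserves-ret} --- exactly the decomposition carried out in the appendix. The only small deviation is that you also invoke \cref{axiom:lvls-under-closed-immersion} to see that $\Alg.\TpSeal{l}A$ is $\Alg.\LvlPol{l}$-sealed, which is harmless but unnecessary (this is immediate from the definition, since both the syntactic and computational components of the base and the glued fiber collapse under $\Alg.\LvlPol{l}$); the axiom is genuinely needed only in the approximation structure of $\TmTdcl{l}$, as you correctly emphasize.
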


\subsection{Adequacy and syntactic noninterference results}

The following definitions and results in this section are global rather
than internal. We may immediately read off from the logical relation of
\cref{sec:logical-relation} a few important properties relating value terms and
their denotations. The results of this section depend heavily on the assumption
that the functor $\EmbMor{\TCat}{\Sl*{\Sh{\GlTop}}{\SynOpn}}$ is fully faithful
(\cref{axiom:syn-alg}).

\NewDocumentCommand\Converges{m}{{#1}{\Downarrow}}
\NewDocumentCommand\Diverges{m}{{#1}{\Uparrow}}

\begin{restatable}[Value adequacy]{theorem}{ThmValAdequacy}\label{thm:value-adequacy}
  For any closed values $\Mor[u,v]{\ObjTerm{\TCat}}{\TpBool}$, we have $\bbrk{u} =
  \bbrk{v}$ if and only if $u\equiv\Sub{\TpBool}v$; moreover we have either $u\equiv\Sub{\TpBool}\TmTt$ or $u\equiv\Sub{\TpBool}\TmFf$.
\end{restatable}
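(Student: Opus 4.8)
The plan is to obtain both assertions from the fundamental theorem of logical relations proved above. That theorem produces a $\VV$-algebra $\Alg \in \Ext{\ALG{\VV}}{\BaseOpn}{\BaseAlg}$, which by functorial semantics corresponds to an interpretation functor $\bbrk{-}_{\Alg}\colon \TCat \to \Sh{\GlTop}$ extending $\BaseAlg$. Since $\Alg$ restricts under $\BaseOpn$ to $\BaseAlg$ --- hence under $\CmpOpn$ to $\CmpAlg$ and under $\SynOpn$ to $\SynAlg$ --- for every closed value $w\colon\TpBool$ the element $\bbrk{w}_{\Alg}$ is a global element of $\bbrk{\TpBool}_{\Alg}$ whose computational part $\OpMod{\CmpOpn}{\bbrk{w}_{\Alg}}$ is the denotation $\bbrk{w}$ of \cref{sec:interpretation} and whose syntactic part $\OpMod{\SynOpn}{\bbrk{w}_{\Alg}}$ is the syntactic model's interpretation of $w$. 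So $\bbrk{w}_{\Alg}$ repackages $\bbrk{w}$ together with $w$ itself, related by whatever the logical relation assigned to $\TpBool$ turns out to be.

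First I would compute that relation. Writing $\TpBool = \TpUnit + \TpUnit$, the glued type $\bbrk{\TpBool}_{\Alg}$ is a finite coproduct of copies of $\bbrk{\TpUnit}_{\Alg}$, which is the terminal object of $\Sh{\GlTop}$; its computational part is the two-element constant predomain $\CmpTop^{*}\brk{2}$ (cf.\ \cref{axiom:finset-transparent-predomain}), and its syntactic part is the syntactic boolean type, whose global elements correspond --- by the full faithfulness of \cref{axiom:syn-alg} --- to the closed values $\Mor{\ObjTerm{\TCat}}{\TpBool}$ modulo $\equiv$. Because the strict glue types compute such coproducts so that they restrict correctly under each phase, and because coproducts in a topos are disjoint and stable under pullback, a point of $\bbrk{\TpBool}_{\Alg}$ lying over a computational bit $b$ and a syntactic term $v$ exists exactly when $b$ and $v$ lie in matching summands; writing $b \lhd v$ for this relation, $b \lhd v$ holds iff $b = \bbrk{\TmTt}$ and $v \equiv \TmTt$, or $b = \bbrk{\TmFf}$ and $v \equiv \TmFf$. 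Thus $\bbrk{\TpBool}_{\Alg}$ is the graph of the canonical function sending $\bbrk{\TmTt}$ to $\TmTt$ and $\bbrk{\TmFf}$ to $\TmFf$.

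Now apply this to $\bbrk{u}_{\Alg}$ and $\bbrk{v}_{\Alg}$. The computational part of $\bbrk{u}_{\Alg}$ is a global element of $\CmpTop^{*}\brk{2}$ in $\Sh{\CmpTop}$; since $\CmpTop$ is connected (a Grothendieck topos over the connected base $\LvlTop$, per its construction in the \PhraseAppendix), the global-sections functor preserves this coproduct, so $\bbrk{u}$ is one of $\bbrk{\TmTt}, \bbrk{\TmFf}$. Say $\bbrk{u} = \bbrk{\TmTt}$, the other case being symmetric. Reading off the relation $b \lhd v$ at $b = \bbrk{u}$ and transporting the resulting identity of syntactic terms across the full faithfulness of the syntactic functor yields $u \equiv \TmTt$ in $\TCat$ --- this is the ``moreover'' clause. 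For the biconditional, one implication is functoriality of $\bbrk{-}_{\Alg}$, so suppose $\bbrk{u} = \bbrk{v}$; since $\bbrk{\TmTt}$ and $\bbrk{\TmFf}$ are distinct (their pullback over $\CmpTop^{*}\brk{2}$ is $\CmpTop^{*}\emptyset$, i.e.\ the initial object, and $\Sh{\CmpTop}$ is nondegenerate), $u$ and $v$ must reduce to the same boolean, whence $u \equiv v$.

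The main obstacle is the middle paragraph: pinning down the logical relation at $\TpBool$ with no spurious pairs, which turns on how finite coproducts interact with the strictification universes and on disjointness of coproducts in a topos. A secondary subtlety is the internal-to-external passage in the last paragraph, where connectedness of $\CmpTop$ is exactly what licenses the conclusion that $\bbrk{u}$ is a genuinely constant bit rather than one varying over $\Sh{\CmpTop}$.
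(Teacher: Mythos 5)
Your overall strategy --- interpret the closed value in the glued algebra furnished by the fundamental theorem, compute the relation at $\TpBool$, and transport the syntactic information back along the fully faithful embedding of \cref{axiom:syn-alg} --- is indeed the intended route. But the middle paragraph, which you yourself identify as the crux, does not go through as sketched. The glued type $\Alg.\TpBool$ is \emph{not} a finite coproduct of copies of the terminal object of $\Sh{\GlTop}$: sum types are deliberately not coproducts at the level of the logical framework, and in the glued model $\brk{\TpSum}\,A\,B$ is given by the explicit glue formula whose fibers are $\ClMod{\BaseOpn}$ applied to a sum of $\OpMod{\BaseOpn}$-propositions. Its $\SynOpn$-part is the presheaf of \emph{all} boolean terms-in-context, which is nothing like $1+1$ in $\Psh{\TCat}$, so disjointness and pullback-stability of coproducts in a topos is not the mechanism that pins down the relation, and the claim ``a point exists exactly when $b$ and $v$ lie in matching summands'' is asserted rather than proved. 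The genuine extraction must pass through the closed modality: the relation component of the global point interpreting $u$ is a global section of $\ClMod{\BaseOpn}\prn{\cdots}$, hence a global section over the closed subtopos $\LvlTop$ of the restricted sum; unwinding the direct images, this amounts to a compatible family, indexed by $l\in\LVL$, of witnesses that \emph{under the hypothesis} $\LvlPol{l}$ both the syntactic and the computational components of $u$ are $\TmTt$, or both are $\TmFf$. One then needs connectedness of $\LvlTop$ (equivalently of the meet-semilattice $\LVL$) to see that the same disjunct is chosen uniformly in $l$, and one must still discharge the hypothesis $\LvlPol{l}$ --- by instantiating at the top element of $\LVL$, where $\Yo[\LVL]{l}$ is terminal (semantically one can instead use that the opens $\gl{l}$ cover $\CmpTop$). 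Only after these steps does full faithfulness yield $u\equiv\TmTt$ or $u\equiv\TmFf$ together with $\bbrk{u}\in\brc{\bbrk{\TmTt},\bbrk{\TmFf}}$. None of this appears in your sketch.

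Relatedly, your appeal to connectedness of $\CmpTop$ is both unjustified and misplaced: lying over a connected base does not make a topos connected (consider a coproduct of topoi over a point), and in any case the constancy of $\bbrk{u}$ is not needed as a separate input --- it falls out of the relation once the reading-off above is done correctly. The connectedness that actually carries weight in this argument is that of $\LvlTop$, which governs global sections of the $\ClMod{\BaseOpn}$-image of a binary sum. The remaining steps of your proposal (the direction from $u\equiv v$ to $\bbrk{u}=\bbrk{v}$ by functoriality, the distinctness of $\bbrk{\TmTt}$ and $\bbrk{\TmFf}$, and the conclusion $u\equiv v$ from $\bbrk{u}=\bbrk{v}$ via the ``moreover'' clause) are fine.
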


Let $\Mor[u]{\ObjTerm{\TCat}}{\TpU\TpF{A}}$ be a closed computation.

\begin{definition}[Convergence and divergence]
  We say that $u$
  \DefEmph{converges} when there exists $\Mor[a]{\ObjTerm{\TCat}}{A}$ such that $u
  = \TmRet\,a$. Conversely, we say that $u$ \DefEmph{diverges} when there does not
  exist such an $a$. We will write $\Converges{u}$ to mean that $u$ converges,
  and $\Diverges{u}$ to mean that $u$ diverges.
\end{definition}

\begin{restatable}[Computational adequacy]{theorem}{ThmCmpAdequacy}\label{thm:cmp-adequacy}
  The computation $u$ converges iff $\IsDefd{\bbrk{u}} = \top$.
\end{restatable}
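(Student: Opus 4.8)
The plan is to derive computational adequacy from the fundamental theorem of logical relations by evaluating the glued algebra $\Alg$ at the closed computation $u$ and extracting information from the closed (evidence) part of its denotation. First I would observe that, since $u : \ObjTerm{\TCat}\to\TpU\TpF{A}$ is a morphism in $\TCat$, applying the fully faithful functor $\EmbMor{\TCat}{\Sl*{\Sh{\GlTop}}{\SynOpn}}$ of \cref{axiom:syn-alg} and then the glued algebra $\Alg$ (which restricts to $\BaseAlg$ under $\BaseOpn$ by the fundamental theorem) produces a global element $\bbrk{u}_\Alg$ of $\Alg.\prn{\TpU\TpF}\,A$ whose $\CmpOpn$-component is the semantic denotation $\bbrk{u}\in\Lift\,\bbrk{A}$ and whose $\SynOpn$-component is the syntactic term $u$ itself. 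By the definition of $\brk{\TpF}\,A$ from \cref{sec:logical-relation}, unglueing $\bbrk{u}_\Alg$ yields an inhabitant of the type $\prn{\OpMod{\CmpOpn}\IsDefd{\bbrk{u}}} \Rightarrow \ClMod{\BaseOpn}{\prn{\exists a:A.\,\OpMod{\BaseOpn}{u = \BaseAlg.\TmRet\,a}}}$.

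Next I would argue both directions. For the ``only if'' direction, suppose $u$ converges, so that there is a closed value $a : \ObjTerm{\TCat}\to A$ with $u \equiv_{\TpU\TpF{A}} \TmRet\,a$; then by functoriality $\bbrk{u} = \bbrk{\TmRet\,a} = \bbrk{a}$, which is a terminating element of $\Lift\,\bbrk{A}$, so $\IsDefd{\bbrk{u}} = \top$. For the ``if'' direction, suppose $\IsDefd{\bbrk{u}} = \top$ in $\Sh{\CmpTop}$; I need to transport this fact into $\Sh{\GlTop}$ along the open immersion $\Sl{\GlTop}{\CmpOpn} = \CmpTop$, obtaining $\OpMod{\CmpOpn}\IsDefd{\bbrk{u}}$, and then feed it to the unglued inhabitant above to get an element of $\ClMod{\BaseOpn}{\prn{\exists a:A.\,\OpMod{\BaseOpn}{u = \BaseAlg.\TmRet\,a}}}$. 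The crucial move is then to pass to the \emph{syntactic phase} $\SynOpn \leq \BaseOpn$: under $\SynOpn$ the closed modality $\ClMod{\BaseOpn}{-}$ is trivial (it becomes the identity on the ambient type, since $\OpMod{\BaseOpn}{-}$ is inhabited there), so we extract $\OpMod{\SynOpn}{\prn{\exists a:A.\, u = \BaseAlg.\TmRet\,a}}$, and since the existential is over a syntactic type and equality of syntactic terms is decidable in the sense that $\SynOpn$-truncated propositions about $\TCat$ reflect honest statements (using full faithfulness of \cref{axiom:syn-alg}), we conclude that there really exists a closed value $a$ with $u \equiv \TmRet\,a$, i.e. $u$ converges.

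The step I expect to be the main obstacle is the last one: carefully justifying that an $\SynOpn$-modal inhabitant of $\exists a:A.\,u = \BaseAlg.\TmRet\,a$ — which lives in the internal language of $\Sh{\GlTop}$ restricted to the syntactic phase — actually yields a global closed value $a$ in $\TCat$ together with the equation $u\equiv_{\TpU\TpF{A}}\TmRet\,a$ externally. This requires unwinding the identification $\Sl{\GlTop}{\SynOpn} = \PrTop{\TCat}$ and the full faithfulness of the generic model $\SynAlg$ (\cref{axiom:syn-alg}): one must check that global sections over the syntactic subtopos of the relevant Hom-objects coincide with morphism sets of $\TCat$, and that the propositional truncation implicit in $\exists$ does not lose information here because $A$, being a closed value type, has a \emph{discrete} set of closed terms in the presheaf topos $\PrTop{\TCat}$. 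I would handle this by working externally with the sections functor $\Gamma\prn{\Sl{\GlTop}{\SynOpn},-}$, noting it is (by construction of the gluing) the evaluation-at-the-terminal-object functor on $\PrTop{\TCat}$, and invoking the Yoneda lemma to identify $\Gamma\prn{\Sl{\GlTop}{\SynOpn}, \SynAlg.A}$ with $\TCat\prn{\ObjTerm{\TCat}, A}$. Combined with \cref{thm:value-adequacy} — which additionally pins down that closed boolean values are either $\TmTt$ or $\TmFf$ — this yields the full statement.
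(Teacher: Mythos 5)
Your overall strategy is the intended one (the paper itself only says the result is ``read off'' from the logical relation using the full faithfulness of \cref{axiom:syn-alg}): evaluate the glued algebra at the closed computation, use the definition of $\brk{\TpF}\,A$, feed in the global fact $\IsDefd{\bbrk{u}}=\top$ to obtain a global element of $\ClMod{\BaseOpn}{\exists a:A.\,\OpMod{\BaseOpn}{\prn{u = \BaseAlg.\TmRet\,a}}}$, and then extract an honest closed term of $\TCat$. The forward direction is also fine and needs no logical relation at all. But the step you yourself flag as the crux is carried out incorrectly: you claim that ``under $\SynOpn$ the closed modality $\ClMod{\BaseOpn}{-}$ is trivial (it becomes the identity on the ambient type, since $\OpMod{\BaseOpn}{-}$ is inhabited there).'' This is exactly backwards. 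The closed modality is $\BaseOpn$-\emph{sealed}: under $\BaseOpn$ (hence under $\SynOpn\leq\BaseOpn$) the type $\ClMod{\BaseOpn}{X}$ collapses to a singleton, not to $X$ --- it is the \emph{open} modality $\OpMod{\BaseOpn}{-}$ that becomes the identity there. Indeed the paper inserts $\ClMod{\BaseOpn}$ into the clause for $\brk{\TpF}$ precisely so that the evidence is invisible in the open phases; if your reading were right, the gluing would carry no content. Consequently, passing to the syntactic phase yields nothing, and your subsequent appeal to ``decidability''/discreteness of closed terms has no role to play.

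The correct extraction goes through the \emph{closed} subtopos, not the syntactic open. From the global element of $\ClMod{\BaseOpn}{P}$ (with $P = \exists a:A.\,\OpMod{\BaseOpn}{\prn{u=\BaseAlg.\TmRet\,a}}$), restrict along the closed immersion $\ClImm$ into $\ClSubcat{\GlTop}{\BaseOpn}=\LvlTop$, where $\ClImm^*\ClMod{\BaseOpn}{P}\cong\ClImm^*P$; since $\Sh{\LvlTop}=\PrTop{\LVL}$ is a presheaf topos, Kripke--Joyal semantics for the existential requires no covers, so at any stage $l\in\LVL$ one obtains an actual witness $a$ together with the $\SynOpn$-component of the equation, \ie $u = \SynAlg.\TmRet\,a$ in the syntactic part. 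Only at this point does \cref{axiom:syn-alg} enter: the full faithfulness of $\EmbMor{\TCat}{\Sl*{\Sh{\GlTop}}{\SynOpn}}$ (equivalently, Yoneda for $\PrTop{\TCat}$) converts the witness and the equation into a genuine closed value $\Mor[a]{\ObjTerm{\TCat}}{A}$ with $u\equiv\TmRet\,a$ in $\TCat$, which is the definition of convergence. With that repair your argument goes through; the appeal to \cref{thm:value-adequacy} at the end is unnecessary for this theorem.
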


\begin{restatable}[Termination-insensitive noninterference]{theorem}{ThmTini}\label{thm:tini}
  Let $A$ be a syntactic type such that $\SealedBelow{l}\,A$ holds; fix a term
  $\Mor[c]{A}{\TpU\TpF\,\TpBool}$. Then for all
  $\Mor[x,y]{\ObjTerm{\TCat}}{A}$ such that $\Converges{c\,x}$ and
  $\Converges{c\,y}$, we have $c\,x \equiv\Sub{\TpU\TpF\,\TpBool} c\,y$.
\end{restatable}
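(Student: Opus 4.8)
The plan is to transport the internal constancy result of \cref{lem:constancy} along the adequacy theorems to obtain the syntactic statement. First I would observe that since $\SealedBelow{l}\,A$ holds in the syntax, the interpretation $\bbrk{A}$ is a $\LvlPol{l}$-sealed predomain in $\Sh{\CmpTop}$: the predicate $\SealedBelow{l}$ is defined so that its denotation forces exactly this, and $\CmpAlg$ is a genuine model of $\TCat$ by \cref{axiom:sdt-in-cmp-phase}. By \cref{lem:bool-modal}, $\bbrk{\TpBool}$ is $\LvlPol{l}$-transparent, hence so is $\Lift\,\bbrk{\TpBool} = \bbrk{\TpU\TpF\,\TpBool}$ after the usual check that lifting of a transparent predomain is transparent (the relevant instance of \cref{axiom:finset-transparent-predomain} plus the fact that $\Sigma$-indexed data over $\LvlPol{l}$ is already visible under $\LvlPol{l}$). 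Then \cref{lem:constancy}(2) applied to $\bbrk{c} : \bbrk{A}\to\Lift\,\bbrk{\TpBool}$ tells us that $\bbrk{c}$ is \emph{partially constant}: for all semantic points $a,b$ with $\IsDefd{\bbrk{c}\,a}\land\IsDefd{\bbrk{c}\,b}$, we have $\bbrk{c}\,a = \bbrk{c}\,b$.

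Next I would descend to the closed terms $x,y : \ObjTerm{\TCat}\to A$. Functoriality of the interpretation gives $\bbrk{c\,x} = \bbrk{c}\,\bbrk{x}$ and likewise for $y$, both as global points of $\Lift\,\bbrk{\TpBool}$. The hypothesis $\Converges{c\,x}$ means there is a syntactic value $a$ with $c\,x \equiv \TmRet\,a$; by \cref{thm:cmp-adequacy} this is equivalent to $\IsDefd{\bbrk{c\,x}} = \top$, and similarly for $c\,y$. So the two denotations satisfy the definedness precondition of partial constancy, whence $\bbrk{c\,x} = \bbrk{c\,y}$ as elements of $\Lift\,\bbrk{\TpBool}$. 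Since both are total (their termination support is $\top$), they are determined by their return values $\bbrk{a},\bbrk{a'} : \ObjTerm{\TCat}\to\TpBool$, and the equality of the lifted elements forces $\bbrk{a} = \bbrk{a'}$.

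Finally I would invoke \cref{thm:value-adequacy}: since $\bbrk{a} = \bbrk{a'}$ for closed boolean values, we get $a \equiv\Sub{\TpBool} a'$, and therefore $c\,x \equiv \TmRet\,a \equiv \TmRet\,a' \equiv c\,y$ in the theory, which is the desired $c\,x \equiv\Sub{\TpU\TpF\,\TpBool} c\,y$. The only mild subtlety — and the step I expect to require the most care — is the passage between ``equality of total elements of $\Lift\,\bbrk{\TpBool}$'' and ``equality of the underlying boolean return values'': one must unfold $\Lift{E} = \Sum{\phi:\Sigma}\OpMod{\phi}{E}$, use that both termination supports are $\top$, and conclude that the second projections agree in $\OpMod{\top}{\bbrk{\TpBool}} \cong \bbrk{\TpBool}$; this is routine but is where the definitional bookkeeping of the lifting monad has to be handled honestly. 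Everything else is a direct chain through the already-established adequacy results and the internal constancy lemma.
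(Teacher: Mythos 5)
Your route is the intended one: interpret the hypotheses semantically, apply \cref{lem:constancy}(2) together with \cref{lem:bool-modal} to obtain partial constancy of $\bbrk{c}$, transfer the convergence hypotheses along \cref{thm:cmp-adequacy}, and close with \cref{thm:value-adequacy}; this is exactly how the paper reads off \cref{thm:tini} from the denotational noninterference lemma and the adequacy theorems, so the overall structure and all the essential steps match.

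One assertion in your write-up is false, though fortunately unused: $\Lift\,\bbrk{\TpBool}$ is \emph{not} $\LvlPol{l}$-transparent, and no instance of \cref{axiom:finset-transparent-predomain} yields this. The termination support of an element of $\Lift\,\bbrk{\TpBool}$ is an arbitrary $\Sigma$-proposition, and by \cref{axiom:lvls} $\Sigma$ contains $\LvlPol{l}$ itself, whose truth value is precisely what is invisible below level $l$; were $\Lift\,\bbrk{\TpBool}$ transparent, \cref{lem:constancy}(1) would make every map $\bbrk{A}\to\Lift\,\bbrk{\TpBool}$ weakly constant, i.e.\ termination-\emph{sensitive} noninterference, contradicting the paper's closing example of a program whose termination hinges on a classified bit. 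Luckily \cref{lem:constancy}(2) only needs $\bbrk{\TpBool}$ itself to be transparent, which \cref{lem:bool-modal} supplies, so the remainder of your argument --- the externalization at global points, the bookkeeping with total elements of the lift, and the final appeal to value adequacy --- goes through as written.
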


We give an example of a program whose termination behavior hinges on a
classified bit to demonstrate that our noninterference result is non-trivial.

\begin{example}
  There exists a $\LvlPol{l}$-sealed type $A$ and a term
  $\Mor[c]{A}{\TpU\TpF\,\TpUnit}$ such that for some $\Mor[x,y]{\ObjTerm{\TCat}}{A}$ we have
  $\Converges{c\,x}$ and yet $\Diverges{c\,y}$.%
\end{example}

\begin{proof}
  Choose $A\coloneqq\TpSeal{l}\,\TpBool$ and consider the following terms:
  \[
    \begin{mathblock}
      \top \coloneqq \TmRet\,\prn{}\quad
      \bot \coloneqq \TmFix\,\prn{\lambda z.z}\quad
      x \coloneqq \TmSeal{l}\,\TmTt\quad
      y \coloneqq \TmSeal{l}\,\TmFf\\
      c \coloneqq \lambda u.\,
      \TmTdcl{l}\,
      \prn{
        \TmUnseal{l}\, u\,
        \prn{
          \lambda b.\,
          \TmSeal{l}\,\prn{
            \TmIf\, b\, \top\, \bot
          }
        }
      }
    \end{mathblock}
  \]

  We then have $c\,x \equiv\Sub{\TpU\TpF\,\TpUnit} \top$ and therefore
  $\Converges{c\,x}$. On the other hand, we have $c\,y
  \equiv\Sub{\TpU\TpF\,\TpUnit} \TmTdcl{l}\,\prn{\TmSeal{l}\, \bot}$; executing
  the denotational semantics, we have $\IsDefd{\bbrk{c\,y}} = \LvlPol{l}$.
  From the full and faithfulness assumption of \cref{axiom:lvls}, we know that
  $\LvlPol{l}$ is not globally equal to $\top$; hence we conclude from
  \cref{thm:cmp-adequacy} that $\Diverges{c\,y}$.
\end{proof}
 
\bibliography{refs-stripped}

\begin{thebibliography}{10}

\bibitem{abadi:1999}
Mart\'{i}n Abadi, Anindya Banerjee, Nevin Heintze, and Jon~G. Riecke.
\newblock A core calculus of dependency.
\newblock In {\em Proceedings of the 26th ACM SIGPLAN-SIGACT Symposium on
  Principles of Programming Languages}, POPL '99, pages 147--160, San Antonio,
  Texas, USA, 1999. Association for Computing Machinery.
\newblock \href {https://doi.org/10.1145/292540.292555}
  {\path{doi:10.1145/292540.292555}}.

\bibitem{adamek:1974}
Ji\v{r}\'{i} Ad\'{a}mek.
\newblock Free algebras and automata realizations in the language of
  categories.
\newblock {\em Commentationes Mathematicae Universitatis Carolinae},
  015(4):589--602, 1974.
\newblock URL: \url{http://eudml.org/doc/16649}.

\bibitem{anel-joyal:2021}
Mathieu Anel and Andr\'{e} Joyal.
\newblock Topo-logie.
\newblock In Mathieu Anel and Gabriel Catren, editors, {\em New Spaces in
  Mathematics: Formal and Conceptual Reflections}, volume~1, chapter~4, pages
  155--257. Cambridge University Press, 2021.
\newblock \href {https://doi.org/10.1017/9781108854429.007}
  {\path{doi:10.1017/9781108854429.007}}.

\bibitem{sga:4}
Michael Artin, Alexander Grothendieck, and Jean-Louis Verdier.
\newblock {\em Th\'{e}orie des topos et cohomologie \'{e}tale des sch\'{e}mas},
  volume 269, 270, 305 of {\em Lecture Notes in Mathematics}.
\newblock Springer-Verlag, Berlin, 1972.
\newblock S\'{e}minaire de G\'{e}om\'{e}trie Alg\'{e}brique du Bois-Marie
  1963--1964 (SGA 4), Dirig\'{e} par M. Artin, A. Grothendieck, et J.-L.
  Verdier. Avec la collaboration de N. Bourbaki, P. Deligne et B. Saint-Donat.

\bibitem{awodey:2021:qms}
Steve Awodey.
\newblock A {Quillen} model structure on the category of cartesian cubical
  sets.
\newblock Unpublished notes, 2021.
\newblock URL:
  \url{https://github.com/awodey/math/blob/e8c715cc5cb6a966e736656bbe54d0483f9650fc/QMS/qms.pdf}.

\bibitem{awodey-gambino-hazratpour:2021}
Steve Awodey, Nicola Gambino, and Sina Hazratpour.
\newblock {Kripke-Joyal forcing for type theory and uniform fibrations}.
\newblock 2021.
\newblock \href {http://arxiv.org/abs/2110.14576} {\path{arXiv:2110.14576}}.

\bibitem{barr:1988}
Michael Barr.
\newblock On categories with effective unions.
\newblock In F.~Borceux, editor, {\em Categorical Algebra and its
  Applications}, volume 1348 of {\em Lecture Notes in Mathematics}. Springer,
  1988.

\bibitem{bbcgsv:2016}
Lars Birkedal, Ale\v{s} Bizjak, Ranald Clouston, Hans~Bugge Grathwohl, Bas
  Spitters, and Andrea Vezzosi.
\newblock {Guarded Cubical Type Theory: Path Equality for Guarded Recursion}.
\newblock In Jean-Marc Talbot and Laurent Regnier, editors, {\em 25th EACSL
  Annual Conference on Computer Science Logic (CSL 2016)}, volume~62 of {\em
  Leibniz International Proceedings in Informatics (LIPIcs)}, pages
  23:1--23:17, Dagstuhl, Germany, 2016. Schloss Dagstuhl--Leibniz-Zentrum fuer
  Informatik.
\newblock \href {https://doi.org/10.4230/LIPIcs.CSL.2016.23}
  {\path{doi:10.4230/LIPIcs.CSL.2016.23}}.

\bibitem{bjerrum-johnstone-leinster-sawin:2015}
Marie Bjerrum, Peter Johnstone, Thomas Leinster, and {William F.} Sawin.
\newblock Notes on commutation of limits and colimits.
\newblock {\em Theory and Applications of Categories}, 30(15):527--532, April
  2015.

\bibitem{bowman-ahmed:2015}
William~J. Bowman and Amal Ahmed.
\newblock Noninterference for free.
\newblock In Kathleen Fisher and John~H. Reppy, editors, {\em Proceedings of
  the 20th {ACM} {SIGPLAN} International Conference on Functional Programming,
  {ICFP} 2015, Vancouver, BC, Canada, September 1-3, 2015}, pages 101--113.
  Association for Computing Machinery, 2015.
\newblock \href {https://doi.org/10.1145/2784731.2784733}
  {\path{doi:10.1145/2784731.2784733}}.

\bibitem{cchm:2017}
Cyril Cohen, Thierry Coquand, Simon Huber, and Anders M\"{o}rtberg.
\newblock {Cubical Type Theory: a constructive interpretation of the univalence
  axiom}.
\newblock {\em IfCoLog Journal of Logics and their Applications},
  4(10):3127--3169, November 2017.
\newblock \href {http://arxiv.org/abs/1611.02108} {\path{arXiv:1611.02108}}.

\bibitem{dejong-escardo:2021}
Tom {de Jong} and Mart\'{i}n~H\"{o}tzel Escard\'{o}.
\newblock {Domain Theory in Constructive and Predicative Univalent
  Foundations}.
\newblock In Christel Baier and Jean Goubault-Larrecq, editors, {\em 29th EACSL
  Annual Conference on Computer Science Logic (CSL 2021)}, volume 183 of {\em
  Leibniz International Proceedings in Informatics (LIPIcs)}, pages
  28:1--28:18, Dagstuhl, Germany, 2021. Schloss Dagstuhl--Leibniz-Zentrum
  f\"{u}r Informatik.
\newblock URL: \url{https://drops.dagstuhl.de/opus/volltexte/2021/13462}, \href
  {https://doi.org/10.4230/LIPIcs.CSL.2021.28}
  {\path{doi:10.4230/LIPIcs.CSL.2021.28}}.

\bibitem{diaconescu:1975}
Radu Diaconescu.
\newblock Change of base for toposes with generators.
\newblock {\em Journal of Pure and Applied Algebra}, 6(3):191--218, 1975.
\newblock \href {https://doi.org/10.1016/0022-4049(75)90015-8}
  {\path{doi:10.1016/0022-4049(75)90015-8}}.

\bibitem{fiore-plotkin-power:1997}
M.~Fiore, G.~Plotkin, and J.~Power.
\newblock Complete cuboidal sets in axiomatic domain theory.
\newblock In {\em Logic in Computer Science, Symposium on}, page 268, Los
  Alamitos, CA, USA, July 1997. IEEE Computer Society.
\newblock \href {https://doi.org/10.1109/LICS.1997.614954}
  {\path{doi:10.1109/LICS.1997.614954}}.

\bibitem{fiore:1994}
Marcelo Fiore.
\newblock {\em Axiomatic Domain Theory in Categories of Partial Maps}.
\newblock PhD thesis, University of Edinburgh, November 1994.
\newblock URL: \url{https://era.ed.ac.uk/handle/1842/406}.

\bibitem{fiore:1997}
Marcelo~P. Fiore.
\newblock An enrichment theorem for an axiomatisation of categories of domains
  and continuous functions.
\newblock {\em Mathematical Structures in Computer Science}, 7(5):591--618,
  October 1997.
\newblock \href {https://doi.org/10.1017/S0960129597002429}
  {\path{doi:10.1017/S0960129597002429}}.

\bibitem{fiore-plotkin:1996}
Marcelo~P. Fiore and Gordon~D. Plotkin.
\newblock An extension of models of axiomatic domain theory to models of
  synthetic domain theory.
\newblock In Dirk van Dalen and Marc Bezem, editors, {\em Computer Science
  Logic, 10th International Workshop, {CSL} '96, Annual Conference of the
  EACSL, Utrecht, The Netherlands, September 21-27, 1996, Selected Papers},
  volume 1258 of {\em Lecture Notes in Computer Science}, pages 129--149.
  Springer, 1996.
\newblock \href {https://doi.org/10.1007/3-540-63172-0\_36}
  {\path{doi:10.1007/3-540-63172-0\_36}}.

\bibitem{fiore-rosolini:1997:cpos}
Marcelo~P. Fiore and Giuseppe Rosolini.
\newblock The category of cpos from a synthetic viewpoint.
\newblock In Stephen~D. Brookes and Michael~W. Mislove, editors, {\em
  Thirteenth Annual Conference on Mathematical Foundations of Progamming
  Semantics, {MFPS} 1997, Carnegie Mellon University, Pittsburgh, PA, USA,
  March 23-26, 1997}, volume~6 of {\em Electronic Notes in Theoretical Computer
  Science}, pages 133--150. Elsevier, 1997.
\newblock \href {https://doi.org/10.1016/S1571-0661(05)80165-3}
  {\path{doi:10.1016/S1571-0661(05)80165-3}}.

\bibitem{fiore-rosolini:1997}
Marcelo~P. Fiore and Giuseppe Rosolini.
\newblock Two models of synthetic domain theory.
\newblock {\em Journal of Pure and Applied Algebra}, 116(1):151--162, 1997.
\newblock \href {https://doi.org/10.1016/S0022-4049(96)00164-8}
  {\path{doi:10.1016/S0022-4049(96)00164-8}}.

\bibitem{fiore-rosolini:2001}
Marcelo~P. Fiore and Giuseppe Rosolini.
\newblock Domains in {H}.
\newblock {\em Theoretical Computer Science}, 264(2):171--193, August 2001.
\newblock \href {https://doi.org/10.1016/S0304-3975(00)00221-8}
  {\path{doi:10.1016/S0304-3975(00)00221-8}}.

\bibitem{freyd:1978}
Peter Freyd.
\newblock On proving that $\mathbf{1}$ is an indecomposable projective in
  various free categories.
\newblock Unpublished manuscript, 1978.

\bibitem{garner-lack:2012}
Richard Garner and Stephen Lack.
\newblock On the axioms for adhesive and quasiadhesive categories.
\newblock {\em Theory and Applications of Categories}, 27(3):27--46, 2012.

\bibitem{gratzer:normalization:2022}
Daniel Gratzer.
\newblock Normalization for multimodal type theory.
\newblock To appear, \emph{Symposium on Logic in Computer Science Logic (LICS)
  '22}, 2021.
\newblock \href {http://arxiv.org/abs/2106.01414} {\path{arXiv:2106.01414}}.

\bibitem{gratzer-birkedal:2022}
Daniel Gratzer and Lars Birkedal.
\newblock A stratified approach to {L\"{o}b} induction.
\newblock To appear, \emph{International Conference on Formal Structures for
  Computation and Deduction (FSCD) '22}, 2022.
\newblock URL:
  \url{https://jozefg.github.io/papers/a-stratified-approach-to-lob-induction.pdf}.

\bibitem{gratzer-shulman-sterling:2022:universes}
Daniel Gratzer, Michael Shulman, and Jonathan Sterling.
\newblock Strict universes for {Grothendieck} topoi.
\newblock Unpublished manuscript, February 2022.
\newblock \href {http://arxiv.org/abs/2202.12012} {\path{arXiv:2202.12012}},
  \href {https://doi.org/10.48550/arXiv.2202.12012}
  {\path{doi:10.48550/arXiv.2202.12012}}.

\bibitem{gratzer-sterling:2020}
Daniel Gratzer and Jonathan Sterling.
\newblock Syntactic categories for dependent type theory: sketching and
  adequacy.
\newblock 2020.
\newblock \href {http://arxiv.org/abs/2012.10783} {\path{arXiv:2012.10783}}.

\bibitem{hofmann-streicher:1997}
Martin Hofmann and Thomas Streicher.
\newblock Lifting {G}rothendieck universes.
\newblock Unpublished note, 1997.
\newblock URL:
  \url{https://www2.mathematik.tu-darmstadt.de/~streicher/NOTES/lift.pdf}.

\bibitem{hyland:1991}
J.~M.~E. Hyland.
\newblock First steps in synthetic domain theory.
\newblock In Aurelio Carboni, Maria~Cristina Pedicchio, and Guiseppe Rosolini,
  editors, {\em Category Theory}, pages 131--156, Berlin, Heidelberg, 1991.
  Springer Berlin Heidelberg.

\bibitem{jacobs:1999}
Bart Jacobs.
\newblock {\em Categorical Logic and Type Theory}.
\newblock Number 141 in Studies in Logic and the Foundations of Mathematics.
  North Holland, Amsterdam, 1999.

\bibitem{johnstone:2002}
Peter~T. Johnstone.
\newblock {\em Sketches of an Elephant: A Topos Theory Compendium: Volumes 1
  and 2}.
\newblock Number~43 in Oxford Logical Guides. Oxford Science Publications,
  2002.

\bibitem{jones:1990}
Claire Jones.
\newblock {\em Probabilistic Non-Determinism}.
\newblock PhD thesis, University of Edinburgh, 1990.
\newblock URL: \url{https://www.lfcs.inf.ed.ac.uk/reports/90/ECS-LFCS-90-105/}.

\bibitem{jones-plotkin:1989}
Claire Jones and Gordon~D. Plotkin.
\newblock A probabilistic powerdomain of evaluations.
\newblock In {\em [1989] Proceedings. Fourth Annual Symposium on Logic in
  Computer Science}, pages 186--195, 1989.
\newblock \href {https://doi.org/10.1109/LICS.1989.39173}
  {\path{doi:10.1109/LICS.1989.39173}}.

\bibitem{kapulkin-lumsdaine:2021}
Chris Kapulkin and Peter~LeFanu Lumsdaine.
\newblock The simplicial model of {Univalent Foundations} (after {Voevodsky}).
\newblock {\em Journal of the European Mathematical Society}, 23:2071--2126,
  March 2021.
\newblock \href {http://arxiv.org/abs/1211.2851} {\path{arXiv:1211.2851}},
  \href {https://doi.org/10.4171/JEMS/1050} {\path{doi:10.4171/JEMS/1050}}.

\bibitem{kavvos:2019}
G.~A. Kavvos.
\newblock Modalities, cohesion, and information flow.
\newblock {\em Proceedings of the ACM on Programming Languages}, 3(POPL),
  January 2019.
\newblock \href {https://doi.org/10.1145/3290333} {\path{doi:10.1145/3290333}}.

\bibitem{kraus-escardo-coquand-altenkirch:2017}
Nicolai Kraus, Mart\'{i}n Escard\'{o}, Thierry Coquand, and Thorsten
  Altenkirch.
\newblock {Notions of anonymous existence in Martin-L\"{o}f type theory}.
\newblock {\em Logical Methods in Computer Science}, 13(1):1--36, March 2017.
\newblock \href {https://doi.org/10.23638/LMCS-13(1:15)2017}
  {\path{doi:10.23638/LMCS-13(1:15)2017}}.

\bibitem{lawvere:thesis:reprint}
F.~William Lawvere.
\newblock {Functorial Semantics of Algebraic Theories}.
\newblock {\em Reprints in Theory and Applications of Categories}, 4:1--121,
  2004.
\newblock URL: \url{http://tac.mta.ca/tac/reprints/articles/5/tr5.pdf}.

\bibitem{lawvere:2007}
F.~William Lawvere.
\newblock Axiomatic cohesion.
\newblock {\em Theory and Applications of Categories}, 19:41--49, June 2007.
\newblock URL: \url{http://www.tac.mta.ca/tac/volumes/19/3/19-03.pdf}.

\bibitem{levy:2004}
Paul~Blain Levy.
\newblock {\em Call-By-Push-Value: A Functional/Imperative Synthesis (Semantics
  Structures in Computation, V. 2)}.
\newblock Kluwer Academic Publishers, Norwell, MA, USA, 2004.

\bibitem{maclane-moerdijk:1992}
Saunders Mac~Lane and Ieke Moerdijk.
\newblock {\em Sheaves in geometry and logic: a first introduction to topos
  theory}.
\newblock Universitext. Springer, New York, 1992.

\bibitem{matache-moss-staton:2021}
Cristina Matache, Sean Moss, and Sam Staton.
\newblock {Recursion and Sequentiality in Categories of Sheaves}.
\newblock In Naoki Kobayashi, editor, {\em 6th International Conference on
  Formal Structures for Computation and Deduction (FSCD 2021)}, volume 195 of
  {\em Leibniz International Proceedings in Informatics (LIPIcs)}, pages
  25:1--25:22, Dagstuhl, Germany, 2021. Schloss Dagstuhl -- Leibniz-Zentrum
  f\"{u}r Informatik.
\newblock URL: \url{https://drops.dagstuhl.de/opus/volltexte/2021/14263}, \href
  {https://doi.org/10.4230/LIPIcs.FSCD.2021.25}
  {\path{doi:10.4230/LIPIcs.FSCD.2021.25}}.

\bibitem{niu-sterling-grodin-harper:2022}
Yue Niu, Jonathan Sterling, Harrison Grodin, and Robert Harper.
\newblock A cost-aware logical framework.
\newblock {\em Proceedings of the ACM on Programming Languages}, 6(POPL),
  January 2022.
\newblock \href {http://arxiv.org/abs/2107.04663} {\path{arXiv:2107.04663}},
  \href {https://doi.org/10.1145/3498670} {\path{doi:10.1145/3498670}}.

\bibitem{orton-pitts:2016}
Ian Orton and Andrew~M. Pitts.
\newblock Axioms for modelling cubical type theory in a topos.
\newblock In Jean-Marc Talbot and Laurent Regnier, editors, {\em 25th EACSL
  Annual Conference on Computer Science Logic (CSL 2016)}, volume~62 of {\em
  Leibniz International Proceedings in Informatics (LIPIcs)}, pages
  24:1--24:19, Dagstuhl, Germany, 2016. Schloss Dagstuhl--Leibniz-Zentrum fuer
  Informatik.
\newblock \href {https://doi.org/10.4230/LIPIcs.CSL.2016.24}
  {\path{doi:10.4230/LIPIcs.CSL.2016.24}}.

\bibitem{owicki-gries:1976}
Susan Owicki and David Gries.
\newblock An axiomatic proof technique for parallel programs {I}.
\newblock {\em Acta Informatica}, 6(4):319--340, December 1976.
\newblock \href {https://doi.org/10.1007/BF00268134}
  {\path{doi:10.1007/BF00268134}}.

\bibitem{phoa:1991}
Wesley Phoa.
\newblock {\em Domain Theory in Realizability Toposes}.
\newblock PhD thesis, University of Edinburgh, July 1991.

\bibitem{reus:1995}
Bernhard Reus.
\newblock {\em Program Verification in Synthetic Domain Theory}.
\newblock PhD thesis, Ludwig-Maximilians-Universit\"{a}t M\"{u}nchen,
  M\"{u}nchen, November 1995.

\bibitem{reus:1996}
Bernhard Reus.
\newblock Synthetic domain theory in type theory: Another logic of computable
  functions.
\newblock In Gerhard Goos, Juris Hartmanis, Jan van Leeuwen, Joakim von Wright,
  Jim Grundy, and John Harrison, editors, {\em Theorem Proving in Higher Order
  Logics}, pages 363--380, Berlin, Heidelberg, 1996. Springer Berlin
  Heidelberg.
\newblock \href {https://doi.org/10.1007/BFb0105416}
  {\path{doi:10.1007/BFb0105416}}.

\bibitem{reus:1999}
Bernhard Reus.
\newblock Formalizing synthetic domain theory.
\newblock {\em Journal of Automated Reasoning}, 23(3):411--444, 1999.
\newblock \href {https://doi.org/10.1023/A:1006258506401}
  {\path{doi:10.1023/A:1006258506401}}.

\bibitem{reus-streicher:1993}
Bernhard Reus and Thomas Streicher.
\newblock Na\"{i}ve synthetic domain theory --- a logical approach.
\newblock Unpublished manuscript, September 1993.

\bibitem{reus-streicher:1999}
Bernhard Reus and Thomas Streicher.
\newblock General synthetic domain theory --- a logical approach.
\newblock {\em Mathematical Structures in Computer Science}, 9(2):177--223,
  1999.
\newblock \href {https://doi.org/10.1017/S096012959900273X}
  {\path{doi:10.1017/S096012959900273X}}.

\bibitem{riehl-shulman:2017}
Emily Riehl and Michael Shulman.
\newblock A type theory for synthetic $\infty$-categories.
\newblock {\em Higher Structures}, 1:147--224, 2017.
\newblock URL:
  \url{https://journals.mq.edu.au/index.php/higher_structures/article/view/36},
  \href {http://arxiv.org/abs/1705.07442} {\path{arXiv:1705.07442}}.

\bibitem{rijke-shulman-spitters:2020}
Egbert Rijke, Michael Shulman, and Bas Spitters.
\newblock Modalities in homotopy type theory.
\newblock {\em Logical Methods in Computer Science}, {Volume 16, Issue 1},
  January 2020.
\newblock URL: \url{https://lmcs.episciences.org/6015}, \href
  {http://arxiv.org/abs/1706.07526} {\path{arXiv:1706.07526}}, \href
  {https://doi.org/10.23638/LMCS-16(1:2)2020}
  {\path{doi:10.23638/LMCS-16(1:2)2020}}.

\bibitem{rosolini:1986}
Guiseppe Rosolini.
\newblock {\em Continuity and effectiveness in topoi}.
\newblock PhD thesis, University of Oxford, 1986.

\bibitem{schultz-spivak:2019}
Patrick Schultz and David~I. Spivak.
\newblock {\em Temporal Type Theory}, volume~29 of {\em Progress in Computer
  Science and Applied Logic}.
\newblock Birkh\"{a}user Basel, 2019.
\newblock \href {http://arxiv.org/abs/1710.10258} {\path{arXiv:1710.10258}},
  \href {https://doi.org/10.1007/978-3-030-00704-1}
  {\path{doi:10.1007/978-3-030-00704-1}}.

\bibitem{shulman:blog:localization-hit}
Michael Shulman.
\newblock Localization as an inductive definition, December 2011.
\newblock URL:
  \url{https://homotopytypetheory.org/2011/12/06/inductive-localization/}.

\bibitem{shulman:blog:reflective-subfibrations}
Michael Shulman.
\newblock Reflective subfibrations, factorization systems, and stable units,
  December 2011.
\newblock URL:
  \url{https://golem.ph.utexas.edu/category/2011/12/reflective_subfibrations_facto.html}.

\bibitem{shulman:2015:elegant}
Michael Shulman.
\newblock The univalence axiom for elegant reedy presheaves.
\newblock {\em Homology, Homotopy and Applications}, 17:81--106, 2015.
\newblock \href {http://arxiv.org/abs/1307.6248} {\path{arXiv:1307.6248}},
  \href {https://doi.org/10.4310/HHA.2015.v17.n2.a6}
  {\path{doi:10.4310/HHA.2015.v17.n2.a6}}.

\bibitem{shulman:2019}
Michael Shulman.
\newblock All $(\infty,1)$-toposes have strict univalent universes.
\newblock April 2019.
\newblock \href {http://arxiv.org/abs/1904.07004} {\path{arXiv:1904.07004}}.

\bibitem{smyth-plotkin:1982}
M.~B. Smyth and G.~D. Plotkin.
\newblock The category-theoretic solution of recursive domain equations.
\newblock {\em SIAM Journal Computing}, 11:761--783, 1982.

\bibitem{sterling:2021:thesis}
Jonathan Sterling.
\newblock {\em First Steps in Synthetic Tait Computability: The Objective
  Metatheory of Cubical Type Theory}.
\newblock PhD thesis, Carnegie Mellon University, 2021.
\newblock {CMU technical report CMU-CS-21-142}.
\newblock \href {https://doi.org/10.5281/zenodo.5709838}
  {\path{doi:10.5281/zenodo.5709838}}.

\bibitem{sterling:2022:bilimits}
Jonathan Sterling.
\newblock Bilimits in categories of partial maps.
\newblock February 2022.
\newblock \href {http://arxiv.org/abs/2202.08657} {\path{arXiv:2202.08657}},
  \href {https://doi.org/10.48550/arXiv.2202.08657}
  {\path{doi:10.48550/arXiv.2202.08657}}.

\bibitem{sterling-angiuli:2021}
Jonathan Sterling and Carlo Angiuli.
\newblock Normalization for cubical type theory.
\newblock In {\em 2021 36th Annual ACM/IEEE Symposium on Logic in Computer
  Science (LICS)}, pages 1--15, Los Alamitos, CA, USA, July 2021. IEEE Computer
  Society.
\newblock \href {http://arxiv.org/abs/2101.11479} {\path{arXiv:2101.11479}},
  \href {https://doi.org/10.1109/LICS52264.2021.9470719}
  {\path{doi:10.1109/LICS52264.2021.9470719}}.

\bibitem{sterling-harper:2021}
Jonathan Sterling and Robert Harper.
\newblock Logical relations as types: Proof-relevant parametricity for program
  modules.
\newblock {\em Journal of the ACM}, 68(6), October 2021.
\newblock \href {http://arxiv.org/abs/2010.08599} {\path{arXiv:2010.08599}},
  \href {https://doi.org/10.1145/3474834} {\path{doi:10.1145/3474834}}.

\bibitem{streicher:2005}
Thomas Streicher.
\newblock Universes in toposes.
\newblock In Laura Crosilla and Peter Schuster, editors, {\em From Sets and
  Types to Topology and Analysis: Towards practical foundations for
  constructive mathematics}, volume~48 of {\em Oxford Logical Guides}, pages
  78--90. Oxford University Press, Oxford, 2005.
\newblock \href {https://doi.org/10.1093/acprof:oso/9780198566519.001.0001}
  {\path{doi:10.1093/acprof:oso/9780198566519.001.0001}}.

\bibitem{streicher:2014:simplicial}
Thomas Streicher.
\newblock A model of type theory in simplicial sets: {A} brief introduction to
  voevodsky's homotopy type theory.
\newblock {\em Journal of Applied Logic}, 12(1):45--49, 2014.
\newblock \href {https://doi.org/10.1016/j.jal.2013.04.001}
  {\path{doi:10.1016/j.jal.2013.04.001}}.

\bibitem{taylor:1991}
Paul Taylor.
\newblock The fixed point property in synthetic domain theory.
\newblock In {\em [1991] Proceedings Sixth Annual IEEE Symposium on Logic in
  Computer Science}, pages 152--160, 1991.
\newblock \href {https://doi.org/10.1109/LICS.1991.151640}
  {\path{doi:10.1109/LICS.1991.151640}}.

\bibitem{tse-zdancewic:2004}
Stephen Tse and Steve Zdancewic.
\newblock Translating dependency into parametricity.
\newblock In {\em Proceedings of the Ninth ACM SIGPLAN International Conference
  on Functional Programming}, pages 115--125, Snow Bird, UT, USA, 2004.
  Association for Computing Machinery.
\newblock \href {https://doi.org/10.1145/1016850.1016868}
  {\path{doi:10.1145/1016850.1016868}}.

\bibitem{vickers:2007}
Steven Vickers.
\newblock {\em Locales and Toposes as Spaces}, pages 429--496.
\newblock Springer Netherlands, Dordrecht, 2007.
\newblock \href {https://doi.org/10.1007/978-1-4020-5587-4_8}
  {\path{doi:10.1007/978-1-4020-5587-4_8}}.

\end{thebibliography}

\ifextendedversion
\clearpage
\markboth{Technical Appendix}{}

\begin{appendices}
\section*{How to read the technical appendix}

We divide our technical appendix into two parts. \cref{apx:synthetic} provides
proofs of the main results of this paper that depend only on the axiomatics we
have imposed for synthetic domain theory and synthetic Tait computability. To
verify that these axioms are substantiated, we construct concrete models of
synthetic domain theory and synthetic Tait computability in
\cref{apx:analytic}.

\begin{figure}
  \footnotesize
  \[
    \begin{array}{l}
      \begin{array}{lll}
        \begin{mathblock}
          \RuleBlockJudgmental
        \end{mathblock}
        &
        \begin{mathblock}
          \RuleBlockRetBind\mathbreak
          \RuleBlockBindEqns
        \end{mathblock}
        &
        \begin{mathblock}
          \RuleBlockFn\mathbreak
          \RuleBlockProd\mathbreak
          \RuleBlockUnit
        \end{mathblock}
      \end{array}
      \\\\
      \begin{array}{lll}
        \begin{mathblock}
          \RuleBlockSum\mathbreak
          \RuleBlockCase
        \end{mathblock}
        &
        \begin{mathblock}
          \begin{mathblock}
          \RuleBlockLvl
          \end{mathblock}
          \mathbreak
          \begin{mathblock}
          \RuleBlockIsSealed\mathbreak
          \SealedTp{l}\Pos\coloneqq \Compr{A : \Tp\Pos}{\SealedBelow{l}\,A}\\
          \end{mathblock}
        \end{mathblock}
      \end{array}
      \\\\
      \begin{array}{ll}
        \begin{mathblock}
          \RuleBlockSeal\mathbreak
          \RuleBlockUnseal
        \end{mathblock}
        &
        \begin{mathblock}
          \RuleBlockFix\mathbreak
          \RuleBlockTdcl
        \end{mathblock}
      \end{array}
    \end{array}
  \]
  \caption{The LF presentation of the theory $\TCat$.}
  \label{fig:lf-sig}
\end{figure}

\section{Synthetic results}\label{apx:synthetic}

\subsection{Denotational semantics in synthetic domain theory}

In this section we assume all the axioms of synthetic domain theory.

\begin{lemma}
  The predomains form a \DefEmph{reflective exponential ideal} of $\Sh{\CmpTop}$.
\end{lemma}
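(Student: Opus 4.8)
The plan is to bootstrap everything from the better-behaved class of \emph{complete} objects and then cross over to predomains. Recall that $E$ is complete exactly when the restriction map $E^{\ShLub}\to E^{\ShChain}$ along $\InclChain:\ShChain\to\ShLub$ is an isomorphism, equivalently when $E$ is orthogonal to $\Gamma\times\InclChain$ for every object $\Gamma$. Two facts about complete objects are essentially immediate. First, they form an \emph{exponential ideal}: if $D$ is complete then $\prn{E\Rightarrow D}^{\ShLub}\cong D^{E\times\ShLub}\cong\prn{D^{\ShLub}}^{E}\cong\prn{D^{\ShChain}}^{E}\cong D^{E\times\ShChain}\cong\prn{E\Rightarrow D}^{\ShChain}$, where the middle isomorphism is completeness of $D$, so $E\Rightarrow D$ is complete. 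Second, they are closed under all limits, as is any class of objects orthogonal to a class of maps. Reflectivity needs a little care, since the orthogonality condition ranges over a proper class of maps $\Gamma\times\InclChain$; but because $\Sh{\CmpTop}$ is a Grothendieck — hence locally presentable — topos, and $\mathrm{Hom}(-,E)$ turns colimits into limits while $\times$ preserves colimits, it suffices to let $\Gamma$ range over a generating set, so completeness is a small-orthogonality condition and the orthogonal reflection construction exhibits complete objects as a reflective subcategory.

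Next I would transfer these facts to predomains, i.e.\ the \emph{well-complete} objects $E$ for which $\Lift E$ is complete, via two bridging lemmas. (i) \emph{$\Lift$ preserves completeness}: if $E$ is complete then so is $\Lift E$. (ii) \emph{Every predomain is complete}: writing $\Lift E=\Sum{\phi:\Sigma}\OpMod{\phi}{E}$ with its support map $\IsDefd{-}:\Lift E\to\Sigma$, the unit $E\to\Lift E$ exhibits $E$ as the pullback of $\ObjTerm\xrightarrow{\top}\Sigma\xleftarrow{\IsDefd{-}}\Lift E$, so if $\Lift E$, $\ObjTerm$, and $\Sigma$ are all complete then $E$ is complete, by closure of complete objects under limits. (Lemma (ii) uses that $\Sigma$ itself is complete, which is where \cref{axiom:sigma-predomain} enters; lemma (i) is the genuinely technical step.)

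With (i) and (ii) the statement assembles quickly. \emph{Exponential ideal:} let $A$ be a predomain and $E$ arbitrary; by (ii) $A$ is complete, hence $E\Rightarrow A$ is complete by the exponential-ideal property of complete objects, hence $\Lift\prn{E\Rightarrow A}$ is complete by (i), i.e.\ $E\Rightarrow A$ is well-complete. \emph{Reflectivity:} the full subcategory of predomains is closed under limits — for a small product, $\Lift\prn{\prod_i E_i}\cong\Sum{\phi:\Sigma}\prod_i\OpMod{\phi}{E_i}$ is, inside $\prod_i\Lift E_i$, the subobject cutting out tuples with a common termination support, hence a limit of the complete objects $\prod_i\Lift E_i$ and a power of $\Sigma$, so complete, and equalizers are handled similarly; and it is an accessible subcategory, being the preimage under the accessible functor $\Lift$ of a small-orthogonality class. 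A limit-closed, accessibly-embedded full subcategory of a locally presentable category is reflective, so the predomains are a reflective subcategory of $\Sh{\CmpTop}$. The conjunction of these two facts is precisely the assertion that predomains form a reflective exponential ideal; as a consistency check, the predomain reflector then automatically preserves finite products, by the standard characterization of exponential ideals in a cartesian closed category.

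I expect lemma (i) — that $\Lift$ preserves completeness — to be the main obstacle. Unlike the diagram-chasing above, it is not purely formal: it requires a concrete analysis of how $\InclChain$-orthogonality interacts with the partial-map classifier, and it is exactly where the dominance must cooperate, leaning on the inductive character of $\ShChain$ (\cref{axiom:omega-inductive}) together with $\Sigma$ being a predomain (\cref{axiom:sigma-predomain}). This is standard material in the well-completion theory of synthetic domain theory \citep{fiore-rosolini:1997}, and I would either reproduce that argument in our axiomatic setting or reduce to it; everything else is bookkeeping around orthogonality and (co)limits.
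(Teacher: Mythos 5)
Your lemma~(i) --- that $\Lift$ preserves completeness, i.e.\ every complete object is well-complete --- is a genuine gap, and it is precisely the step that cannot be imported from the literature: the well-completion theory proves closure properties of the \emph{well-complete} objects directly, and the implication it does give (under completeness of $\Sigma$) is the converse, your lemma~(ii). Concretely, a formal chain in $\Lift{E}$ at stage $\Gamma$ is a partial map from $\Gamma\times\ShChain$ to $E$ defined on a $\Sigma$-open $U\subseteq\Gamma\times\ShChain$; completeness of $\Sigma$ extends the support to an open $V\subseteq\Gamma\times\ShLub$, but extending the underlying total map $U\to E$ requires $E$ to be orthogonal to the inclusion $U\hookrightarrow V$, which is a \emph{pullback of $\Gamma\times\InclChain$ along a $\Sigma$-mono}, and orthogonality to $\Gamma\times\InclChain$ itself does not yield orthogonality to these restrictions. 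This is exactly why \cref{fact:orth-to-orth-lift} needs the stronger hypothesis $\mathcal{M}_\Sigma\perp E$, rather than $\mathcal{M}_\times\perp E$, to conclude $\mathcal{M}_\times\perp\Lift{E}$. There is also an internal red flag: your (i) combined with your (ii) would make ``complete'' and ``predomain'' coincide, collapsing the very distinction that motivates well-completeness; in SDT models the complete objects are in general not closed under lifting. Consequently the exponential-ideal half of your argument, which is routed through (i), does not go through as written.

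The paper's (terse) proof instead treats predomain-hood itself as an orthogonality condition: by the argument of \cref{lem:lift-orth-reflective} (using that $\Sigma$ is complete, which follows from \cref{axiom:sigma-predomain} since $\Sigma$ is a retract of $\Lift{\Sigma}$ via the unit and multiplication), $\Lift{E}$ is complete iff $E$ is orthogonal to the class $\mathcal{M}_\Sigma$ of $\Sigma$-pullbacks of the maps $\Gamma\times\InclChain$. Reflectivity is then the one-line appeal to orthogonal subcategories of locally presentable categories, and the exponential-ideal property is immediate because $\mathcal{M}_\Sigma$ is stable under $\Gamma\times\prn{-}$, so a lifting problem against an exponential $D\Rightarrow A$ transposes to one against $A$ (equivalently and more concretely: a partial chain valued in $D\Rightarrow A$ transposes to a partial chain valued in $A$ at the larger stage $\Gamma\times D$, so well-completeness of $A$ alone suffices). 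The parts of your proposal not routed through (i) are fine and in places more careful than the paper: lemma~(ii) is correct, your closure-under-limits computations are essentially right, and your reflectivity argument via accessibility of the preimage of a small-orthogonality class under the accessible functor $\Lift$ is a legitimate, size-conscious alternative to the paper's appeal. To repair the proposal, replace lemma~(i) by the $\mathcal{M}_\Sigma$-characterization of well-completeness and derive the exponential ideal from that.
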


\begin{proof}
  In a locally presentable category, such as any Grothendieck topos, any
  orthogonal subcategory is reflective.
\end{proof}

\subsubsection{Denotational noninterference}

\LemConstancy*

\begin{proof}
  Fix $\LvlPol{l}$-sealed $A$ and $\LvlPol{l}$-transparent $B$.  For
  (1), we fix a function $\Mor[u]{A}{B}$ to check that
  for any $x,y:A$ we have $u\,x= u\,y$. Because $B$ is $\LvlPol{l}$-transparent,
  we may assume $\LvlPol{l}=\top$; but then $x=y$ and hence $u\,x = u\,y$.
  For (2), we fix a partial function
  $\Mor[u]{A}{B_\bot}$ along with $x,y:A$ such that $u$ is defined on both $x$
  and $y$; we must check that $u\,x = u\,y : B$. Because $B$ is
  $\LvlPol{l}$-transparent, we may again assume $\LvlPol{l}=\top$; but $A$ is
  $\LvlPol{l}$-sealed and hence under this assumption, we have $x = y$ and
  therefore $u\,x = u\,y$.
\end{proof}

\LemBoolModal*

\begin{proof}
  $\bbrk{\TpBool}$ is the binary coproduct $\Two{\PREDOM}$ of singleton
  predomains, which by \cref{axiom:finset-transparent-predomain} is just the constant sheaf
  $\Two{\Sh{\CmpTop}} \cong \CmpTop^*\brc{0,1}$.
\end{proof}

\subsection{Adequacy via synthetic Tait computability}

\LemScottInduction*

\begin{proof}
  We employ the fact that admissible approximation relations are closed under
  bottom and formal suprema of $\ShChain$-chains.
\end{proof}

\subsubsection{Universe of value types}

\[
  \begin{mathblock}
    \Alg.\Tp\Pos:\Ext{\VV}{\BaseOpn}{\BaseAlg.\Tp\Pos}\\
    \Alg.\Tp\Pos = \GlueFam{\BaseOpn}{A:\BaseAlg.\Tp\Pos}{\Ext{\UU}{\BaseOpn}{\BaseAlg.\Tm\,A}}
    \\[6pt]
    \Alg.\Tm : \Ext{\Alg.\Tp\Pos\to\VV}{\BaseOpn}{\BaseAlg.\Tm}\\
    \Alg.\Tm = \Unglue{\BaseOpn}
  \end{mathblock}
\]

\subsubsection{Universe of computation types}

\[
  \begin{mathblock}
    \Alg.\Tp\Neg : \Ext{\VV}{\BaseOpn}{\BaseAlg.\Tp\Neg}\\
    \Alg.\Tp\Neg =
    \GlueFam{\BaseOpn}{
      X:\BaseAlg.\Tp\Neg
    }{
      \Compr{X' : \Ext{\Rel{\UU}}{\BaseOpn}{\BaseAlg.\Tm\,\prn{\BaseAlg.\TpU\, X}}}{
        X'\ \text{is admissible}
      }
    }
    \\[6pt]
    \Alg.\TpU : \Ext{\Alg.\Tp\Neg\to\Alg.\Tp\Pos}{\BaseOpn}{\BaseAlg.\TpU}\\
    \Alg.\TpU\, X =
    \GlueEl{
      \Unglue{\BaseOpn}{X}
    }{\BaseOpn}{\BaseAlg.\TpU\,X}
  \end{mathblock}
\]

\subsubsection{Free computation types}

First we define the underlying formal approximation relation.
\[
  \begin{mathblock}
    \brk{\TpF} : \Ext{\Alg.\Tp\Pos\to\Rel{\UU}}{\BaseOpn}{\BaseAlg.\prn{\Tm\circ\TpU\circ\TpF}}\\
    \brk{\TpF}\,A =
      \GlueFam{\BaseOpn}{
        u : \BaseAlg.\prn{\TpU\TpF}\,A
      }{
        \prn{\OpMod{\CmpOpn}\IsDefd{u}}
        \Rightarrow
        \ClMod{\BaseOpn}{
          \exists a:A.
          \OpMod{\BaseOpn}\prn{
            u = \BaseAlg.\TmRet\,a
          }
        }
      }
  \end{mathblock}
\]

\begin{computation}
  It is useful to characterize the relation in more familiar notation:
  \[
    u \lhd\Sub{\brk{\TpF}\,A} x \Longleftrightarrow
    \IsDefd{u}
    \Rightarrow
    \ClMod{\BaseOpn}{
      \exists u',x'.
      \OpMod{\CmpOpn}{\prn{u = \eta\Sub{A} u'}}
      \land
      \OpMod{\SynOpn}{\prn{x=\SynAlg.\TmRet\,x'}}
      \land
      u' \lhd\Sub{A} x'
    }
  \]
\end{computation}

\begin{lemma}\label{lem:F-admissible}
  Each $\brk{\TpF}\,A$ is admissible.
\end{lemma}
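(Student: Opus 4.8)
The plan is to fix a syntactic computation $x : \OpMod{\SynOpn}{\brk{\TpF}\,A}$ and verify that the subobject
$S_x := \Compr{u : \Lift{A}}{u \lhd\Sub{\brk{\TpF}\,A} x} \subseteq \OpMod{\CmpOpn}{\brk{\TpF}\,A} = \Lift{A}$
is admissible in the sense of synthetic domain theory, \ie that it contains $\bot$ and is closed under the formal suprema of formal $\ShChain$-chains (recall $\Lift{A}$ is complete because $A$ is a predomain). By the characterization of $\lhd\Sub{\brk{\TpF}\,A}$ recorded just above, membership $u \in S_x$ unfolds to the implication $\IsDefd{u} \Rightarrow \ClMod{\BaseOpn}{\exists u', x'.\ \OpMod{\CmpOpn}{\prn{u = \eta\Sub{A}\,u'}} \land \OpMod{\SynOpn}{\prn{x = \SynAlg.\TmRet\,x'}} \land u' \lhd\Sub{A} x'}$, so this lemma is the synthetic-Tait-computability rendering of the classical fact that lifting preserves admissible relations. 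All reasoning takes place in the internal language of $\Sh{\GlTop}$.

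The base case is immediate: $\IsDefd{\bot}$ is the empty join $\bot$ in $\Sigma$ and hence false, so the implication defining $\bot \in S_x$ has a vacuous antecedent.

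For closure under suprema I would take a formal chain $e : \ShChain \to \Lift{A}$ with $e\,i \in S_x$ for every $i$, put $u_\infty := \DLub{i}{e\,i}$, and prove $u_\infty \in S_x$ by assuming $\IsDefd{u_\infty}$ and producing the required inhabitant of the $\ClMod{\BaseOpn}$-existential at $u_\infty$. The crux is to obtain, \emph{internally}, an index at which the chain is already total: since $\IsDefd{-} : \Lift{A}\to\Sigma$ preserves formal suprema we have $\IsDefd{u_\infty} = \DLub{i}{\IsDefd{e\,i}}$, and as $\ShChain$-directed joins in $\Sigma$ are computed as in $\Omega$ (\ie as existential quantification), the hypothesis $\IsDefd{u_\infty}$ yields an internal witness $i_0 : \ShChain$ with $\IsDefd{e\,{i_0}}$. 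A total element of $\Lift{A}$ is maximal and $u_\infty$ is an upper bound of the chain, so $u_\infty = e\,{i_0}$; the hypothesis $e\,{i_0} \in S_x$ now applies (its antecedent $\IsDefd{e\,{i_0}}$ is available), and rewriting along $u_\infty = e\,{i_0}$ hands us the desired element of $S_x$ at $u_\infty$. The step I expect to demand the most care is the modal bookkeeping around this last move --- checking that the $\ClMod{\BaseOpn}$-wrapped existential is a mere proposition so that the internal existential over $\ShChain$ may be eliminated into it, and that the rewriting respects the open modalities $\OpMod{\CmpOpn}{-}$ and $\OpMod{\SynOpn}{-}$ --- together with the appeal to the basic fact that $\ShChain$-joins in $\Sigma$ are preserved by the inclusion $\Sigma \hookrightarrow \Omega$.
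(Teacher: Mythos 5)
Your proposal follows the paper's proof step for step: fix the syntactic point $x$, note that $\bot$ lies in the relation because $\IsDefd{\bot}=\bot$, and for a chain $e$ in the relation with $\IsDefd{\DLub{i}{e\,i}}=\DLub{i}{\IsDefd{e\,i}}$ true, extract (into the $\ClMod{\BaseOpn}{-}$-proposition being proved) a stage $i_0$ at which $e\,i_0$ is defined and reuse its witness. Your attention to the modal bookkeeping and to the fact that $\ShChain$-joins in $\Sigma$ are existentials is if anything more explicit than the paper, which makes these moves silently.

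The one place where your argument diverges is the justification of the final step, and there it has a gap: the principle ``a total element of $\Lift{A}$ is maximal'' is false in general, because $\eta\Sub{A}$ is monotone and $A$ may carry a non-trivial intrinsic order --- already in $\Lift{\Sigma}$ one has distinct comparable total elements such as $\eta\,\bot$ and $\eta\,\top$ --- and the auxiliary claim that the formal supremum is an upper bound for an order on $\Lift{A}$ is order-theoretic machinery the synthetic development does not set up here. So the sentence ``$u_\infty = e\,i_0$ by maximality'' would fail as stated. What the paper's proof actually invokes at this point is only the value-level statement: since $\IsDefd{e\,i_0}$, the value of $e\,i_0$ agrees with the value of $\DLub{i}{e\,i}$, so the witness $a$ with $e\,i_0 = \eta\Sub{A}\,a$ and $x = \SynAlg.\TmRet\,a'$ (with $a \lhd\Sub{A} a'$) transports directly to the supremum. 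You should replace the maximality detour by that value-agreement step (or some other argument that the witness extracted at stage $i_0$ serves for the supremum); everything else in your proposal matches the paper's argument.
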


\begin{proof}
  We fix a syntactic point $x:\SynAlg.\TpU\TpF{A}$ to check that each
  $\Ext{\brk{\TpF}\,A}{\SynOpn}{x}\subseteq\Lift{A}$ is an admissible
  subdomain.
  \begin{enumerate}

    \item It is immediate that $\bot\Sub{\Lift{A}} \lhd\Sub{\brk{\TpF}\,A}
      x$, since $\IsDefd{\bot\Sub{\Lift{A}}} = \bot$.

    \item Next we fix an $\ShChain$-chain of formal approximations
      $u_i\lhd\Sub{\brk{\TpF}\,A}x$ to check that
      $\DLub{i}{u_i}\lhd\Sub{\brk{\TpF}\,A}x$. Because
      $\IsDefd{\DLub{i}{u_i}} = \DLub{i}{\IsDefd{u_i}}$, we assume that
      $\IsDefd{u_i}$ for some $i$ to check that away from $\BaseOpn$ there
      exists an element $a:A$ such that $x$ restricts under $\BaseOpn$ to
      $\SynAlg.\TmRet\,a$ and under $\CmpOpn$ we have $\DLub{i}{u_i} =
      \eta\Sub{A}a$. Because $\IsDefd{u_i}$ and therefore the value of $u_i$ is
      the same as the value of $\DLub{i}{u_i}$, we are done.
      \qedhere

  \end{enumerate}
\end{proof}

By \cref{lem:F-admissible} we are justified in specifying the code for the free
computation type:
\[
  \begin{mathblock}
    \Alg.\TpF : \Ext{\Alg.\Tp\Pos\to\Alg.\Tp\Neg}{\BaseOpn}{\BaseAlg.\TpF}\\
    \Alg.\TpF\,A =
    \GlueEl{
      \brk{\TpF}\,A
    }{\BaseOpn}{\BaseAlg.\TpF}
  \end{mathblock}
\]

\begin{lemma}
  If $u\lhd\Sub{A} u'$ then $\eta\Sub{A}u\lhd\Sub{\brk{A}\,A}\SynAlg.\TmRet\,{u'}$
\end{lemma}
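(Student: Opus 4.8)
The plan is to unfold the goal through the characterization of $\lhd\Sub{\brk{\TpF}\,A}$ recorded in the computation above and to observe that the statement is essentially forced. Establishing $\eta\Sub{A}u \lhd\Sub{\brk{\TpF}\,A}\SynAlg.\TmRet\,u'$ amounts to showing that $\IsDefd{\eta\Sub{A}u}$ implies $\ClMod{\BaseOpn}{\exists v_c, v_s.\ \OpMod{\CmpOpn}{\prn{\eta\Sub{A}u = \eta\Sub{A}v_c}} \land \OpMod{\SynOpn}{\prn{\SynAlg.\TmRet\,u' = \SynAlg.\TmRet\,v_s}} \land v_c\lhd\Sub{A}v_s}$. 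Since $\eta\Sub{A}$ is the unit of the lift monad we have $\IsDefd{\eta\Sub{A}u} = \top$, so the hypothesis of this implication is automatically available and it remains only to inhabit the $\BaseOpn$-sealed existential.

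First I would take the existential witnesses to be $v_c\coloneqq u$ and $v_s\coloneqq u'$, which are given by the statement; the two equations then hold by reflexivity, and the remaining conjunct $v_c\lhd\Sub{A}v_s$ is exactly the hypothesis $u\lhd\Sub{A}u'$. Applying the unit of the closed modality $\ClMod{\BaseOpn}{-}$ to this data discharges the goal. Spelled out directly in terms of the strict glue type defining $\brk{\TpF}\,A$, the required element is $\GlueEl{p}{\BaseOpn}{\BaseAlg.\TmRet\,\hat u}$, where $\hat u\coloneqq\brk{\CmpOpn\hookrightarrow u, \SynOpn\hookrightarrow u'}$ and $p$ is the proof just sketched; the restriction laws for strict glue elements, together with the fact that $\BaseAlg.\TmRet$ restricts under $\CmpOpn$ to $\eta\Sub{A}$ and under $\SynOpn$ to $\SynAlg.\TmRet$, then yield precisely the boundary conditions $\CmpOpn\hookrightarrow\eta\Sub{A}u$ and $\SynOpn\hookrightarrow\SynAlg.\TmRet\,u'$ required by the extension type $\eta\Sub{A}u\lhd\Sub{\brk{\TpF}\,A}\SynAlg.\TmRet\,u'$.

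The one place that uses the hypothesis in an \emph{essential} way is the proof-relevant conjunct $v_c\lhd\Sub{A}v_s$: the existential quantifier in the definition of $\brk{\TpF}\,A$ ranges over the glued value type $A$, so it does not suffice to observe that the equation $\eta\Sub{A}u = \BaseAlg.\TmRet\,a$ becomes trivial under $\BaseOpn$ --- one genuinely needs an element of $A$ coherent with both the computational value $u$ and the syntactic value $u'$, and that is exactly the content of $u\lhd\Sub{A}u'$. I do not anticipate a real obstacle; this lemma is merely the ``$\TmRet$ preserves formal approximation'' clause, and the analogous closure checks for $\TmBind$, finite products, and function spaces proceed in the same routine manner.
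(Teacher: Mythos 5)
Your proof is correct and matches the paper's intent: the paper dismisses this lemma as ``Immediate,'' and your argument is exactly the routine unfolding that makes it so --- $\IsDefd{\eta\Sub{A}u}=\top$, and the glued element witnessing $u\lhd\Sub{A}u'$ serves directly as the existential witness in the characterization of $\lhd\Sub{\brk{\TpF}\,A}$, with the boundary conditions holding by the restriction laws for strict glue.
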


\begin{proof}
  Immediate.
\end{proof}

\begin{lemma}
  If $u\lhd\Sub{\brk{F}\,A}u'$ and for all $a\lhd_A a'$ we have $ha\lhd\Sub{\Alg.\TpU X} h'a'$
  then we have $h^\sharp u \lhd\Sub{\Alg.\TpU X} \SynAlg.\TmBind\,u'\,h'$.
\end{lemma}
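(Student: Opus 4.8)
The plan is to verify that the monadic bind operation preserves the formal approximation relation by unfolding the synthetic-Tait-computability clauses on both sides and doing case analysis on termination, following the shape of the standard Plotkin adequacy argument.

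First I would unfold the goal using the characterization of $\lhd\Sub{\brk{\TpF}\,A}$ recorded in the \texttt{computation} block above: we must show that if $h^\sharp u$ terminates then, away from $\BaseOpn$, there exist $w', z'$ with $\OpMod{\CmpOpn}{\prn{h^\sharp u = \eta\Sub{\Alg.\TpU X} w'}}$, $\OpMod{\SynOpn}{\prn{\SynAlg.\TmBind\,u'\,h' = \SynAlg.\TmRet\,z'}}$, and $w'\lhd\Sub{\Alg.\TpU X} z'$ --- adjusting for the fact that $\Alg.\TpU X$ is itself a formal approximation relation, so the ``value'' coherence is packaged into the extension type. Then I would observe that $\IsDefd{h^\sharp u} = \top$ forces $\IsDefd{u} = \top$ (since $h^\sharp$ is strict, $f^\sharp\,\bot = \bot$ in the lift monad), so the hypothesis $u\lhd\Sub{\brk{F}\,A}u'$ applies: away from $\BaseOpn$ we get $a, a'$ with $u = \eta\Sub{A}\,a$ on the computational side, $u' = \SynAlg.\TmRet\,a'$ on the syntactic side, and $a\lhd_A a'$.

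Next I would compute: on the computational side, $h^\sharp\,(\eta\Sub{A}\,a) = h\,a$ by the monad law $\bbrk{\TmBind}\,(\TmRet\,a)\,h = h\,a$; on the syntactic side, $\SynAlg.\TmBind\,(\SynAlg.\TmRet\,a')\,h' = h'\,a'$ by the corresponding equation in $\TCat$ transported through $\SynAlg$. So the goal reduces to $h\,a\lhd\Sub{\Alg.\TpU X} h'\,a'$, which is exactly the hypothesis $\forall a\lhd_A a'.\ h\,a\lhd\Sub{\Alg.\TpU X} h'\,a'$ applied to the pair $(a,a')$ obtained above. The proof-irrelevance (propositionality) of the extension type $\Ext{\brk{\TpF}\,A}{\BaseOpn}{-}$ lets us discharge the existential under $\ClMod{\BaseOpn}$ without coherence worries, since $\brk{\TpF}\,A$ is a formal approximation relation.

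The main obstacle I anticipate is the bookkeeping around where each equation lives: the computational phase $\CmpOpn$, the syntactic phase $\SynOpn$, and their join $\BaseOpn$, together with the fact that the conclusion must be stated as membership in the extension type rather than a bare equation. In particular one must take care that ``$h^\sharp u$ terminates'' --- which is a statement about $\OpMod{\CmpOpn}{\IsDefd{h^\sharp u}}$ --- propagates correctly to the computational-phase hypothesis on $u$, and that the strictness of $(-)^\sharp$ used here is the genuine SDT fact that $\Lift$-algebra multiplication is strict. Once the phase discipline is set up, each remaining step is a direct appeal to a monad equation or to the hypotheses, so I would relegate the routine verification to a single paragraph and cite \cref{lem:F-admissible} only insofar as it is needed to know that $\brk{\TpF}\,A$ is well-formed as a relation.
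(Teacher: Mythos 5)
Your proof is correct and follows essentially the same route as the paper's: assume $\IsDefd{h^\sharp u}$, extract the value of $u$, invoke the hypothesis $u\lhd\Sub{\brk{\TpF}\,A}u'$ to obtain $a\lhd_A a'$ with $u=\eta\Sub{A}a$ and $u'=\SynAlg.\TmRet\,a'$, reduce both sides by the monad laws, and conclude from $h\,a\lhd\Sub{\Alg.\TpU X}h'\,a'$. The one wrinkle is your justification of $\IsDefd{h^\sharp u}\Rightarrow\IsDefd{u}$ by strictness ($h^\sharp\,\bot=\bot$): in the internal logic of $\Sh{\GlTop}$, where excluded middle is unavailable, this contrapositive reasoning is not by itself sufficient, but the step is nonetheless sound because the Kleisli extension for the partial-element classifier satisfies $\IsDefd{h^\sharp u}=\IsDefd{u}\land\IsDefd{h\,a}$ by construction, which is exactly how the paper unfolds the assumption.
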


\begin{proof}
  Assume $\IsDefd{h^\sharp u}$, \ie there exists $a$ such that $u =
  \eta\Sub{A}a$ and $\IsDefd{ha}$. Therefore there exists $a'$ with $u' =
  \SynAlg.\TmRet\,a'$ and $a\lhd_A a'$.  In this case $h^\sharp u = ha$ and
  $\SynAlg.\TmBind\,u'\,h' = h'a'$, and we have $ha\lhd\Sub{\Alg.\TpU X} h'a'$ by
  assumption.
\end{proof}

\subsubsection{Function types}

First we define the underlying formal approximation relation.
\[
  \brk{\TpFn}\,A\,X =
  \GlueFam{\BaseOpn}{
    u:\BaseAlg.\TpU\,\prn{\BaseAlg.\TpFn\,A\,X}
  }{
    \Ext{
      A \to \Alg.\TpU X
    }{\BaseOpn}{
      \Inv{\BaseAlg.\TpFn.\Tm}\,u
    }
  }
\]

That the above is admissible follows immediately from the definition of formal
suprema in function spaces.
\[
  \begin{mathblock}
    \Alg.\TpFn : \Ext{\Alg.\Tp\Pos\to\Alg.\Tp\Neg\to\Alg.\Tp\Neg}{\BaseOpn}{\BaseAlg.\TpFn}\\
    \Alg.\TpFn\,A\,X = \GlueEl{\brk{\TpFn}\,A\,X}{\BaseOpn}{\BaseAlg.\TpFn}
  \end{mathblock}
\]

The canonical isomorphism $\Alg.\TpFn.\Tm$ is immediate:
\[
  \begin{mathblock}
    \Alg.\TpFn.\Tm : \Ext{\prn{A\to \Alg.\TpU{X}}\cong \Alg.\TpU\,\prn{\Alg.\TpFn\,A\,X}}{\BaseOpn}{\BaseAlg.\TpFn.\Tm}\\
    \Alg.\TpFn.\Tm\,u = \GlueEl{u}{\BaseOpn}{\BaseAlg.\TpFn.\Tm\,u}
  \end{mathblock}
\]

\subsubsection{Product types}

First we define the underlying formal approximation structure.
\[
  \brk{\TpProd}\,A\,B = \GlueFam{\BaseOpn}{u:\BaseAlg.\TpProd\,A\,B}{\Ext{A\times B}{\BaseOpn}{\Inv{\BaseAlg.\TpProd.\Tm}\,u}}
\]

The rest is defined as follows:
\[
  \begin{mathblock}
    \Alg.\TpProd : \Ext{\Alg.\Tp\Pos\to\Alg.\Tp\Pos\to\Alg.\Tp\Pos}{\BaseOpn}{\BaseAlg.\TpProd}\\
    \Alg.\TpProd\,A\,B = \GlueEl{\brk{\TpProd}\,A\,B}{\BaseOpn}{\BaseAlg.\TpProd}\\[6pt]
    \Alg.\TpProd.\Tm : \Ext{\prn{A\times B}\cong \Alg.\TpProd\,A\,B}{\BaseOpn}{\BaseAlg.\TpProd.\Tm}\\
    \Alg.\TpProd.\Tm\,u = \GlueEl{u}{\BaseOpn}{\BaseAlg.\TpProd.\Tm\,u}
  \end{mathblock}
\]

\subsubsection{Sum types}

First we define the underlying formal approximation structure.
\[
  \brk{\TpSum}\,A\,B =
  \GlueFam{\BaseOpn}{u:\BaseAlg.\TpSum\,A\,B}{
    \ClMod{\BaseOpn}{
      \prn{
        \Compr{x:A}{\OpMod{\BaseOpn} u=\BaseAlg.\TmInl\, x}
        +
        \Compr{x:B}{\OpMod{\BaseOpn} u=\BaseAlg.\TmInr\, x}
      }
    }
  }
\]

The type code and its constructors are defined as follows:
\[
  \begin{mathblock}
    \Alg.\TpSum : \Ext{\Alg.\Tp\Pos\to\Alg.\Tp\Pos\to\Alg.\Tp\Pos}{\BaseOpn}{\BaseAlg.\TpSum}\\
    \Alg.\TpSum\,A\,B = \GlueEl{\brk{\TpSum}\,A\,B}{\BaseOpn}{\BaseAlg.\TpSum}
    \\[6pt]
    \Alg.\TmInl : \Ext{A \to \Alg.\TpSum\,A\,B}{\BaseOpn}{\BaseAlg.\TmInl}\\
    \Alg.\TmInl\,a = \GlueEl{\TmInl\,a}{\BaseOpn}{\BaseAlg.\TmInl\,a}
    \\[6pt]
    \Alg.\TmInr : \Ext{B \to \Alg.\TpSum\,A\,B}{\BaseOpn}{\BaseAlg.\TmInr}\\
    \Alg.\TmInr\,b = \GlueEl{\TmInr\,b}{\BaseOpn}{\BaseAlg.\TmInr\,b}
  \end{mathblock}
\]

The case statement is defined as follows:
\[
  \begin{mathblock}
    \Alg.\TmCase : \Ext{\Alg.\TpSum\,A\,B \to \prn{A\to C} \to \prn{B \to C} \to C}{\BaseOpn}{\BaseAlg.\TmCase}\\
    \Alg.\TmCase\, u\, g\, h =
    \begin{cases}
      \BaseOpn \hookrightarrow \BaseAlg.\TmCase\,u\,g\,h\\
      \Unglue{\BaseOpn}\,u = \ClIntro{\BaseOpn}\TmInl\,{x} \hookrightarrow g\,x\\
      \Unglue{\BaseOpn}\,u = \ClIntro{\BaseOpn}\TmInr\,{x} \hookrightarrow h\,x
    \end{cases}
  \end{mathblock}
\]

The well-definedness of the case split above follows from the computation rules for $\BaseAlg.\TmCase$.

\subsubsection{The sealing modality}

We interpret each $\Alg.\LvlPol{l}$ as $\BaseAlg.\LvlPol{l} =
\CmpAlg.\LvlPol{l} \land \SynAlg.\LvlPol{l}$.
\[
  \begin{mathblock}
    \brk{\TpSeal{l}} : \Ext{\Alg.\Tp\Pos\to\UU}{\BaseOpn}{\BaseAlg.\prn{\Tm\circ \TpSeal{l}}}\\
    \brk{\TpSeal{l}}\,A =
    \GlueFam{\BaseOpn}{
      u : \BaseAlg.\TpSeal{l}A
    }{
      \ClMod{\prn{\BaseOpn\lor\Alg.\LvlPol{l}}}{
        \Compr{
          a:A
        }{
          \OpMod{\BaseOpn}\prn{
            u = \BaseAlg.\TmSeal{l}a
          }
        }
      }
    }
    \\[6pt]
    \Alg.\TpSeal{l} : \Ext{\Alg.\Tp\Pos\to\Alg.\Tp\Pos}{\BaseOpn}{\BaseAlg.\TpSeal{l}}\\
    \Alg.\TpSeal{l}A =
    \GlueEl{
      \brk{\TpSeal{l}}\,A
    }{\BaseOpn}{\BaseAlg.\TpSeal{l}}
  \end{mathblock}
\]

By definition, $\Alg.\TpSeal{l}$ is $\Alg.\LvlPol{l}$-sealed; we define its introduction and elimination form next.
\[
  \begin{mathblock}
    \Alg.\TmSeal{l} : \Ext{A \to \Alg.\TpSeal{l}\,A}{\BaseOpn}{\BaseAlg.\TmSeal{l}}\\
    \Alg.\TmSeal{l}a =
    \GlueEl{
      \ClIntro{\prn{\BaseOpn\lor\Alg.\LvlPol{l}}}{
        a
      }
    }{\BaseOpn}{\BaseAlg.\TmSeal{l}a}
    \\[6pt]
    \Alg.\TmUnseal{l} : \Ext{\brc{B : \Alg.\SealedTp{l}}\to \Alg.\TpSeal{l}A\to \prn{A\to B}\to B}{\BaseOpn}{\BaseAlg.\TmUnseal{l}}\\
    \Alg.\TmUnseal{l}\,\brc{B}\,u\,h =
    \begin{cases}
      \BaseOpn\hookrightarrow \BaseAlg.\TmUnseal{l}\,\brc{B}\,u\,h\\
      \Alg.\LvlPol{l} \hookrightarrow \star\\
      \Unglue{\BaseOpn}{u} = \ClIntro{\prn{\BaseOpn\lor\Alg.\LvlPol{l}}}x \hookrightarrow hx
    \end{cases}
  \end{mathblock}
\]

\subsubsection{Fixed points and declassification}

\begin{lemma}
  If for all $u\lhd\Sub{\Alg.\TpU{X}}u'$ we have $hu\lhd\Sub{\Alg.\TpU{X}}h'u'$, then $\Con{fix}\,h\lhd\Sub{\Alg.\TpU{X}} \SynAlg.\TmFix\,h'$.
\end{lemma}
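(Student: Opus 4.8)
The plan is to derive this immediately from Scott induction (\cref{lem:scott-induction}). First I would observe that the hypothesis --- that $h$ sends $\lhd\Sub{\Alg.\TpU{X}}$-related pairs to $\lhd\Sub{\Alg.\TpU{X}}$-related pairs, with $h'$ recording its syntactic action --- is exactly what it takes for $h$ and $h'$ to assemble into a single endomorphism $\hat{h} : \Alg.\TpU{X}\to\Alg.\TpU{X}$ of the glued object, with $\OpMod{\CmpOpn}{\hat h} = h$ and $\OpMod{\SynOpn}{\hat h} = h'$. So it suffices to invoke \cref{lem:scott-induction} with the formal approximation relation $\Alg.\TpU{X}$, the endofunction $\hat h$, and the syntactic point $x \coloneqq \SynAlg.\TmFix\,h'$.

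Next I would check the side conditions of \cref{lem:scott-induction}. The object $\Alg.\TpU{X}$ is a formal approximation relation whose computational part $\OpMod{\CmpOpn}{\Alg.\TpU{X}}$ underlies the domain $\OpMod{\CmpOpn}{X}$: by the definition of $\Alg.\Tp\Neg$ in \cref{sec:interpretation}, every $X:\Alg.\Tp\Neg$ is glued from an \emph{admissible} formal approximation relation lying over a syntactic computation type and a domain, and $\Alg.\TpU$ merely discards the admissibility witness while keeping the underlying relation; hence $\Alg.\TpU{X}$ is admissible --- in particular admissible at $x$. Moreover $x = \SynAlg.\TmFix\,h'$ is a syntactic fixed point of $\hat h$ in the required sense: under $\SynOpn$ the map $\hat h$ restricts to $h'$, and $\SynAlg$ satisfies the fixed-point equation $\SynAlg.\TmFix\,h'\equiv h'\,\prn{\SynAlg.\TmFix\,h'}$, so $\OpMod{\SynOpn}{\prn{x = \hat h\,x}}$ holds.

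With these in hand, \cref{lem:scott-induction} produces an element of $\Con{fix}\,h \lhd\Sub{\Alg.\TpU{X}} \SynAlg.\TmFix\,h'$, where $\Con{fix}$ is the synthetic-domain-theoretic fixed-point operator on $\OpMod{\CmpOpn}{X}$ used in \cref{sec:interpretation} to interpret $\CmpAlg.\TmFix$; this is precisely the statement to be proven. Unwinding, the extension type coincides with $\Ext{\Alg.\TpU{X}}{\BaseOpn}{\BaseAlg.\TmFix\,\hat h}$ (using $\CmpAlg.\TmFix = \Con{fix}$), so the witness obtained is exactly the ingredient needed when $\Alg.\TmFix$ is later assembled, its fixed-point equation coming for free from the fact that $\Alg.\TpU{X}$ is a formal approximation relation. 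I do not expect a genuine obstacle here: the only thing with any content is the admissibility of $\Alg.\TpU{X}$, which is built into the definition of the universe of computation types and transported through $\TpU$, so the lemma is essentially a bookkeeping wrapper around Scott induction.
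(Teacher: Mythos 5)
Your proposal is correct and takes essentially the same route as the paper, whose entire proof is the citation of Scott induction (\cref{lem:scott-induction}). The details you supply --- that the hypothesis packages $h,h'$ into an endomap of the glued object, that admissibility of $\Alg.\TpU{X}$ is inherited from the definition of $\Alg.\Tp\Neg$, and that $\SynAlg.\TmFix\,h'$ is a syntactic fixed point --- are exactly the bookkeeping the paper leaves implicit.
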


\begin{proof}
  By Scott induction (\cref{lem:scott-induction}).
\end{proof}

\begin{lemma}
  For any $\Alg.\LvlPol{l}$-sealed value type $A$, if $u\lhd\Sub{\Alg.{\TpSeal{l}\TpU\TpF}A}u'$ then $\CmpAlg.\TmTdcl{l}{u}\lhd\Sub{\Alg.\TpU\TpF{A}}\SynAlg.\TmTdcl{l}{u'}$.
\end{lemma}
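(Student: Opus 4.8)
The plan is to unfold the definitions of $\brk{\TpSeal{l}}$ and $\brk{\TpF}$ and then to case-analyse on the origin of the termination support of $\CmpAlg.\TmTdcl{l}u$. By the definition of $\brk{\TpF}A$, inhabiting $\CmpAlg.\TmTdcl{l}u\lhd\Sub{\Alg.\TpU\TpF A}\SynAlg.\TmTdcl{l}u'$ amounts to assuming the termination support $\IsDefd{\CmpAlg.\TmTdcl{l}u}$ and then exhibiting, under $\ClMod{\BaseOpn}$, a value $a\lhd_A a'$ with $\CmpAlg.\TmTdcl{l}u = \eta_A a$ computationally and $\SynAlg.\TmTdcl{l}u' = \SynAlg.\TmRet\,a'$ syntactically. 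By the definition of $\brk{\TpSeal{l}}$, the hypothesis $u\lhd u'$ supplies, under $\ClMod{(\BaseOpn\lor\Alg.\LvlPol{l})}$, a formal approximation $w\lhd\Sub{\brk{\TpF}A}w'$ with $u$ restricting under $\BaseOpn$ to $\BaseAlg.\TmSeal{l}\,w$; since $\BaseAlg = \brk{\CmpOpn\hookrightarrow\CmpAlg,\SynOpn\hookrightarrow\SynAlg}$, this means $u = \ClIntro{\LvlPol{l}}\,w$ computationally and $u' = \SynAlg.\TmSeal{l}\,w'$ syntactically.

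I would then eliminate the outer $\ClMod{(\BaseOpn\lor\Alg.\LvlPol{l})}$ against the closed-modal goal, whose side obligations split into three cases. (i) The goal is $\ClMod{\BaseOpn}$-modal, hence trivially inhabited whenever $\BaseOpn$ holds. (ii) When $\Alg.\LvlPol{l} = \CmpAlg.\LvlPol{l}\land\SynAlg.\LvlPol{l}$ holds the goal holds too: sealedness makes $A$ a singleton, $u$ is the unique element $\SlPt$ of $\ClMod{\LvlPol{l}}\Lift A$ and $u' = \SynAlg.\TmSeal{l}\,(\SynAlg.\TmRet\,\SlPt)$ (as $\SynAlg.\TpSeal{l}\TpU\TpF A$ is a singleton under $\SynAlg.\LvlPol{l}$), so $\CmpAlg.\TmTdcl{l}u = \CmpAlg.\LvlPol{l}\parallel\Sub{\Lift{A}}\bot = \eta_A\SlPt$ (its termination support being $\CmpAlg.\LvlPol{l} = \top$) while $\SynAlg.\TmTdcl{l}u'\equiv\SynAlg.\TmRet\,\SlPt$ by the computation rule of $\TmTdcl{l}$, and $\SlPt\lhd_A\SlPt$. (iii) The main case: given $w\lhd\Sub{\brk{\TpF}A}w'$ with $u = \ClIntro{\LvlPol{l}}\,w$, assume $\IsDefd{\CmpAlg.\TmTdcl{l}u}$; the interpretation clause for $\TmTdcl{l}$ gives $\CmpAlg.\TmTdcl{l}u = \CmpAlg.\LvlPol{l}\parallel\Sub{\Lift{A}}w$, so $\IsDefd{\CmpAlg.\TmTdcl{l}u} = \CmpAlg.\LvlPol{l}\lor\IsDefd{w}$, which by \cref{axiom:sigma-finite-joins} agrees with the propositional disjunction and may be case-split. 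If $\CmpAlg.\LvlPol{l}$ holds, \cref{axiom:lvls-under-closed-immersion} gives $\ClMod{\BaseOpn}\SynAlg.\LvlPol{l}$, so — already constructing a $\ClMod{\BaseOpn}$-modal element — we may assume $\SynAlg.\LvlPol{l}$ and fall into case (ii). If $\IsDefd{w}$ holds, \cref{lem:stab-action-preserves-ret} gives $\CmpAlg.\TmTdcl{l}u = w$; applying the definition of $\brk{\TpF}A$ to $w\lhd w'$ and $\IsDefd{w}$ yields, under $\ClMod{\BaseOpn}$, some $a\lhd_A a'$ with $w = \eta_A a$ and $w' = \SynAlg.\TmRet\,a'$, whence $\CmpAlg.\TmTdcl{l}u = \eta_A a$ and $\SynAlg.\TmTdcl{l}u' = \SynAlg.\TmTdcl{l}\,(\SynAlg.\TmSeal{l}\,(\SynAlg.\TmRet\,a'))\equiv\SynAlg.\TmRet\,a'$, as required.

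The main obstacle is the modal bookkeeping in case (iii): bridging the hypothesis, wrapped in $\ClMod{(\BaseOpn\lor\Alg.\LvlPol{l})}$, with the goal, wrapped in $\ClMod{\BaseOpn}$, by discharging the residual $\Alg.\LvlPol{l}$ obligation through \cref{axiom:lvls-under-closed-immersion}, together with checking that the syntactic equation $\TmTdcl{l}\,(\TmSeal{l}\,(\TmRet\,-))\equiv\TmRet\,-$ is genuinely applicable --- in the $\IsDefd{w}$ branch because the $\brk{\TpF}A$-evidence forces $w' = \SynAlg.\TmRet\,a'$, and in the $\Alg.\LvlPol{l}$ branch because sealedness collapses $\SynAlg.\TpSeal{l}\TpU\TpF A$ to a singleton.
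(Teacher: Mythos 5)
Your proposal is correct and follows essentially the same route as the paper's own proof: case analysis on the sealed hypothesis (the $\Alg.\LvlPol{l}$-trivial case versus the case of genuine evidence $w\lhd\Sub{\brk{\TpF}\,A}w'$), unfolding $\bbrk{\TmTdcl{l}}$ as the stabilizer action, splitting the support $\CmpAlg.\LvlPol{l}\lor\IsDefd{w}$, discharging the first branch via \cref{axiom:lvls-under-closed-immersion} against the $\BaseOpn$-sealed goal, and concluding the second via \cref{lem:stab-action-preserves-ret} together with the syntactic rule $\TmTdcl{l}\,\prn{\TmSeal{l}\,\prn{\TmRet\,{-}}}\equiv\TmRet\,{-}$. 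Your extra bookkeeping (the explicit $\BaseOpn$ branch of the closed-modality elimination and the appeal to \cref{axiom:sigma-finite-joins} for the join case split) is just a more detailed rendering of steps the paper leaves implicit.
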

\begin{proof}
  We consider the proof of $u\lhd\Sub{\Alg.\prn{\TpSeal{l}\circ\TpU\circ\TpF}\,A}u'$ and proceed by cases:
  \begin{enumerate}

    \item Suppose we have $\Alg.\LvlPol{l} = \top$; this case is trivial,
      because both sides terminate with the unique element of $A$ under
      $\Alg.\LvlPol{l}$.

    \item Suppose we some $v \lhd\Sub{\brk{\Alg.\TpF}\,A} v'$ such that $u =
      \ClIntro{\CmpAlg.\LvlPol{l}}\,{v}$ and $u' = \SynAlg.\TmSeal{l}\,v'$. In this
      case, $\CmpAlg.{\TmTdcl{l}}\,u$ = $\CmpAlg.\LvlPol{l} \parallel v$. To check
      $\prn{\CmpAlg.\LvlPol{l}\parallel v} \lhd\Sub{\brk{\Alg.\TpF}\,A}
      \SynAlg.\TmTdcl{l}\,u$, we may assume that
      $\IsDefd{\prn{\CmpAlg.\LvlPol{l}\parallel{v}}}$ which is the
      $\Sigma$-join $\CmpAlg.\LvlPol{l}\lor\IsDefd{v}$; we
      proceed by cases:
      \begin{enumerate}

        \item Suppose that $\CmpAlg.\LvlPol{l}=\top$.  Because we are proving
          a $\BaseOpn$-sealed proposition,
          \cref{axiom:lvls-under-closed-immersion} allows us to assume that
          $\SynAlg.\LvlPol{l}=\top$. Under this assumption we have
          $\Alg.\LvlPol{l}=\top$, a case that we have already discharged
          above.

        \item If $\IsDefd{v}$, then by assumption we have a
          syntactic point $w' : \OpMod{\SynOpn}{A}$ such that $v\lhd\Sub{A} w'$
          and $v' = \SynAlg.\TmRet\, w'$. Therefore $\SynAlg.\TmTdcl{l}\,u' =
          \SynAlg.\prn{\TmTdcl{l}\circ\TmSeal{l}\circ\TmRet}\,w' =
          \SynAlg.\TmRet\,w'$; on the other side, we have $\CmpAlg.{\TmTdcl{l}}\,u
          = \CmpAlg.{\TmTdcl{l}}\,\ClIntro{\LvlPol{l}}\,v = v$ by
          \cref{lem:stab-action-preserves-ret} using our assumption that
          $\IsDefd{v}$. Therefore our goal follows from our assumption $v\lhd_A w'$.
          \qedhere

      \end{enumerate}

  \end{enumerate}
\end{proof}

\subsubsection{Fundamental Theorem}

\ThmFTLR*

\begin{proof}
  We have provided components for all the generating constants of $\TCat$,
  verifying that all the formal approximation relations assigned to computation
  type connectives are admissible.
\end{proof}

\section{Analytic results: verifying the axioms}\label{apx:analytic}

\subsection{Model of synthetic domain theory}

In this section we give an explicit construction of the SDT topos $\CmpTop$.
Our plan is as follows:

\begin{enumerate}
  \item We first develop a concrete version of domain theory incorporating
    information flow logic based on $\LVL$-indexed dcpos (\cref{sec:dcpos}).

  \item Next we verify in \cref{sec:kadt-model} that the above satisfies the assumptions of a small
    Kleisli model of axiomatic domain theory in the sense of
    \citet[Fiore and Plotkin]{fiore-plotkin:1996}.

  \item Finally in \cref{sec:sdt-model} we adapt the conservative extension result of
    \opcit to obtain a sheaf topos model of synthetic
    domain theory satisfying all the axioms we have imposed.
\end{enumerate}

\subsubsection{Dcpos over the information flow topos}\label{sec:dcpos}

We begin by developing a theory of domains that is \emph{stratified} over
$\LVL$; by this we mean domain theory internal to presheaf category $\Psh{\LVL}
= \brk{\OpCat{\LVL},\SET}$. Such a stratified domain theory is a specialization
of the work of \citet[De Jong and Escard\'o]{dejong-escardo:2021} on
constructive domain theory in homotopy type theory (noting that \opcit made no
use of homotopical axioms that do not hold for a 1-topos). We recall the basic
definitions here, but do not belabor them.
Everything in this section should be read \emph{internally} to $\Psh{\LVL}$;
standard techniques can be used to translate such statements to ordinary
mathematics, as explained by \citet[Mac Lane and
Moerdijk]{maclane-moerdijk:1992} as well as \citet[Awodey
\etal]{awodey-gambino-hazratpour:2021}.

\begin{definition}
  A \DefEmph{directed-complete partial order} or \DefEmph{dcpo} is a partial order
  $A$ that is closed under suprema of directed subsets $S\subseteq A$. A
  continuous function of dcpos is one that preserves these directed suprema.
\end{definition}

\begin{definition}
  A \DefEmph{pointed dcpo} or \DefEmph{dcppo} is a dcpo that possesses a bottom
  element, \ie an element $\bot$ such that $\bot\leq u$ for all $u: A$. A
  \emph{strict} morphism between dcppos is a continuous morphism between the
  underlying dcpos that preserves the bottom element.
\end{definition}

\begin{lemma}
  The forgetful functor from the category of dcppos and strict maps to the
  category of dcpos and continuous maps has a left adjoint $\Lift$
  called the ``lift''.  The underlying set of $\Lift{A}$ is given by the
  \DefEmph{partial element classifier} in $\Psh{\LVL}$, \ie we have
  $\vrt{\Lift{A}} \cong \prn{\phi:\Omega}\times {\OpMod{\phi}\vrt{A}}$.

  \begin{notation}
    Given an element $u:\Lift{A}$, we will write $\IsDefd{u} : \Omega$ for the
    \DefEmph{support} of $u$.  Because the unit map $\Mor|>->|[\eta_A]{A}{\Lift{A}}$
    is a monomorphism, our notation will silently identify elements $u : A$ with
    their images $\prn{\top,\lambda\_.u} : \Lift{A}$.
    We will write $\bot:\Lift{A}$ for the element $\prn{\bot, \lambda\_. \brk{}}$;
  \end{notation}

  We specify that $u\leq\Sub{\Lift{A}}v$ if and only if whenever $u = \eta_Au'$
  and $v = \eta_Av'$ we have $u'\leq\Sub{A}v'$.
\end{lemma}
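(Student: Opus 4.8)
The plan is to argue entirely internally to the presheaf topos $\Psh{\LVL}$, where this lemma reduces to the familiar fact that lifting is the free-dcppo construction, and then transport the conclusion back to ordinary mathematics by the Kripke--Joyal translation mentioned above. The identification $\vrt{\Lift{A}}\cong\prn{\phi:\Omega}\times\OpMod{\phi}{\vrt{A}}$ is just a direct computation of the partial element classifier in a presheaf topos, so the real work is twofold: (i) to check that $\Lift{A}$, with the stated order, is a dcppo, and (ii) to verify that the unit $\Mor[\eta_A]{A}{\Lift{A}}$ given by $\eta_A\,a=\prn{\top,\lambda\_.a}$ is universal from $A$ to the forgetful functor from dcppos and strict maps, \ie exhibits $\Lift$ as its left adjoint. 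Throughout I would lean on the constructive domain theory of \citet[De Jong and Escard\'o]{dejong-escardo:2021}.

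For (i): the undefined element $\bot\Sub{\Lift{A}}$ is least, vacuously. For directed completeness, given a directed $S\subseteq\Lift{A}$ I would set the support of $\bigvee S$ to be $\bigvee\Compr{\IsDefd s}{s\in S}$, the join taken in the frame $\Omega$; under the assumption that this proposition holds, the subset $\Compr{a:\vrt{A}}{\eta_A\,a\in S}\subseteq A$ is a \emph{directed} subset of $A$ --- inhabitedness and closure under binary joins both follow from directedness of $S$ together with the fact that $\eta_A\,a$ can be below only a total element of $\Lift{A}$ --- and therefore has a supremum in $A$. This data assembles into an element of $\Lift{A}$, which one then checks is the least upper bound of $S$. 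Monotonicity and continuity of $\eta_A$, and the fact that $\eta_A$ is a monomorphism, are then immediate from the explicit description.

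For (ii): given a dcppo $B$ and a continuous $\Mor[g]{A}{B}$, I would define $\Mor[\bar g]{\Lift{A}}{B}$ by $\bar g\prn{\phi,f}=\bigvee\prn{\brc{\bot_B}\cup\Compr{g\prn{f\,p}}{p:\phi}}$, noting that this subset of $B$ is directed because $\bot_B$ is least and, $\phi$ being a proposition, the image of $\Mor[g\circ f]{\phi}{B}$ is a subsingleton. One checks that $\bar g\circ\eta_A=g$ (since $\bot_B\leq g\,a$), that $\bar g$ is strict (the empty image yields $\bot_B$), and that $\bar g$ is monotone and preserves directed suprema by the usual interchange of suprema valid in any dcpo. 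Uniqueness is clean once one observes that every element of $\Lift{A}$ is itself a directed supremum, $\prn{\phi,f}=\bigvee\prn{\brc{\bot\Sub{\Lift{A}}}\cup\Compr{\eta_A\prn{f\,p}}{p:\phi}}$, so that any strict continuous $\Mor[k]{\Lift{A}}{B}$ with $k\circ\eta_A=g$ is forced to agree with $\bar g$; naturality of $\eta$ and of the transposition $g\mapsto\bar g$ is then routine, and functoriality of $\Lift$ comes for free from the universal property.

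The one genuinely delicate point, and the main obstacle, is constructivity: working internally to $\Psh{\LVL}$ we may not reason by cases on whether a proposition $\phi:\Omega$ holds, nor extract an element of $\vrt{A}$ from a merely-defined element of $\Lift{A}$ by choice --- which is exactly why the value of $\bigvee S$ must be produced as a supremum of the subset $\Compr{a:\vrt{A}}{\eta_A\,a\in S}$ rather than by selecting some defined member of $S$. But this is precisely the situation already handled in the constructive development of \citet[De Jong and Escard\'o]{dejong-escardo:2021}, so I expect no new difficulty, and the external statement then follows by the standard sheaf-semantics translation.
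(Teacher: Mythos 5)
Your proof is correct, and it takes the same route the paper does: the paper offers no proof of this lemma at all, recalling it as standard material from the constructive domain theory of De Jong and Escard\'o read internally to $\Psh{\LVL}$, which is precisely the internal free-dcppo argument you spell out (partial-element classifier as carrier, supremum of a directed $S$ built from the support $\bigvee_{s\in S}\IsDefd{s}$ and the directed set of values, extension $\bar{g}$ by directed joins, uniqueness from writing each element of $\Lift{A}$ as a directed supremum). One small remark: your key step that $\eta_A\,a$ can lie below only a total element presupposes the intended reading of the order on $\Lift{A}$, in which definedness propagates ($\IsDefd{u}\Rightarrow\IsDefd{v}$ whenever $u\leq v$); this is stronger than the lemma's literal phrasing but is clearly what is meant, since the weaker reading would not even be antisymmetric.
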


\begin{notation}
  Given a morphism $\Mor[f]{A}{X}$ of dcpos where $X$ is a dcppo, we will
  write $\Mor[f^\sharp]{\Lift{A}}{X}$ for the unique extension of $f$ as a strict
  map between dcppos.
\end{notation}

We have a \DefEmph{Sierpi\'nski space} $\Sigma = \Lift{\ObjTerm}$ whose carrier is
the subobject classifier $\Omega$ itself. It is not difficult to see that the
projection map $\Mor[\prn{\IsDefd{-}}]{\Lift{A}}{\Sigma}$ is both continuous and
strict.
In addition to the usual suspects of domain theory, we also have \emph{new}
objects that come from the stratification over $\LVL$. In particular, for every
$l\in\LVL$ we have a new element $\Yo[\LVL]{l} : \Sigma$.
Every proposition $\phi:\Sigma$ can be thought of (somewhat degenerately) as a
dcpo; there is exactly one partial order on $\phi$, and as soon as we have
a directed subset of $\phi$ we also have $\phi=\top$, since directed subsets
are stipulated to be non-empty.

\subsubsection{A small Kleisli model of axiomatic domain theory}\label{sec:kadt-model}

We have a fibered category of dcpos over $\Sh{\LvlTop}$; restricting to the
full internal subcategory of $\Sh{\LvlTop}$ spanned by some universe
$\UU\in\Sh{\LvlTop}$ in the sense of
\citet[Hofmann and Streicher]{hofmann-streicher:1997,streicher:2005} and considering the fiber over
the terminal object $\ObjTerm{\Sh{\LvlTop}}$, we obtain a small category
$\dcpo$ of small(er) $\LVL$-indexed dcpos and dcppos.

The (internal) lifting monad restricts to an ordinary lifting monad
$\Mor[\Lift]{\dcpo}{\dcpo}$. We will write $\dcppo$ for the Eilenberg--Moore
category of $\Lift$ and $\pdcpo$ for the Kleisli category of $\Lift$. Here is our glossary:

\begin{enumerate}

  \item The category $\dcpo$ consists of ``predomains'' $A,B$ and continuous
    maps $\Mor{A}{B}$.

  \item The category $\dcppo$ consists of ``domains'' $X,Y$ and \DefEmph{strict}
    continuous maps $\SMor{A}{B}$.

  \item The category $\pdcpo$ consists of predomains $A,B$ and \DefEmph{partial
    maps} $\PrtMor{A}{B}$; equivalently, strict continuous maps
    $\SMor{\Lift{A}}{\Lift{B}}$.

\end{enumerate}

\begin{remark}
  Each of $\dcpo,\dcppo,\pdcpo$ is $\dcpo$-enriched in a canonical way.
\end{remark}

In classical domain theory every $\Lift$-algebra/dcppo/domain is free so we may
identify $\Con{dcppo}$ with $\Con{pdcpo}$; this does \emph{not} hold in the domain theory
indexed over $\LVL$. The reason is that lifting in this setting does much more
than adding a single point; for instance, $\Sigma=\Lift{\ObjTerm}$ contains
strictly more distinct global points than there are elements of $\LVL$.

\paragraph{Preliminaries on embedding-projection pairs}

\begin{definition}
  In an poset-enriched category $\ECat$, we define an \DefEmph{embedding}
  $\EMor{U}{A}$ to be a monomorphism $\Mor|>->|[\epsilon]{U}{A}$ that has a
  right adjoint, called its \DefEmph{projection} $\PMor[\pi]{A}{U}$. We refer to
  the pair $\epsilon\dashv \pi$ as an \DefEmph{embedding-projection pair} or
  \DefEmph{ep-pair}.
\end{definition}

\begin{observation}
  Note that any $\prn{\epsilon,\pi}$ is an embedding-projection pair if and
  only if $\pi\circ\epsilon=\ArrId{U}$ and $\epsilon\circ\pi\leq\ArrId{A}$.
\end{observation}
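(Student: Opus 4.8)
The plan is to unpack the definition of adjunction in a poset-enriched category and then exploit the monomorphism hypothesis. The key preliminary observation is that in a locally posetal $2$-category an adjunction $\epsilon\dashv\pi$ between $\Mor[\epsilon]{U}{A}$ and $\Mor[\pi]{A}{U}$ consists of \emph{nothing more} than a unit inequality $\ArrId{U}\leq\pi\circ\epsilon$ and a counit inequality $\epsilon\circ\pi\leq\ArrId{A}$, because the triangle identities are automatically satisfied: any two parallel $2$-cells are equal. So ``$\epsilon$ is an embedding'' unwinds to ``$\epsilon$ is monic, and there is $\pi$ with $\ArrId{U}\leq\pi\circ\epsilon$ and $\epsilon\circ\pi\leq\ArrId{A}$''.

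For the forward direction I would assume $\prn{\epsilon,\pi}$ is an ep-pair and derive the stated equalities. Whiskering the unit inequality $\ArrId{U}\leq\pi\circ\epsilon$ by $\epsilon$ on the left yields $\epsilon\leq\epsilon\circ\pi\circ\epsilon$, while whiskering the counit inequality $\epsilon\circ\pi\leq\ArrId{A}$ by $\epsilon$ on the right yields $\epsilon\circ\pi\circ\epsilon\leq\epsilon$; hence $\epsilon\circ\pi\circ\epsilon=\epsilon=\epsilon\circ\ArrId{U}$, and cancelling the monomorphism $\epsilon$ gives $\pi\circ\epsilon=\ArrId{U}$. The inequality $\epsilon\circ\pi\leq\ArrId{A}$ is already part of the hypothesis, so both desired facts hold.

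For the converse I would assume $\pi\circ\epsilon=\ArrId{U}$ and $\epsilon\circ\pi\leq\ArrId{A}$ and reconstruct the ep-pair structure. The equation $\pi\circ\epsilon=\ArrId{U}$ exhibits $\pi$ as a retraction of $\epsilon$, so $\epsilon$ is a split — hence genuine — monomorphism; moreover $\pi\circ\epsilon=\ArrId{U}$ implies in particular $\ArrId{U}\leq\pi\circ\epsilon$, which together with $\epsilon\circ\pi\leq\ArrId{A}$ furnishes the adjunction $\epsilon\dashv\pi$. Thus $\prn{\epsilon,\pi}$ is an ep-pair. There is no real obstacle in this argument; the only point worth stating carefully is the reduction of the adjunction to its unit and counit inequalities, after which everything is a two-line computation with whiskering and cancellation of a mono.
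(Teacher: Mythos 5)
Your argument is correct and is exactly the routine verification the paper leaves implicit in this observation: reduce the adjunction $\epsilon\dashv\pi$ to its unit and counit inequalities (triangle identities being vacuous in a poset-enriched category), use monotonicity of composition plus antisymmetry to get $\epsilon\circ\pi\circ\epsilon=\epsilon$ and cancel the mono $\epsilon$ for the forward direction, and read off the unit, counit, and split-mono property for the converse. Nothing is missing.
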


\begin{definition}
  In a poset-enriched category $\ECat$, an e-initial object is defined to be an
  initial object $\ObjInit\in\ECat$ such that every morphism
  $\Mor{\ObjInit}{E}$ is an embedding. Dually a p-terminal object is defined to
  be a terminal object $\ObjTerm\in\ECat$ such that every morphism
  $\Mor{E}{\ObjTerm}$ is a projection. An object $\ObjZero\in\ECat$ is called
  an \DefEmph{ep-zero object} when it is both e-initial and p-terminal.
\end{definition}

\begin{notation}
  For a poset-enriched category $\ECat$, we will write
  $\EP{\ECat}\subseteq\ECat$ for the wide subcategory spanned by embeddings and
  $\PE{\ECat}\subseteq\ECat$ for the wide subcategory spanned by projections.
\end{notation}

It is not difficult to verify that $\OpCat{\prn{\EP{\ECat}}} = \PE{\ECat}$ and
vice versa~\citep{smyth-plotkin:1982}.

\paragraph{A monadic base}

We verify that $\dcpo$ gives rise to a \emph{monadic base} in the sense of \citet[Fiore and Plotkin]{fiore-plotkin:1996}.

\begin{definition}[\opcit]\label{def:monadic-base}
  A \DefEmph{monadic base} is a
  cartesian closed category $\CCat$ with an initial object, a dominance
  $\Sigma$ whose lift monad is written $\Lift$, and an \DefEmph{inductive fixed point
  object} $\ShLub$ such that the Eilenberg--Moore category
  $\CCat\Sup{\Lift}$ is closed under tensor products and linear homs.
\end{definition}

The category $\dcpo$ is cartesian closed and has an initial object. The
universal family of our dominance is the open inclusion
$\Mor|>->|[\top]{\ObjTerm{\dcpo}}{\Sigma}$.

\begin{lemma}
  The final coalgebra $\ShLub$ for
  the lift monad in $\dcpo$ is an inductive fixed point object, \ie it is the colimit of the
  following chain:
  \[
    \begin{tikzpicture}[diagram]
      \node (0) {$\ObjInit$};
      \node (1) [right = of 0] {$\Lift{\ObjInit}$};
      \node (2) [right = of 1] {$\Lift^2{\ObjInit}$};
      \node (3) [right = of 2] {$\ldots$};
      \draw[>->] (0) to node [above] {$!$} (1);
      \draw[>->] (1) to node [above] {$\Lift{!}$} (2);
      \draw[>->] (2) to node [above] {$\Lift^2{!}$} (3);
    \end{tikzpicture}
  \]
\end{lemma}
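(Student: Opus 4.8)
The plan is to realise $\ShLub$ as a bilimit, by way of the limit--colimit coincidence for poset-enriched categories. Write $D$ for the colimit in $\dcpo$ of the chain displayed in the statement. Since $\dcpo$ is $\dcpo$-enriched and the lifting functor $\Mor[\Lift]{\dcpo}{\dcpo}$ is \emph{locally continuous} --- a routine check on the formula $\Lift{A} = \Sum{\phi:\Sigma}\OpMod{\phi}{A}$, and in any case a special case of the constructive domain theory of \citet[De Jong and Escard\'o]{dejong-escardo:2021} --- it carries embeddings to embeddings and preserves colimits of $\omega$-chains of embeddings. The one leg of our chain that is not itself an embedding in $\dcpo$ is $\Mor{\ObjInit}{\Lift{\ObjInit}}$; but $\Lift{\ObjInit}\cong\ObjTerm$, and discarding this cofinal first leg leaves a genuine $\omega$-chain of embeddings $\Mor|>->|{\Lift{\ObjInit}}{\Lift^2{\ObjInit}}\to\cdots$ with the same colimit $D$, equipped with a cocone of embeddings $\Mor|>->|[\sigma_n]{\Lift^{n+1}{\ObjInit}}{D}$ whose projections $\PMor[\rho_n]{D}{\Lift^{n+1}{\ObjInit}}$ satisfy $\ArrId{D} = \bigsqcup_{n}\sigma_n\circ\rho_n$, a directed join of idempotents in the dcpo of endomaps of $D$. (The reason the tail is so well-behaved is that $\Lift{\ObjInit}$ is the ep-zero object of $\dcppo$.)

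The argument then runs in four moves. First, since $\Lift$ preserves the colimit $D$ and the tail $\Lift{\ObjInit}\to\Lift^2{\ObjInit}\to\cdots$ has the same colimit $D$ as the whole chain, we get $\Lift{D}\cong D$; this isomorphism is at once an $\Lift$-algebra and an $\Lift$-coalgebra structure on $D$. Second, by the limit--colimit coincidence of \citet[Smyth and Plotkin]{smyth-plotkin:1982}, the cone $\rho_n$ also exhibits $D$ as the \emph{limit} of the chain of projections $\cdots\to\Lift^2{\ObjInit}\to\Lift{\ObjInit}$; this step uses only the adjunctions $\sigma_n\dashv\rho_n$ together with the directed join $\bigsqcup_n\sigma_n\circ\rho_n=\ArrId{D}$, and so is intuitionistically valid. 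Third, under $\Lift{\ObjInit}\cong\ObjTerm$ this chain of projections is isomorphic to the canonical final-coalgebra chain $\ObjTerm\leftarrow\Lift{\ObjTerm}\leftarrow\Lift^2{\ObjTerm}\leftarrow\cdots$, whose transition maps are themselves projections; local continuity of $\Lift$ dualises to preservation of limits of $\omega$-chains of projections, so $\Lift$ preserves this limit, whence by the dual of Ad\'amek's theorem the limit is the final $\Lift$-coalgebra. Fourth, combining these we obtain $\ShLub\cong D$: the final coalgebra is the colimit of the displayed chain, which is exactly the property required for $\ShLub$ to be an inductive fixed point object in the sense of \citet[Fiore and Plotkin]{fiore-plotkin:1996} (cf.\ \cref{def:monadic-base}).

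The real work is not the category theory, which is classical, but carrying it out \emph{internally} to $\Psh{\LVL}$ and \emph{within} the universe $\UU$ that cuts out the small category $\dcpo$. Concretely, I would check that $\UU$ is closed under the $\omega$-colimit (equivalently the $\omega^{\mathrm{op}}$-limit) that builds $D$, so that $D$ is genuinely an object of $\dcpo$ and not merely of the ambient fibration of $\LVL$-indexed dcpos; and that local continuity of $\Lift$ and the limit--colimit coincidence survive without excluded middle or choice, which they do, the only analytic ingredient being the existence of directed joins of increasing $\omega$-chains of endomaps, supplied by \citet[De Jong and Escard\'o]{dejong-escardo:2021}. The residual bookkeeping --- the non-embedding leg $\Mor{\ObjInit}{\Lift{\ObjInit}}$ --- is absorbed by passing to the tail as above, exploiting that $\Lift{\ObjInit}$ is an ep-zero object.
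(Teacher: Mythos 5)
Your proof is correct and follows essentially the same route as the paper's: both hinge on the limit--colimit coincidence for ep-chains in $\dcpo$ (made topos-valid by De Jong and Escard\'o), the identification $\ObjTerm \cong \Lift{\ObjInit}$, and the observation that the non-embedding first leg $\ObjInit \to \Lift{\ObjInit}$ can be adjoined or dropped without affecting the colimit. The only difference is one of direction --- you construct the colimit and identify it as the final coalgebra via the dual of Ad\'amek's theorem, whereas the paper starts from $\ShLub$ as the limit of the projection chain and transfers across the coincidence to the embedding chain --- which is the same argument run in reverse.
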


\begin{proof}
  The final coalgebra $\ShLub\cong \Lift{\ShLub}$ in $\dcpo$ can be computed as the limit of
  the following $\OpCat{\omega}$-indexed diagram in the category $\PE{\dcpo}$
  of projections~\citep{smyth-plotkin:1982}:
  \[
    \begin{tikzpicture}[diagram]
      \node (0) {$\ObjTerm$};
      \node (1) [right = 2.5cm of 0] {$\Lift{\ObjTerm}$};
      \node (2) [right = 2.5cm of 1] {$\Lift^2{\ObjTerm}$};
      \node (3) [right = 2.5cm of 2] {$\ldots$};
      \draw[<-] (0) to node [above] {$\pi_1 = {!}$} (1);
      \draw[<-] (1) to node [above] {$\pi_2 = \Lift{\pi_1}$} (2);
      \draw[<-] (2) to node [above] {$\pi_3 = \Lift{\pi_2}$} (3);
    \end{tikzpicture}
  \]

  The embeddings $\epsilon_i\dashv \pi_i$ corresponding to each of the
  projections above can be computed recursively, defining $\epsilon_1 = \bot$
  and $\epsilon_{n+1} = \Lift{\epsilon_n}$.
  By the limit-colimit coincidence for diagrams of embedding-projection pairs
  in $\dcpo$, verified in a topos-valid way by \citet[De Jong and Escard\'o]{dejong-escardo:2021}, we
  see that $\ShLub$ is \emph{also} the colimit in $\EP{\dcpo}$ of the
  following diagram of embeddings:
  \[
    \begin{tikzpicture}[diagram]
      \node (0) {$\ObjTerm$};
      \node (1) [right = 2.5cm of 0] {$\Lift{\ObjTerm}$};
      \node (2) [right = 2.5cm of 1] {$\Lift^2{\ObjTerm}$};
      \node (3) [right = 2.5cm of 2] {$\ldots$};
      \draw[embedding] (0) to node [above] {$\epsilon_1 = {\bot}$} (1);
      \draw[embedding] (1) to node [above] {$\epsilon_2 = \Lift{\epsilon_1}$} (2);
      \draw[embedding] (2) to node [above] {$\epsilon_3 = \Lift{\epsilon_2}$} (3);
    \end{tikzpicture}
  \]

  Because $\ObjTerm = \Lift{\ObjInit}$, we may rewrite our diagram like so:
  \[
    \begin{tikzpicture}[diagram]
      \node (0) {$\Lift{\ObjInit}$};
      \node (1) [right = 2.5cm of 0] {$\Lift^2{\ObjInit}$};
      \node (2) [right = 2.5cm of 1] {$\Lift^3{\ObjInit}$};
      \node (3) [right = 2.5cm of 2] {$\ldots$};
      \draw[embedding] (0) to node [above] {$\epsilon_1 = {\bot}$} (1);
      \draw[embedding] (1) to node [above] {$\epsilon_2 = \Lift{\epsilon_1}$} (2);
      \draw[embedding] (2) to node [above] {$\epsilon_3 = \Lift{\epsilon_2}$} (3);
    \end{tikzpicture}
  \]

  The above is not only colimiting in $\EP{\dcpo}$ but also in
  $\dcpo$~\citep{smyth-plotkin:1982}. Moreover it clearly remains colimiting
  when a further map $\Mor|>->|{\ObjInit}{\Lift{\ObjInit}}$ is adjoined to the
  left, so we are done.
\end{proof}

Finally we verify the closure
of the Eilenberg--Moore category $\pdcpo$ under tensor products and linear
homs:
\begin{enumerate}

  \item Given two domains $X,Y$ we define the tensor/smash product $X\otimes Y$
    to be the quotient of $X\times Y$ by the relation that identifies
    $\prn{\bot,y} \sim \prn{x,\bot}\sim \prn{\bot,\bot}$.

  \item Given two domains $X,Y$, the linear hom $X\multimap Y$ is the subobject
    of the exponential predomain $X\to Y$ spanned by \emph{strict} continuous
    maps.

\end{enumerate}

\begin{corollary}
  The category $\dcpo$ with its dominance $\Sigma$ and lifting monad $\Lift$
  forms a monadic base in the sense of \citet[Fiore and Plotkin]{fiore-plotkin:1996}.
\end{corollary}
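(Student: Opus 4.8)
The plan is to assemble the four ingredients demanded by \cref{def:monadic-base}, each of which has already been established in the preceding development. First, $\dcpo$ is cartesian closed and possesses an initial object, as noted immediately after \cref{def:monadic-base}. Second, $\Sigma = \Lift\ObjTerm$ is a dominance whose universal family is the open inclusion $\Mor|>->|[\top]{\ObjTerm{\dcpo}}{\Sigma}$, with associated partial-element-classifier monad $\Lift$. Third, the final $\Lift$-coalgebra $\ShLub$ is an inductive fixed point object, by the preceding lemma --- whose proof rests on the limit--colimit coincidence for diagrams of embedding--projection pairs, verified internally to $\Psh{\LVL}$ by \citet[De Jong and Escard\'o]{dejong-escardo:2021}. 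Fourth, we have exhibited the smash product $X\otimes Y$ and the linear hom $X\multimap Y$, witnessing that the Eilenberg--Moore category $\dcppo$ is closed under tensor products and linear homs; one then checks in the usual way that these data equip $\dcppo$ with a symmetric monoidal closed structure.

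What requires care is that every one of these constructions be carried out constructively, i.e.\ internally to $\Psh{\LVL}$, without appeal to excluded middle or choice. Concretely: the smash product $X\otimes Y$ is an internal dcpo quotient of $X\times Y$, which exists by the constructive treatment of dcpo quotients of \opcit; the linear hom $X\multimap Y$ is the subobject of the exponential predomain $X\to Y$ cut out by the strictness predicate, which is admissible and hence again a dcpo; and the underlying objects and universal properties must be verified using only directed-completeness, rather than any classical dichotomy on $\Sigma$. Granting all of this, the triple $\prn{\dcpo,\Sigma,\Lift}$ satisfies \cref{def:monadic-base}, and the corollary follows by unwinding that definition.

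The main obstacle --- already discharged by the lemma cited above --- is the inductive fixed point object clause: it is the only ingredient that is not a direct unwinding of definitions, requiring the limit--colimit coincidence for ep-pairs in the $\LVL$-stratified constructive setting, together with the identifications $\ObjTerm = \Lift\ObjInit$ and the remark that a colimiting cocone of embeddings survives prepending the canonical map $\Mor|>->|{\ObjInit}{\Lift\ObjInit}$. Everything else is bookkeeping.
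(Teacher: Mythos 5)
Your proposal is correct and follows essentially the same route as the paper: the corollary is exactly the assembly of the preceding facts (cartesian closure and initial object of $\dcpo$, the dominance $\Sigma$ with universal family $\top$, the lemma establishing $\ShLub$ as an inductive fixed point object via the limit--colimit coincidence, and the exhibited smash product and linear hom closing the Eilenberg--Moore category), with the fixed-point-object lemma being the only non-bookkeeping ingredient, just as you identify.
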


\paragraph{Algebraic compactness and the Kleisli model}

The following result is easily adapted from \citet[Fiore's dissertation]{fiore:1994}.

\begin{fact}\label{fact:compactness-from-zero-and-bilimits}
  Let $\ECat$ be $\dcpo$-enriched; if $\ECat$ has an ep-zero object and
  $\EP{\ECat}$ is closed under colimits of $\omega$-chains, then $\ECat$ is
  $\dcpo$--algebraically compact.
\end{fact}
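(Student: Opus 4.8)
The plan is to run the classical initial-algebra-chain argument of Smyth and Plotkin \citep{smyth-plotkin:1982}, in the streamlined form used by \citet[Fiore]{fiore:1994}, but carried out entirely in terms of \emph{bilimits} of embedding--projection chains so that nothing depends on classical logic. Recall that $\ECat$ being $\dcpo$-algebraically compact means that every $\dcpo$-enriched endofunctor $\Mor[F]{\ECat}{\ECat}$ has a \emph{bifree} algebra: an $F$-algebra $\Mor[\alpha]{FD}{D}$ that is initial among $F$-algebras and whose inverse $\Mor[\alpha^{-1}]{D}{FD}$ is final among $F$-coalgebras.

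First I would note that a $\dcpo$-enriched functor $F$ restricts to a functor on $\EP{\ECat}$, since $F$ carries an ep-pair $\epsilon\dashv\pi$ to the ep-pair $F\epsilon\dashv F\pi$: indeed $F\pi\circ F\epsilon = F\prn{\pi\circ\epsilon}$ is an identity, and $F\epsilon\circ F\pi = F\prn{\epsilon\circ\pi}$ lies below the identity by local monotonicity. Because $\ObjZero$ is e-initial, the unique morphism $\Mor{\ObjZero}{F\ObjZero}$ is an embedding, so we obtain an $\omega$-chain $\ObjZero\to F\ObjZero\to F^2\ObjZero\to\cdots$ in $\EP{\ECat}$; by hypothesis this chain has a colimit $\prn{\Mor[\mu_n]{F^n\ObjZero}{D}}_n$ there. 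By the limit--colimit coincidence for ep-pairs --- which holds in any $\dcpo$-enriched category and has been verified topos-internally by \citet[De Jong and Escard\'o]{dejong-escardo:2021} --- this colimit is simultaneously the limit of the associated $\OpCat{\omega}$-chain of projections, so the adjoint projections $\mu_n^R$ satisfy $\bigvee_n \mu_n\circ\mu_n^R = \ArrId{D}$ in the dcpo $\ECat\prn{D,D}$.

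Next, since $F$ is $\dcpo$-enriched it preserves such bilimits, and as the chain $F\ObjZero\to F^2\ObjZero\to\cdots$ is cofinal in the original chain, we obtain a canonical isomorphism $\Mor[\alpha]{FD}{D}$ characterised by $\alpha\circ F\mu_n = \mu_{n+1}$; this is the desired $F$-algebra. For initiality, given any $F$-algebra $\Mor[\beta]{FB}{B}$ I would set $h_0$ to be the unique map $\Mor{\ObjZero}{B}$ (using that $\ObjZero$ is genuinely initial in $\ECat$) and $h_{n+1} = \beta\circ Fh_n$; these form a cocone over the embedding chain, and $h \coloneqq \bigvee_n h_n\circ\mu_n^R$ is --- thanks to $\bigvee_n\mu_n\circ\mu_n^R = \ArrId{D}$ --- the unique morphism $\Mor{D}{B}$ with $h\circ\mu_n = h_n$ for all $n$. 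A short calculation with the defining equations then shows $h$ is an $F$-algebra homomorphism $\prn{D,\alpha}\to\prn{B,\beta}$, and any homomorphism $g$ satisfies $g\circ\mu_n = h_n$ by induction on $n$ and hence equals $h$; so $\prn{D,\alpha}$ is initial. Finally, because $\ObjZero$ is also p-terminal, $\OpCat{\ECat}$ is again a $\dcpo$-enriched category with an ep-zero object whose subcategory of embeddings is $\PE{\ECat}$, which is closed under $\omega$-colimits precisely because $\EP{\ECat}$ is; applying the first half of the argument to $\OpCat{\ECat}$ and $\OpCat{F}$ produces an initial $\OpCat{F}$-algebra supported on the \emph{same} object $D$ (the bilimit is also the limit of the projection chain), which dualises to finality of $\prn{D,\alpha^{-1}}$ among $F$-coalgebras. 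Hence $\prn{D,\alpha}$ is bifree and $\ECat$ is $\dcpo$-algebraically compact.

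I expect the only genuinely delicate point to be the limit--colimit coincidence together with the fact that $\dcpo$-enriched functors preserve bilimits of ep-chains; in a classical setting these are standard, and in the present constructive setting the relevant statement is exactly the topos-valid result already available from \opcit, so the proof reduces to routine bookkeeping once that is in hand.
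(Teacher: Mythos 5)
The paper gives no proof of this Fact---it simply defers to Fiore's dissertation---and your proposal is exactly the classical Smyth--Plotkin/Fiore bilimit argument that this citation points at; the construction of the algebra structure, the mediator $\bigvee_n h_n\circ\mu_n^R$, and the dualization through the p-terminal half of the ep-zero are all correct. The one step whose justification does not hold at the stated level of generality is the derivation of $\bigvee_n \mu_n\circ\mu_n^R = \ArrId{D}$. You obtain it by forming the colimit in $\EP{\ECat}$ and appealing to a limit--colimit coincidence ``valid in any $\dcpo$-enriched category''. But in an abstract $\dcpo$-enriched category a cocone that is merely colimiting in the subcategory of embeddings need not be an O-colimit: take the full subcategory of dcpos spanned by the finite chains $[n]$ and by $D=\mathbb{N}\cup\brc{\infty,\top}$, with the chain of initial-segment embeddings $[1]\to[2]\to\cdots$ and $\mu_n$ the evident inclusions. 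One checks that every cocone of embeddings out of this chain is a strictly monotone reindexing into the finite part of $D$ and that the induced mediator is a unique embedding of $D$ into itself, so $\prn{D,\mu}$ is colimiting in the embedding subcategory; yet $\bigvee_n\mu_n\circ\mu_n^R$ sends $\top$ to $\infty$ and is not the identity. (Relatedly, the De Jong--Escard\'o result concerns the concrete category of internal dcpos, not abstract $\dcpo$-enriched categories; the abstract coincidence theorem of Smyth--Plotkin says that O-colimits are colimits in $\ECat$ and limits of the projection cone, not that colimits taken only in $\EP{\ECat}$ are O-colimits.)

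The repair is immediate from the hypothesis as actually stated: that $\EP{\ECat}$ is \emph{closed under} colimits of $\omega$-chains means the colimit of the embedding chain computed in $\ECat$ exists with all cocone maps $\mu_n$ embeddings. Then $e\coloneqq\bigvee_n\mu_n\circ\mu_n^R$ satisfies $e\circ\mu_n=\mu_n$ by local continuity of composition, so $e$ and $\ArrId{D}$ both mediate from the colimiting cocone to the cocone $\mu$ itself, and uniqueness of mediators in $\ECat$ forces $e=\ArrId{D}$. With this O-colimit equation in hand, the rest of what you wrote goes through verbatim: a locally continuous $F$ preserves the equation and hence the colimit, the sup formula gives initiality, and since the equation also makes the projection cone limiting in $\ECat$, the dual hypotheses hold in $\OpCat{\ECat}$ and the Freyd-style dualization yields bifreeness. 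So the proposal is essentially the intended proof; just anchor the key equation in the uniqueness of mediating maps in $\ECat$ rather than in a coincidence principle applied to a colimit formed only in $\EP{\ECat}$.
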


By $\dcpo$--algebraic compactness, we mean that every $\dcpo$-enriched
endofunctor $\Mor[F]{\ECat}{\ECat}$ has a \emph{free} algebra, \ie an object that
simultaneously carries $F$'s initial algebra and final coalgebra.

\begin{definition}[{\citet[Fiore and Plotkin]{fiore-plotkin:1996}}]
  Letting $\prn{\CCat,\Sigma,\Lift}$ be a monadic base in the sense of
  \cref{def:monadic-base}, the Kleisli category $\CCat\Sub{\Lift}$ is said to
  be a \DefEmph{Kleisli model of axiomatic domain theory} when it is
  $\CCat$-algebraically compact.
\end{definition}

We leave to the reader the following easily verified fact.

\begin{fact}
  The Kleisli category $\pdcpo = \prn{\dcpo}\Sub{\Lift}$ has an ep-zero object.
\end{fact}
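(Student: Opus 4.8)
The plan is to show that the empty dcpo $\ObjInit$ --- the initial object of $\dcpo$, regarded as an object of the Kleisli category $\pdcpo$ --- is an ep-zero object. First I would check that $\ObjInit$ is a \emph{zero} object of $\pdcpo$. Using the identification of $\pdcpo$-morphisms $\PrtMor{A}{B}$ with continuous maps $A\to\Lift{B}$, initiality of $\ObjInit$ in $\pdcpo$ follows at once from its initiality in $\dcpo$; terminality follows from the identity $\ObjTerm = \Lift{\ObjInit}$ recorded above, since every $\pdcpo$-morphism $\PrtMor{E}{\ObjInit}$ is a continuous map $E\to\Lift{\ObjInit}$ into the terminal dcpo, of which there is exactly one --- the everywhere-undefined partial map, which I will write $\pi_E$. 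Dually, write $\epsilon_E : \PrtMor{\ObjInit}{E}$ for the unique (empty) partial map.

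Next I would verify that $\prn{\epsilon_E,\pi_E}$ forms an embedding-projection pair in the $\dcpo$-enriched category $\pdcpo$, which by the criterion recalled above amounts to $\pi_E\circ\epsilon_E = \ArrId{\ObjInit}$ and $\epsilon_E\circ\pi_E\leq\ArrId{E}$. The first equation is automatic, both sides being endomorphisms of the initial object. For the second, the Kleisli composite $\epsilon_E\circ\pi_E$ is the continuous map $\epsilon_E^\sharp\circ\pi_E : E\to\Lift{E}$; since $\epsilon_E^\sharp$ is a strict map whose domain $\Lift{\ObjInit}$ is the terminal dcpo --- a single point, namely the bottom element --- $\epsilon_E^\sharp$ is constant at $\bot\Sub{\Lift{E}}$, and hence so is $\epsilon_E\circ\pi_E$. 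In the canonical $\dcpo$-enrichment of $\pdcpo$, where $\PrtMor{E}{E}$ carries the pointwise order induced by the order on $\Lift{E}$, we have $\bot\Sub{\Lift{E}}\leq\eta_E\prn{e}$ for every $e:E$, so $\epsilon_E\circ\pi_E\leq\eta_E = \ArrId{E}$, as required.

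This single ep-pair discharges both conditions in the definition of an ep-zero object: every $\pdcpo$-morphism out of $\ObjInit$ equals $\epsilon_E$ by initiality, and $\epsilon_E$ is an embedding with projection $\pi_E$, so $\ObjInit$ is e-initial; every $\pdcpo$-morphism into $\ObjInit$ equals $\pi_E$ by terminality, and $\pi_E$ is a projection with embedding $\epsilon_E$, so $\ObjInit$ is p-terminal. Hence $\ObjInit$ is an ep-zero object of $\pdcpo$. There is no serious obstacle here; the only point deserving a moment's care is the identity $\ObjTerm = \Lift{\ObjInit}$ --- which is precisely what makes $\ObjInit$ a genuine zero object in the Kleisli category --- after which everything reduces to the defining inequality $\bot\leq x$ in a lifted dcpo.
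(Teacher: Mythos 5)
Your proof is correct, and since the paper explicitly leaves this fact to the reader with no proof of its own, your argument is evidently the intended one: the empty dcpo $\ObjInit$ is a zero object of $\pdcpo$ because $\Lift{\ObjInit}\cong\ObjTerm$ (an identity the paper itself uses shortly afterwards), and the unique maps in and out of it form an ep-pair since the composite through $\ObjInit$ is constantly $\bot$, which lies below $\eta$ in the pointwise enrichment order. No gaps; the only point worth the care you already gave it is precisely $\ObjTerm\cong\Lift{\ObjInit}$, which holds in the $\LVL$-indexed setting because a partial element of the empty dcpo forces its support to be $\bot\in\Sigma$.
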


\begin{lemma}
  The category of partial embeddings $\EP{\pdcpo}$ is closed under colimits of
  $\omega$-chains.
\end{lemma}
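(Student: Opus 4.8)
The plan is to transport the statement to the category $\dcpo$, where the relevant limit--colimit coincidence is available in a topos-valid form, using the comparison between the Kleisli and Eilenberg--Moore resolutions of $\Lift$. Recall the canonical comparison functor $\Mor[\Con{K}]{\pdcpo}{\dcppo}$, sending a predomain $A$ to the free algebra $\Lift{A}$ and --- as already noted --- identifying $\pdcpo(A,B)$ with $\dcppo(\Lift{A},\Lift{B})$ as dcpos; thus $\Con{K}$ is full, faithful, and $\dcpo$-enriched, each hom-map being an isomorphism of dcpos. Since a functor of this kind preserves and reflects the identity $\pi\circ\epsilon=\ArrId{}$ and the inequality $\epsilon\circ\pi\leq\ArrId{}$, it preserves and reflects embedding--projection pairs by the characterization in the observation above, and hence restricts to a full and faithful functor $\Mor{\EP{\pdcpo}}{\EP{\dcppo}}$ whose image is exactly the chains of embeddings \emph{between free algebras}. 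Observe moreover that any partial embedding $\epsilon$, satisfying $\pi\circ\epsilon=\ArrId{}$, is necessarily total, i.e. an honest morphism of $\dcpo$; so $\Con{K}$ carries an $\omega$-chain $\Mor[\epsilon_i]{A_i}{A_{i+1}}$ of partial embeddings to $\Lift$ applied pointwise to the same underlying maps, now regarded as an ordinary $\omega$-chain in $\dcpo$.

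It therefore suffices to compute the colimit in $\EP{\dcppo}$ of $\Mor[\Lift{\epsilon_i}]{\Lift{A_i}}{\Lift{A_{i+1}}}$ and check that the result is again a free algebra; the colimit in $\EP{\pdcpo}$ is then recovered by pulling back along the full and faithful $\Mor{\EP{\pdcpo}}{\EP{\dcppo}}$, using the standard fact that a full and faithful functor reflects colimits (a cocone is colimiting once its image is). To this end, form the colimit $A_\infty$ of $\Mor[\epsilon_i]{A_i}{A_{i+1}}$ in $\dcpo$, which exists since $\dcpo$ is closed under $\omega$-colimits. Because $\Lift\colon\dcpo\to\dcppo$ is a left adjoint to the forgetful functor it preserves this colimit, so $\Lift{A_\infty}$ is the colimit of $\Mor[\Lift{\epsilon_i}]{\Lift{A_i}}{\Lift{A_{i+1}}}$ in $\dcppo$. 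By the limit--colimit coincidence for $\omega$-chains of embeddings --- valid constructively by \citet[De Jong and Escard\'o]{dejong-escardo:2021} and \citet[Smyth and Plotkin]{smyth-plotkin:1982} --- this ambient colimit agrees with the colimit in $\EP{\dcppo}$, equivalently with the bilimit, equivalently with the limit of the projection $\OpCat{\omega}$-chain. Hence the colimit in $\EP{\dcppo}$ is the free algebra $\Lift{A_\infty}$, and pulling back gives $A_\infty$, equipped with the partial embeddings corresponding to the colimiting cocone, as the colimit in $\EP{\pdcpo}$.

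The step I expect to cost the most care is the appeal to the limit--colimit coincidence: one must keep straight the three \emph{a priori} distinct colimits at play --- of the $\epsilon_i$ in $\dcpo$, of the $\Lift{\epsilon_i}$ in $\dcppo$, and of the embeddings $\Lift{\epsilon_i}$ in $\EP{\dcppo}$ --- and verify both that the coincidence, usually phrased for $\dcpo$, transfers to $\dcppo$ and that it goes through without excluded middle or choice. A more self-contained alternative is to build the bilimit of $\prn{\epsilon_i,\pi_i}$ directly inside $\pdcpo$ in the style of Smyth and Plotkin, the only substantive point being that the carrier so obtained is again a dcpo; but routing through the comparison functor confines all the genuine work to the cited coincidence theorem.
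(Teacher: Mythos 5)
Your categorical scaffolding is mostly sound, and your route is genuinely different from the paper's: the paper simply invokes Theorem~5.3.14 of Fiore's dissertation~\citep{fiore:1994}, which transfers closure under colimits of ep-$\omega$-chains from a base category to the Kleisli category of an enriched monad, so the only analytic input is that $\EP{\dcpo}$ has such colimits, supplied constructively by de Jong and Escard\'o~\citep{dejong-escardo:2021}. Your observations that the comparison functor $\Mor{\pdcpo}{\dcppo}$ is full, faithful and enriched, that it preserves and reflects ep-pairs, that partial embeddings are automatically total (the support of $\pi\circ\epsilon=\ArrId{}$ is contained in that of $\epsilon$), and that fully faithful functors reflect colimits are all correct.

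The genuine gap is the very first construction: ``form the colimit $A_\infty$ of the $\epsilon_i$ in $\dcpo$, which exists since $\dcpo$ is closed under $\omega$-colimits.'' The $\epsilon_i$ are merely continuous maps (their projections are partial, so they are not $\dcpo$-embeddings), and nothing in this setting provides colimits of arbitrary $\omega$-chains in the internal, $\LVL$-indexed category $\dcpo$: de Jong and Escard\'o give only bilimits of ep-chains. Such colimits are not computed on underlying posets even classically --- limit points must be freely adjoined (for instance, the colimit of the chains $\brc{0<\dots<i-1}\cup\brc{\top}$ under the evident ep-pairs acquires a fresh supremum of the naturals strictly below $\top$) --- and the usual existence arguments (transfinite/adjoint-functor-theorem constructions, or dcpo presentations) are precisely the machinery that the paper's route via Fiore's theorem avoids; so this premise needs its own topos-valid proof before your argument can start, and your closing claim that all the genuine work is ``confined to the cited coincidence theorem'' is therefore not accurate. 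A second, smaller debt is the limit--colimit coincidence in $\dcppo$ with respect to arbitrary \emph{strict} cocones: the citations cover $\dcpo$, so you would have to rerun the Smyth--Plotkin O-category argument~\citep{smyth-plotkin:1982} internally (fillable, but not free). Your own ``self-contained alternative'' --- building the bilimit directly in $\pdcpo$ as the limit of the lifted projection chain, cut down to somewhere-defined families as in the paper's remark --- is actually the better-adapted route, since it needs only ep-chain (bi)limits and no cocompleteness of $\dcpo$.
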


\begin{proof}
  By Theorem~5.3.14 of \citet[Fiore's dissertation]{fiore:1994} and the fact
  that the lift monad is $\dcpo$-enriched and therefore preserves ep-chains, it
  suffices to recall that $\EP{\dcpo}$ is closed under colimits of
  $\omega$-chains~\citep{dejong-escardo:2021}.
\end{proof}

\begin{remark}
  For intuition,
  \citet[Sterling]{sterling:2022:bilimits} has also verified an explicit (Grothendieck) topos--valid computation
  of the colimit of an arbitrary diagram $\Mor[A_\bullet]{\ICat}{\EP{\pdcpo}}$
  where $\ICat$ is a small filtered poset, generalizing the claim of
  \citet[Jones and Plotkin]{jones:1990,jones-plotkin:1989}:
  \[
    A_\infty \coloneqq
    \Compr{
      \sigma : \Prod{i\in\ICat}\Lift{A_i}
    }{
      \prn{\exists i\in\ICat. \IsDefd{\sigma_i}}
      \land
      \forall i\leq j\in\ICat.
      \pi\Sub{i\leq j}\sigma_j = \sigma_i
    }
    \qedhere
  \]
\end{remark}

\begin{corollary}
  The category of dcpos and partial maps $\pdcpo$ is $\dcpo$-algebraically
  compact, and hence a Kleisli model of axiomatic domain theory.
\end{corollary}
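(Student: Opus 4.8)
The plan is to assemble the three ingredients established immediately above and feed them into \cref{fact:compactness-from-zero-and-bilimits}, instantiated at $\ECat = \pdcpo$. I would first verify its hypotheses in turn: (i) $\pdcpo$ is $\dcpo$-enriched, as recorded in the remark following the glossary of $\dcpo$, $\dcppo$, $\pdcpo$; (ii) $\pdcpo$ has an ep-zero object, which is the fact stated just above; and (iii) $\EP{\pdcpo}$ is closed under colimits of $\omega$-chains, which is the lemma proved just above (via Fiore's Theorem~5.3.14, the $\dcpo$-enrichedness of $\Lift$, and closure of $\EP{\dcpo}$ under $\omega$-colimits \`a la \citet[De Jong and Escard\'o]{dejong-escardo:2021}). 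Applying \cref{fact:compactness-from-zero-and-bilimits} then yields immediately that $\pdcpo$ is $\dcpo$-algebraically compact, i.e.\ that every $\dcpo$-enriched endofunctor on $\pdcpo$ carries a free algebra --- an object simultaneously initial as an algebra and final as a coalgebra.

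For the second clause I would unfold the definition of a Kleisli model of axiomatic domain theory due to \citet[Fiore and Plotkin]{fiore-plotkin:1996}: given a monadic base $\prn{\CCat, \Sigma, \Lift}$, the Kleisli category $\CCat\Sub{\Lift}$ qualifies exactly when it is $\CCat$-algebraically compact. We have already shown that $\prn{\dcpo, \Sigma, \Lift}$ is a monadic base, and $\pdcpo = \prn{\dcpo}\Sub{\Lift}$ by construction, so the required $\CCat$-algebraic compactness is precisely the $\dcpo$-algebraic compactness just obtained, and the argument is complete.

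No step here is genuinely difficult: this corollary is the capstone that bundles the preceding lemmas, and the substantive work lay in establishing the ep-zero object of $\pdcpo$ and the closure of $\EP{\pdcpo}$ under $\omega$-colimits. The one point meriting care is bookkeeping --- confirming that the poset-enrichment with respect to which the ep-zero and $\omega$-colimit-closure conditions are stated is the canonical $\dcpo$-enrichment of $\pdcpo$, and that this same enrichment base features in both \cref{fact:compactness-from-zero-and-bilimits} and the Fiore--Plotkin definition of a Kleisli model, as indeed it does.
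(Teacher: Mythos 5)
Your proposal is correct and matches the paper's intended argument exactly: the corollary is stated without explicit proof precisely because it is the assembly of \cref{fact:compactness-from-zero-and-bilimits} with the ep-zero fact, the $\omega$-chain closure lemma, and the monadic-base corollary, just as you describe. Your bookkeeping remark about the enrichment base being the same throughout is a fair point of care but poses no issue here.
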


\subsubsection{Dominances and orthogonality}\label{sec:dominances-and-orthogonality}

Let $\ECat$ be a locally cartesian closed category equipped with a dominance
$\Sigma$.  Let $\mathcal{M}$ be a class of monomorphisms in $\ECat$; we define $\mathcal{M}_\times$ be the smallest class of
monomorphisms in $\ECat$ stable under products and containing
$\mathcal{M}$, and we define $\mathcal{M}_\Sigma$ be the smallest class of
monomorphisms in $\ECat$ stable under pullback along
$\Sigma$-monomorphisms and containing $\mathcal{M}_\Sigma$.

\begin{notation}
  We will write $m\perp f$ to mean that $f$ is externally orthogonal to $m$;
  we will write $\mathcal{M}\perp f$ to mean that $f$ is that $m\perp f$ for
  every $m\in\mathcal{M}$.
\end{notation}

\begin{fact}
  An object $E\in\ECat$ is internally orthogonal to $\mathcal{M}$ if and only if $\mathcal{M}_\times \perp E$.
\end{fact}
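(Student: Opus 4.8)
The plan is to reduce both sides of the biconditional to a single statement about external orthogonality of the ``fattened'' monomorphisms $\ArrId{X}\times m$, and then to recognize that these are exactly the members of $\mathcal{M}_\times$.

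First I would recall the standard reformulation of internal orthogonality in a locally cartesian closed category: for a monomorphism $m\colon A\to B$, the internal assertion that $E$ is orthogonal to $m$ --- \ie that for every $f\colon A\to E$ there is a unique $g\colon B\to E$ with $g\circ m = f$ --- is valid in the internal language of $\ECat$ if and only if the precomposition map $E^m\colon E^B\to E^A$ is an isomorphism. This is exactly where \emph{internal} does the work: the internal statement is interpreted over an arbitrary stage of definition, so it asserts not merely that $\ECat\prn{B,E}\to\ECat\prn{A,E}$ is a bijection but that $\ECat\prn{X,E^B}\to\ECat\prn{X,E^A}$ is a bijection for every $X\in\ECat$, which by the Yoneda embedding is the same as $E^m$ being invertible. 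Hence $E$ is internally orthogonal to $\mathcal{M}$ precisely when, for every $m\in\mathcal{M}$ and every $X\in\ECat$, the map $\ECat\prn{X,E^B}\to\ECat\prn{X,E^A}$ is a bijection.

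Next I would transpose along the cartesian closed adjunction: $\ECat\prn{X,E^B}\cong\ECat\prn{X\times B,E}$ and $\ECat\prn{X,E^A}\cong\ECat\prn{X\times A,E}$, and under these identifications $\ECat\prn{X,E^m}$ becomes precomposition with $\ArrId{X}\times m\colon X\times A\to X\times B$. So $E$ is internally orthogonal to $\mathcal{M}$ iff $\prn{\ArrId{X}\times m}\perp E$ for all $m\in\mathcal{M}$ and all $X\in\ECat$. It then remains to check that $\Compr{\ArrId{X}\times m}{X\in\ECat,\ m\in\mathcal{M}}$ is precisely $\mathcal{M}_\times$: each $\ArrId{X}\times m$ is a pullback of the monomorphism $m$ along the projection $X\times B\to B$, hence is itself a monomorphism; the family contains $\mathcal{M}$ (take $X=\ObjTerm$); it is stable under products since $Y\times\prn{\ArrId{X}\times m}\cong\ArrId{Y\times X}\times m$; and any class of monomorphisms containing $\mathcal{M}$ and stable under products must contain every $\ArrId{X}\times m$. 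Thus this family is the smallest such class, \ie $\mathcal{M}_\times$, and the two conditions ``$E$ internally orthogonal to $\mathcal{M}$'' and ``$\mathcal{M}_\times\perp E$'' coincide.

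I expect the only genuinely delicate point to be the first step --- externalizing the internal orthogonality hypothesis correctly, so that it supplies $E^m$ iso (equivalently, bijections at every stage $X$) rather than just orthogonality at the terminal stage; this is precisely the gap between internal and external orthogonality that the Fact is designed to bridge. Everything else is a routine consequence of cartesian closure, and in particular the dominance $\Sigma$ plays no role here --- it enters only in the companion statement relating $E$-orthogonality to the larger class $\mathcal{M}_\Sigma$.
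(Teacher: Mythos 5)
Your proof is correct: the paper states this as a \emph{Fact} without proof, and your argument --- externalizing internal orthogonality to invertibility of $E^m$, transposing along cartesian closure to external orthogonality against the maps $\ArrId{X}\times m$, and identifying their closure (up to isomorphism) with $\mathcal{M}_\times$ --- is exactly the standard argument the paper implicitly relies on. In particular you read ``stable under products'' in the intended sense, namely closure under $X\times\prn{-}$ for arbitrary objects $X$, which is what makes the identification with $\mathcal{M}_\times$, and hence the equivalence, go through; your closing remark that $\Sigma$ only matters for $\mathcal{M}_\Sigma$ is likewise accurate.
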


\begin{lemma}\label{fact:orth-to-orth-lift}
  If $\mathcal{M}_\times\perp \Sigma$ and $\mathcal{M}_\Sigma\perp E$, then
  $\mathcal{M}_\times\perp \Lift{E}$.
\end{lemma}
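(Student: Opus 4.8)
The plan is to show directly that $\Lift{E}$ has the unique right lifting property against every mono $\Mor|>->|[m]{A}{B}$ in $\mathcal{M}_\times$. Recall that $\Lift{E} = \Sum{\phi:\Sigma}\OpMod{\phi}{E}$ classifies partial maps into $E$ with $\Sigma$-small domain, so a morphism $\Mor[f]{A}{\Lift{E}}$ amounts to a pair $\prn{\phi_A,g_A}$, where $\phi_A = \pi_1\circ f : A \to \Sigma$ is the support, $\Mor|>->|{A_{\phi_A}}{A}$ is the $\Sigma$-mono classified by $\phi_A$, and $\Mor[g_A]{A_{\phi_A}}{E}$ is the value map. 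Given such an $f$, I must produce a \emph{unique} $\Mor[\tilde{f}]{B}{\Lift{E}}$ with $\tilde{f}\circ m = f$; the conclusion then follows since $m\in\mathcal{M}_\times$ was arbitrary.

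First I would build the support of $\tilde{f}$ using the hypothesis $\mathcal{M}_\times\perp\Sigma$: since $m\in\mathcal{M}_\times$, the map $\phi_A$ extends uniquely to a map $\Mor[\phi_B]{B}{\Sigma}$ with $\phi_B\circ m = \phi_A$. Write $\Mor|>->|[j_B]{B_{\phi_B}}{B}$ for the $\Sigma$-mono it classifies and form the pullback of $m$ along $j_B$. Because $\phi_B\circ m = \phi_A$, this pullback is precisely a mono $\Mor|>->|[m']{A_{\phi_A}}{B_{\phi_B}}$ sitting above $m$ via the subobject inclusions --- in particular, its source is exactly the domain of $g_A$. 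Since $m\in\mathcal{M}_\times$ and $\mathcal{M}_\Sigma$ contains $\mathcal{M}_\times$ and is closed under pullback along $\Sigma$-monos, we have $m'\in\mathcal{M}_\Sigma$, so the hypothesis $\mathcal{M}_\Sigma\perp E$ applies: $g_A$ extends uniquely to a map $\Mor[g_B]{B_{\phi_B}}{E}$ with $g_B\circ m' = g_A$. I would then set $\tilde{f}\coloneqq\prn{\phi_B,g_B}:B\to\Lift{E}$; unwinding the partial-map description, the support of $\tilde{f}\circ m$ is $\phi_B\circ m = \phi_A$ and its value map is $g_B\circ m' = g_A$, so $\tilde{f}\circ m = f$. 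For uniqueness, any $\Mor[h]{B}{\Lift{E}}$ with $h\circ m = f$ has support extending $\phi_A$ along $m$, hence equal to $\phi_B$ by the uniqueness clause of $\mathcal{M}_\times\perp\Sigma$; its value map over $B_{\phi_B}$ then extends $g_A$ along $m'$, hence equal to $g_B$ by $\mathcal{M}_\Sigma\perp E$; therefore $h=\tilde{f}$.

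The step I expect to require the most care is the middle one: verifying that pulling $m$ back along the $\Sigma$-mono $j_B$ really does reproduce the subobject $\Mor|>->|{A_{\phi_A}}{A}$ compatibly with $m$ (so that $g_A$ is literally a map out of the source of $m'$), and that this pullback is captured by the closure clauses defining $\mathcal{M}_\Sigma$ --- that is, that $\mathcal{M}_\Sigma$ is the closure of $\mathcal{M}_\times$, not merely of $\mathcal{M}$, under pullback along $\Sigma$-monos. Everything else is routine manipulation in the internal language of the locally cartesian closed category $\ECat$, using nothing about $\Lift{E}$ beyond its universal property as a partial-map classifier for the dominance $\Sigma$.
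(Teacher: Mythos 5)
Your proposal is correct and follows essentially the same route as the paper's proof: extend the support map along $m$ using $\mathcal{M}_\times\perp\Sigma$, observe (via the pullback/pasting lemma) that the restriction of $m$ to the classified $\Sigma$-subobjects is a pullback of $m$ along a $\Sigma$-mono and hence lies in $\mathcal{M}_\Sigma$, extend the value map using $\mathcal{M}_\Sigma\perp E$, and reassemble through the universal property of the partial-map classifier. Your explicit uniqueness check and your reading of $\mathcal{M}_\Sigma$ as the closure of $\mathcal{M}_\times$ (not merely of $\mathcal{M}$) under pullback along $\Sigma$-monos are both exactly what the paper's argument implicitly relies on.
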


\begin{proof}
  Fix $\Mor|>->|{I}{J}\in\mathcal{M}_\times$ and a lifting problem of the following form:
  \[
    \begin{tikzpicture}[diagram]
      \node (I) {$I$};
      \node (J) [below = of I] {$J$};
      \node (L/E) [right = of I] {$\Lift{E}$};
      \draw[>->] (I) to (J);
      \draw[->] (I) to node [above] {$a$} (L/E);
      \draw[exists,->] (J) to node [sloped,below] {$?$} (L/E);
    \end{tikzpicture}
  \]

  The map $\Mor[a]{I}{\Lift{E}}$ corresponds to a unique total map
  $\Mor[\tilde{a}]{I_a}{E}$ defined on a Scott-open subset $I_a\subseteq I$;
  because $\mathcal{M}_\times\perp\Sigma$, we may compute the support of the
  desired map $\Mor|exists,->|{J}{\Lift{E}}$ by solving another lifting problem:
  \[
    \begin{tikzpicture}[diagram,baseline=(l/sw.base)]
      \SpliceDiagramSquare<l/>{
        nw = I_a,
        ne = E,
        sw = I,
        se = \Lift{E},
        north = \tilde{a},
        south = a,
        west/style = >->,
        east/style = >->,
        north/style = exists,
        nw/style = pullback,
        ne/style = pullback,
        south/node/style = upright desc,
      }
      \SpliceDiagramSquare<r/>{
        glue = west,
        glue target = l/,
        ne = \ObjTerm,
        se = \Sigma,
        east = \top,
        south = \Lift{!},
        east/style = >->,
        south/node/style = upright desc,
      }
      \node (J) [below = of l/sw] {$J$};
      \draw[>->] (l/sw) to (J);
      \draw[>->] (l/sw) to (J);
      \draw[exists,->] (J) to node [sloped,below] {$\phi$} (r/se);
    \end{tikzpicture}
    \qquad\qquad
    \DiagramSquare{
      nw = J_a,
      sw = J,
      ne = \ObjTerm,
      se = \Sigma,
      east = \top,
      east/style = >->,
      west/style = {exists,>->},
      north/style = {exists,->},
      nw/style = pullback,
      south = \phi,
    }
  \]

  Using the pullback lemma we deduce that the left-hand square below is cartesian:
  \[
    \begin{tikzpicture}[diagram]
      \SpliceDiagramSquare<l/>{
        nw = I_a,
        ne = J_a,
        sw = I,
        se = J,
        north/style = >->,
        south/style = >->,
        west/style = >->,
        east/style = >->,
        nw/style = dotted pullback,
        ne/style = muted pullback,
      }
      \SpliceDiagramSquare<r/>{
        glue = west,
        glue target = l/,
        east/style = {gray,>->},
        north/style = gray,
        south/style = gray,
        south/node/style = upright desc,
        ne/style = gray,
        se/style = gray,
        east = \top,
        ne = \ObjTerm,
        se = \Sigma,
        south = \phi,
      }
      \node (L/E) [gray,below = 1cm of l/se] {$\Lift{E}$};
      \draw[->,gray] (l/sw) to node [sloped,below] {$a$} (L/E);
      \draw[->,gray] (L/E) to node [sloped,below] {$\Lift{!}$} (r/se);
    \end{tikzpicture}
  \]

  Therefore $\Mor|>->|{I_a}{J_a}\in\mathcal{M}_\Sigma$ and so we may solve the
  following lifting problem in $E$, and use the universal property of the
  partial map classifier to obtain the desired lift:
  \[
    \begin{tikzpicture}[diagram,baseline=(L/E.base)]
      \node (I/a) {$I_a$};
      \node (J/a) [pullback 45,below = of I/a] {$J_a$};
      \node (E) [right = of I/a] {$E$};
      \node (J) [below right = 1.25cm of J/a] {$J$};
      \node (L/E) [below right = 1.25cm of E] {$\Lift{E}$};
      \draw[>->] (I/a) to (J/a);
      \draw[>->] (I/a) to node [above] {$\tilde{a}$} (E);
      \draw[>->] (J/a) to (J);
      \draw[>->] (E) to (L/E);
      \draw[exists,>->] (E) to(L/E);
      \draw[exists,->] (J/a) to node [desc] {$\tilde{b}$} (E);
      \draw[exists,->] (J) to node [sloped,below] {$b$} (L/E);
    \end{tikzpicture}
    \qedhere
  \]

\end{proof}

The following lemma appears as Proposition~3.2(2) in
\citet[Fiore and Plotkin]{fiore-plotkin:1996}, but is left unproved in \opcit.

\begin{lemma}\label{lem:orth-incl-creates-limits}
  The inclusion of any orthogonal subcategory creates limits.
\end{lemma}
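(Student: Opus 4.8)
The plan is to note that an orthogonal subcategory is by definition \emph{full}, so the statement carries only one piece of content: a closure property of the ambient limit. Fix a class $\mathcal{S}$ of morphisms of $\ECat$ and write $\mathcal{S}^\perp\subseteq\ECat$ for the full subcategory of objects $E$ such that, for every $s\colon A\to B$ in $\mathcal{S}$, precomposition $\ECat(B,E)\to\ECat(A,E)$ is a bijection. Because the inclusion $\mathcal{S}^\perp\hookrightarrow\ECat$ is fully faithful and injective on objects, a limiting cone for a diagram $D\colon\mathcal{J}\to\mathcal{S}^\perp$ computed in $\ECat$ admits at most one lift to $\mathcal{S}^\perp$ --- necessarily the cone itself --- and any such lift is automatically limiting in $\mathcal{S}^\perp$ by fullness. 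Hence ``creates limits'' reduces to the single claim that $L\coloneqq\lim_{\ECat}D$ again lies in $\mathcal{S}^\perp$.

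To establish this, I would fix $s\colon A\to B$ in $\mathcal{S}$ and push the universal property of the limit through the $\mathrm{Hom}$-functor: the natural isomorphism $\ECat(X,L)\cong\lim_{j\in\mathcal{J}}\ECat(X,D_j)$ identifies the precomposition map $\ECat(B,L)\to\ECat(A,L)$ with $\lim_{j}\bigl(\ECat(B,D_j)\to\ECat(A,D_j)\bigr)$. Each map in this $\mathcal{J}$-diagram is a bijection since $D_j\in\mathcal{S}^\perp$; a pointwise-bijective natural transformation is a natural isomorphism, and $\lim$ preserves isomorphisms, so $\ECat(B,L)\to\ECat(A,L)$ is a bijection. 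As $s$ was arbitrary, $L\in\mathcal{S}^\perp$. (Equivalently, one may chase diagrams directly: given $a\colon A\to L$, the unique fillers of $\pi_j\circ a$ against each $D_j$ form a cone over $D$, hence factor uniquely through $L$, yielding the required filler.)

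For the \emph{internally} orthogonal subcategories actually used in the paper --- the complete types and the predomains --- nothing more is needed: by the \textbf{Fact} recorded above, being internally orthogonal to a class $\mathcal{M}$ of monomorphisms coincides with being externally orthogonal to its product-closure $\mathcal{M}_\times$, so the internally-orthogonal subcategory is literally $(\mathcal{M}_\times)^\perp$ and the previous paragraphs apply verbatim. I expect no genuine obstacle here; the only point requiring attention is the purely bookkeeping distinction between \emph{creating} limits and merely \emph{preserving and reflecting} them, which is settled by the remark that a full, injective-on-objects inclusion rigidifies the lift of a cone, reducing everything to membership of the apex.
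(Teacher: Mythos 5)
Your proposal is correct and takes essentially the same route as the paper: both reduce ``creates'' to the single claim that the apex of the ambient limit is again orthogonal (using fullness of the inclusion), and both establish that claim by pushing the lifting problem through the universal property of the limit --- your parenthetical diagram chase is precisely the paper's argument, with naturality of the induced cone following from uniqueness of the pointwise fillers. The hom-functor packaging in your main line is just a tidier phrasing of the same step.
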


\begin{proof}
  Let $\mathcal{N}$ be a class of monomorphisms and let $\ECat\Sub{\mathcal{N}\perp}$ be
  the full subcategory of $\ECat$ spanned by objects $E$ such that
  $\mathcal{N}\perp{E}$.
  Let $\Mor[E_\bullet]{\ICat}{\ECat\Sub{\mathcal{N}\perp}}$ be a diagram that
  has a limit in $\ECat$; then we will argue that
  $\mathcal{N}\perp\Lim{\ICat}{E_\bullet}$.  Fixing
  $\Mor|>->|{U}{V}\in\mathcal{N}$, consider any lifting problem of the
  following kind in $\ECat$:
  \[
    \begin{tikzpicture}[diagram]
      \node (U) {$U$};
      \node (V) [below = of U] {$V$};
      \node (Lim) [right = of U] {$\Lim{\ICat}{E_\bullet}$};
      \draw[>->] (U) to (V);
      \draw[->] (U) to (Lim);
      \draw[exists,->] (V) to node [sloped,below] {$?$} (Lim);
    \end{tikzpicture}
  \]

  By the universal property of the limit, a solution to the lifting problem
  above in $\ECat$ is uniquely determined by a solution to the following
  lifting problem in $\brk{\ICat,\ECat}$.
  \[
    \begin{tikzpicture}[diagram]
      \node (U) {$\brc{U}$};
      \node (V) [below = of U] {$\brc{V}$};
      \node (EA) [right = of U] {$E_\bullet$};
      \draw[>->] (U) to (V);
      \draw[->] (U) to (E);
      \draw[exists,->] (V) to node [sloped,below] {$?$} (E);
    \end{tikzpicture}
  \]

  Such a lifting problem can be solved pointwise, recalling that each $E_i$ is orthogonal to $\mathcal{N}$:
  \[
    \begin{tikzpicture}[diagram]
      \node (U) {$U$};
      \node (V) [below = of U] {$V$};
      \node (EA) [right = of U] {$E_i$};
      \draw[>->] (U) to (V);
      \draw[->] (U) to (E);
      \draw[exists,->] (V) to node [sloped,below] {$\exists!$} (E);
    \end{tikzpicture}
  \]

  Naturality of the induced cone $\Mor{\brc{V}}{E_\bullet}$ follows from the
  uniqueness of lifts.
\end{proof}

\begin{lemma}\label{fact:orth-lift-to-orth}
  Let $\mathcal{N} \in \brc{\mathcal{M}_\times,\mathcal{M}_\Sigma,\mathcal{M}}$.
  If $\mathcal{N}\perp\Sigma$ and $\mathcal{N}\perp\Lift{E}$ then $\mathcal{N}\perp{E}$;
\end{lemma}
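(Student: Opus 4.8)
The plan is to exploit the way $E$ embeds into $\Lift{E}$. Recall that $\Lift{E} = \Sum{\phi:\Sigma}\OpMod{\phi}{E}$ comes equipped with the support map $\IsDefd{-}\colon\Lift{E}\to\Sigma$, and that pulling $\top\colon\ObjTerm\to\Sigma$ back along $\IsDefd{-}$ recovers $E$ itself; thus the (monic) unit $\eta_E\colon E\to\Lift{E}$ exhibits $E$ as the fibre of $\IsDefd{-}$ over $\top$. Given a lifting problem consisting of $m\colon U\to V$ in $\mathcal{N}$ and a map $a\colon U\to E$, I would post-compose with $\eta_E$ to obtain $\eta_E\circ a\colon U\to\Lift{E}$ and invoke the hypothesis $\mathcal{N}\perp\Lift{E}$ to produce a \emph{unique} $b\colon V\to\Lift{E}$ with $b\circ m = \eta_E\circ a$.

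The crux is then to check that $b$ factors through $E$, i.e.\ that the composite $\IsDefd{-}\circ b\colon V\to\Sigma$ is the constant map $\top$. Along $m$ this composite restricts to $\IsDefd{-}\circ\eta_E\circ a$, which equals $\top\circ{!}$ because $\IsDefd{-}\circ\eta_E = \top\circ{!}$; but the constant map $\top\circ{!}\colon V\to\Sigma$ \emph{also} restricts to $\top\circ{!}$ along $m$. Here I would appeal to the \emph{uniqueness} half of $m\perp\Sigma$ (which holds since $m\in\mathcal{N}$ and $\mathcal{N}\perp\Sigma$) to conclude that these two maps $V\to\Sigma$ coincide, so $\IsDefd{-}\circ b = \top\circ{!}$ and $b$ factors — necessarily uniquely, as $\eta_E$ is monic — through $\eta_E$ as some $\bar{b}\colon V\to E$.

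Finally I would verify that $\bar{b}$ is the unique solution of the original lifting problem against $E$: from $\eta_E\circ(\bar{b}\circ m) = b\circ m = \eta_E\circ a$ we get $\bar{b}\circ m = a$ by monicity of $\eta_E$, and any competing $\bar{b}'$ with $\bar{b}'\circ m = a$ has $\eta_E\circ\bar{b}'$ solving the lifting problem into $\Lift{E}$, hence $\eta_E\circ\bar{b}' = b$, hence $\bar{b}' = \bar{b}$. The argument is uniform in the choice $\mathcal{N}\in\brc{\mathcal{M}_\times,\mathcal{M}_\Sigma,\mathcal{M}}$ — indeed it works for any class of monomorphisms — so no case split is really needed. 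I do not expect a genuine obstacle; the only step requiring a moment's care is the identification of $E$ with the $\top$-fibre of the support map and the resulting reduction of ``$b$ is total'' to a uniqueness statement for orthogonality against $\Sigma$, after which everything is a routine diagram chase.
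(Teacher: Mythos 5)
Your proof is correct and is essentially the paper's own argument made explicit: the paper exhibits $E$ as the pullback of $\Mor|>->|[\top]{\ObjTerm{\ECat}}{\Sigma}$ along the support map $\Mor{\Lift{E}}{\Sigma}$ and then simply invokes \cref{lem:orth-incl-creates-limits} (inclusions of orthogonal subcategories create limits), whereas you inline the diagram chase behind that lemma for this particular pullback --- lifting into $\Lift{E}$, using uniqueness of lifts into $\Sigma$ to see the lift is total, and factoring through the monic unit. There is no gap, and as you note the argument is uniform in $\mathcal{N}$.
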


\begin{proof}
  By \cref{lem:orth-incl-creates-limits}, since $E$ is the following pullback:
  \[
    \DiagramSquare{
      nw/style = pullback,
      nw = E,
      ne = \ObjTerm{\ECat},
      se = \Sigma,
      sw = \Lift{E},
      east/style = >->,
      west/style = >->,
    }
    \qedhere
  \]
\end{proof}

\begin{lemma}\label{lem:orth-lift-to-stab-orth-lift}
  If $\mathcal{M}_\times\perp\Lift{E}$, then also $\mathcal{M}_\Sigma\perp\Lift{E}$.
\end{lemma}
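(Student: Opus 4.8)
The plan is to argue directly with lifting problems rather than to route through \cref{fact:orth-to-orth-lift}; indeed the naive route — apply \cref{fact:orth-lift-to-orth} to obtain $\mathcal{M}_\times \perp E$ and then re-lift — is blocked, since \cref{fact:orth-lift-to-orth} additionally demands $\mathcal{M}_\times \perp \Sigma$, which is not among our hypotheses. Instead, write $\mathcal{C}$ for the class of all monomorphisms $m$ with $m \perp \Lift{E}$. By assumption $\mathcal{M}_\times \subseteq \mathcal{C}$; and $\mathcal{M}_\Sigma$ is by construction the smallest class of monomorphisms containing $\mathcal{M}_\times$ that is stable under pullback along $\Sigma$-monomorphisms. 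Hence it is enough to show that $\mathcal{C}$ is itself stable under pullback along $\Sigma$-monomorphisms, for then $\mathcal{M}_\Sigma \subseteq \mathcal{C}$, which is exactly $\mathcal{M}_\Sigma \perp \Lift{E}$.

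To prove this closure property, fix $m : I \hookrightarrow J$ with $m \perp \Lift{E}$ and a $\Sigma$-monomorphism $s : J' \hookrightarrow J$, and form the pullback, with $m' = s^{*}m : I' \hookrightarrow J'$ and other leg $\iota : I' \hookrightarrow I$; note that $\iota$ is itself a $\Sigma$-monomorphism, being the pullback of $s$ along $m$, i.e.\ $\iota = m^{-1}(J')$. Given a lifting problem $a : I' \to \Lift{E}$ against $m'$, I first extend $a$ ``by $\bot$'' along $\iota$: the partial map $(\iota, a) : I \rightharpoonup \Lift{E}$ is classified by a map $\tilde{a} : I \to \Lift{\Lift{E}}$, and I set $\bar{a} = \mu_{E}\circ\tilde{a} : I \to \Lift{E}$, where $\mu_{E} : \Lift{\Lift{E}} \to \Lift{E}$ is the canonical (free) $\Lift$-algebra structure on $\Lift{E}$; since $\tilde{a}$ restricts along $\iota$ to $\eta_{\Lift{E}}\circ a$ and $\mu_{E}\circ\eta_{\Lift{E}} = \mathrm{id}$, we get $\bar{a}\circ\iota = a$. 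Now $m \perp \Lift{E}$ supplies a unique $\bar{b} : J \to \Lift{E}$ with $\bar{b}\circ m = \bar{a}$; put $b = \bar{b}\circ s$. Commutativity of the pullback square, $s\circ m' = m\circ\iota$, then gives $b\circ m' = \bar{b}\circ m\circ\iota = \bar{a}\circ\iota = a$, so $b$ solves the lifting problem.

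For uniqueness, let $b' : J' \to \Lift{E}$ be any solution, so $b'\circ m' = a$. Extend $b'$ by $\bot$ along the $\Sigma$-monomorphism $s$ in the same way, obtaining $\bar{b'} : J \to \Lift{E}$ with $\bar{b'}\circ s = b'$, where $\bar{b'} = \mu_{E}\circ\widetilde{b'}$ for $\widetilde{b'} : J \to \Lift{\Lift{E}}$ the classifier of $(s,b')$. The key step is the identity $\bar{b'}\circ m = \bar{a}$: the map $\widetilde{b'}\circ m$ classifies a partial map whose domain of definition is $m^{-1}(J') = \iota$ and whose total part is $b'\circ m' = a$, so $\widetilde{b'}\circ m = \tilde{a}$ by uniqueness of classifying maps, and therefore $\bar{b'}\circ m = \mu_{E}\circ\tilde{a} = \bar{a}$. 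Since $\bar{b}\circ m = \bar{a}$ as well, the uniqueness half of $m \perp \Lift{E}$ forces $\bar{b'} = \bar{b}$, whence $b' = \bar{b'}\circ s = \bar{b}\circ s = b$. This establishes the closure property and hence the lemma.

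I expect the only real friction to be the identity $\bar{b'}\circ m = \bar{a}$ in the uniqueness step — the point being that extending a solution of the pulled-back lifting problem by $\bot$ and then restricting along $m$ recovers exactly the extended problem $\bar{a}$ over $J$, so that the uniqueness clause of $m \perp \Lift{E}$ applies. This is a short diagram chase using that the partial-element classifier is stable under pullback (which pins down the domain of definition of $\widetilde{b'}\circ m$ as $\iota$) together with $\mu_{E}\circ\eta_{\Lift{E}} = \mathrm{id}$; the remaining parts — the reduction to a closure property and the existence of the lift — are purely formal.
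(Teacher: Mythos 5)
Your proof is correct and follows essentially the same route as the paper's: extend the given map by $\bot$ along the pulled-back $\Sigma$-monomorphism (your $\mu_E\circ\tilde{a}$ is exactly the paper's factorization of $a$ through its partial-map form to obtain a map out of $I$), apply $m\perp\Lift{E}$, and restrict along the $\Sigma$-monomorphism into $J$. You in fact go slightly beyond the paper's diagrammatic argument by organizing the reduction through the universal property of $\mathcal{M}_\Sigma$ as the smallest pullback-stable class and by spelling out the uniqueness half of orthogonality (which the paper leaves implicit); both additions are correct.
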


\begin{proof}
  Fix $\Mor|>->|{I}{J}\in\mathcal{M}_\times$ and let $\Mor|>->|{V}{J}$ be a
  $\Sigma$-monomorphism; consider a lifting problem of the following form:
  \[
    \begin{tikzpicture}[diagram]
      \node (U) [ne pullback] {$U$};
      \node (V) [below = of U] {$V$};
      \node (L/E) [right = of U] {$\Lift{E}$};
      \node (I) [left = of U] {$I$};
      \node (J) [left = of V] {$J$};
      \draw[>->] (U) to (V);
      \draw[>->] (I) to (J);
      \draw[>->] (V) to (J);
      \draw[>->] (U) to (I);
      \draw[->] (U) to node [above] {$a$} (L/E);
      \draw[exists,->] (V) to node [sloped,below] {$?$} (L/E);
    \end{tikzpicture}
  \]

  We may factor $\Mor[a]{U}{\Lift{E}}$ through $I$ as follows:
  \[
    \begin{tikzpicture}[diagram]
      \node (U/a) {$U_a$};
      \node (E) [right = of U/a] {$E$};
      \node (U) [below = 1.25cm of U/a] {$U$};
      \node (L/E) [below = 1.25cm of E] {$\Lift{E}$};
      \node (L/E') [below = 1.25cm of L/E] {$\Lift{E}$};
      \node (I) [below = 1.25cm of U] {$I$};
      \draw[>->] (E) to (L/E);
      \draw[>->] (U/a) to (U);
      \draw[->] (U) to node [upright desc] {$a$} (L/E);
      \draw[->] (U/a) to node [above] {$\tilde{a}$} (E);
      \draw[>->] (U) to (I);
      \draw[double] (L/E) to (L/E');
      \draw[exists] (I) to node [below] {$a_I$} (L/E');
    \end{tikzpicture}
  \]

  Therefore we may solve the following lifting problem:
  \[
    \begin{tikzpicture}[diagram,baseline=(J.base)]
      \node (I) {$I$};
      \node (J) [below = of I] {$J$};
      \node (L/E) [right = of I] {$\Lift{E}$};
      \node (U) [pullback,left = of I] {$U$};
      \node (V) [left = of J] {$V$};
      \draw[>->] (I) to (J);
      \draw[>->] (U) to (V);
      \draw[>->] (U) to (I);
      \draw[>->] (V) to (J);
      \draw[->] (I) to node [upright desc] {$a_I$} (L/E);
      \draw[->,exists] (J) to node [sloped,below] {$\exists!$} (L/E);
      \draw[->,bend left = 30] (U) to node [above] {$a$} (L/E);
    \end{tikzpicture}
    \qedhere
  \]
\end{proof}

\begin{lemma}\label{lem:lift-orth-reflective}
  Let $\ECat$ be locally presentable.
  If $\mathcal{M}_\times\perp\Sigma$, then the subcategory of $\ECat$ spanned
  by $E$ such that $\mathcal{M}_\times\perp\Lift{E}$ is reflective.
\end{lemma}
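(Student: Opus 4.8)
The plan is to recognise the subcategory in question as an ordinary orthogonality class and then apply the fact (already used above) that any orthogonality class in a locally presentable category is reflective. Write $\ECat\Sub{\mathcal{M}_\Sigma\perp}$ for the full subcategory of objects $E$ with $\mathcal{M}_\Sigma\perp E$, and write $\mathcal{D}$ for the full subcategory of $E$ with $\mathcal{M}_\times\perp\Lift{E}$; I claim $\mathcal{D}=\ECat\Sub{\mathcal{M}_\Sigma\perp}$, which finishes the proof since $\ECat$ is locally presentable. The inclusion $\ECat\Sub{\mathcal{M}_\Sigma\perp}\subseteq\mathcal{D}$ is immediate from \cref{fact:orth-to-orth-lift}, using the standing hypothesis $\mathcal{M}_\times\perp\Sigma$. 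For the reverse inclusion, suppose $\mathcal{M}_\times\perp\Lift{E}$; \cref{lem:orth-lift-to-stab-orth-lift} upgrades this to $\mathcal{M}_\Sigma\perp\Lift{E}$, whereupon \cref{fact:orth-lift-to-orth} (instantiated at $\mathcal{N}=\mathcal{M}_\Sigma$) yields $\mathcal{M}_\Sigma\perp E$ --- provided we can also supply the missing hypothesis $\mathcal{M}_\Sigma\perp\Sigma$.

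Thus the only real work is the auxiliary claim that $\mathcal{M}_\times\perp\Sigma$ implies $\mathcal{M}_\Sigma\perp\Sigma$. For this it suffices to check that the class $\brc{m : m\perp\Sigma}$ is stable under pullback along $\Sigma$-monomorphisms, since $\mathcal{M}_\Sigma$ is by definition the least such class containing $\mathcal{M}_\times$. Concretely, given $m : I\rightarrowtail J$ with $m\perp\Sigma$ and a $\Sigma$-monomorphism $v : V\rightarrowtail J$, one forms the pullback square with $m' : U\rightarrowtail V$ and $u : U\rightarrowtail I$, noting that $u$ is again a $\Sigma$-monomorphism. A $\Sigma$-subobject $W\rightarrowtail U$ classified by some $g : U\to\Sigma$ composes with $u$ to a $\Sigma$-subobject $W\rightarrowtail I$ --- here one uses that dominions are closed under composition and pullback --- which one reclassifies, extends uniquely along $m$ by $m\perp\Sigma$, and restricts back along $v$ to obtain the desired $\bar g : V\to\Sigma$; a pullback-pasting computation shows $\bar g\circ m' = g$. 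Uniqueness of $\bar g$ reduces, again by pullback pasting, to uniqueness of the extension along $m$ together with the monicity of $v$.

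I expect this last diagram chase to be the main (and only) obstacle; the remainder is a purely formal recombination of \cref{fact:orth-to-orth-lift,lem:orth-lift-to-stab-orth-lift,fact:orth-lift-to-orth}, each of whose hypotheses reduces either to the assumption $\mathcal{M}_\times\perp\Sigma$ or to the trivial orthogonality of the terminal object.
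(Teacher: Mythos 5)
Your proof is correct and follows the same skeleton as the paper's: identify the subcategory with the orthogonality class $\brc{E : \mathcal{M}_\Sigma\perp E}$, which is reflective since $\ECat$ is locally presentable, and establish the equivalence $\mathcal{M}_\times\perp\Lift{E}\Longleftrightarrow\mathcal{M}_\Sigma\perp E$ by exactly the combination of \cref{fact:orth-to-orth-lift,lem:orth-lift-to-stab-orth-lift,fact:orth-lift-to-orth} that you describe. The one place you diverge is the auxiliary claim $\mathcal{M}_\Sigma\perp\Sigma$, which you identify as ``the only real work'' and prove by a bespoke diagram chase showing that $\brc{m : m\perp\Sigma}$ is stable under pullback along $\Sigma$-monomorphisms. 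That chase is sound (it is in effect the classified-subobject rephrasing of the partial-map factorization used to prove \cref{lem:orth-lift-to-stab-orth-lift}), but it is unnecessary: since $\Sigma\cong\Lift{\ObjTerm}$, the standing hypothesis $\mathcal{M}_\times\perp\Sigma$ is already of the form $\mathcal{M}_\times\perp\Lift{E}$, so \cref{lem:orth-lift-to-stab-orth-lift} instantiated at $E=\ObjTerm$ delivers $\mathcal{M}_\Sigma\perp\Sigma$ for free --- this is how the paper discharges both hypotheses of \cref{fact:orth-lift-to-orth} in a single stroke. So what you gained from the chase is only a self-contained verification of a fact the surrounding lemmas already supply; the logical content of the proof is otherwise identical.
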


\begin{proof}
  In a locally presentable category, any orthogonal subcategory is reflective.
  We therefore argue that $\mathcal{M}_\times\perp\Lift{E}$ if and only if
  $\mathcal{M}_\Sigma\perp{E}$.

  \begin{enumerate}

    \item Suppose that $\mathcal{M}_\Sigma\perp{E}$; then
      \cref{fact:orth-to-orth-lift} applies as we have also assumed
      $\mathcal{M}_\times\perp\Sigma$.

    \item Suppose that $\mathcal{M}_\times\perp\Lift{E}$; by
      \cref{fact:orth-lift-to-orth} it suffices to check that
      $\mathcal{M}_\Sigma\perp\Sigma$ and $\mathcal{M}_\Sigma\perp\Lift{E}$
      which both follow from our assumptions via
      \cref{lem:orth-lift-to-stab-orth-lift}.

  \end{enumerate}
\end{proof}

\NewDocumentCommand\FAM{m}{\mathbf{Fam}\Sub{\SET}\prn{#1}}
\paragraph{Definition of cartesian coverages and sheaves}

\begin{definition}
  Let $\CCat$ be a category and let $u\in \CCat$; a \emph{sink} on $u$ is
  defined to be a collection of morphisms into $u$, \ie an element of
  $\FAM{\Sl{\CCat}{u}}$.
\end{definition}

When $\CCat$ has pullbacks, we may consider the base change of a sink
$\underline{u} = \prn{\Mor{u_i}{u}\mid i\in I}$ along a map $\Mor{v}{u}$, namely the family
$v^*\underline{u} = \prn{\Mor{v\times\Sub{u}u_i}{v}\mid i\in I}$.

\begin{definition}
  Let $\CCat$ be a category with pullbacks. Then a \emph{cartesian coverage} on
  $\CCat$ is given by an assignment to each object $u\in\CCat$ a collection
  $K\prn{u}\subseteq\FAM{\Sl{\CCat}{u}}$ of sinks on $u$ that is stable under
  base change: if $\underline{u}\in K\prn{u}$ then for any $\Mor{v}{u}$, the
  sink $v^*\underline{u}$ lies in $K\prn{v}$.
\end{definition}

We refer to an element of $K\prn{u}$ as a \emph{covering sink on $u$}.

\begin{definition}
  Let $\CCat$ be category with pullbacks and let $\underline{u}$ be a sink on
  $u\in \CCat$. For a presheaf $F\in \Psh{\CCat}$, a \emph{matching family} for
  $\underline{u}$ is given by an assignment of elements $x_i\in F\prn{u_i}$ to each $\Mor{u_i}{u}$ in $\underline{u}$, such that for all $u_i,u_j\in \underline{u}$ we have $u\Sub{ij}^*x_i = u\Sub{ij}^*x_j\in
  F\prn{u\Sub{ij}}$ where $u\Sub{ij}$ is the fiber product $u_i\times\Sub{u} u_j\in\Sl{\CCat}{u}$.
\end{definition}

\begin{definition}
  Let $\CCat$ be a category with pullbacks and let $K$ be a coverage on
  $\CCat$. A presheaf $F\in \Psh{\CCat}$ is defined to be a \emph{sheaf}
  relative to $K$ when for any covering sink $\underline{u}\in K\prn{u}$, there
  is a bijection between matching families for $\underline{u}$ in $F$ and
  elements of $F\prn{u}$.
\end{definition}

\subsubsection{A sheaf model of synthetic domain theory}\label{sec:sdt-model}

Taking the Kleisli model of axiomatic domain theory from the previous section,
we may now adapt the conservative extension theorem of
\citet[Fiore and Plotkin]{fiore-plotkin:1996} to embed it into a sheaf model of synthetic domain
theory.

\begin{notation}
  In this section, we will write $\CmpCat$ for $\dcpo$.
\end{notation}

\paragraph{Descent properties of dpcos}

We begin by showing that two classes of colimit in $\CmpCat$ enjoy a useful
descent property, which will enable us to embed $\CmpCat$ into a topos such
that the resulting Yoneda embedding preserves these colimits.
Monomorphisms of dcpos are not generally well-behaved. But there is a dominion
of monomorphisms in $\CmpCat$ that we shall call \emph{Scott-open immersions}
that has a number of very useful properties.

\begin{definition}
  A \DefEmph{Scott-open immersion} $\Mor|open immersion|{U}{A}$ of dcpos is any
  map that arises by pullback from the dominance $\Mor|open
  immersion|[\top]{\ObjTerm{\CmpCat}}{\Sigma}$. (In the category of dcpos, we shall use this special arrow to denote open immersions.)
\end{definition}

Every Scott-open immersion $\Mor|open immersion|[j]{U}{A}$ factors uniquely
through an isomorphism $U\cong \Con{Im}\,j$ and a subdcpo inclusion
$\Con{Im}\,j\subseteq A$ such that the order on $\Con{Im}\,j$ is the
restriction of the order on $A$.  We will refer to subdcpos that arise from
such a factorization as \DefEmph{Scott-open subdcpos}; of course, every
Scott-open subdcpo inclusion is trivially a Scott-open immersion. Therefore we
need not pay much attention to the difference between Scott-open immersions and
Scott-open subdcpos.

\begin{lemma}
  The Scott-open subdcpos are closed under finite unions.
\end{lemma}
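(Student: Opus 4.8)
The plan is to realize every finite union of Scott-open subdcpos of a fixed dcpo $A$ as a pullback $\chi^*\top$ of the dominance $\top\colon\ObjTerm\to\Sigma$ along a suitable continuous map $\chi\colon A\to\Sigma$. This suffices because such a pullback is automatically a subdcpo --- it is closed under directed suprema computed in $A$, since $\chi$ is continuous --- and the resulting inclusion is by definition a Scott-open immersion whose image is itself; so it is a Scott-open subdcpo. By an evident induction it is then enough to treat the empty union and binary unions.

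For the empty union, the constant map $A\to\ObjTerm\to\Sigma$ picking out $\bot\in\Sigma$ is continuous (every map out of $\ObjTerm$ is, and constant maps of dcpos are continuous under the convention that directed sets are non-empty), and the subobject of $A$ it classifies is the one cut out by the false proposition, namely the initial subdcpo $\emptyset\subseteq A$; hence $\emptyset$ is Scott-open.

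For binary unions, fix Scott-open $U,V\subseteq A$ classified by continuous $\chi_U,\chi_V\colon A\to\Sigma$. Recalling that $\vrt\Sigma=\Omega$ carries the join structure of $\Omega$, I would first observe that the disjunction $\lor\colon\Sigma\times\Sigma\to\Sigma$ is continuous --- indeed it preserves \emph{all} suprema, by the constructively (hence internally) valid frame identity $\bigvee_i(a_i\lor b_i)=\left(\bigvee_i a_i\right)\lor\left(\bigvee_i b_i\right)$. Consequently $\chi_U\lor\chi_V\coloneqq{\lor}\circ\langle\chi_U,\chi_V\rangle\colon A\to\Sigma$ is continuous, so $(\chi_U\lor\chi_V)^*\top$ is a Scott-open subdcpo. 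It remains to identify it with $U\cup V$: pullbacks of dcpos along continuous maps are computed on underlying presheaves, so the underlying subobject of $(\chi_U\lor\chi_V)^*\top$ is the one classified by $a\mapsto\chi_U(a)\lor\chi_V(a)\colon A\to\Omega$, and the subobject classified by a disjunction is the union (join in $\mathrm{Sub}(A)$) of the two subobjects classified by the disjuncts, \ie $U\cup V$. Thus $U\cup V=(\chi_U\lor\chi_V)^*\top$, and closure under finite unions follows.

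The main obstacle is really just the continuity of $\lor\colon\Sigma\times\Sigma\to\Sigma$ together with the bookkeeping that matches $(\chi_U\lor\chi_V)^*\top$ with the set-theoretic union of $U$ and $V$ --- the former reducing to distributivity of binary joins over arbitrary joins in $\Omega$, the latter to the standard topos-theoretic fact that joins of subobjects compute unions (note that one cannot argue elementwise here, since $\phi\lor\psi=\top$ need not entail $\phi=\top$ or $\psi=\top$). Once both are in place the lemma is immediate, and one does not incur a separate obligation to check that $U\cup V$ is directed-complete, because ``Scott-open subdcpo'' is the \emph{property} of being a pullback of $\top\colon\ObjTerm\to\Sigma$.
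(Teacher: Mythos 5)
Your proof is correct and follows essentially the same route as the paper: the union of Scott-open subdcpos is exhibited as the pullback of the dominance along the join $\bigvee_i \phi_i$ of the characteristic maps. You simply supply more of the details the paper elides (continuity of $\lor$ via distributivity of binary over arbitrary joins, and the identification of the classified subobject with the union of subobjects), which is fine.
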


\begin{proof}
  We mean that given a finite family of subobjects $\prn{U_i\in\SUB{A} \mid
  i\in I}$ that are each Scott-open, the union $\Disj{i\in I}{U_i}$
  exists and moreover is Scott-open. This is not difficult to see, as we may
  compute $\Disj{i}{U_i}$ using the operation
  $\Mor[\lor]{\Sigma\Sup{I}}{\Sigma}$. Let $\Mor[\phi_i]{A}{\Sigma}$
  be the characteristic maps of each $U_i$; then we have
  $\Mor[\Disj{i}{\phi_i}]{A}{\Sigma}$ which will serve as the characteristic
  map for $\Mor|open immersion|{\Disj{i}{U_i}}{A}$.
\end{proof}

\begin{lemma}\label{lem:stable-unions}
  Finite unions of Scott-open subobjects in $\CmpCat$ are \DefEmph{stable} under pullback.
\end{lemma}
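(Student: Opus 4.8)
The plan is to reduce the claim to the defining universal property of the dominance $\Sigma$, under which Scott-open subobjects of an object $A$ are in bijection with characteristic maps $\Mor[\phi]{A}{\Sigma}$ and, by the preceding lemma, finite unions are computed by post-composition with the join map $\Mor[\lor]{\Sigma\Sup{I}}{\Sigma}$.

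First I would record that pullback along any $\Mor[f]{B}{A}$ carries Scott-open subobjects to Scott-open subobjects, acting on characteristic maps by precomposition: if $\Mor|open immersion|{U}{A}$ is classified by $\Mor[\phi]{A}{\Sigma}$, i.e.\ $U\cong\phi^*\top$, then pasting the defining pullback square of $U$ onto the pullback square defining $f^*U$ exhibits, by the pullback lemma, $f^*U\cong\prn{\phi\circ f}^*\top$; hence $f^*U$ is Scott-open with characteristic map $\phi\circ f$, and in particular exists as a Scott-open immersion into $B$.

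Now take a finite family $\prn{U_i\mid i\in I}$ of Scott-open subobjects of $A$ with characteristic maps $\Mor[\phi_i]{A}{\Sigma}$. By the previous lemma, $\Disj{i\in I}{U_i}$ is classified by $\Disj{i\in I}{\phi_i} = \lor\circ\langle\phi_i\rangle_{i\in I}\colon A\to\Sigma\Sup{I}\to\Sigma$. Applying the previous paragraph, $f^*\prn{\Disj{i\in I}{U_i}}$ is classified by
\[
  \prn{\Disj{i\in I}{\phi_i}}\circ f
  = \lor\circ\langle\phi_i\rangle_{i\in I}\circ f
  = \lor\circ\langle\phi_i\circ f\rangle_{i\in I}
  = \Disj{i\in I}{\prn{\phi_i\circ f}},
\]
which is exactly the characteristic map of $\Disj{i\in I}{\prn{f^*U_i}}$. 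Since characteristic maps determine Scott-open subobjects uniquely, $f^*\prn{\Disj{i\in I}{U_i}} = \Disj{i\in I}{\prn{f^*U_i}}$ as subobjects of $B$; the case $I=\emptyset$ records that the pullback of the least Scott-open subobject (classified by the constant $\bot$) is again least. If one wants the refined van Kampen form --- that the pulled-back union cocone is again colimiting --- it follows formally, since $f^*$ moreover preserves the binary intersections $U_i\cap U_j$ occurring in that cocone.

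I do not expect a genuine obstacle here: the argument is bookkeeping on top of the preceding lemma together with the elementary fact that composites of pullback squares are pullback squares. The only point deserving a moment's care is that the join in question is the \emph{pointwise} one --- that $\Disj{i}{\phi_i}$ really denotes $\lor\circ\langle\phi_i\rangle$ rather than some other join computed in $\SUB{A}$ --- which is precisely what the previous lemma provides, relying on $\Sigma=\Lift{\ObjTerm}$ being closed under the finite joins used to construct it.
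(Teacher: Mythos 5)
Your proposal is correct and takes essentially the same route as the paper: both reduce the claim to the observation that the finite union is classified by the composite of the tuple $\langle\phi_i\rangle_{i\in I}\colon A\to\Sigma^{I}$ with the join map $\lor\colon\Sigma^{I}\to\Sigma$, and that pullback along $f$ acts on characteristic maps by precomposition, so that $f^{*}\bigl(\bigvee_i U_i\bigr)$ and $\bigvee_i f^{*}U_i$ are classified by the same map $\lor\circ\langle\phi_i\circ f\rangle_{i\in I}$ and hence coincide by uniqueness of classifiers for the dominance. Your closing remark about the cocone being colimiting is not needed here (the paper handles that separately in the effectivity lemma), but it does no harm.
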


\begin{proof}
  We consider the partition of a dcpo $X\in \CmpCat$ as the union of
  a finite family of Scott-open subdcpos $\Mor|open immersion|{U_i}{X}$ indexed in some finite set $I$. We must check the union cocone is
  preserved by pulling back along some $\Mor[k]{Y}{X}$, \ie we need to reconstruct
  $Y$ as the union of the Scott-open subdcpos $k^*U_i$. That $X = \Disj{i}{U_i}$ means that the following square is cartesian:
  \[
    \begin{tikzpicture}[diagram]
      \SpliceDiagramSquare<l/>{
        west/style = double,
        east/style = open immersion,
        nw/style = pullback,
        ne/style = pullback,
        nw = X,
        sw = X,
        ne = \cdots,
        se = \Sigma\Sup{I},
        south = \phi_\bullet,
      }
      \SpliceDiagramSquare<r/>{
        glue = west,
        glue target = l/,
        east/style = open immersion,
        east = \top,
        ne = \ObjTerm{\CmpCat},
        se = \Sigma,
        south = \prn{\lor},
      }
    \end{tikzpicture}
  \]

  Pulling back further along $\Mor[k]{Y}{X}$, we have:
  \[
    \begin{tikzpicture}[diagram]
      \SpliceDiagramSquare<ll/>{
        nw = Y,
        sw = Y,
        ne = X,
        se = X,
        south = k,
        west/style = double,
        east/style = double,
        nw/style = pullback,
        ne/style = pullback,
      }
      \SpliceDiagramSquare<l/>{
        glue = west,
        glue target = ll/,
        ne = \cdots,
        se = \Sigma\Sup{I},
        east/style = open immersion,
        south = \phi_\bullet,
        ne/style = pullback,
      }
      \SpliceDiagramSquare<r/>{
        glue = west,
        glue target = l/,
        east/style = open immersion,
        east = \top,
        ne = \ObjTerm{\CmpCat},
        se = \Sigma,
        south = \prn{\lor},
      }
    \end{tikzpicture}
  \]

  But the characteristic map $k;\phi_\bullet;\lor$ is equal to
  $\prn{\lambda i. k;\phi_i};\lor$. Therefore $Y$ is the union of $k^*U_i$
  because $k;\phi_i$ is the characteristic map for $k^*U_i$.
\end{proof}

\begin{lemma}\label{lem:dir-set-isect}
  Let $D$ be a directed subset of a dcpo $A\in\CmpCat$ and let $U\subseteq A$
  be a Scott-open subdcpo of $A$ such that $U\cap D$ is inhabited; then $U\cap D$
  is also directed.
\end{lemma}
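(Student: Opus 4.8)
The plan is to reduce the statement to the fact that a Scott-open subdcpo is \emph{upward closed}. Recall that in the constructive development of De Jong and Escardó a subset $S\subseteq A$ is \emph{directed} when it is inhabited and, for all $x,y\in S$, there exists $z\in S$ with $x\leq z$ and $y\leq z$. Since $U\cap D$ is inhabited by hypothesis, the only thing left to check is this pairwise upper-bound condition for $U\cap D$.

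First I would observe that $U$ is upward closed in $A$: by definition a Scott-open subdcpo arises by pullback from the dominance $\Sigma$, so it is classified by a continuous — in particular monotone — map $\Mor[\phi]{A}{\Sigma}$ with $U = \Compr{a:A}{\phi\,a = \top}$; as $\top$ is the greatest element of the carrier of $\Sigma$, whenever $a\in U$ and $a\leq a'$ we have $\phi\,a' = \top$, i.e. $a'\in U$. Then, given $x,y\in U\cap D$, directedness of $D$ produces some $z\in D$ with $x\leq z$ and $y\leq z$, and upward closure of $U$ (applied to $x\in U$ and $x\leq z$) yields $z\in U$; hence $z\in U\cap D$ is the desired upper bound of $x$ and $y$, completing the proof that $U\cap D$ is directed.

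There is no real obstacle here: the argument uses only the monotonicity of the characteristic map of $U$, and not the stronger directed-join inaccessibility that separates Scott-open subsets from arbitrary upper sets. The one point worth being careful about is to invoke the \emph{right} formulation of directedness — inhabited together with the existence of pairwise upper bounds — so that the hypothesis ``$U\cap D$ is inhabited'' supplies exactly the missing half and the rest follows from the already-established directedness of $D$.
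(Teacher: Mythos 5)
Your proof is correct and follows essentially the same route as the paper's: both arguments take an upper bound $z\in D$ of $x,y$ supplied by directedness of $D$, and then use the monotonicity of the Scott-continuous characteristic map $\Mor[\phi_U]{A}{\Sigma}$ (together with $\top$ being the top element of $\Sigma$) to conclude that $z\in U$. Your explicit isolation of the upward-closure of $U$ as an intermediate observation is just a mild repackaging of the same step.
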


\begin{proof}
  Fixing $x,y\in U\cap D$ we must find some $z\in U\cap D$ such that $x,y\leq
  z$. Because $D$ is directed there does exist $z\in D$ such that $x,y\leq z$.
  As $U$ is a $\Sigma$-subset, there exists a unique Scott continuous
  characteristic map $\Mor[\phi_U]{A}{\Sigma}$ encoding $U$. A continuous map is
  also monotone, so from $x\leq z$ we conclude $\phi_Ux\leq\phi_Uz$. Therefore
  $z\in U$ follows from $x\in U$.
\end{proof}

\begin{lemma}\label{lem:effective-unions}
  Finite unions of Scott-open subobjects in $\CmpCat$ are \DefEmph{effective} in the sense of
  \citet[Barr]{barr:1988} and \citet[Garner and Lack]{garner-lack:2012}.
\end{lemma}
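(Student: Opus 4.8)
The companion descent property — stability under pullback — was already settled in \cref{lem:stable-unions}, so what remains is effectiveness, i.e.\ that each finite union of Scott-open subobjects realizes the canonical colimit. The plan is to reduce to the binary case and verify the defining square by hand, working internally to $\Psh{\LVL}$ as in the rest of this section. The nullary union is the initial dcpo, where effectiveness is vacuous; the $n$-ary case follows from the binary one by a routine induction using distributivity of the subobject lattice, the pasting lemma for pushouts, and the fact that finite unions of Scott-open subdcpos are again Scott-open. So the content is this: given Scott-open subdcpos $U,V\subseteq A$ with intersection $W\coloneqq U\cap V$ — again Scott-open, since Scott-open immersions are closed under composition and pullback — the commuting square with corners $W$, $U$, $V$, and the subobject union $U\cup V$ (Scott-open by the closure lemma, with characteristic map $\phi_U\lor\phi_V$, carrying the order restricted from $A$) is a pushout in $\CmpCat$. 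It is automatically a pullback, so being a pushout is the whole claim.

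First I would record that every Scott-open subset is automatically a subdcpo: it is upward closed, and directed subsets are inhabited by convention, so the supremum in $A$ of a directed subset of the Scott-open set dominates some element of it and hence lies in it; in particular $U\cup V$ is a dcpo. Now fix continuous maps $\Mor[f]{U}{Z}$ and $\Mor[g]{V}{Z}$ agreeing on $W$. I construct the mediating function $\Mor[h]{U\cup V}{Z}$ internally: for $x:U\cup V$, i.e.\ $x:A$ with $\phi_U\,x\lor\phi_V\,x$, the subobject $\Compr{z:Z}{\prn{\phi_U\,x\Rightarrow z=f\,x}\land\prn{\phi_V\,x\Rightarrow z=g\,x}}$ is a singleton — a subsingleton by $\lor$-elimination, and inhabited by $\lor$-elimination together with the hypothesis that $f$ and $g$ agree on $W$ — and $h\,x$ is its unique element. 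By construction $h$ restricts to $f$ on $U$ and to $g$ on $V$, and any $\CmpCat$-map with that property has the same underlying function since the forgetful functor to $\Psh{\LVL}$ is faithful; so it remains only to show $h$ is continuous.

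Monotonicity and continuity of $h$ are each established by internal case analysis on Scott-openness disjunctions, which is legitimate because the goals are propositions. For monotonicity: given $x\leq y$ in $U\cup V$, case on whether $x$ lies in $U$ or in $V$; since Scott-open subsets are upward closed, the larger element $y$ lies in the same piece, so $h\,x\leq h\,y$ follows from monotonicity of $f$ or of $g$. For continuity: given a directed $D\subseteq U\cup V$ with supremum $x$, the goal $h\,x=\bigvee_{d\in D}h\,d$ is a proposition, so case on $x\in U$ versus $x\in V$; supposing $x\in U$, Scott-openness of $U$ in $A$ forces $D\cap U$ to be inhabited, hence directed by \cref{lem:dir-set-isect}, and a short cofinality argument — using that $U$ is upward closed and $D$ is directed — identifies its supremum with $x$; then continuity of $f$ gives $h\,x=f\,x=\bigvee_{d\in D\cap U}h\,d\leq\bigvee_{d\in D}h\,d$, while the reverse inequality is just monotonicity of $h$.

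The one point needing care is keeping the internal-logical and order-theoretic reasoning correctly interleaved: one cannot in general decide whether a point lies in $U$ or in $V$, so every case split must be arranged to prove a proposition — an inequality, an equation in $Z$, or an inhabitation statement — which is precisely why $h$ is defined through a singleton subobject rather than by a naive case split. Granting that discipline, the proof rests only on facts already in hand: closure of Scott-open subdcpos under finite unions, and \cref{lem:dir-set-isect}.
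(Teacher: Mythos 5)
Your proof is correct and follows essentially the same route as the paper's: obtain the mediating map at the level of underlying points from effectivity of unions in the ambient topos (your internal singleton subobject is exactly this fact rendered in the internal logic), then verify Scott-continuity by observing that the supremum of a directed set lands in one of the opens, so that the intersection with the directed set is inhabited, directed by \cref{lem:dir-set-isect}, and cofinal. The remaining differences are presentational—you reduce to the binary case where the paper treats the finite family as a wide pushout directly, and you make explicit the inhabitation and cofinality steps that the paper leaves implicit.
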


\begin{proof}
  Let $\Mor|open immersion|{U_i}{X}$ be a finite family of Scott-open
  immersions in $\CmpCat$ indexed in $i\in I$, such that each $U_i$ is classified by $\Mor[\phi_i]{X}{\Sigma}$ respectively.
  We recall that any Scott-open immersion is an order-embedding in the sense
  each $U_i$ has the order induced by restricting that of $X$ to points
  satisfying $\phi_i$. We recall that $\vrt{\Sigma}$ is the subobject
  classifier of $\Sh{\LvlTop}$ and $\Mor[\lor]{\Sigma\Sup{I}}{\Sigma}$
  is tracked in $\Sh{\LvlTop}$ by the actual topos disjunction. Therefore the
  underlying sheaf of points of the union $\Disj{i}{U_i}$ can be computed in
  $\Sh{\LvlTop}$ as the \emph{wide} (effective) pushout of all the
  projections $\Mor|>->|[p_j]{\vrt{\Conj{i}{U_i}} =
  \Conj{i}{\vrt{U_i}}}{\vrt{U_j}}$, as any topos has effective unions of small
  arity.  The effectivity property for wide pushouts of monomorphisms in a
  topos means that $\vrt{\Conj{i}{U_i}}$ is the wide pullback of all the
  $\Mor|>->|{\vrt{U_j}}{\vrt{\Disj{i}{U_i}}}$.

  Then $\Disj{i}{U_i}$ is the Scott-open subdcpo of $X$ spanned by
  $\vrt{\Disj{i}{U_i}}$ with order induced from $X$. We note that
  $\Conj{i}{U_i}$ remains the wide pullback in $\CmpCat$ of all the $\Mor|open
  immersion|{U_j}{\Disj{i}{U_i}}$ because pullbacks of dcpos are computed as in
  $\Sh{\LvlTop}$.  It remains to show that the $\Disj{i}{U_i}$ is \emph{also}
  the wide pushout in $\CmpCat$ of all the $\Mor|open
  immersion|{\Conj{i}{U_i}}{U_j}$. To this end, fix any cocone
  $\Mor[h_\bullet]{U_\bullet}{\brc{Y}}$ in which $Y$ is an arbitrary object of $\CmpCat$;
  at the level of points, there is a unique function
  $\Mor[h_\infty]{\vrt{\Disj{i}{U_i}}}{Y}$ making the following diagram
  commute in $\brk{I,\Sh{\LvlTop}}$:
  \[
    \begin{tikzpicture}[diagram]
      \node (0) {$U_\bullet$};
      \node (1) [right = of 0] {$\brc{\vrt{\Disj{i}{U_i}}}$};
      \node (2) [below = of 1] {$\brc{\vrt{Y}}$};
      \draw[->] (0) to (1);
      \draw[exists,->] (1) to node [right] {$\exists! h_\infty$} (2);
      \draw[->] (0) to node [sloped,below] {$\vrt{h_\bullet}$} (2);
    \end{tikzpicture}
  \]

  Therefore it remains to show that
  $\Mor[h_\infty]{\vrt{\Disj{i}{U_i}}}{\vrt{Y}}$ is Scott continuous and hence
  tracks a morphism $\Mor{\Disj{i}{U_i}}{Y}$ in $\CmpCat$. We fix a directed
  subset $D\subseteq \Disj{i}{U_i}$ to check that $h_\infty\prn{\DLub{}D} =
  \DLub{x\in D}h_\infty\,x$. Note that that $\DLub{}{D}$ must lie in $U_k$ for
  some $k$; observe that $\DLub{}{D}$ is also the least upper bound of the
  subset $D\cap U_k\subseteq U_k$ which remains directed by
  \cref{lem:dir-set-isect}. Because $h_k$ is assumed Scott continuous, we have
  $h_\infty\prn{\DLub{}{D}} = h_k\prn{\DLub{}{D}} = \DLub{x\in D\cap
  U_k}{h_kx}$; but this is equal to $\DLub{x\in D}h_\infty\,x$ because any
  element of $D$ is bounded by an element of $U$, namely $\DLub{}{D}$ itself.
\end{proof}

\begin{lemma}\label{lem:coprod-inj-scott-open}
  Coproduct injections in $\CmpCat$ are Scott-open immersions.
\end{lemma}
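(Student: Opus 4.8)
The plan is to show that the image of each coproduct injection $\Mor[\iota_A]{A}{A+B}$ in $\CmpCat=\dcpo$ is a \emph{Scott-open subdcpo}, from which the lemma follows immediately: a Scott-open subdcpo inclusion is, by the definition recalled above, (isomorphic to) a pullback of the dominance $\Mor|open immersion|[\top]{\ObjTerm{\CmpCat}}{\Sigma}$, and the order that $A$ inherits as a subobject of $A+B$ will turn out to be exactly $\leq_A$, so $\iota_A$ is literally --- not merely up to isomorphism --- a Scott-open immersion. The second injection $\iota_B$, and more generally all finite coproduct injections, are handled symmetrically.

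First I would recall the concrete description of the coproduct of internal dcpos in $\Psh{\LVL}$, which is computed on underlying objects (this is part of the constructive domain theory of \citet[De Jong and Escard\'o]{dejong-escardo:2021}): the carrier is the topos coproduct $A+B$, and since products distribute over coproducts we have $\prn{A+B}\times\prn{A+B}\cong A\times A + A\times B + B\times A + B\times B$, with the order being the subobject that restricts to $\leq_A$ on $A\times A$, to $\leq_B$ on $B\times B$, and is empty over $A\times B$ and $B\times A$. Two consequences follow intuitionistically from just this description together with the disjointness of topos coproducts: (i) any two comparable elements of $A+B$ lie in the same summand, so the subobject $A\hookrightarrow A+B$ is simultaneously upward- and downward-closed and inherits precisely the order $\leq_A$; and (ii) since directed subsets of $A+B$ are inhabited by convention, they lie entirely in one summand, and hence their suprema are computed there (this is also why $A+B$ is a dcpo at all).

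It then remains to check that $A\subseteq A+B$ is Scott-open, \ie upward-closed and inaccessible by directed suprema. Upward-closure is part of (i). For inaccessibility, let $D\subseteq A+B$ be directed with $\DLub{}{D}\in A$; every $d\in D$ satisfies $d\leq\DLub{}{D}$, and $A$ is downward-closed by (i), whence $D\subseteq A$, and as $D$ is inhabited so is $D\cap A$. Thus $A$ is a Scott-open subdcpo, and its characteristic map $\Mor[\chi_A]{A+B}{\Sigma}$ --- Scott-continuity of a map into the Sierpi\'nski dcpo being equivalent to Scott-openness of the classified subset --- exhibits $\iota_A$ as the pullback of $\top$ along $\chi_A$, so $\iota_A$ is a Scott-open immersion. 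I do not anticipate a genuine obstacle: the whole argument is elementary, and the only thing to watch is that it be read internally to $\Psh{\LVL}$ and therefore constructively, which it is --- it invokes only distributivity of products over coproducts, disjointness of coproducts, the stipulation that directed sets are inhabited, and the Scott-open/Sierpi\'nski correspondence, none of which require excluded middle. The one mildly delicate point, already folded into (i), is that the order induced on $A$ from $A+B$ agrees with $\leq_A$, so that the coproduct injection itself (and not just an isomorph) is a Scott-open subdcpo inclusion.
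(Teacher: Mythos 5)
Your proof is correct and follows essentially the same route as the paper's: both exhibit the injection as the pullback of $\top$ along the characteristic map that is $\top$ on the chosen summand and $\bot$ on the others. The paper obtains the continuity of this map for free from the universal property of the coproduct in $\dcpo$ (copairing the constant maps $\top$ and $\bot$), whereas you re-derive it by hand from the explicit order on $A+B$ and the Scott-open/Sierpi\'nski correspondence --- more work, but the same idea, and your argument extends verbatim from the binary case to the set-indexed coproducts the paper actually treats.
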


\begin{proof}
  Let $I$ be a set and let $A_i$ be a family of dcpos indexed in $i\in I$. For
  $k\in I$ we must verify that the injection $\Mor|>->|{A_k}{\Coprod{i}{A_i}}$
  is in fact a Scott-open immersion. We do so by constructing the unique characteristic map $\Mor[\phi_k]{\Coprod{i}{A_i}}{\Sigma}$ making the following square cartesian:
  \[
    \DiagramSquare{
      ne = \ObjTerm{\CmpCat},
      se = \Sigma,
      sw = \Coprod{i}{A_i},
      nw = A_k,
      nw/style = pullback,
      south/style = {->,exists},
      east/style = open immersion,
      west/style = open immersion,
    }
  \]

  Of course, a map $\Mor{\Coprod{i}{A_i}}{\Sigma}$ is the same as for each
  $i\in I$, a map $\Mor{A_i}{\Sigma}$. When $i=k$ we return $\top\in\Sigma$ and
  otherwise we return $\bot\in\Sigma$.
\end{proof}

\begin{corollary}\label{cor:coproducts-are-unions}
  Any coproduct in $\CmpCat$ is the union of a family of Scott-open immersions.
\end{corollary}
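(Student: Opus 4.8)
The plan is to reduce the claim to \cref{lem:coprod-inj-scott-open} together with the observation that coproduct injections are jointly covering, transported along the underlying-sheaf-of-points functor $\Mor[\vrt{-}]{\CmpCat}{\Sh{\LvlTop}}$. Fixing a family $\prn{A_i \mid i\in I}$ of dcpos, \cref{lem:coprod-inj-scott-open} gives that each coproduct injection $\Mor|open immersion|{A_k}{\Coprod{i}{A_i}}$ is a Scott-open immersion, hence picks out a Scott-open subdcpo of $\Coprod{i}{A_i}$ which I continue to denote $A_k$; what remains is to show that the join $\Disj{k}{A_k}$ of this family of subobjects exists and is all of $\Coprod{i}{A_i}$.

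First I would record that an arbitrary union of Scott-open subdcpos of a fixed dcpo exists and is again Scott-open. Working internally to $\Sh{\LvlTop}$: a union of upward-closed subsets is upward-closed, and if a directed supremum $\DLub{}{D}$ lies in $\Disj{j}{U_j}$ then it already lies in some $U_j$, so $D\cap U_j$ --- and a fortiori $D\cap\Disj{j}{U_j}$ --- is inhabited. Since, as recalled in the discussion of Scott-open subdcpos above, a Scott-open subdcpo of $A$ is determined by its underlying subobject of points together with the order restricted from $A$, the join $\Disj{k}{A_k}$ in the poset of Scott-open subdcpos of $\Coprod{i}{A_i}$ is exactly the Scott-open subdcpo spanned by $\Disj{k}{\vrt{A_k}}\subseteq\vrt{\Coprod{i}{A_i}}$.

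Next I would check that $\Disj{k}{\vrt{A_k}} = \vrt{\Coprod{i}{A_i}}$ as subobjects in $\Sh{\LvlTop}$. The functor $\vrt{-}$ preserves coproducts --- the coproduct of dcpos is computed on underlying sheaves, the order being extra structure, as is already used in the proof of \cref{lem:coprod-inj-scott-open} --- so $\vrt{\Coprod{i}{A_i}}\cong\Coprod{i}{\vrt{A_i}}$; and in the topos $\Sh{\LvlTop}$ any coproduct is the union of the images of its (monic) injections. Chaining these facts then exhibits $\Coprod{i}{A_i}$ as the union of the Scott-open immersions $\Mor|open immersion|{A_k}{\Coprod{i}{A_i}}$, as desired.

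I expect the proof to be short, with the only point requiring explicit care being the infinitary one: unions of Scott-open subdcpos indexed by an arbitrary set must still be Scott-open, whereas the earlier lemmas (\cref{lem:stable-unions,lem:effective-unions}) concern only finite unions. That is precisely the internal argument sketched in the second paragraph, and is the one step worth spelling out rather than deferring to the reader.
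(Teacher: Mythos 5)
Your proposal is correct and takes the only natural route, which is the one the paper intends: the paper gives no proof for this corollary, treating it as immediate from \cref{lem:coprod-inj-scott-open} together with the fact that the injections jointly exhaust the coproduct at the level of underlying points. Your explicit treatment of the infinitary case (arbitrary unions of Scott-open subdcpos are Scott-open, whereas \cref{lem:stable-unions,lem:effective-unions} only address finite unions) is a sound and worthwhile elaboration of a detail the paper leaves tacit.
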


\paragraph{A coverage on the category of dcpos}

We will impose a coverage on $\CmpCat$ that will ensure an embedding into a
Grothendieck topos that preserves finite unions of Scott-open subobjects (and
thus finite coproducts). For any $A\in \CmpCat$ we let $K\prn{A}$ be the set of
sinks given by finitely indexed families Scott-open immersions $\prn{\Mor|open
immersion|{U_i}{A}\mid i\in I}$ such that $A \cong \Disj{i\in I}{U_i}$.

\begin{lemma}
  The assignment $K$ of sets of sinks to objects of $\CmpCat$ is a cartesian coverage.
\end{lemma}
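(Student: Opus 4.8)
The plan is to unwind the definition of a cartesian coverage. First I would observe that $\CmpCat = \dcpo$ has all pullbacks --- indeed pullbacks of dcpos are computed as in the ambient topos $\Sh{\LvlTop}$ --- so the only remaining obligation is that $K$ is stable under base change. To discharge it, I would fix a covering sink $\underline{A} = \prn{\Mor|open immersion|{U_i}{A}\mid i\in I}$ with $I$ finite and $A\cong\Disj{i\in I}{U_i}$, together with an arbitrary morphism $\Mor[k]{B}{A}$, and verify that the base-changed sink $k^*\underline{A} = \prn{\Mor{k^*U_i}{B}\mid i\in I}$ again lies in $K\prn{B}$.

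There are two things to check. The first is that each $\Mor{k^*U_i}{B}$ is again a Scott-open immersion, which is immediate: Scott-open immersions are by definition exactly the pullbacks of the universal map $\Mor|open immersion|[\top]{\ObjTerm{\CmpCat}}{\Sigma}$, and a pullback of a pullback of $\top$ is again a pullback of $\top$. Thus $k^*\underline{A}$ is a family --- indexed by the same finite set $I$ --- of Scott-open immersions into $B$. The second is that $B\cong\Disj{i\in I}{k^*U_i}$, \ie that base change along $k$ carries the witnessing union to a union; this is exactly the content of \cref{lem:stable-unions}, applied to the dcpo $A$ with its partition into the $U_i$ and the map $k$, whose conclusion reconstructs $B$ as $\Disj{i\in I}{k^*U_i}$. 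Together these two points give $k^*\underline{A}\in K\prn{B}$, which is precisely stability under base change.

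I do not anticipate any genuine obstacle here, since all of the real work has been front-loaded into \cref{lem:stable-unions}, whose own proof turns on the fact that the join operation $\Mor[\lor]{\Sigma\Sup{I}}{\Sigma}$ is tracked by the topos disjunction of $\Sh{\LvlTop}$ and is therefore pullback-stable. The only minor care needed is bookkeeping: membership in $K\prn{u}$ is witnessed by the existence of an isomorphism $A\cong\Disj{i}{U_i}$ rather than a strict equality, so one must transport these isomorphisms coherently along the chosen pullbacks --- but since covering sinks are only ever considered up to isomorphism, this is routine.
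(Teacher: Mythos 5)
Your proposal is correct and matches the paper's argument, which simply cites \cref{lem:stable-unions}; you have merely spelled out the two routine verifications (pullback-stability of Scott-open immersions as pullbacks of the dominance, and preservation of the union cocone) that the paper leaves implicit.
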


\begin{proof}
  By \cref{lem:stable-unions}.
\end{proof}

\begin{lemma}
  The cartesian coverage $K$ is subcanonical.
\end{lemma}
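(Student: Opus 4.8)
The plan is to unwind the definition of subcanonicity --- that every representable presheaf $\CmpCat\prn{-,B}$ is a $K$-sheaf --- and reduce it to the effectivity of finite unions of Scott-open subobjects established in \cref{lem:effective-unions}. So fix $B\in\CmpCat$ and a covering sink $\underline{U} = \prn{\Mor|open immersion|{U_i}{A}\mid i\in I}$, where $I$ is finite and $A\cong\Disj{i\in I}{U_i}$ as Scott-open subdcpos. First I would observe that, since each $U_i\hookrightarrow A$ is a monomorphism, the fiber products $U_i\times\Sub{A}U_j$ appearing in the definition of a matching family are precisely the intersections $U_i\cap U_j$; consequently a matching family for $\underline{U}$ valued in $\CmpCat\prn{-,B}$ is exactly a family of continuous maps $\prn{\Mor[f_i]{U_i}{B}}$ that agree after restriction along the inclusions $U_i\cap U_j\hookrightarrow U_i$ and $U_i\cap U_j\hookrightarrow U_j$ --- that is, a cocone with apex $B$ under the finite diagram $D\Sub{\underline{U}}$ whose objects are the $U_i$ together with the pairwise intersections $U_i\cap U_j$ and whose maps are the canonical inclusions.

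Next I would invoke \cref{lem:effective-unions}: because finite unions of Scott-open subobjects in $\CmpCat$ are effective, the union $\Disj{i\in I}{U_i}\cong A$ is the colimit of $D\Sub{\underline{U}}$ \emph{computed in $\CmpCat$} --- for a two-element sink this is the pushout of $U_1\hookleftarrow U_1\cap U_2\hookrightarrow U_2$, and the general finite case follows by induction on the size of $I$. The essential point, already secured inside the proof of \cref{lem:effective-unions}, is that this colimit lives in $\CmpCat$ and not merely in $\Sh{\LvlTop}$, so that amalgamating a matching family produces an honest Scott-continuous map rather than just a function on underlying sets. It then follows that every cocone $\prn{f_i}$ under $D\Sub{\underline{U}}$ factors through $A$ via a unique continuous $\Mor[f]{A}{B}$ restricting to $f_i$ on each $U_i$, which is exactly the bijection between matching families for $\underline{U}$ and elements of $\CmpCat\prn{A,B}$ demanded by the sheaf condition. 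Since $B$ and $\underline{U}$ were arbitrary, $K$ is subcanonical. Uniqueness can moreover be seen by hand: the underlying sheaf of points of $A = \Disj{i}{U_i}$ is the stagewise union $\Disj{i}{\vrt{U_i}}$ of subsheaves in $\Sh{\LvlTop}$, so every point of $A$ lies in some $\vrt{U_i}$ and two maps agreeing on all the $U_i$ must coincide.

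I expect this to be essentially a corollary of \cref{lem:effective-unions}, with no genuine obstacle remaining. The only substantive content of the statement --- that a compatible family of continuous maps on the members of a covering sink glues to a \emph{continuous} map on $A$ --- is precisely the descent property of Scott-open unions, and had \cref{lem:effective-unions} not been available, the hard part would have been exactly this: verifying Scott-continuity of the glued map at the supremum of a directed subset of $A$ that straddles several of the $U_i$. But as \cref{lem:dir-set-isect} and the argument of \cref{lem:effective-unions} show, any such supremum already lies inside one of the $U_i$, where it is also the supremum of the directed subset obtained by intersecting, so continuity is inherited from that single $f_i$; everything else is a routine translation between the vocabulary of matching families and that of colimits.
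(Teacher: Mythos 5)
Your proof is correct and follows essentially the same route as the paper's: both reduce subcanonicity to the effectivity of finite unions of Scott-open subobjects (\cref{lem:effective-unions}), identifying matching families for a covering sink with cocones under the diagram of the $U_i$ and their pairwise intersections, whose colimit is the union computed in $\CmpCat$. The paper merely spells out the binary case and leaves the general finite case implicit, whereas you make the induction and the continuity-of-the-amalgamation point explicit; there is no substantive difference.
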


\begin{proof}
  Let $A\in\CmpCat$ be a dcpo; we must check that $\Yo[\CmpCat]{A}$ is a sheaf
  for $K$. This follows from the fact that unions of Scott-open subobjects are
  effective (\cref{lem:effective-unions}).  Indeed, considering only the binary case for simplicity, we must check
  the collection of pairs of morphisms $\Mor[a_U]{U}{A}$ and $\Mor[a_V]{V}{A}$
  that agree on the intersection $U\land V$ is bijective with the collection of
  morphisms $\Mor{U\lor V}{A}$; this follows immediately by effectivity which guarantees that the following square is both cartesian and cocartesian:
  \[
    \DiagramSquare{
      nw/style = pullback,
      se/style = pushout,
      west/style = open immersion,
      east/style = open immersion,
      north/style = open immersion,
      south/style = open immersion,
      nw = U\land V,
      sw = V,
      ne = U,
      se = U\lor V
    }
    \qedhere
  \]
\end{proof}

\begin{construction}\label{def:dcpo-localization}
  We will write $\EmbMor{\CmpTop}{\PrTop{\CmpCat}}$ for the subtopos determined
  by localizing $\Psh{\CCat}$ at the coverage $K$.
\end{construction}

Because the localization is subcanonical, the Yoneda
embedding $\EmbMor[\Yo[\CmpCat]]{\CmpCat}{\Psh{\CmpCat}}$ factors through a fully
faithful functor $\EmbMor[\TildeYo]{\CmpCat}{\Sh{\CmpTop}}$:
\[
  \begin{tikzpicture}[diagram]
    \node (D) {$\CmpCat$};
    \node (Psh) [right = of D] {$\Psh{\CmpCat}$};
    \node (Sh) [below = of D] {$\Sh{\CmpTop}$};
    \draw[embedding] (D) to node [above] {$\Yo[\CmpCat]$} (Psh);
    \draw[embedding] (D) to node [left] {$\TildeYo$} (Sh);
    \draw[embedding] (Sh) to (Psh);
  \end{tikzpicture}
\]

\begin{fact}
  The subcategory $\Sh{\CmpTop}\subseteq\Psh{\CmpCat}$ is spanned precisely by
  the presheaves that send finite unions of Scott-open subobjects in $\CmpCat$ to
  pullbacks in $\SET$.
\end{fact}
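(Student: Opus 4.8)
The plan is to unwind the definition of $\Sh{\CmpTop}$ as the category of sheaves for the cartesian coverage $K$ of \cref{def:dcpo-localization}. Localizing a presheaf topos at a coverage yields precisely the sheaves for that coverage, so $F\in\Psh{\CmpCat}$ lies in $\Sh{\CmpTop}$ if and only if, for every dcpo $A$ and every covering sink $\prn{U_i\hookrightarrow A\mid i\in I}\in K\prn{A}$ --- that is, every finite family of Scott-open immersions with $A\cong\Disj{i\in I}{U_i}$ --- restriction exhibits $F\prn{A}$ as the set of matching families for that sink. The whole content is then to recognize this set of matching families as the value of $F$ on a colimit in $\CmpCat$.

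First I would compute the fibre products appearing in the matching-family condition. Each $U_i\hookrightarrow A$ is a monomorphism, and pullbacks of dcpos are computed on underlying sheaves as in $\Sh{\LvlTop}$ (as already used in the proof of \cref{lem:effective-unions}), so $U_i\times_A U_j$ is just the intersection $U_i\land U_j$ of Scott-open subdcpos of $A$. A matching family for the sink is therefore exactly a tuple $\prn{x_i\in F\prn{U_i}}_{i\in I}$ whose components agree after restriction to the $U_i\land U_j$, \ie an element of the limit in $\SET$ of the diagram assembled from the sets $F\prn{U_i}$, the sets $F\prn{U_i\land U_j}$, and the restriction maps between them.

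Next I would invoke effectivity of unions. By \cref{lem:effective-unions} the union cocone of a finite family of Scott-open subdcpos of $A$ is effective, so $A\cong\Disj{i\in I}{U_i}$, equipped with the inclusions, is the colimit in $\CmpCat$ of the diagram formed by the $U_i$ and their intersections; in the binary case this says that $A=U\lor V$ is the pushout of $U\leftarrow U\land V\to V$, and in the empty case that the empty union is the initial dcpo $\ObjInit$. Since $F$ is contravariant it carries this colimit cocone to a cone over the opposite diagram in $\SET$, and by the previous paragraph the comparison of that cone with the limit is precisely the comparison of $F\prn{A}$ with the set of matching families. Hence $F$ is a $K$-sheaf if and only if it sends every such colimit to a limit in $\SET$: in the binary case, that the pushout square of $U\lor V$ is sent to a pullback square, and in the empty case, that $F$ sends $\ObjInit$ to a one-element set. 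Reading ``sends finite unions of Scott-open subobjects to pullbacks'' so as to include the degenerate empty union, this is the asserted characterization. The reduction of arbitrary finite covers to the binary and nullary ones is routine --- the Grothendieck topology generated by $K$ is generated by its binary and nullary covers, using that binary intersection of Scott-open subobjects distributes over finite union, and concretely the matching-family condition for an $n$-ary cover follows from the binary one by induction on $n$ --- so no condition beyond pullback-preservation is imposed.

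I expect the main obstacle to be bookkeeping rather than mathematical depth: one must be careful that ``pullback'' in the statement is to be read as the binary instance of a limit cone over the effective-union diagram, and must remember to include the degenerate empty union so that $K$-sheaves are forced to preserve $\ObjInit$; the substance is then carried entirely by \cref{lem:effective-unions}, which supplies the colimit presentations of finite unions that make the sheaf condition coincide with pullback-preservation.
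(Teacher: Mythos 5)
Your proof is correct; the paper in fact records this only as an unproved \emph{fact}, and your argument is exactly the intended one: unwind the sheaf condition for the cartesian coverage $K$ of \cref{def:dcpo-localization}, identify the fibre products in the matching-family condition with intersections of Scott-open subdcpos, and use the effectivity of finite unions (\cref{lem:effective-unions}) to recognize the sheaf condition as pullback-preservation, reducing $n$-ary covers to the binary and nullary cases. Your care about the degenerate empty union (forcing $F(\ObjInit)$ to be a singleton) and about distributivity of Scott-open intersections over finite unions is exactly the bookkeeping the paper elides.
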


\paragraph{Conservative extension and basic axioms of SDT}

The conservative extension result (Theorem~3.4) of \citet[Fiore and Plotkin]{fiore-plotkin:1996}
can be adapted to our setting to construct a model of synthetic domain theory
in $\Sh{\CmpTop}$ satisfying
\cref{axiom:sigma-finite-joins,axiom:sigma-predomain,axiom:omega-inductive}.
extending the model of axiomatic domain theory constructed in \cref{sec:kadt-model} (we verify
the remaining \cref{axiom:lvls,axiom:finset-transparent-predomain} in \cref{sec:sdt-info-flow}). The
difference between our setting and \opcit is that the latter localize with the 0-ary
extensive topology, whereas we do so with a more sophisticated topology (\cref{def:dcpo-localization}).
Writing $\PREDOM\subseteq\Sh{\CmpTop}$ for the full subcategory spanned by
predomains, the Yoneda embedding $\EmbMor{\CmpCat}{\Sh{\CmpTop}}$ factors through $\PREDOM$ and preserves limits, exponentials, the final lift-coalgebra $\ShLub$, countable coproducts, and the
dominance $\Sigma$; moreover, the lifting monad is extended by each embedding in
the sense that the following squares commute up to isomorphism:
\[
  \begin{tikzpicture}[diagram]
    \SpliceDiagramSquare<0/>{
      north/style = embedding,
      south/style = embedding,
      east/node/style = upright desc,
      nw = \CmpCat,
      ne = \PREDOM,
      sw = \CmpCat,
      se = \PREDOM,
      west = \Lift,
      east = \Lift,
    }
    \SpliceDiagramSquare<1/>{
      glue = west,
      glue target = 0/,
      north/style = embedding,
      south/style = embedding,
      ne = \Sh{\CmpTop},
      se = \Sh{\CmpTop},
      east = \Lift,
    }
  \end{tikzpicture}
\]

\AxSigmaFiniteJoins*

\begin{proof}
  Finite joins of $\Sigma$-subobjects are preserved by the Yoneda embedding by
  virtue of the localization that we have imposed on $\Psh{\CmpCat}$ in
  \cref{def:dcpo-localization}.
\end{proof}

\NewDocumentCommand\ExtIncl{}{\boldsymbol{e}}

\begin{computation}\label{cmp:lifting-monad-psh-sh}
  The original lifting monad on $\CmpCat$ lifts into $\Psh{\CmpCat}$ by Yoneda extension in the
  following way:
  \[
    \begin{tikzpicture}[diagram]
      \node (D) {$\CmpCat$};
      \node (D') [right = of D] {$\CmpCat$};
      \node (PrD) [right = of D'] {$\Psh{\CmpCat}$};
      \node (PrD') [below = of D] {$\Psh{\CmpCat}$};
      \draw[->] (D) to node [above] {$\Lift$} (D');
      \draw[embedding] (D) to node [left] {$\Yo[\CmpCat]$} (PrD');
      \draw[embedding] (D') to node [above] {$\Yo[\CmpCat]$} (PrD);
      \draw[exists] (PrD') to node [sloped,below] {${\Lift_!}$} (PrD);
    \end{tikzpicture}
  \]

  Moreover ${\Lift_!}$ restricts to the lifting monad on $\Sh{\CmpTop}$
  in the sense that the following diagram commutes up to isomorphism:
  \[
    \DiagramSquare{
      nw = \Sh{\CmpTop},
      sw = \Sh{\CmpTop},
      ne = \Psh{\CmpCat},
      se = \Psh{\CmpCat},
      north/style = embedding,
      south/style = embedding,
      north = \ExtIncl_*,
      south = \ExtIncl_*,
      west = \Lift,
      east = {\Lift_!}
    }
  \]

  This is not difficult to see considering the explicit computation of the
  partial map classifier for a given dominance, using the following facts:
  \begin{enumerate}

    \item The dominance $\Sigma$ is representable, and hence a sheaf because
      our localization was subcanonical.

    \item Sheaves comprise a $\sum$-closed exponential ideal in every slice.

  \end{enumerate}
\end{computation}

The following lemma abstracts the argument of \citet[Theorem~1.7(2) of Fiore and Plotkin]{fiore-plotkin:1996}.

\begin{lemma}\label{lem:lift-preserves-colimits-preserved-by-dir-img}
  The lifting functor $\Mor[\Lift]{\Sh{\CmpTop}}{\Sh{\CmpTop}}$ preserves any
  shape of colimit that is preserved by the direct image functor
  $\EmbMor[\ExtIncl_*]{\Sh{\CmpTop}}{\Psh{\CmpCat}}$.
\end{lemma}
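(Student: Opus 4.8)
The plan is to reduce the claim to the cocontinuity of the Yoneda-extended lifting functor $\Lift_!$ on $\Psh{\CmpCat}$, by transporting colimits back and forth along the fully faithful direct image $\EmbMor[\ExtIncl_*]{\Sh{\CmpTop}}{\Psh{\CmpCat}}$. The essential input is \cref{cmp:lifting-monad-psh-sh}, which provides a natural isomorphism $\ExtIncl_*\circ\Lift\cong\Lift_!\circ\ExtIncl_*$ comparing the two lifting functors across the inclusion of sheaves into presheaves.

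First I would record that $\Lift_!$ preserves all colimits. This is immediate from its construction: $\Lift_!$ is the left Kan extension of $\Yo[\CmpCat]\circ\Lift$ along the Yoneda embedding $\Yo[\CmpCat]$, hence the essentially unique cocontinuous endofunctor of the free cocompletion $\Psh{\CmpCat}$ that restricts to $\Yo[\CmpCat]\circ\Lift$ on representables; concretely it admits the right adjoint $F\mapsto\prn{A\mapsto F\prn{\Lift{A}}}$, so being a left adjoint it is cocontinuous. Now let $\Mor[D]{\ICat}{\Sh{\CmpTop}}$ be a diagram whose colimiting cocone $\Mor{D}{\brc{C}}$ --- with apex $C$ --- is preserved by $\ExtIncl_*$. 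Applying $\Lift$ yields a cocone $\Mor{\Lift\circ D}{\brc{\Lift{C}}}$ in $\Sh{\CmpTop}$, which I must show is colimiting. Since $\ExtIncl_*$ is fully faithful, a cocone in $\Sh{\CmpTop}$ is colimiting as soon as its $\ExtIncl_*$-image is, so it suffices to prove that $\Mor{\ExtIncl_*\circ\Lift\circ D}{\brc{\ExtIncl_*\prn{\Lift{C}}}}$ is colimiting in $\Psh{\CmpCat}$. Transporting this cocone along the natural isomorphism $\ExtIncl_*\circ\Lift\cong\Lift_!\circ\ExtIncl_*$ identifies it with $\Lift_!$ applied to the cocone $\Mor{\ExtIncl_*\circ D}{\brc{\ExtIncl_*C}}$, which is colimiting by the hypothesis on $D$; cocontinuity of $\Lift_!$ then completes the argument.

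The only point calling for care is the usual coherence bookkeeping in the last step: one must check that the natural isomorphism of \cref{cmp:lifting-monad-psh-sh} matches up the two cocones leg-by-leg, and not merely at their apexes, so that the property of being a colimit genuinely transfers. This follows at once from naturality, so I do not expect a real obstacle --- the lemma is a formal consequence of the cocontinuity of $\Lift_!$ together with the fact that $\ExtIncl_*$, as the direct image of a subtopos inclusion, is fully faithful and therefore reflects colimits.
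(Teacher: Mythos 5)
Your argument is correct and uses exactly the same ingredients as the paper's proof: full faithfulness of $\ExtIncl_*$, the comparison isomorphism $\ExtIncl_*\circ\Lift\cong\Lift_!\circ\ExtIncl_*$ from \cref{cmp:lifting-monad-psh-sh}, cocontinuity of the Yoneda extension $\Lift_!$, and the hypothesis on the diagram. The paper merely packages these as a chain of hom-set isomorphisms $\Hom{\Sh{\CmpTop}}{X}{\Colim{\ICat}{\Lift{D_\bullet}}}\cong\cdots\cong\Hom{\Sh{\CmpTop}}{X}{\Lift\Colim{\ICat}{D_\bullet}}$, whereas you run the same steps as a cocone-reflection argument; the difference is purely presentational.
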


\begin{proof}
  Let $\ICat$ be a category such that colimits of shape $\ICat$ are preserved
  by $\ExtIncl_*$; letting $\Mor[D_\bullet]{\ICat}{\Sh{\CmpTop}}$ be a diagram,
  we compute the colimit of $\Lift{D_\bullet}$.
  \begin{align*}
    &\Hom{\Sh{\CmpTop}}{X}{\Colim{\ICat}{\Lift{D_\bullet}}}
    \\
    &\quad\cong
    \Hom{\Psh{\CmpCat}}{\ExtIncl_*X}{\ExtIncl_*\Colim{\ICat}{\Lift{D_\bullet}}}
    &&\text{$\ExtIncl$ is embedding}
    \\
    &\quad\cong
    \Hom{\Psh{\CmpCat}}{\ExtIncl_*X}{\Colim{\ICat}{\ExtIncl_*\Lift{D_\bullet}}}
    &&\text{by assumption}
    \\
    &\quad\cong
    \Hom{\Psh{\CmpCat}}{\ExtIncl_*X}{\Colim{\ICat}{{\Lift_!}\ExtIncl_*{D_\bullet}}}
    &&\text{\cref{cmp:lifting-monad-psh-sh}}
    \\
    &\quad\cong
    \Hom{\Psh{\CmpCat}}{\ExtIncl_*X}{{\Lift_!}\Colim{\ICat}{\ExtIncl_*{D_\bullet}}}
    &&\text{Yoneda extension is cocontinuous}
    \\
    &\quad\cong
    \Hom{\Psh{\CmpCat}}{\ExtIncl_*X}{\Lift_!\ExtIncl_*\Colim{\ICat}{D_\bullet}}
    &&\text{by assumption}
    \\
    &\quad\cong
    \Hom{\Psh{\CmpCat}}{\ExtIncl_*X}{\ExtIncl_*\Lift\Colim{\ICat}{D_\bullet}}
    &&\text{\cref{cmp:lifting-monad-psh-sh}}
    \\
    &\quad\cong
    \Hom{\Sh{\CmpTop}}{X}{\Lift\Colim{\ICat}{D_\bullet}}
    &&\text{$\ExtIncl$ is embedding}
    \qedhere
  \end{align*}

\end{proof}

\begin{lemma}\label{lem:sheaf-incl-preserves-filtered-colimits}
  Write $\EmbMor[\ExtIncl]{\CmpTop}{\PrTop{\CmpCat}}$ for the embedding of
  topoi that exhibits $\Sh{\CmpTop}$ as the sheaf subcategory spanned by
  presheaves that send the colimits specified in \cref{def:dcpo-localization}
  to limits.  Then the direct image functor
  $\EmbMor[\ExtIncl_*]{\Sh{\CmpTop}}{\Psh{\CmpCat}}$ preserves
  $\omega$-filtered colimits.
\end{lemma}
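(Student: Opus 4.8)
The plan is to exploit the fact that the coverage $K$ of \cref{def:dcpo-localization} is \emph{finitary}: every covering sink in $K\prn{A}$ is a \emph{finite} family of Scott-open immersions $\prn{\Mor|open immersion|{U_i}{A}\mid i\in I}$ with $I$ finite. Since $\Sh{\CmpTop}\subseteq\Psh{\CmpCat}$ is reflective, the colimit of an $\omega$-filtered diagram $\Mor[D_\bullet]{\ICat}{\Sh{\CmpTop}}$ computed in $\Sh{\CmpTop}$ is the sheafification of the colimit computed pointwise in $\Psh{\CmpCat}$; hence $\ExtIncl_*$ preserves this colimit precisely when the presheaf colimit $\Colim{\ICat}{\ExtIncl_*D_\bullet}$ is already a sheaf. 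So the whole lemma reduces to the assertion that a pointwise $\omega$-filtered colimit of $K$-sheaves is again a $K$-sheaf.

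To establish that, I would fix $A\in\CmpCat$ and a covering sink $\underline{u}=\prn{\Mor|open immersion|{U_i}{A}\mid i\in I}\in K\prn{A}$ with $I$ finite. The sheaf condition for $\underline{u}$ says that $F\prn{A}$ is the limit in $\SET$ of the diagram with vertices $F\prn{U_i}$ and $F\prn{U_i\times_A U_j}$ (the pullbacks exist, as $\CmpCat$ has pullbacks) and the evident restriction maps; since $I$ is finite this diagram is finite. Each $D_j$, $j\in\ICat$, satisfies it. Because colimits of presheaves are computed pointwise, $\prn{\Colim{\ICat}{\ExtIncl_*D_\bullet}}\prn{A}=\Colim{\ICat}{D_\bullet\prn{A}}$, and likewise at every $U_i$ and $U_i\times_A U_j$. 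Now I invoke the classical fact that filtered colimits commute with finite limits in $\SET$: as $\ICat$ is $\omega$-filtered (hence filtered) and the sheaf-condition diagram is finite, $\Colim{\ICat}{D_\bullet}$ again satisfies the sheaf condition for $\underline{u}$. Since $A$ and $\underline{u}$ were arbitrary, $\Colim{\ICat}{\ExtIncl_*D_\bullet}$ is a sheaf, and we are done. (The same argument in fact shows $\ExtIncl_*$ preserves all filtered colimits; only the $\omega$-filtered case is recorded because that is what the sequel needs.)

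I do not expect a genuine obstacle here beyond bookkeeping; the single point that must be got right is that the indexing shape of the sheaf-condition limit is genuinely \emph{finite}, which is exactly the content of $K\prn{A}$ consisting of finitely indexed covering families — a property deliberately built into \cref{def:dcpo-localization}. Had the imposed coverage contained infinite covers, this commutation would fail and $\ExtIncl_*$ would no longer preserve filtered colimits, so this is precisely where the finitary choice of topology pays off.
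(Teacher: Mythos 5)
Your proposal is correct and follows essentially the same route as the paper: reduce to showing the pointwise presheaf colimit is already a sheaf, observe that the sheaf condition for the coverage of \cref{def:dcpo-localization} is a \emph{finite} limit condition because covering sinks are finitely indexed, and conclude by the commutation of filtered colimits with finite limits in $\SET$. Your remark pinpointing the finitarity of the coverage as the load-bearing hypothesis is exactly the right emphasis.
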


\begin{proof}
  We consider an $\omega$-filtered category $\ICat$ and a diagram
  $\Mor[D_\bullet]{\ICat}{\Sh{\CmpTop}}$; it is enough to check that the
  colimit of $\Mor[\ExtIncl_*D_\bullet]{\ICat}{\Psh{\CmpCat}}$ is a sheaf.  We
  shall write $Q = \Colim{i\in\ICat}{\ExtIncl_*D_i}$ for this colimit.
  We must verify that finite unions of Scott-open subobjects are sent to limits in
  $\SET$. To that end, fix a finite partition $a \cong \Disj{k}a_k$ of a dcpo $a$
  into Scott-open subdcpos $a_k\subseteq a\in\CmpCat$.
  We must check that $Q\prn{a}$ is isomorphic to
  $\Conj{k}Q\prn{a_k}$.
  We compute:
  \begin{align*}
    Q\prn{a}
    &\cong
    Q\prn{\Disj{k}{a_k}}
    \\
    &\cong
    \Colim{i}{\ExtIncl_*D_i\prn{\Disj{k}{a_k}}}
    &&\text{colimits pointwise in $\Psh{\CmpCat}$}
    \\
    &\cong
    \Colim{i}{\Conj{k}{\ExtIncl_*D_i\prn{a_k}}}
    &&\text{each $\ExtIncl_*D_i$ is a sheaf}
    \\
    &\cong
    \Conj{k}\Colim{i}{\ExtIncl_*D_i\prn{a_k}}
    &&\text{$\omega$-filtered colimits commute with finite limits}
    \\
    &\cong
    \Conj{k}Q\prn{a_k}
    &&\text{colimits pointwise in $\Psh{\CmpCat}$}
    \qedhere
  \end{align*}
\end{proof}

\begin{lemma}\label{lem:lift-preserves-filtered-colimits}
  The functor $\Mor[\Lift]{\Sh{\CmpTop}}{\Sh{\CmpTop}}$ preserves $\omega$-filtered colimits.
\end{lemma}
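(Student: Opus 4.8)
The plan is to assemble this from the two immediately preceding lemmas, which between them isolate exactly the two ingredients needed. The first, \cref{lem:lift-preserves-colimits-preserved-by-dir-img}, reduces the preservation of a given shape of colimit by $\Lift$ on $\Sh{\CmpTop}$ to the preservation of that shape by the direct image functor $\EmbMor[\ExtIncl_*]{\Sh{\CmpTop}}{\Psh{\CmpCat}}$ of the sheaf embedding. The second, \cref{lem:sheaf-incl-preserves-filtered-colimits}, establishes precisely that $\ExtIncl_*$ preserves $\omega$-filtered colimits, by checking pointwise that an $\omega$-filtered colimit of sheaves (computed pointwise in $\Psh{\CmpCat}$) still sends the finite partitions into Scott-open subdcpos to limits in $\SET$ --- this is where the commutation of $\omega$-filtered colimits with finite limits in $\SET$ does the work.

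So the proof I would write is essentially one line: take $\ICat$ to be an $\omega$-filtered category; by \cref{lem:sheaf-incl-preserves-filtered-colimits} the functor $\ExtIncl_*$ preserves colimits of shape $\ICat$, so by \cref{lem:lift-preserves-colimits-preserved-by-dir-img} the functor $\Lift$ preserves colimits of shape $\ICat$ as well. There are no calculations to grind through at this level, since the two cited lemmas have already done all the chasing (the $\ExtIncl \dashv \ExtIncl_*$ adjunction bookkeeping in the first, the pointwise sheaf check in the second).

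The only thing worth double-checking when writing this up is that the two lemmas are genuinely compatible as stated --- i.e., that the class of colimit shapes quantified over in \cref{lem:lift-preserves-colimits-preserved-by-dir-img} really does include arbitrary $\omega$-filtered shapes, and that nothing in its proof implicitly assumes the shape is, say, a chain. Since that lemma's argument (the displayed chain of hom-isomorphisms using that $\ExtIncl$ is an embedding and that Yoneda extension is cocontinuous) is manifestly uniform in $\ICat$, this is not a real obstacle. If one wanted the cleanest presentation, one could even fold this result into \cref{lem:lift-preserves-colimits-preserved-by-dir-img} as a corollary rather than stating it separately; as it stands the proof is simply the composition of the two preceding results.
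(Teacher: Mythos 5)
Your proof is correct and coincides exactly with the paper's, which likewise cites \cref{lem:lift-preserves-colimits-preserved-by-dir-img} together with \cref{lem:sheaf-incl-preserves-filtered-colimits} and nothing more. Your sanity check that the first lemma's hom-isomorphism argument is uniform in the shape $\ICat$ is a reasonable thing to have verified, but no further content is needed.
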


\begin{proof}
  By \cref{lem:lift-preserves-colimits-preserved-by-dir-img} and
  \cref{lem:sheaf-incl-preserves-filtered-colimits}.
\end{proof}

\begin{remark}
  \citet[Fiore and Plotkin]{fiore-plotkin:1996} show that in their setting the
  lift functor preserves \emph{connected} colimits; this is stronger than
  preserving filtered colimits, and does not hold in our scenario. The reason
  is that they impose only the $0$-extensive topology on $\CmpCat$, hence the
  only product cone that must be considered in their version of
  \cref{lem:sheaf-incl-preserves-filtered-colimits} is the empty one. Because
  we have imposed a different topology, we must consider a broader class of
  limit cones; although it is indeed the case that the empty product is
  preserved by connected colimits, we cannot hope for the same more
  generally~\citep{bjerrum-johnstone-leinster-sawin:2015}.
\end{remark}

\AxOmegaInductive*

\begin{proof}
  By the classic result of \citet[Ad\'amek]{adamek:1974}, it suffices to
  observe that $\Lift$ is $\omega$-cocontinuous by virtue of
  \cref{lem:lift-preserves-filtered-colimits}, hence its initial algebra is the
  colimit of the standard $\omega$-chain above.
\end{proof}

\begin{lemma}\label{lem:representables-are-complete}
  Every representable object in $\Sh{\CmpTop}$ is complete.
\end{lemma}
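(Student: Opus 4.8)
The plan is to deduce completeness of a representable $\TildeYo A$ from the observation that $\ShLub$ is itself representable. Recall that the Yoneda embedding $\TildeYo$ preserves limits, exponentials, countable coproducts and the final lift-coalgebra, and commutes with $\Lift$; since it sends the initial object (a $0$-ary coproduct) to the initial object, it carries the chain $\ObjInit \to \Lift\ObjInit \to \Lift^2\ObjInit \to \cdots$ in $\CmpCat$ to the chain of the same shape in $\Sh{\CmpTop}$. Write $L$ for the colimit of this chain \emph{formed in $\CmpCat$}; by the lemma of \cref{sec:kadt-model} identifying the final lift-coalgebra of $\CmpCat$ with an inductive fixed point object, $L$ is that final lift-coalgebra, hence $\TildeYo L \cong \ShLub$. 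By \cref{axiom:omega-inductive} the colimit of the same chain \emph{formed in $\Sh{\CmpTop}$} is $\ShChain$, and the comparison map $\InclChain\colon\ShChain\to\ShLub$ is exactly the canonical morphism out of that colimit induced by the image under $\TildeYo$ of the colimit cocone on $L$.

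Next I would fix a dcpo $A$ and present both sides of the restriction map $(\TildeYo A)^{\InclChain}\colon(\TildeYo A)^{\ShLub}\to(\TildeYo A)^{\ShChain}$ as limits of one and the same tower. Since exponentiation sends a colimit in the exponent to a limit, $(\TildeYo A)^{\ShChain}$ is the limit of the tower $((\TildeYo A)^{\Lift^n\ObjInit})_n$. On the other side, using $\ShLub\cong\TildeYo L$, preservation of exponentials and limits, and $A^L\cong\lim_n A^{\Lift^n\ObjInit}$ (which holds because $\CmpCat$ is cartesian closed and $L$ is the colimit of the $\Lift^n\ObjInit$ there), we obtain $(\TildeYo A)^{\ShLub}\cong\TildeYo(A^L)\cong\TildeYo(\lim_n A^{\Lift^n\ObjInit})\cong\lim_n\TildeYo(A^{\Lift^n\ObjInit})\cong\lim_n(\TildeYo A)^{\Lift^n\ObjInit}$. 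Thus both $(\TildeYo A)^{\ShLub}$ and $(\TildeYo A)^{\ShChain}$ are the limit of the tower $((\TildeYo A)^{\Lift^n\ObjInit})_n$, with cone legs given by restriction along the colimit cocone of $\ShChain$ and along the image of the colimit cocone of $L$ respectively. Because $\InclChain$ intertwines these two cocones by construction, the induced map $(\TildeYo A)^{\InclChain}$ into the limit is forced by the universal property to be the canonical isomorphism; hence $\TildeYo A$ is complete.

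I expect the main obstacle to be the last bit of bookkeeping: confirming that the two descriptions of $(\TildeYo A)^{\ShChain}$ and $(\TildeYo A)^{\ShLub}$ really are limits of literally the same tower with matching projections, so that $(\TildeYo A)^{\InclChain}$ is pinned down as the canonical iso rather than merely \emph{some} isomorphism between isomorphic objects. Conceptually the only real content is that every dcpo has suprema of $\omega$-chains — which is precisely why the colimit of $\ObjInit \to \Lift\ObjInit \to \cdots$ already exists inside $\CmpCat$, and why $\TildeYo L$ computes $A$-valued chains equipped with their suprema whereas $\ShChain$ computes bare chains; everything else is formal manipulation of the preservation properties of $\TildeYo$ together with the fact that exponentiation carries colimits in the exponent to limits.
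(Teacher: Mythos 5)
Your proof is correct, and it takes a genuinely different route from the one in the paper. The paper unfolds internal orthogonality into \emph{external} lifting problems against $I\times\InclChain$ for an arbitrary parameter $I\in\Sh{\CmpTop}$, transports these to $\Psh{\CmpCat}$ using the fact that $\ExtIncl_*$ preserves filtered colimits (\cref{lem:sheaf-incl-preserves-filtered-colimits}), writes $\ExtIncl_*I$ canonically as a colimit of representables, and then solves each resulting problem in $\CmpCat$ via the Yoneda lemma and the universal property of the colimit of the chain there. You instead work with the exponential formulation of internal orthogonality directly: writing $L$ for the colimit of $\ObjInit\to\Lift\ObjInit\to\cdots$ computed in $\CmpCat$, you present both $(\TildeYo{A})^{\ShChain}$ and $(\TildeYo{A})^{\ShLub}$ as limits of one and the same tower --- the former by continuity of the exponential in $\Sh{\CmpTop}$ against \cref{axiom:omega-inductive}, the latter by transporting $A^{L}\cong\lim_n A^{\Lift^n\ObjInit}$ along $\TildeYo$ --- so that $(\TildeYo{A})^{\InclChain}$ is a morphism of limit cones and hence invertible. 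What your route buys is a shorter argument that never quantifies over an external parameter or descends to presheaves; what it costs is heavier reliance on the preservation properties of $\TildeYo$ asserted in the conservative-extension paragraph, in particular preservation of exponentials, which the paper's proof does not need (it only uses cartesian closure of $\CmpCat$ and $\Psh{\CmpCat}$ separately). The bookkeeping you flag at the end is indeed the only delicate point, and it does go through: the isomorphism $\TildeYo(A^{X})\cong(\TildeYo{A})^{\TildeYo{X}}$ is natural in $X$, $\TildeYo$ carries the limit cone $A^{L}\to A^{\Lift^n\ObjInit}$ to a limit cone, and $\InclChain$ agrees with the map out of the colimit induced by the cocone $\TildeYo(\Lift^n\ObjInit)\to\TildeYo{L}$, so the two systems of restriction maps are genuinely intertwined by $(\TildeYo{A})^{\InclChain}$. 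Both arguments ultimately rest on the same dichotomy you emphasize: the chain $\ObjInit\to\Lift\ObjInit\to\cdots$ has colimit $\ShLub$ when computed in $\CmpCat$ but $\ShChain$ when computed in $\Sh{\CmpTop}$.
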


\begin{proof}
  Fixing a dcpo $d\in\CmpCat$, we must argue that $\TildeYo{d}$ is complete. This
  follows from the limit-colimit coincidence in $\CmpCat$ recalling that $\ShLub
  = \TildeYo{\ShLub}$ is both the initial $\Lift$-algebra and final
  $\Lift$-coalgebra from the perspective of $\CmpCat$. Therefore we are solving
  lifting problems of the following form, recalling that both lifting and
  initial objects are preserved by $\EmbMor[\TildeYo]{\CmpCat}{\Sh{\CmpTop}}$:
  \[
    \begin{tikzpicture}[diagram]
      \node (w) {$I\times \Colim{n}{\TildeYo{\Lift^n\ObjInit{\CmpCat}}}$};
      \node (w*) [below = of w] {$I\times \TildeYo{\Colim{n}{\Lift^n\ObjInit{\CmpCat}}}$};
      \node (d) [right = 3cm of w] {$\TildeYo{d}$};
      \draw[>->] (w) to (w*);
      \draw[->] (w) to (d);
      \draw[exists,->] (w*) to node [sloped,below] {$?$} (d);
    \end{tikzpicture}
  \]

  Applying the fully faithful direct image functor $\EmbMor[\ExtIncl_*]{\Sh{\CmpTop}}{\Psh{\CmpCat}}$, recalling that it commutes with filtered colimits (\cref{lem:sheaf-incl-preserves-filtered-colimits}), we may rewrite our lifting problem to one in $\Psh{\CmpCat}$:
  \[
    \begin{tikzpicture}[diagram]
      \node (w) {$\ExtIncl_*I\times \Colim{n}{\Yo[\CmpCat]{\Lift^n\ObjInit{\CmpCat}}}$};
      \node (w*) [below = of w] {$\ExtIncl_*I\times \Yo[\CmpCat]{\Colim{n}{\Lift^n\ObjInit{\CmpCat}}}$};
      \node (d) [right = 3cm of w] {$\Yo[\CmpCat]{d}$};
      \draw[>->] (w) to (w*);
      \draw[->] (w) to (d);
      \draw[exists,->] (w*) to node [sloped,below] {$?$} (d);
    \end{tikzpicture}
  \]

  Because $\ExtIncl_*I$ is canonically a colimit of representables
  $\Yo[\CmpCat]{e}$ that parameterize an element of $I$, we may rewrite the
  lifting problem once more using the fact that the product functor $\prn{e\times -}$ in both
  $\Psh{\CmpCat}$ and $\CmpCat$ is cocontinuous as both are cartesian
  closed:
  \[
    \begin{tikzpicture}[diagram]
      \node (w) {$\Colim{n}{\Yo[\CmpCat]\prn{e\times \Lift^n\ObjInit{\CmpCat}}}$};
      \node (w*) [below = of w] {$\Yo[\CmpCat]\prn{\Colim{n}{e\times \Lift^n\ObjInit{\CmpCat}}}$};
      \node (d) [right = 3cm of w] {$\Yo[\CmpCat]{d}$};
      \draw[>->] (w) to (w*);
      \draw[->] (w) to (d);
      \draw[exists,->] (w*) to node [sloped,below] {$?$} (d);
    \end{tikzpicture}
  \]

  By the universal property of the colimit in $\Psh{\CmpCat}$, the upper map is
  determined by a cocone
  $\Mor{\Yo[\CmpCat]\prn{e\times \Lift^n\ObjInit{\CmpCat}}}{\brc{\Yo[\CmpCat]{d}}}$ in
  $\Psh{\CmpCat}$, but by the Yoneda lemma this is identical to a cocone
  $\Mor{e\times \Lift^n\ObjInit{\CmpCat}}{\brc{d}}$ in $\CmpCat$. Such a cocone in $\CmpCat$
  uniquely determines the dotted map, via the Yoneda lemma and the universal
  property of the colimit in $\CmpCat$.
\end{proof}

\begin{corollary}\label{lem:representables-are-predomains}
  Every representable object in $\Sh{\CmpTop}$ is a predomain.
\end{corollary}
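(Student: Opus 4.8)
The plan is to reduce this immediately to \cref{lem:representables-are-complete}, using the fact that the Yoneda embedding $\EmbMor[\TildeYo]{\CmpCat}{\Sh{\CmpTop}}$ commutes with lifting. Recall that, by definition, an object $E\in\Sh{\CmpTop}$ is a predomain precisely when $\Lift{E}$ is complete. So, fixing a dcpo $d\in\CmpCat$, I would first invoke the commutation isomorphism $\Lift{\TildeYo{d}}\cong\TildeYo{\Lift{d}}$ recorded in the conservative extension paragraph, namely that the lifting monad on $\Sh{\CmpTop}$ is extended by the embedding. Since $\Lift{d}$ is again an object of $\CmpCat$, the object $\TildeYo{\Lift{d}}$ is representable, hence complete by \cref{lem:representables-are-complete}. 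Transporting completeness along the isomorphism, $\Lift{\TildeYo{d}}$ is complete, which is exactly the assertion that $\TildeYo{d}$ is a predomain.

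There is no substantive obstacle here; the only points worth spelling out are that the isomorphism $\Lift{\TildeYo{d}}\cong\TildeYo{\Lift{d}}$ is the one already established (part of the Fiore–Plotkin-style conservative extension result), and that completeness, being defined by an internal orthogonality/lifting condition against $\Mor|>->|{\ShChain}{\ShLub}$, is manifestly invariant under isomorphism. Hence the corollary is an immediate consequence of \cref{lem:representables-are-complete} together with closure of $\CmpCat$ under $\Lift$.
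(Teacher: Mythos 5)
Your proposal is correct and is essentially the paper's own argument: the paper's one-line proof ("representables are closed under lifting, hence the result follows by \cref{lem:representables-are-complete}") is exactly your chain $\Lift{\TildeYo{d}}\cong\TildeYo{\Lift{d}}$, with $\TildeYo{\Lift{d}}$ representable and hence complete. You have merely spelled out the commutation isomorphism from the conservative extension result that the paper leaves implicit.
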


\begin{proof}
  Representables are closed under lifting, hence the result follows by
  \cref{lem:representables-are-complete}.
\end{proof}

\AxSigmaPredomain*

\begin{proof}
  By \cref{lem:representables-are-predomains}.
\end{proof}

\paragraph{Axioms for information flow and declassification}\label{sec:sdt-info-flow}

We have a geometric morphism $\Mor[\LvlMapCmp]{\CmpTop}{\LvlTop}$; viewing
$\LvlTop$ as the classifying topos for the ``security level theory'' $\LVL$, it
suffices to exhibit a finite meet preserving functor
$\Mor{\LVL}{\Sh{\CmpTop}}$. We will send each $l\in\LVL$ to the subterminal
$\TildeYo{\Yo[\LVL]{l}}$.

\AxLvls*

\begin{proof}
  We consider the Yoneda extension of the following fully faithful and finitely
  continuous composite functor $\EmbMor{\LVL}{\Sh{\CmpTop}}$, noting that the
  image of $\Yo[\LVL]$ lies in $\CmpCat$:
  \[
    \begin{tikzpicture}[diagram]
      \node (Lvl) {$\LVL$};
      \node (D) [right = of Lvl] {$\CmpCat$};
      \node (ShD) [right = of D] {$\Sh{\CmpTop}$};
      \node (PrLvl) [below = of Lvl] {$\Sh{\LvlTop}$};
      \draw[embedding] (Lvl) to node [above] {$\Yo[\LVL]$} (D);
      \draw[embedding] (D) to node [above] {$\TildeYo$} (ShD);
      \draw[embedding] (Lvl) to node [left] {$\Yo[\LVL]$} (PrLvl);
      \draw[->,exists] (PrLvl) to node [sloped,below] {$\LvlMapCmp^*$} (ShD);
    \end{tikzpicture}
  \]

  By Diaconescu's theorem~\citep{diaconescu:1975}, the cocontinuous extension
  indicated above is also finitely continuous, hence the inverse image part of
  a morphism of topoi $\Mor[\LvlMapCmp]{\CmpTop}{\LvlTop}$. It is evident that
  $\EmbMor[\LvlMapCmp^*\Yo[\LVL]]{\LVL}{\Opns{\CmpTop}}$ is valued in
  $\Sigma$-propositions.
\end{proof}

\begin{computation}\label{cmp:lvl-map-cmp-dir-img}
  The direct image functor
  $\Mor[\prn{\LvlMapCmp}_*]{\Sh{\CmpTop}}{\Sh{\LvlTop}}$ is given by
  precomposition with $\Mor|embedding|[\Yo[\LVL]]{\LVL}{\CmpCat}$.
\end{computation}

\begin{proof}
  This is computed by adjointness, fixing $E\in\Sh{\CmpTop}$.
  \begin{align*}
    \prn{\prn{\LvlMapCmp}_*E}l &\cong \Hom{\Sh{\LvlTop}}{\Yo[\LVL]{l}}{\prn{\LvlMapCmp}_*E}\\
    &\cong \Hom{\Sh{\CmpTop}}{\LvlMapCmp^*\Yo[\LVL]{l}}{E}\\
    &\cong \Hom{\Sh{\CmpTop}}{\TildeYo{\Yo[\LVL]{l}}}{E}\\
    &\cong E\prn{\Yo[\LVL]{l}}
    \qedhere
  \end{align*}
\end{proof}

\begin{lemma}\label{lem:lvl-map-cmp-dir-img}
  The direct image functor
  $\Mor[\prn{\LvlMapCmp}_*]{\Sh{\CmpTop}}{\Sh{\LvlTop}}$ extends the the
  forgetful functor $\Mor[\vrt{-}]{\CmpCat}{\Sh{\LvlTop}}$ sending a small dcpo
  to its underlying presheaf of points, in the sense that the following diagram
  commutes up to isomorphism:
  \[
    \begin{tikzpicture}[diagram]
      \node (D) {$\CmpCat$};
      \node (ShD) [below = of D] {$\Sh{\CmpTop}$};
      \node (ShL) [right = of D] {$\Sh{\LvlTop}$};
      \draw[embedding] (D) to node [left] {$\TildeYo$} (ShD);
      \draw[->] (D) to node [above] {$\vrt{-}$} (ShL);
      \draw[->] (ShD) to node [sloped,below] {$\prn{\LvlMapCmp}_*$} (ShL);
    \end{tikzpicture}
  \]
\end{lemma}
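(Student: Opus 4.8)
The plan is to chain together \cref{cmp:lvl-map-cmp-dir-img}, the (subcanonical) Yoneda embedding, and an elementary remark about continuity of maps out of subterminal dcpos, and then to verify naturality.

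First I would invoke \cref{cmp:lvl-map-cmp-dir-img}: since $\prn{\LvlMapCmp}_*$ is precomposition with the inclusion $\Yo[\LVL]\colon\LVL\hookrightarrow\CmpCat$, for any small dcpo $d\in\CmpCat$ and $l\in\LVL$ we have $\prn{\prn{\LvlMapCmp}_*\TildeYo{d}}\prn{l}\cong\prn{\TildeYo{d}}\prn{\Yo[\LVL]{l}}$, naturally in $l$. Because the localization defining $\Sh{\CmpTop}$ is subcanonical, $\TildeYo{d}$ is the representable sheaf on $d\in\CmpCat$, so its value at the object $\Yo[\LVL]{l}$ is exactly the hom-set $\Hom{\CmpCat}{\Yo[\LVL]{l}}{d}$ of continuous maps from the subterminal dcpo $\Yo[\LVL]{l}$ into $d$ (equivalently one may invoke full faithfulness of $\TildeYo$ together with the identity $\LvlMapCmp^*\Yo[\LVL]{l}=\TildeYo{\Yo[\LVL]{l}}$ established in the proof of \cref{axiom:lvls}).

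Next I would observe that continuity is vacuous here, so that $\Hom{\CmpCat}{\Yo[\LVL]{l}}{d}\cong\Hom{\Sh{\LvlTop}}{\Yo[\LVL]{l}}{\vrt{d}}$. Working internally to $\Psh{\LVL}$, the dcpo $\Yo[\LVL]{l}$ is a proposition and hence has at most one element; any inhabited directed subset of it is therefore the whole of $\Yo[\LVL]{l}$, whose supremum is that unique element, which every map of underlying presheaves preserves automatically. The ordinary Yoneda lemma in $\Sh{\LvlTop}=\Psh{\LVL}$ then identifies $\Hom{\Sh{\LvlTop}}{\Yo[\LVL]{l}}{\vrt{d}}$ with $\vrt{d}\prn{l}$. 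Composing the three isomorphisms and noting that each is natural in $l$ and in $d$ gives the natural isomorphism $\prn{\LvlMapCmp}_*\circ\TildeYo\cong\vrt{-}$ asserted by the lemma, i.e.\ commutativity of the triangle up to isomorphism.

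The only step that requires any care is the ``continuity is automatic'' observation, which must be phrased internally to $\Psh{\LVL}$ using that the notion of directed subset at issue (following \citet[De Jong and Escard\'o]{dejong-escardo:2021}) is by definition inhabited; the remaining steps are routine bookkeeping with the Yoneda lemma and the description of $\prn{\LvlMapCmp}_*$ already established in \cref{cmp:lvl-map-cmp-dir-img}.
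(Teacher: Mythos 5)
Your proposal is correct and follows the paper's own proof essentially step for step: apply \cref{cmp:lvl-map-cmp-dir-img}, then the Yoneda lemma via subcanonicity, then the observation that subterminal dcpos are discrete so continuity is vacuous, then Yoneda again in $\Psh{\LVL}$. Your elaboration of the discreteness step (inhabited directed subsets of a proposition are the whole proposition) is exactly the justification the paper leaves implicit.
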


\begin{proof}
  We fix a small predomain $A\in\CmpCat$ to compute
  $\prn{\LvlMapCmp}_*\TildeYo{A}$:
  \begin{align*}
    \prn{\prn{\LvlMapCmp}_*\TildeYo{A}}\prn{k} &\cong
    \prn{\TildeYo{A}}\prn{\LvlPol{k}}
    &&\text{\cref{cmp:lvl-map-cmp-dir-img}}
    \\
    &\cong
    \Hom{\CmpCat}{\LvlPol{k}}{A}
    &&\text{Yoneda lemma}
    \\
    &\cong
    \Hom{\Sh{\LvlTop}}{\Yo[\LVL]{k}}{\vrt{A}}
    &&\text{subterminal dcpos are discrete}
    \\
    &\cong
    \vrt{A}\prn{k}
    &&\text{Yoneda lemma}
    \qedhere
  \end{align*}
\end{proof}

\AxFinTransparentPredomain*

\begin{proof}
  That $\CmpTop^*\brk{n}$ is a predomain follows from the fact that a finite
  coproduct of the form $\Coprod{i<n}\ObjTerm{\Sh{\CmpTop}}$ is
  \emph{represented} by the dcpo
  $\Compr{i}{i<n}\in\CmpCat\subseteq\Sh{\LvlTop}$, and all representables are
  predomains. That it is $\LvlPol{l}$-transparent follows from the
  discreteness of the representing dcpo as a constant presheaf in
  $\Sh{\LvlTop}$.
\end{proof}

\subsection{Model of synthetic Tait computability}

\subsubsection{\texorpdfstring{$\LvlTop$}{P} as a classifying topos}\label{sec:l-as-cls-top}

Because $\LVL$ is finitely complete, it can be thought of as a \emph{finite
limit theory} --- the essentially algebraic theory of security levels.

\begin{remark}
  In concrete applications we also expect $\LVL$ to have joins, but we are not
  paying attention to these; while this may appear strange, the joins of a
  security lattice are also ignored by \citet[Abadi \etal]{abadi:1999}, which can be seen
  from the fact that a type that is both $l$-protected and $k$-protected
  nonetheless need not be $l\lor k$-protected in the Dependency Core Calculus.
  Furthermore, it is not the case in \opcit that every type is $\bot$-protected
  when $\LVL$ has a bottom element. We simply reproduce this mismatch between
  security levels and the security typing, though it would be interesting to
  remedy it in future work.
\end{remark}

Any finitely continuous functor $\Mor{\LVL}{\ECat}$ can be thought of as a
\emph{model} of the theory $\LVL$. The collection of all topos
models of $\LVL$ is concentrated in the topos $\LvlTop = \PrTop{\LVL}$: by Diaconescu's
theorem~\citep{diaconescu:1975}, a model $\Mor[L]{\LVL}{\Sh{\XTop}}$ corresponds essentially uniquely
to a morphism of topoi $\Mor[\brk{L}]{\XTop}{\LvlTop}$ whose inverse image
functor is obtained by Yoneda extension:
\begin{equation}\label[diagram]{diag:yoneda-extension}
  \begin{tikzpicture}[diagram,baseline=(sw.base)]
    \node (nw) {$\LVL$};
    \node (sw) [below = of nw] {$\Sh{\LvlTop}$};
    \node (ne) [right = of nw] {$\Sh{\XTop}$};
    \draw[->] (nw) to node [above] {$L$} (ne);
    \draw[embedding] (nw) to node [left] {$\Yo[\LVL]$} (sw);
    \draw[exists,->] (sw) to node [sloped,below] {$\brk{L}^*$} (ne);
  \end{tikzpicture}
\end{equation}

The direct image functor $\Mor[\brk{L}_*]{\Sh{\XTop}}{\Sh{\LvlTop}}$ can be computed by adjointness:
\begin{align}
  \brk{L}_*X &\cong \Hom{\Sh{\LvlTop}}{\Yo[\LVL]{-}}{\brk{L}_*X}&&\text{Yoneda}\\
  &\cong \Hom{\Sh{\XTop}}{\brk{L}^*\Yo[\LVL]{-}}{X}&&\brk{L}^*\dashv\brk{L}_*\\
  &\cong \Hom{\Sh{\XTop}}{L{-}}{X} &&\text{\cref{diag:yoneda-extension}}
  \label{cmp:direct-image}
\end{align}

To see that $\brk{L}^*\dashv\brk{L}_*$ we use the fact that $\brk{L}^*$ is cocontinuous.

\subsubsection{The formal approximation topos}

\begin{notation}[Syntactic topos]
  We will write $\SynTop = \PrTop{\TCat}$ for the presheaf topos defined by the
  identification $\Sh{\SynTop} = \brk{\OpCat{\TCat},\SET}$.
\end{notation}

In \cref{axiom:lvls} we have assumed a morphism of topoi
$\Mor[\LvlMapCmp]{\CmpTop}{\LvlTop}$; viewing $\LvlTop$ as the classifying
topos for the ``theory'' of security levels $\LVL$, this morphism is the
\emph{characteristic map} for the model of $\LVL$ in $\Sh{\CmpTop}$ in which
$\gl{l} = \LvlMapCmp^*\Yo[\LVL]{l}$. Because the syntactic category $\TCat$ of
our programming language also contains security levels, we likewise obtain a
morphism of topoi $\Mor[\LvlMapSyn]{\SynTop}{\LvlTop}$ such that
$\Yo[\TCat]\gl{l} = \LvlMapSyn^*\Yo[\LVL]{l}$.

Writing
$\BaseTop$ for the coproduct topos $\SynTop+\CmpTop$, we may combine these maps
together into a single map $\Mor[\LvlMapBase]{\BaseTop}{\LvlTop}$ along which
glue to obtain the relative Sierpi\'nski cone $\GlTop$:
\[
  \begin{tikzpicture}[diagram,baseline=(lvl.base)]
    \node (coprod) {$\BaseTop$};
    \node (lvl) [right = of coprod] {$\LvlTop$};
    \node (T) [above left = of coprod] {$\SynTop$};
    \node (D) [below left = of coprod] {$\CmpTop$};
    \draw[open immersion] (T) to (coprod);
    \draw[open immersion] (D) to (coprod);
    \draw[->,bend left=30] (T) to node [sloped,above] {$\LvlMapSyn$} (lvl);
    \draw[->,bend right=30] (D) to node [sloped,below] {$\LvlMapCmp$} (lvl);
    \draw[->,exists,color=magenta] (coprod) to node [upright desc] {$\LvlMapBase$} (lvl);
  \end{tikzpicture}
  \qquad
  \begin{tikzpicture}[diagram,baseline=(l/sw.base)]
    \SpliceDiagramSquare<l/>{
      nw = \BaseTop,
      ne = \LvlTop,
      sw = \BaseTop\times \SpTop,
      north = \LvlMapBase,
      se = \GlTop,
      west = \prn{\BaseTop,\bot},
      east = \ClImm,
      se/style = pushout,
      west/style = closed immersion,
      east/style = {exists,color={orange},closed immersion},
      east/node/style = upright desc,
      north/style = {color = magenta},
      south/style = exists,
      ne/style = pullback,
    }
    \SpliceDiagramSquare<r/>{
      glue = west, glue target = l/,
      east/style = closed immersion,
      south/node/style = upright desc,
      east = \bot,
      ne = \PtTop,
      se = \SpTop,
      south = \BaseOpn,
    }
    \node (b/se) [below = of r/se] {$\PtTop$};
    \node (b/sw) [sw pullback, below = of l/se] {$\BaseTop$};
    \draw[->] (b/sw) to (b/se);
    \draw[open immersion*,color=cyan] (b/sw) to node [upright desc] {$\OpImm$} (l/se);
    \draw[open immersion*] (b/se) to node [right] {$\top$} (r/se);
    \draw[open immersion*] (b/sw) to node [sloped,below] {$\prn{\BaseTop,\top}$} (l/sw);
  \end{tikzpicture}
\]

\begin{intuition}
  The purpose of constructing $\GlTop$ is as follows: in the category
  $\Sh{\GlTop}$ we may define a Kripke logical relation between the syntax and
  the denotational semantics, with the Kripke variation taken in $\LVL$. Indeed,
  we will see that every object of $\Sh{\GlTop}$ is a kind of ``formal
  approximation relation'' between something syntactical and something
  domain-theoretic.
\end{intuition}

In the construction above, we recover $\BaseTop$ as an \DefEmph{open subtopos} of
$\GlTop$ and $\LvlTop$ as a \DefEmph{closed subtopos}; the inclusions of these
subspaces are given by open and closed immersions $\OpImm,\ClImm$ respectively.
Such an open-closed partition corresponds essentially uniquely to a
characteristic map that we name $\Mor[\BaseOpn]{\GlTop}{\SpTop}$; it is
well-known in topos theory that the open $\BaseOpn$ may be identified with a
subterminal object $\BaseOpn\in\Opns{\GlTop}$ such that $\Sh{\BaseTop}\simeq
\Sl*{\Sh{\GlTop}}{\BaseOpn}$ and under this identification, the inverse image
functor $\Mor[\OpImm^*]{\Sh{\GlTop}}{\Sh{\BaseTop}}$ is the base change functor
$\Mor[\BaseOpn^*]{\Sh{\GlTop}}{\Sl*{\Sh{\GlTop}}{\BaseOpn}}$.

\AxPhases*

\begin{proof}
  The coproduct injections $\Mor|open immersion|{\SynTop,\CmpTop}{\BaseTop}$
  are (cl)open immersions and therefore induce disjoint opens
  $\Mor[\SynOpn,\CmpOpn]{\GlTop}{\SpTop}$ such that $\BaseOpn =
  \SynOpn\lor\CmpOpn$ and $\SynOpn\land\CmpOpn=\bot$.
\end{proof}

Consequently we may likewise identify
$\Sh{\SynTop},\Sh{\CmpTop}$ with
$\Sl*{\Sh{\GlTop}}{\SynOpn},\Sl*{\Sh{\GlTop}}{\CmpOpn}$ respectively.

\begin{remark}[Logical Relations as Types]
  The configuration of the glued topos $\GlTop$ and its open and closed
  subtopoi is in essence a generalization of the topos model of parametricity
  structures by \citet[Sterling and Harper]{sterling-harper:2021}. In \opcit the role of $\LvlTop$
  was played by the Sierpi\'nski topos $\SpTop$, and two copies of the
  syntactic topos were taken as (disjoint) open subspaces whereas here we have
  taken the coproduct of the syntactic and computational topoi. To understand
  the difference between the two constructions, consider that
  \citet[Sterling and Harper]{sterling-harper:2021} were targeting a single phase distinction
  between static and dynamic code, whereas we are concerned with a spectrum of phase
  distinctions corresponding to the security poset $\LVL$; whereas $\SpTop$ is
  the classifying topos of a single phase distinction, $\LvlTop$ is the
  classifying topos for such an array of phase distinctions. Finally, whereas
  \opcit were capturing a binary logical relation between a language and
  itself, we are considering a heterogeneous logical relation between a
  language and its denotational semantics.
\end{remark}

\AxGenericModel*

\begin{proof}
  We may define $\SynAlg$ to be the internal algebra corresponding to the
  (fully faithful) functor $\EmbMor{\TCat}{\Sh{\GlTop}}$ obtained by restricting
  the direct image of the open immersion $\Mor|open
  immersion|{\SynTop}{\GlTop}$ along the Yoneda embedding
  $\EmbMor{\TCat}{\Sh{\SynTop}}$. This does indeed give rise to a model of
  $\TCat$ because both the Yoneda embedding as well as the direct image of an
  open immersion are locally cartesian closed functors.
\end{proof}

\AxSdtInCmpPhase*

\begin{proof}
  This follows immediately by definition, since we have ensured that the open
  subtopos corresponding to the computational phase $\CmpOpn$ is $\CmpTop$
  itself, \ie $\Sl{\GlTop}{\CmpOpn}\simeq \CmpTop$.
\end{proof}

\AxLvlsUnderClosedImmersion*

\begin{proof}
  Working externally, we note that $\CmpAlg.\LvlPol{l} =
  \OpImm_*\TopInr_*\LvlMapCmp^*\Yo[\LVL]{l}$ and $\SynAlg.\LvlPol{l} =
  \OpImm_*\TopInl_*\LvlMapSyn^*\Yo[\LVL]{l}$; to establish our goal, it
  suffices to check that
  $\ClImm^*\OpImm_*\TopInr_*\LvlMapCmp^*\Yo[\LVL]{l} \leq
  \ClImm^*\OpImm_*\TopInr_*\LvlMapSyn^*\Yo[\LVL]{l}$. We compute the left-hand side using \cref{cmp:lvl-map-cmp-dir-img,lem:lvl-map-cmp-dir-img}:
  \[
    \ClImm^*\OpImm_*\TopInr_*\LvlMapCmp^*\Yo[\LVL]{l} \cong
    \prn{\LvlMapBase}_*\TopInr_*\LvlMapCmp^*\Yo[\LVL]{l}
    \cong
    \prn{\LvlMapCmp}_*\LvlMapCmp^*\Yo[\LVL]{l}
    \cong
    \prn{\LvlMapCmp}_*\TildeYo{\Yo[\LVL]{l}}
    \cong \vrt{\Yo[\LVL]{l}}
    \cong \Yo[\LVL]{l}
    \qedhere
  \]
\end{proof}

 \end{appendices}
\fi

\end{document}